\documentclass[11pt,a4paper,twoside]{article}

\usepackage[T1]{fontenc}
\usepackage[utf8]{inputenc}

\usepackage{amsmath}
\usepackage{mathtools}
\usepackage{bm}
\usepackage{amsthm}

\usepackage{newtxtext}
\usepackage{newtxmath}

\usepackage[
  a4paper,
  margin=28mm,
  headheight=14pt
]{geometry}

\usepackage{microtype}
\usepackage{setspace}
\usepackage{graphicx}
\usepackage{booktabs}
\usepackage{siunitx}
\usepackage{float}
\usepackage{tabularx}
\usepackage{array}
\usepackage{adjustbox}
\usepackage{subcaption}
\usepackage[ruled,vlined]{algorithm2e}

\usepackage{tikz}
\usetikzlibrary{3d}
\usetikzlibrary{arrows.meta}

\usepackage[
  font=small,
  labelfont=bf,
  labelsep=period
]{caption}

\usepackage[round,authoryear]{natbib}
\newtheorem{theorem}{Theorem}[section]
\newtheorem{lemma}[theorem]{Lemma}
\newtheorem{proposition}[theorem]{Proposition} 
\newtheorem{corollary}[theorem]{Corollary}

\newtheorem{remark}[theorem]{Remark}

\usepackage{titlesec}

\titleformat{\section}
  {\normalfont\large\bfseries}
  {\thesection.}
  {0.6em}
  {}

\titleformat{\subsection}
  {\normalfont\normalsize\bfseries}
  {\thesubsection.}
  {0.6em}
  {}

\titleformat{\subsubsection}
  {\normalfont\normalsize\itshape}
  {\thesubsubsection.}
  {0.6em}
  {}

\titleformat{\paragraph}
  {\normalfont\normalsize\bfseries}
  {\theparagraph}
  {1em}
  {}

\titlespacing*{\paragraph}
  {0pt}
  {3.25ex plus 1ex minus .2ex}
  {1.5ex plus .2ex}
  
\usepackage{fancyhdr}

\newcommand{\shorttitle}{Mixture-aware closure of the N-phase Navier--Stokes--Cahn--Hilliard mixture model}
\newcommand{\shortauthors}{Ten Eikelder and Brunk}

\fancypagestyle{plain}{%
  \fancyhf{}%
  \fancyfoot[C]{\thepage}%
}
\usepackage{authblk}

\usepackage{xcolor}
\definecolor{darkgreen}{rgb}{0, 0.7, 0}
\definecolor{orange}{rgb}{0.98, 0.6, 0.01}
\definecolor{napiergreen}{rgb}{0.16, 0.5, 0.0}

\usepackage{pifont}

\usepackage{accents}
\newlength{\dhatheight}

\newcommand{\R}{\mathbb{R}}

\numberwithin{equation}{section}
\allowdisplaybreaks

\def\div{\operatorname{div}}

\def\la{\langle}
\def\ra{\rangle}

\def\nn{\nonumber}

\usepackage{mathtools}

\DeclarePairedDelimiter{\snorm}{|}{|}

\def\vv{\mathbf{v}}

\def\dt{\partial_t}

\def\ddt{\frac{\mathrm{d}}{\mathrm{d}t}}

\def\softd{{\leavevmode\setbox1=\hbox{d}%
\hbox to 1.05\wd1{d\kern-0.4ex{\char039}\hss}}}%cstocs

\newcommand{\mA}{\alpha}
\newcommand{\mB}{{\beta}}

\newcommand{\bJ}{\mathbf{J}}

\newcommand{\trho}{\tilde{\rho}}

\usepackage[
  colorlinks=true,
  linkcolor=blue,
  citecolor=blue,
  urlcolor=blue
]{hyperref}

\usepackage[nameinlink,capitalise,noabbrev]{cleveref}

\newcommand{\email}[1]{\href{mailto:#1}{\texttt{#1}}}

\makeatletter
\newcommand{\myref}[1]{\cref{#1}\mynameref{#1}{\csname r@#1\endcsname}}
\newcommand{\Myref}[1]{\Cref{#1}\mynameref{#1}{\csname r@#1\endcsname}}

\newcolumntype{R}[2]{%
    >{\adjustbox{angle=#1,lap=\width-(#2)}\bgroup}%
    l%
    <{\egroup}%
}

\newcommand{\thickhline}{%
    \noalign {\ifnum 0=`}\fi \hrule height 1pt
    \futurelet \reserved@a \@xhline
}

\newcolumntype{"}{@{\hskip\tabcolsep\vrule width 1pt\hskip\tabcolsep}}
\makeatother

\title{\vspace{-1.0cm}
\bfseries
Mixture-aware closure of the N-phase\\[0.25em]
Navier--Stokes--Cahn--Hilliard mixture model
}

\author[1]{Marco F.P. ten Eikelder\thanks{\email{marco.eikelder@tu-darmstadt.de}}}
\author[2]{Aaron Brunk\thanks{\email{abrunk@uni-mainz.de}}}

\affil[1]{Institute for Mechanics, Computational Mechanics Group, Technical University of Darmstadt, Germany}
\affil[2]{Institute of Mathematics, Johannes Gutenberg-University Mainz, Germany}

\date{}

\begin{document}

\maketitle
\thispagestyle{plain}

\begin{abstract}
Diffuse-interface (phase-field) models are widely used to describe multiphase mixtures and their interfacial dynamics. In multiphase settings, however, the constitutive closure should remain meaningful across different representations of the same mixture.
Existing N-phase phase-field constructions commonly enforce reduction only when a phase is absent (restriction to a face of the Gibbs simplex), but do not address the natural requirement that physically identical phases can be merged without changing the governing equations. This requires characterizing thermodynamically admissible, mixture-aware constitutive closures that are consistent with merging identical phases at the PDE level.

Here, we show that, under a small set of structural axioms, PDE-level reduction consistency uniquely fixes the admissible free-energy structure to an ideal-mixing contribution to an ideal-mixing contribution, a symmetric mean-field interaction term, and a constant-coefficient quadratic gradient penalty. The same requirement constrains the Onsager mobility matrix to a pairwise-exchange form with bilinear degeneracy in the volume fractions, yielding a thermodynamic closure that includes Maxwell–Stefan-type mobilities as a special case. These results provide a consistent closure for N-phase Navier–Stokes–Cahn–Hilliard mixture models and, in the bulk-only setting, for multiphase Maxwell–Stefan diffusion systems. Numerical experiments confirm the predicted mixture-aware reduction properties and illustrate the capabilities of the N-phase Navier–Stokes–Cahn–Hilliard framework in representative multiphase-flow computations.

\end{abstract}

\section{Introduction}\label{sec: Introduction}

Navier--Stokes--Cahn--Hilliard (NSCH) models are widely used to describe viscous, incompressible mixture flows with diffuse interfaces. The origin of this class of models can be traced back to the coupling proposed by \cite{hohenberg1977theory} in 1977, now commonly referred to as model H. Early formulations of NSCH models for two-phase flow\footnote{Throughout this paper, we follow the terminology commonly used in the fluid-dynamics and diffuse-interface literature, where the term \emph{phase} denotes a fluid constituent, such as water or oil. We note that this differs from the stricter terminology in physics, where a phase usually refers to the thermodynamic state of a material, such as liquid, gas, or solid. In the present work, the phases are assumed to be incompressible in the sense that each constituent has a fixed thermodynamic state and a constant specific density. Different phases may, however, correspond to different fluid states and therefore have different material parameters.} were largely restricted to matching phase densities, which limits applicability to many practical situations \citep{gurtinmodel}. Over the past decades, numerous extensions of NSCH models for two-phase flow to non-matching densities have been proposed, including influential contributions by, for example, \cite{lowengrub1998quasi,abels2012thermodynamically}. Although these works aim to describe the same underlying physics, the resulting formulations differ in their balance-law structure, their dissipation statements, and, in some cases, their behavior in limiting single-phase regimes. A unified framework rooted in continuum mixture theory has recently clarified these relationships and identified a single NSCH model for two-phase flow that is invariant to the choice of fundamental variables \cite{eikelder2023unified}; benchmark computations and a divergence-conforming discretization supporting this unified perspective are reported in \cite{ten2024divergence}. From a numerical perspective, even in the two-phase flow setting the discretization of NSCH models is an active field of research. For the matching-density case, energy-stable schemes have been proposed in e.g. \cite{feng2006fully,kay2007efficient,chen2016efficient,diegel2017convergence,brunk2023second,brunk2026review}, while for non-matching densities the chosen formulation impacts algorithm design and coupling strength, with many methods focusing on volume-averaged velocity formulations \cite{khanwale2023projection,guillen2014splitting,garcke2016stable,chen2016efficient} and fewer on mass-averaged velocity formulations \cite{lowengrub1998quasi,shen2013mass,giesselmann2015energy,brunk2026simple}.

In the context of $N$-phase flow, NSCH models have been proposed, e.g. by \cite{boyer2014hierarchy,dong2018multiphase}. Most recently, a unified mixture-theory framework for $N$-phase NSCH mixture models with a single mixture momentum equation has been established in \cite{ten2025unified}. While a general $N$-phase framework is now available, its practical use hinges on constitutive choices that remain challenging to specify in a thermodynamically admissible manner. Addressing this gap is the focus of the present work.

In the mixture-theory NSCH setting, these constitutive choices enter through two objects: the Helmholtz free-energy density and the mobility tensor. The free energy determines chemical potentials and capillary forcing, while the mobility tensor relates chemical-potential gradients to diffusive fluxes and thus determines the diffusive dissipation. For $N$-phase systems, specifying these ingredients in a way that is simultaneously thermodynamically admissible and robust under changes in the phase set remains subtle. A prominent line of work enforces \emph{reduction} by requiring that the $N$-phase model restricted to a face of the Gibbs simplex (i.e.\ when one or more phases are absent) reduces to a lower-dimensional model of the same type; see, for instance, the ternary and hierarchy constructions of Boyer and co-workers \cite{boyer2006study,boyer2014hierarchy}. Other multiphase phase-field approaches, often formulated in terms of concentrations or order parameters, propose alternative $N$-phase closures and numerical strategies; representative examples include multiphase CH/NSCH formulations in \cite{kim2005phase,toth2016phase,dong2018multiphase,garcke1999multiphase} and multiphase-field constructions such as \cite{wu2017multiphase,heida2012development,malek2022thermodynamic}.

The present work is concerned with a different, complementary notion of reduction that arises naturally in mixture models: \emph{merging physically identical phases}. In applications, several phases may represent the same physical phase, sharing identical constitutive parameters, or a single phase may be split into sublabels for modeling or numerical convenience. A desirable closure should therefore be invariant under redistributions between such labels and should reduce exactly to the corresponding $(N\!-\!1)$-phase model when two identical phases are merged. Importantly, this does \emph{not} require that the \emph{free-energy value} of the $N$-phase model equals that of the reduced model under merging. Such a requirement would, in general, conflate label bookkeeping with physical mixing and is conceptually delicate in view of the Gibbs paradox: the entropy of mixing depends on distinguishability and changes under purely formal relabelings. Instead, we impose reduction at the level that is physically and computationally decisive for NSCH models, namely the \emph{governing equations}. This PDE-level reduction principle yields a sharp set of structural constraints on admissible free energies and mobilities, while remaining compatible with the mixture-theory NSCH framework \citep{ten2025unified}. This admissible closure overcomes a central barrier to computation and thereby enables the first practical numerical realization of this recently proposed $N$-phase NSCH model. We illustrate this through simulations for the cases $N=2,3,4$.
Finally, we note that, although we present the derivation in the $N$-phase NSCH setting, the reduction-consistency arguments for bulk free energies and mobility closures apply more broadly to multiphase mixture models, including gradient-free Maxwell--Stefan diffusion systems, cf. \cite{Brunk2026MWS}.

The paper is organized as follows. \cref{sec: NSCH model} recalls the $N$-phase NSCH model and summarizes key properties. \cref{sec: mixture-aware closure} develops the reduction-consistent thermodynamic closure. \cref{sec:practical_closure} then introduces a practical admissible closure. \cref{sec: Numerical results} presents numerical verification and illustrative simulations. Finally, \cref{sec: Conclusions} concludes with a discussion and outlook.

\section{The Navier-Stokes Cahn-Hilliard $N$-phase mixture model}\label{sec: NSCH model}

This section is concerned with the Navier-Stokes Cahn-Hilliard model. \cref{subsec:gov eq} presents the governing equations, and \cref{subsection: structural properties} summarizes its structural properties.

\subsection{Governing equations}\label{subsec:gov eq}
We consider the Navier-Stokes Cahn-Hilliard system with non-matching densities given by \citep{ten2025unified}:
\begin{subequations}\label{eq:sys1}
  \begin{align}\partial_t (\rho \vv) + {\rm div} \left( \rho \vv\otimes \vv \right) + \sum_\mB \phi_\mB \nabla \mu_\mB + \nabla \lambda
    %&\nn\\
    - {\rm div} \boldsymbol{\tau}-\rho\mathbf{g} &=~ 0, \label{eq: intro mass: mom}\\
  \partial_t \phi_\mA  + {\rm div}(\phi_\mA  \vv) +\rho_\mA^{-1}{\rm div} \bJ_\mA &=~0,\label{eq: intro mass: mass}
  \end{align}
\end{subequations}
in a spatial domain $\Omega \subset {\R}^d$ (of dimension $d=2,3$), boundary $\partial \Omega$ and unit outward normal $\mathbf{n}$. The state variables are the phase velocity $\vv: \Omega \rightarrow {\R}^d$, volume fraction variables $\phi_\mA: \Omega \rightarrow {\R}, \mA =1,...,N$ and Lagrange multiplier pressure $\lambda: \Omega \rightarrow {\R}$ subject to the initial conditions $\vv(\mathbf{x},0) = \vv_0(\mathbf{x})$ and $\phi_\mA(\mathbf{x},0) = \phi_0(\mathbf{x})$. The constant specific phase densities are $\rho_\mA$, the partial mass densities are $\trho_\mA = \rho_\mA \phi_\mA$, and the mixture density is given by $\rho = \sum_\mA \trho_\mA$. The force vector is represented by $\mathbf{g} = -g \mathbf{j}$, with $g$ the gravitational constant and $\mathbf{j} = \nabla y$ the vertical unit vector and $y$ the vertical coordinate. The viscous stress is $\boldsymbol{\tau} = \nu (2 \nabla^s \vv+\bar{\lambda}({\rm div}\vv) \mathbf{I}) $, where the dynamic viscosity is $\nu$, $\bar{\lambda} = -2/d$, and the symmetric velocity gradient is denoted by $\nabla^s \vv=(\nabla \vv + (\nabla \vv)^T)/2$. The Helmholtz free energy $\Psi$ associated with the system \eqref{eq:sys1} belongs to the constitutive class:
\begin{align}\label{eq: const class Psi}
    \Psi = \Psi\left(\boldsymbol{\phi},\nabla \boldsymbol{\phi}\right),
\end{align}
where the vector of order parameters is $\boldsymbol{\phi}=(\phi_1,\ldots,\phi_N)^{\mathsf T}$.
The associated chemical potentials are
\begin{align}\label{eq: chem pot}
    \mu_\mA = \dfrac{ \partial \Psi}{\partial \phi_\mA} - {\rm div}\dfrac{\partial \Psi}{\partial\nabla \phi_\mA}.
\end{align}
The model consists of the momentum balance equation \eqref{eq: intro mass: mom}, which governs the mixture momentum balance, along with $N$ phase mass balance equations \eqref{eq: intro mass: mass}.

\begin{remark}[Mixture velocity]
The system is formulated in terms of the mass-averaged velocity $\vv$. Alternatively, it can be expressed using other mixture velocities; for details on the volume-averaged velocity formulation, we refer to \cite{ten2025unified}.
\end{remark}

Next, the absence of void spaces implies that the volume fractions are subject to the saturation constraint
\begin{align}\label{eq: saturation constraint}
    \sum_\mA \phi_\mA =1.
\end{align}
We incorporate the saturation constraint through the addition of the following equation:
\begin{align}\label{eq: LM 1}
    {\rm div}\vv + \displaystyle\sum_{\mA}  \rho_\mA^{-1} {\rm div} \bJ_\mA = 0,
\end{align}
which follows from the addition of the mass balance equations \eqref{eq: intro mass: mass} using \eqref{eq: saturation constraint}. The diffusive fluxes are $
  \bJ_\mA =- \sum_\mB \mathbf{M}_{\mA\mB}\nabla g_\mB$, where $\mathbf{M}_{\mA\mB}$ denotes the degenerate mobility matrix, and where $g_\mA = \rho_\mA^{-1}(\mu_\mA + \lambda)$ is an effective chemical potential. 
Incorporating the chemical potential as a separate state variable, we now arrive at a system that contains $2N+2$ equations for $2N+2$ (independent) state variables ($\vv, \left\{\phi_\mA\right\}_{\mA=1,...,N}, \lambda, \left\{\mu_\mA\right\}_{\mA=1,...,N}$):
\begin{subequations}\label{eq:sys2}
  \begin{align}\partial_t (\rho \vv) + {\rm div} \left( \rho \vv\otimes \vv \right) + \sum_\mB \phi_\mB \nabla \mu_\mB + \nabla \lambda
    %&\nn\\
    - {\rm div} \boldsymbol{\tau}-\rho\mathbf{b} &=~ 0, \label{eq:sys2: mom}\\
  \partial_t \phi_\mA  + {\rm div}(\phi_\mA  \vv) +\rho_\mA^{-1}{\rm div} (\bJ_\mA )&=~0,\label{eq:sys2: mass}\\
  {\rm div}\vv + \displaystyle\sum_{\mA}  \rho_\mA^{-1} {\rm div} \bJ_\mA &=~ 0, \label{eq:sys2: div}\\
   \mu_\mA - \dfrac{ \partial \Psi}{\partial \phi_\mA} + {\rm div}\dfrac{\partial \Psi}{\partial\nabla \phi_\mA} &~=0,\label{eq:sys2: chem}
  \end{align}
\end{subequations}
In this formulation, the pressure $\lambda$ acts as a Lagrange multiplier that enforces \eqref{eq:sys2: div}, see also \cite{ten2025unified}.

\begin{remark}[Pure phases]
In the regions of pure phases ($\phi_\mA = 1$ for some $\mA = 1,...,N$), the peculiar velocities vanish ($\bJ_\mA = 0$) since the mobility is assumed degenerate. As a consequence, the phase velocity is divergence-free  (${\rm div} \vv = 0$) in these regions.
\end{remark}

\begin{remark}[Saturation constraint]
  As an alternative, one can directly substitute the saturation constraint \eqref{eq: saturation constraint} to reduce the set of volume fraction state variables to $\left\{\phi_\mA\right\}_{\mA\neq \mB}$ for some fixed $\mB \in \left\{1,...,N\right\}$. This leads to a reduced system in which \eqref{eq:sys2: mom}-\eqref{eq:sys2: chem} constitute $2N+1$ equations for $2N+1$ state variables ($\vv, \left\{\phi_\mA\right\}_{\mA\neq \mB}, \lambda, \left\{\mu_\mA\right\}$). This system is not symmetric in terms of $\phi_\mA$.
\end{remark}
\subsection{Structural properties}\label{subsection: structural properties}
The system possesses underlying structural properties governed by balance laws and dissipation identities. Specifically, under appropriate boundary conditions, it conserves the phase masses, the volume-fraction integrals, and the total mass; preserves the saturation constraint pointwise; and dissipates the total energy
\begin{subequations}    
\begin{align}
   \ddt \int_\Omega \trho_\mA~{\rm d}\Omega =&~ 0, \qquad  \ddt \int_\Omega \phi_\mA~{\rm d}\Omega=0,\qquad \mA = 1,...,N,\\
   \ddt \int_\Omega \rho~{\rm d}\Omega =&~ 0,\\
   \left(\sum_\mA \phi_\mA(\mathbf{x},0)=1\right)& \qquad \Rightarrow \qquad \left(\sum_\mA \phi_\mA(\mathbf{x},t)=1,\quad t>0\right),\\
   \ddt \mathcal{E}\left(\vv,\left\{\phi_\mA\right\}\right) =&~ -\mathcal{D}\left(\vv, \left\{g_\mA\right\}\right)\leq 0, \label{eq: global energy evolution}
\end{align}
\end{subequations}
Here the energy $\mathcal{E}$ and dissipation rate $\mathcal{D}$ are given by
\begin{subequations}
\begin{align}
 \mathcal{E}\left(\vv,\left\{\phi_\mA\right\}\right):=&~\int_\Omega K\left(\vv,\left\{\phi_\mA\right\}\right) + G\left(\left\{\phi_\mA\right\}\right) + \Psi\left(\left\{\phi_\mA\right\},\left\{\nabla \phi_\mA\right\}\right)~{\rm d}\Omega, \label{eq:defEnergy}\\
 K\left(\vv,\left\{\phi_\mA\right\}\right) =&~ \frac{1}{2}\rho\left(\left\{\phi_\mA\right\}\right)\snorm{\vv}^2,\\ G\left(\left\{\phi_\mA\right\}\right) =&~ \rho\left(\left\{\phi_\mA\right\}\right) g y,\\
 \mathcal{D}\left(\vv, \left\{g_\mA\right\}\right):=&~ \displaystyle\int_{\Omega} 2 \nu \left( \nabla^s \vv - \frac{1}{d} ({\rm div} \vv) \mathbf{I}\right):\left(\nabla^s \vv - \frac{1}{d} ({\rm div} \vv) \mathbf{I}\right)+ \nu \left(\bar{\lambda} + \frac{2}{d}\right)\left({\rm div} \vv\right)^2 \nn\\
    &~\quad\quad+ \displaystyle\sum_{\mA,\mB} \left(\nabla g_\mA\right)^T \mathbf{M}_{\mA\mB} \nabla g_\mB~{\rm d}\Omega \geq 0,
 \label{eq:defDissipation}
\end{align}
\end{subequations}
where we decomposed the energy density of $\mathcal{E}$ into the kinetic energy density $K$, the energy density due to gravity $G$ and the free energy density $\Psi$.

\begin{remark}[Conservative form]
  In the absence of gravity ($\mathbf{g}=0$), the model may be written in conservative form by inserting the Korteweg tensor identity:
  \begin{align}\label{eq: korteweg}
    {\rm div} \boldsymbol{\varsigma} =  \sum_\mB \phi_\mB \nabla \mu_\mB,
  \end{align}
where the Korteweg tensor is
\begin{align}\label{eq: korteweg stress}
   \boldsymbol{\varsigma} = &~ \left(\sum_\mA \hat{\mu}_\mA\phi_\mA -\hat{\Psi}\right)\mathbf{I} + \sum_\mA \nabla \phi_\mA \otimes \dfrac{\partial \hat{\Psi}}{\partial \nabla \phi_\mA} 
\end{align}
\end{remark}
In order to be consistent with angular momentum conservation, the Cauchy stress is a symmetric second-order tensor. This is only guaranteed if the Korteweg stress $\boldsymbol{\varsigma}$ yields a symmetric matrix, which restricts the choices that are consistent with thermodynamics.

\subsection{Equilibrium conditions}
In this section we consider the stationary equilibrium conditions of system \eqref{eq:sys2} in absence of gravity and no-slip boundary conditions for the velocity. To this end we consider the standard equilibrium assumptions $\dt\phi_\mA=0$ and $\dt\vv=\bm{0}$. Under these assumptions, also $\frac{{\rm d}}{{\rm d}t}\mathcal{E}=0$ and hence $\mathcal{D}(\vv,\{g_\alpha\})=0$ which yields $\nabla\vv=0$, i.e. $\vv=0$ and using the kernel of $M$ we find $\nabla(g_\mA-g_{\mB})=0$, i.e. $g_\mA-g_\mB= c_{\mA\mB}=\text{const}$. Substitution into the system yields
\begin{subequations}\label{eq:syseq2}
  \begin{align}\sum_\mB \phi_\mB \nabla \mu_\mB + \nabla \lambda
    %&\nn\\
     &=~ 0, \label{eq:syseq2: mom}\\
  \rho_\mA^{-1}{\rm div} \bJ_\mA &=~0,\label{eq:syseq2: mass}\\
   \mu_\mA - \dfrac{ \partial \Psi}{\partial \phi_\mA} + {\rm div}\dfrac{\partial \Psi}{\partial\nabla \phi_\mA} &~=0.\label{eq:syseq2: chem}
  \end{align}
\end{subequations}
Rewriting \eqref{eq:syseq2: mom} using the saturation constraint and $\nabla (g_\mA-g_\mB)=0$ yields
\begin{equation}
  0 = \sum_\mB \phi_\mB \nabla \mu_\mB + \nabla \lambda
    = \sum_\mB \tilde{\rho}_\mB \nabla g_\mB  = \sum_\mB \tilde{\rho}_\mB \nabla g_\alpha =\rho  \nabla g_\alpha. 
\end{equation}
Since the total density is nonzero everywhere, we conclude $\nabla g_\alpha=0$ for all $\alpha$ and hence $g_\alpha = c_\alpha = \textrm{const}$. In particular, it follows that
\begin{align}
\bJ_\mA = 0
\qquad\text{for all }\mA=1,...,N.
\end{align}
Hence, we find that the volume fractions $\phi_\mA$ are characterized by
\begin{align}
\frac{\partial \Psi}{\partial \phi_\alpha} - {\rm div}\frac{\partial \Psi}{\partial \nabla \phi_\alpha} +\lambda = \hat{c}_\mA.
\label{eq:equilibrium_phi}
\end{align}
with $\hat{c}_\mA :=\rho_\alpha c_\alpha$. Equation \eqref{eq:equilibrium_phi} has the form of a constrained Euler--Lagrange equation. Indeed, the equilibrium conditions for $\phi_\mA$ can be interpreted as the following energy minimization problem
\begin{align}
 &\min_{\phi\in H^1(\Omega)^N} \int_\Omega  \Psi(\boldsymbol{\phi},\nabla\boldsymbol{\phi})   - \sum_\mA \hat{c}_\mA (\phi_\alpha-m_\alpha) + \lambda\left(\sum_\mA \phi_\mA - 1\right) \nn\\
 = &\min_{\substack{\phi\in H^1(\Omega)^N,\\ \la \phi,e_\mA \ra=m_\mA}} \int_\Omega  \Psi(\boldsymbol{\phi},\nabla\boldsymbol{\phi})   + \lambda\left(\sum_\mA \phi_\mA - 1\right) \label{eq:1Dmin}
\end{align}
where $m_\mA$ are the mean values of $\phi_\mA$. Hence, we can understand $\hat{c}_\mA$ as Lagrange multipliers for mean value of $\phi_\alpha$ and $\lambda$ as a Lagrange multiplier for the saturation constraint. 

\section{Mixture-aware closure}\label{sec: mixture-aware closure}

Here we derive a \emph{mixture-aware thermodynamic closure}, i.e. admissible constitutive choices for the free-energy density and mobility tensor such that the $N$-phase model reduces exactly to an $(N\!-\!1)$-phase model when two phases are physically identical and merged. To this end we specify the constitutive model class and the axioms imposed on the family of $N$-phase closures.
\subsection{Model class and axioms} \label{sec:axioms}

Throughout, $\Psi$ denotes the Helmholtz free-energy density and $\mathbf{M}$ the mobility tensor of the $N$-phase model. Quantities associated with the reduced $(N\!-\!1)$-phase system are denoted by a hat, e.g. $\hat\Psi$ and $\hat{\mathbf{M}}$. The phrase ``same functional form'' means that $(\Psi,\mathbf{M})$ and $(\hat\Psi,\hat{\mathbf{M}})$ are obtained from a single constitutive form, differing only through the number of phases and the corresponding coefficient sets.

\subsubsection*{A1. Admissible compositions.}
We define the Gibbs simplex for $N$ phases by
\begin{align}
\mathcal{G}:=\Big\{\boldsymbol{\phi}\in{\R}^{N}\ \Big|\ \sum_{\alpha=1}^{N}\phi_\alpha = 1,\ \phi_\beta \ge 0\ \ (\beta=1,\ldots,N)\Big\}.
\end{align}
The order parameters belong to the Gibbs simplex: $\boldsymbol{\phi}\in\mathcal{G}$. We visualize the Gibbs simplex for $N=3$ in \cref{fig:gibbs-triangle}, and provide details in \cref{appendix Gibbs simplex}. Constitutive relations are formulated on the interior of $\mathcal{G}$ (all $\phi_\alpha>0$) and extended to the boundary when needed.

\begin{figure}
  \centering
  \includegraphics[width=0.52\linewidth]{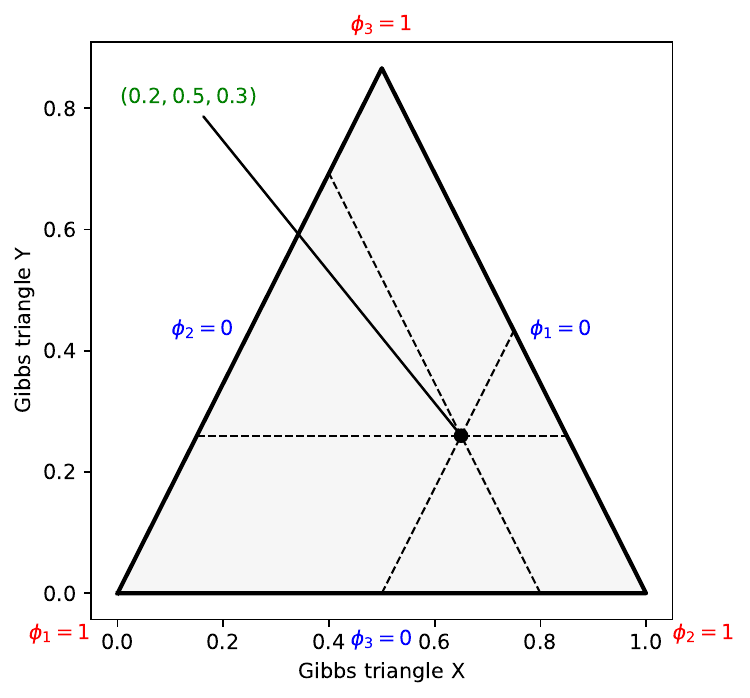}
  \caption{Gibbs triangle (ternary diagram) for $N=3$. Vertices correspond to pure states ($\phi_{\alpha}=1$), edges to binary mixtures (one phase identically zero), and the interior to fully ternary states. The simplex boundaries $\phi_{\alpha}=0$ and vertices $\phi_{\alpha}=1$ are indicated; the point $(\phi_1,\phi_2,\phi_3)=(0.2,0.5,0.3)$ is shown as an example.}
  \label{fig:gibbs-triangle}
\end{figure}

\subsubsection*{A2. Local first-gradient free energy and chemical potentials.}
The Helmholtz free-energy density is local and of first-gradient type, i.e. of the form \eqref{eq: const class Psi} and is assumed sufficiently smooth (at least $C^2$ in $\boldsymbol{\phi}$ and $\nabla\boldsymbol{\phi}$ on $\mathrm{int}(\mathcal{G})$). Chemical potentials are defined in \eqref{eq: chem pot} and are understood up to the standard additive gauge associated with the simplex constraint $\sum_\alpha\phi_\alpha=1$.

\subsubsection*{A3. Constant-capillarity.}
We restrict attention to a constant-capillarity constitutive class in which the second derivatives of $\Psi$ with respect to gradients are independent of the local composition. Equivalently, the tensor
\begin{align}
\frac{\partial^2\Psi}{\partial(\nabla\phi_\alpha)\partial(\nabla\phi_\beta)}
\end{align}
is assumed independent of $\boldsymbol{\phi}$ (while it may depend on material coefficients).

\subsubsection*{A4. Mobility closure and admissibility.}
We impose the following requirements on the mobility tensor:
\begin{align}
\text{(M1) Symmetry:}\quad & \mathbf{M}_{\alpha\beta}(\boldsymbol{\phi})=\mathbf{M}_{\beta\alpha}(\boldsymbol{\phi}), \\
\text{(M2) Positive semidefiniteness:}\quad & \boldsymbol{\xi}^{\mathsf T}\mathbf{M}(\boldsymbol{\phi})\boldsymbol{\xi}\ge 0
\quad\text{for all }\boldsymbol{\xi}\in{\R}^{N},\\
\text{(M3) Constraint compatibility:}\quad & \mathbf{M}(\boldsymbol{\phi})\mathbf{1}=\mathbf{0}, \qquad \mathbf{1}=(1,\ldots,1)^{\mathsf T},\\
\text{(M4) Spatial isotropy:}\quad & \mathbf{M}_{\alpha\beta}(\boldsymbol{\phi}) = M_{\alpha\beta}(\boldsymbol{\phi})\mathbf{I},
\qquad \alpha,\beta=1,\ldots,N,
\label{eq:M_isotropy_axiom}\\
\text{(M5) Degeneracy:}\quad & \text{if } \phi_\mA=0 \text{ or } \phi_\mA=1 \text{ then } \mathbf{M}_{\mA\beta}(\boldsymbol{\phi})=\mathbf{0},
\end{align}
for scalar mobilities $M_{\alpha\beta}(\boldsymbol{\phi})$ and the identity tensor $\mathbf{I}$.

\subsubsection*{A5. Merging identical phases.}
We consider the case in which phases $1$ and $N$ are physically identical. Given $\boldsymbol{\phi}\in\mathcal{G}$, we define the reduced $(N\!-\!1)$-phase composition $\hat{\boldsymbol{\phi}}\in{\R}^{N-1}$ by
\begin{align}\label{eq: merging}
\hat\phi_1 := \phi_1+\phi_N,\qquad
\hat\phi_\alpha := \phi_\alpha \quad (\alpha=2,\ldots,N-1),
\end{align}
with the induced action on gradients
\begin{align}
\nabla\hat\phi_1=\nabla\phi_1+\nabla\phi_N,\qquad
\nabla\hat\phi_\alpha=\nabla\phi_\alpha\quad (\alpha=2,\ldots,N-1).
\end{align}

Reduction consistency is a requirement on the \emph{PDE system}: whenever phases $1$ and $N$ are physically identical, the $N$-phase model must coincide with an $(N-1)$-phase model for the merged variables. We visualize these reduction-consistency requirements in \cref{fig:merge-map,fig:merge-lines-ABC}.
\begin{figure}
    \centering  
\includegraphics[width=0.95\linewidth]{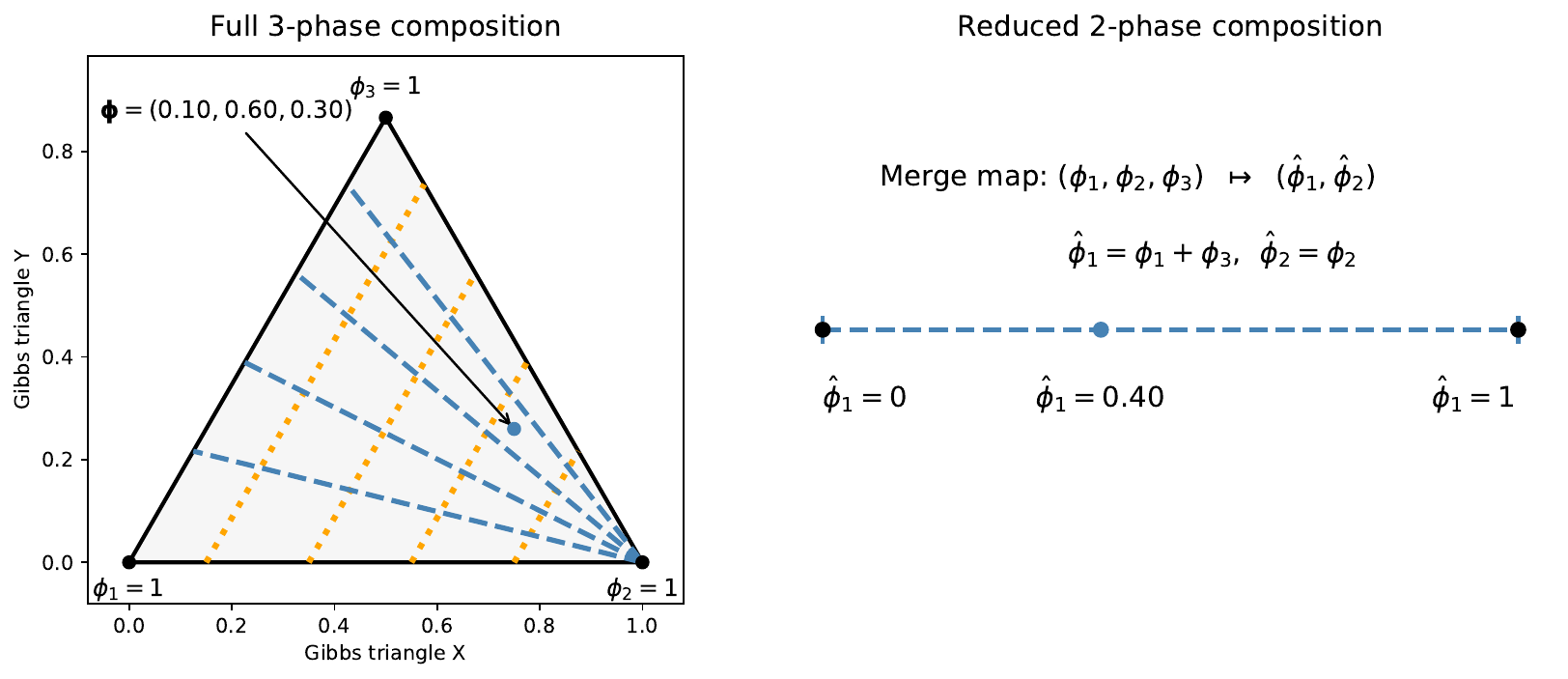}
\captionsetup{justification=raggedright,singlelinecheck=false}\caption{Visualization of the merge (reduction) map for a ternary composition.
Left: Gibbs simplex for $N=3$ with two families of lines. The orange dotted lines are level sets
$\phi_2=\mathrm{const}$; along each such line the reduced variables
$(\hat\phi_1,\hat\phi_2)=(\phi_1+\phi_3,\phi_2)$ are constant, i.e.\ these lines identify all $(\phi_1,\phi_2,\phi_3)$ corresponding to the same reduced two-phase state.
The blue dashed rays emanating from the vertex $\phi_2=1$ represent constant split ratio
$\phi_1/(\phi_1+\phi_3)=\mathrm{const}$, highlighting the degree of freedom that is discarded when phases $1$ and $3$
are merged.
Right: the reduced two-phase composition axis $\hat\phi_1\in[0,1]$,
with the image of the example state marked.}
\label{fig:merge-map}
\end{figure}
\begin{figure}
    \centering    \includegraphics[width=0.45\linewidth]{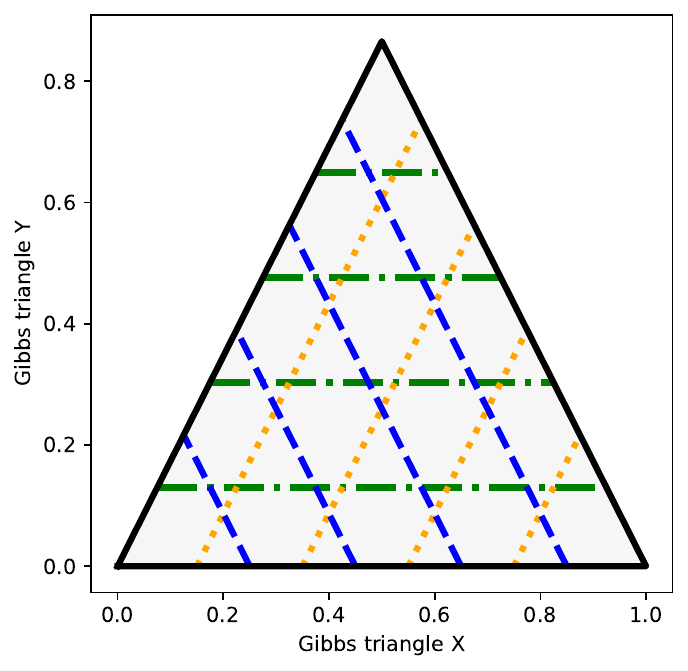}
\captionsetup{justification=raggedright,singlelinecheck=false}\caption{Redistribution lines in the ternary Gibbs simplex along which the reduction-consistency constraints are imposed. Each colored family of lines collects states that correspond to the \emph{same reduced composition} after merging two phases:
orange dotted lines are level sets $\phi_2=\mathrm{const}$ (merging phases $1$ and $3$, so $\hat\phi_1=\phi_1+\phi_3=1-\phi_2$ is fixed);
green dash--dot lines are level sets $\phi_3=\mathrm{const}$ (merging phases $1$ and $2$);
blue dashed lines are level sets $\phi_1=\mathrm{const}$ (merging phases $2$ and $3$).
Along each such line, the reduced $(N\!-\!1)$-phase state is unchanged while the split between the merged phases varies.}
    \label{fig:merge-lines-ABC}
\end{figure}
In particular, the barycentric (mixture) velocity is a physical field independent of labeling; therefore, we identify $\hat{\vv}=\vv$. Moreover, the Lagrange multipliers $\lambda$ and $\hat\lambda$ enforcing the simplex constraint enter the momentum balance only through their gradients, and are defined only up to a constant. As such, we do not require $\hat{\lambda}$ and $\lambda$ to coincide, but work with a consistent gauge for the Lagrange multipliers so that $\nabla \hat{\lambda}=\nabla \lambda$.

The reduced system is closed by $\hat\Psi(\hat{\boldsymbol{\phi}},\nabla\hat{\boldsymbol{\phi}})$ and $\hat{\mathbf{M}}(\hat{\boldsymbol{\phi}})$, with reduced chemical potentials and generalized potentials
\begin{align}
\hat\mu_\alpha := \frac{\partial \hat\Psi}{\partial \hat\phi_\alpha}
-{\rm div}\frac{\partial \hat\Psi}{\partial\nabla\hat\phi_\alpha},
\qquad
\hat g_\alpha := \frac{\hat\mu_\alpha+\hat\lambda}{\rho_\alpha}.
\end{align}

When phases $1$ and $N$ are physically identical, we impose reduction consistency at two levels, corresponding to the places where constitutive choices enter the NSCH system:

\begin{enumerate}
\item \emph{Capillary-force reduction consistency:} for all smooth fields on the interior of $\mathcal{G}$,
\begin{align}
\sum_{\beta=1}^{N}\phi_\beta\nabla\mu_\beta
=
\sum_{\beta=1}^{N-1}\hat\phi_\beta\nabla\hat\mu_\beta,
\label{eq:RC_force}
\end{align}
and hence:
\begin{align}
    \phi_1 \nabla \mu_1 + \phi_N \nabla \mu_N = \hat{\phi}_1 \nabla \hat{\mu}_1.
\end{align}

\item \emph{Diffusive-flux reduction consistency:} for all smooth fields on the interior of $\mathcal{G}$,
\begin{subequations}
\begin{align}
\hat\rho_1^{-1}{\rm div}\hat\bJ_1 =&~ \rho_1^{-1}{\rm div}\bJ_1 + \rho_N^{-1}{\rm div}\bJ_N,\\
\hat\rho_\alpha^{-1}{\rm div}\hat\bJ_\alpha = &~\rho_\alpha^{-1}{\rm div}\bJ_\alpha\ \ (\alpha=2,\ldots,N-1),
\label{eq:RC_div_map}
\end{align}    
\end{subequations}
where the $N$-phase fluxes $\{\bJ_\alpha\}_{\alpha=1}^{N}$ satisfy
\begin{subequations}\label{eq:RC_flux_map}
  \begin{align}
    \bJ_1+\bJ_N = &~ \hat\bJ_1, \label{eq:RC_flux_map 1}\\
    \bJ_\alpha = &~ \hat\bJ_\alpha \quad (\alpha=2,\ldots,N-1), \label{eq:RC_flux_map 2}
  \end{align}
\end{subequations}
with the reduced fluxes $\{\hat\bJ_\alpha\}_{\alpha=1}^{N-1}$ of the same Onsager form
\begin{align}
\hat\bJ_\alpha =- \sum_{\beta=1}^{N-1} \hat{\mathbf{M}}_{\alpha\beta}\nabla \hat g_\beta.
\end{align}
\end{enumerate}

\subsubsection*{A6. Reduction for absent phase.}
We require absent phases not to influence the system. For any $K\ge 1$, define the embedded $(N+K)$-phase state
\begin{align}
\tilde\phi_\alpha=\phi_\alpha \quad (\alpha=1,\ldots,N),
\qquad
\tilde\phi_{N+1}=\cdots=\tilde\phi_{N+K}=0.
\end{align}
Then the $(N+K)$-phase constitutive closures must restrict to the original $N$-phase closures in the sense that
\begin{subequations}
    \begin{align}
\sum_{\alpha=1}^{N+K} \tilde{\phi}_\alpha \nabla \tilde{\mu}_\alpha 
=&~\sum_{\alpha=1}^N \phi_\alpha \nabla \mu_\alpha, 
\label{eq:A3p_fcap_ext}
\\
\tilde{\mathbf{J}}_\alpha
=&~
\mathbf{J}_\alpha
\qquad (\alpha=1,\ldots,N),
\label{eq:A3p_flux_ext_active}
\\
\tilde{\mathbf{J}}_{N+1}
=&~
\cdots
=
\tilde{\mathbf{J}}_{N+K}
=
\mathbf{0}.
\label{eq:A3p_flux_ext_inactive}
\end{align}
\end{subequations}

\subsubsection*{A7. Absent phase remains absent.}
We require that an absent phase remains absent. If $\phi_\mA=0$ initially, then $\phi_\mA=0$ for all later times.

\subsubsection*{A8. Invariant under relabelling.}
We assume that the constitutive closures are \emph{label-invariant}: for every permutation $\pi$ of the phase labels, with associated permutation matrix $P_\pi$, the relabeled state
$P_\pi\boldsymbol{\phi}$ induces the relabeled closures
\begin{align}
\sum_{\alpha=1}^N (P_\pi\boldsymbol{\phi})_\alpha \,
\nabla \mu_\alpha[P_\pi\boldsymbol{\phi}]
=&~
\sum_{\alpha=1}^N \phi_\alpha \,\nabla \mu_\alpha[\boldsymbol{\phi}],
\label{eq:A3p_cap_perm}
\\
(\mathbf{J}[P_\pi \boldsymbol{\phi},\,P_\pi \mathbf{g}])_{\mA}
=&~
(P_\pi \mathbf{J}[\boldsymbol{\phi},\,\mathbf{g}])_{\mA},
\label{eq:A3p_flux_perm}
\end{align}
where $\mathbf{J}=(\mathbf{J}_1,\ldots,\mathbf{J}_N)^{\mathsf T}$ and
$\mathbf{g}=(g_1,\ldots,g_N)^{\mathsf T}$.

\subsubsection*{A9. Identical-phases symmetry.}
When two phases are physically identical, the constitutive closures are invariant under exchanging their labels. In particular, if phases $p$ and $q$ are physically identical then we require:
\begin{align}
\sum_{\alpha=1}^N (\mathbf{P}_{pq}\boldsymbol{\phi})_\alpha \,
\nabla \mu_\alpha[\mathbf{P}_{pq}\boldsymbol{\phi}]
=&~
\sum_{\alpha=1}^N \phi_\alpha \,\nabla \mu_\alpha[\boldsymbol{\phi}],
\label{eq:A7p_cap}
\\
(\mathbf{J}[\mathbf{P}_{pq}\boldsymbol{\phi},\,\mathbf{P}_{pq}\mathbf{g}])_{\mA}
=&~
(\mathbf{P}_{pq}\mathbf{J}[\boldsymbol{\phi},\,\mathbf{g}])_{\mA},
\label{eq:A7p_flux}
\end{align}
where $\mathbf{P}_{pq}$ denotes the permutation matrix that swaps the label $p$ and $q$, and where $\mathbf{J}=(\mathbf{J}_1,\ldots,\mathbf{J}_N)^{\mathsf T}$ and
$\mathbf{g}=(g_1,\ldots,g_N)^{\mathsf T}$.

\subsubsection*{A10. Objectivity and isotropy.}
The constitutive closures are objective and isotropic with respect to spatial rotations. For every rotation $\mathbf{Q}\in SO(d)$, define the rotated phase fields by
\begin{align}
\phi_\alpha^{\mathbf Q}(x)
:=
\phi_\alpha(\mathbf Q^{\mathsf T}x).
\end{align}
We require that the constitutive closures transform covariantly under this change of frame:
\begin{align}
\sum_{\alpha=1}^N \phi_\alpha^{\mathbf Q}(x)\,\nabla \mu_\alpha[\boldsymbol{\phi}^{\mathbf Q}](x)
=&~
\mathbf Q \Big( \sum_{\alpha=1}^N \phi_\alpha \,\nabla \mu_\alpha[\boldsymbol{\phi}] \Big)(\mathbf Q^{\mathsf T}x),
\label{eq:A4p_cap_rot}
\\
(\mathbf{J}[\boldsymbol{\phi}^{\mathbf Q},\,\mathbf g^{\mathbf Q}])_\alpha(x)
=&~
\mathbf Q\,(\mathbf{J}[\boldsymbol{\phi},\,\mathbf g])_\alpha(\mathbf Q^{\mathsf T}x),
\qquad \alpha=1,\ldots,N,
\label{eq:A4p_flux_rot}
\end{align}
Equivalently, if $\boldsymbol{\phi}$ is the phase-field solution of the $N$-phase system, then the rotated field
$\boldsymbol{\phi}^{\mathbf Q}$ solves the same constitutive system in the rotated frame.

\subsection{Derivation of the free energy}\label{sec:free_energy_derivation}
Here we derive the admissible structure of the Helmholtz free-energy density
$\Psi(\boldsymbol{\phi},\nabla\boldsymbol{\phi})$ from the axioms in \cref{sec:axioms}. The derivation proceeds in the
following order. First, we derive the bulk part of the Helmholtz free-energy density with an ideal mixing term and an interaction part. After that we derive the capillarity gradient terms. Our guiding axiom in this derivation is A5 (Merging identical phases).

\subsubsection{Bulk free energy}

Let $\Psi_0(\boldsymbol{\phi}):=\Psi(\boldsymbol{\phi},\mathbf{0})$ denote the bulk (homogeneous) energy.

\subsubsection*{Ideal mixing energy}
We assume that phase $1$ and $N$ are identical and consider the merging of these via \eqref{eq: merging}. This merge affects only how $\hat\phi_1$ is split into $\phi_1$ and $\phi_N$. Therefore, when we compare the $N$-phase and $N-1$-phase capillary-forces via \eqref{eq:RC_force}, the only bulk contributions that can be uniquely determined are those that depend on $\phi_1$ and $\phi_N$ separately (i.e.
is sensitive to the partition $\hat\phi_1=\phi_1+\phi_N$). Indeed, any term that depends on $\phi_1$ and $\phi_N$ only through $\hat{\phi}_1$ cannot be distinguished. As such, we first separate the part of the free energy that is sensitive to the partition. Motivated by A6 (Reduction for absent phase), A8 (Invariant under relabelling), and A9 (Identical-phases symmetry), we consider the ansatz:
\begin{align}\label{eq:one_body_ansatz}
\Psi_{0}^{\text{ideal}}(\boldsymbol{\phi}) := \sum_{\alpha=1}^N f_\alpha(\phi_\alpha), \qquad \hat{\Psi}_{0}^{\text{ideal}}(\boldsymbol{\phi}) := \sum_{\alpha=1}^{N-1} f_\alpha(\hat{\phi}_\alpha),
\end{align}
for $\mathcal{C}^2$ scalar functions $f_\alpha$ that can depend on the phase number only via the material constants of that phase. Denoting the chemical potentials associated with these free energies by $\mu_\alpha^{\text{ideal}}$ and $\hat{\mu}_\alpha^{\text{ideal}}$, we find
\begin{align}\label{eq:mu_one_body}
\mu_\alpha^{\text{ideal}} = f_\alpha'(\phi_\alpha), \quad \alpha  = 1,...,N, \qquad \hat{\mu}_\alpha^{\text{ideal}} = f_\alpha'(\hat{\phi}_\alpha),\quad   \alpha  = 1,...,N-1. 
\end{align}
Applying \eqref{eq:RC_force} to \eqref{eq:one_body_ansatz} provides the condition:
\begin{align}
\phi_1 f_1''(\phi_1)\nabla\phi_1 \;+\; \phi_N f_N''(\phi_N)\nabla\phi_N
=
\hat\phi_1 f_1''(\hat\phi_1)\nabla\hat\phi_1.
\label{eq:RC_one_body_vector}
\end{align}
Using independence of $\nabla\phi_1$ and $\nabla\phi_N$, \eqref{eq:RC_one_body_vector} implies equality of the coefficients of $\nabla\phi_1$ and
$\nabla\phi_N$ separately: 
\begin{subequations}\label{eq:coeff_equalities}
\begin{align}
  \phi_1 f_1''(\phi_1) = &~ \hat\phi_1 f_1''(\hat\phi_1),\\
  \phi_N f_N''(\phi_N) = &~ \hat\phi_1 f_1''(\hat\phi_1).
\end{align}
\end{subequations}
Fixing $\phi_1$ and varying $\phi_N$ subject to $\phi_1+\phi_N = \hat{\phi}_1$, and noting that the material constants of phase $1$ and $N$ coincide,  yields that the scalar quantity
$uf_\alpha''(u)$ is constant for each $\alpha$ on $(0,1)$:
\begin{align}
uf_\alpha''(u)=W_\alpha\qquad\text{for all }u\in(0,1),
\label{eq:ufpp_const}
\end{align}
for constants $W_\alpha$. The solution of \eqref{eq:ufpp_const} is given by:
\begin{align}
f_\alpha(u)=W_\alpha u\log u + a_\alpha u + b_\alpha,
\label{eq:f_solution}
\end{align}
with constants $a_\alpha,b_\alpha$. 
Noting that affine terms do not affect
$\nabla\mu_\alpha$, which is the only way the free energy appears in the NSCH model, allows us to discard these terms in \eqref{eq:RC_force}. As such, the only relevant part is
\begin{align}
\Psi_0^{\text{ideal}}(\boldsymbol{\phi}) =  \sum_{\alpha=1}^N W_\alpha \phi_\alpha\log\phi_\alpha.
\label{eq:Psi_log}
\end{align}

\subsubsection*{Quadratic interaction term} Define the remainder (enthalpic) bulk energy by
\begin{align}
\Psi_{0}^{\text{enth}}(\boldsymbol{\phi})
:=
\Psi_0(\boldsymbol{\phi})-\Psi_0^{\text{ideal}}(\boldsymbol{\phi}).
\label{eq:Psi_rem_def}
\end{align}
Restricted to the homogeneous bulk setting, we have:
\begin{align}
\label{eq:bulk_force_coeffs}
\sum_{\beta=1}^N \phi_\beta \nabla\mu_\beta
=
\sum_{\gamma=1}^N 
\Big(\sum_{\beta=1}^N \phi_\beta \frac{\partial^2\Psi_0}{\partial\phi_\beta\partial\phi_\gamma}\Big)\nabla\phi_\gamma,
\end{align}
Thus reduction consistency \eqref{eq:RC_force} constrains the coefficient functions
\begin{align}
C_\gamma(\boldsymbol{\phi}) := \sum_{\beta=1}^N \phi_\beta \frac{\partial^2\Psi_0}{\partial\phi_\beta\partial\phi_\gamma}(\boldsymbol{\phi}),
\qquad \gamma=1,\dots,N.
\label{eq:C_gamma_def}
\end{align}

Axiom A5 (Merging identical phases), through \eqref{eq:RC_force}, restricts the dependence of $\Psi_{0}^{\text{enth}}$ on redistributions of $\phi_1$ and $\phi_N$ at fixed
$\hat\phi_1=\phi_1+\phi_N$. In particular, it forces all mixed second derivatives of $\Psi_{0}^{\mathrm{enth}}$ to be
constant (independent of $\boldsymbol{\phi}$), and hence $\Psi_{0}^{\mathrm{enth}}$ must be a quadratic form in
$\boldsymbol{\phi}$ up to affine gauge terms. We therefore write
\begin{align}
\Psi_0^{\text{enth}}(\boldsymbol{\phi})
=
-W\sum_{\alpha,\beta=1}^N   \chi_{\alpha\beta}\phi_\alpha\phi_\beta,
\qquad
\chi_{\alpha\beta}=\chi_{\beta\alpha},
\label{eq:Psi_chi}
\end{align}
where the constants $\chi_{\alpha\beta}$ parametrize the interaction matrix, and where we take a common factor $W$ for convenience. Combining \eqref{eq:Psi_log} and
\eqref{eq:Psi_chi} yields the bulk energy (up to affine gauges)
\begin{align}
\Psi_0(\boldsymbol{\phi})
=
\sum_{\alpha=1}^{N}W_\alpha \phi_\alpha\log\phi_\alpha
-W\sum_{\alpha,\beta=1}^{N} \chi_{\alpha\beta}\phi_\alpha\phi_\beta.
\label{eq:Psi0_final}
\end{align}

\subsubsection{Capillarity term}

\subsubsection*{Linear gradient terms}
We next consider possible terms linear in gradients:
\begin{align}
\Psi_{\mathrm{lin}}(\boldsymbol{\phi},\nabla\boldsymbol{\phi})
=
\sum_{\alpha=1}^N \mathbf{b}_\alpha(\boldsymbol{\phi})\cdot\nabla\phi_\alpha,
\label{eq:Psi_lin}
\end{align}
where each $\mathbf{b}_\alpha(\boldsymbol{\phi})\in\mathbb{R}^d$ depends only on the local composition. By A2 (Local first-gradient free energy and chemical potentials), the chemical potentials induced by \eqref{eq:Psi_lin} are
\begin{align}
\mu_\alpha^{\mathrm{lin}}
=
\sum_{\beta=1}^N \mathbf a_{\alpha\beta}(\boldsymbol{\phi})\cdot \nabla\phi_\beta.
\label{eq:mu_lin_A}
\end{align}
where
\begin{align}
\mathbf a_{\alpha\beta}(\boldsymbol{\phi})
:=
\partial_{\phi_\alpha}\mathbf{b}_\beta(\boldsymbol{\phi})
-
\partial_{\phi_\beta}\mathbf{b}_\alpha(\boldsymbol{\phi}),
\qquad
\mathbf a_{\alpha\beta}=-\mathbf a_{\beta\alpha},
\label{eq:Aab_def}
\end{align}
and hence
\begin{align}
\sum_{\alpha=1}^N \phi_\alpha \nabla \mu_\alpha^{\mathrm{lin}}
=
\sum_{\alpha,\beta=1}^N
\phi_\alpha \,
\nabla\!\Big(
\mathbf a_{\alpha\beta}
\cdot \nabla\phi_\beta
\Big).
\label{eq:cap_lin_full}
\end{align}
We now invoke A10 (\emph{Objectivity and isotropy}). For fixed composition $\boldsymbol{\phi}$, each coefficient
$\mathbf a_{\alpha\beta}(\boldsymbol{\phi})\in\mathbb{R}^d$ is a spatial vector-valued constitutive quantity whose arguments are only the scalar variables $\phi_1,\dots,\phi_N$. Under a spatial rotation, these scalar inputs are unchanged, while isotropy requires the constitutive response to rotate covariantly. Therefore
\begin{align}
\mathbf a_{\alpha\beta}(\boldsymbol{\phi})
=
\mathbf Q\,\mathbf a_{\alpha\beta}(\boldsymbol{\phi})
\qquad
\text{for all }\mathbf Q\in SO(d).
\end{align}
The only vector in $\mathbb{R}^d$ invariant under all rotations is the zero vector. Hence
\begin{align}
\mathbf a_{\alpha\beta}(\boldsymbol{\phi})\equiv \mathbf 0
\qquad
\text{for all }\alpha,\beta.
\end{align}
Substituting into \eqref{eq:mu_lin_A} and \eqref{eq:cap_lin_full} yields
\begin{align}
\mu_\alpha^{\mathrm{lin}}=0,
\qquad
\sum_{\alpha=1}^N \phi_\alpha \nabla \mu_\alpha^{\mathrm{lin}}
\equiv \mathbf 0.
\label{eq:no_linear_grad_force}
\end{align}
Thus linear gradient terms do not contribute to the capillary-force closure and may be discarded ($\Psi_{\mathrm{lin}} = 0$).

\subsubsection*{Quadratic gradient terms}
The above reasoning shows that the leading admissible gradient dependence is quadratic. Axiom A3 (Constant-capillarity) specifies that
the Hessian of $\Psi$ with respect to $\nabla\boldsymbol{\phi}$ is independent of $\boldsymbol{\phi}$. This yields the constant-coefficient quadratic form
\begin{align}
\Psi_{\rm quad}(\nabla\boldsymbol{\phi})
=
\frac12\sum_{\alpha,\beta=1}^N \kappa_{\alpha\beta}\nabla\phi_\alpha\cdot\nabla\phi_\beta,
\qquad
\kappa_{\alpha\beta}=\kappa_{\beta\alpha}.
\label{eq:Psi_kappa}
\end{align}

\subsubsection{Final form and structural summary}

Combining \eqref{eq:Psi0_final} and \eqref{eq:Psi_kappa} yields the derived free-energy structure:
\begin{subequations}
\label{eq:Psi_final}
    \begin{align}
\Psi(\boldsymbol{\phi},\nabla\boldsymbol{\phi})
=&~ \Psi_0(\boldsymbol{\phi},\nabla\boldsymbol{\phi})+\frac12\sum_{\alpha,\beta=1}^{N}\kappa_{\alpha\beta}\nabla\phi_\alpha\cdot\nabla\phi_\beta.\\
\Psi_0(\boldsymbol{\phi},\nabla\boldsymbol{\phi})
=&~
\sum_{\alpha=1}^{N}W_\alpha \phi_\alpha\log\phi_\alpha
-W\sum_{\alpha,\beta=1}^{N} \chi_{\alpha\beta}\phi_\alpha\phi_\beta,
\end{align}
\end{subequations}
where the subscript $0$ refers to the standard free energy that does not contain any gradient contributions, whereas the gradient term represents an excess free energy in the interfacial region. Here $\kappa_{\mA\mB}=\kappa_{\mB\mA}$ defines the symmetric (volume-based) capillarity matrix that includes both self- and cross-interaction terms among phases. We note that the symmetry assumptions yield a symmetric Korteweg stress, cf. \eqref{eq: korteweg}.  The associated chemical potential takes the form:
\begin{align}
    \mu_\mA = \partial_{\phi_\mA} \Psi_0 - \sum_{\mB} \kappa_{\mA\mB}\Delta \phi_\mB,
\end{align}
where we used the independence of the state variables $\left\{\phi_\mA\right\}$.

In the incompressible literature it is common to work with capillarity tensors that exclude self-interactions, see e.g. \cite{boyer2006study}. With the aid of the saturation constraint \eqref{eq: saturation constraint} we reformulate the gradient contribution.
\begin{lemma}[Gradient contribution to the free energy]\label{lem: rewriting gradient term}
    The gradient contribution of the free energy \eqref{eq:Psi_final} may be written as
    \begin{subequations}
      \begin{align}
     \tfrac{1}{2}\sum_{\mA,\mB} \kappa_{\mA\mB} \nabla \phi_\mA \cdot \nabla \phi_\mB =&~ -\tfrac{1}{2}\sum_{\substack{\mA,\mB\\\mA < \mB}} \sigma_{\mA\mB} \nabla \phi_\mA \cdot \nabla \phi_\mB,\\
     \sigma_{\mA\mB}=\sigma_{\mB\mA} =&~ \kappa_{\mA\mA}+\kappa_{\mB\mB}-2\kappa_{\mA\mB} \geq 0,
     \end{align}
     \end{subequations}
     where $\sigma_{\mA\mB}$ only accounts for interfacial contributions between different phases.
\end{lemma}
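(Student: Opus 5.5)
The plan is a purely algebraic manipulation that uses only the saturation constraint \eqref{eq: saturation constraint}. Differentiating it in space gives $\sum_\mA \nabla\phi_\mA = \mathbf{0}$ pointwise. Writing $\mathbf{a}_\mA := \nabla\phi_\mA$, the identity to prove becomes a statement about the quadratic form $\sum_{\mA,\mB}\kappa_{\mA\mB}\,\mathbf{a}_\mA\cdot\mathbf{a}_\mB$ restricted to vectors with $\sum_\mA \mathbf{a}_\mA=\mathbf{0}$. The idea is to remove the diagonal (self-interaction) terms in favour of off-diagonal ones.

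\textbf{Step 1: eliminate the diagonal terms.} The constraint gives $\mathbf{a}_\mA = -\sum_{\mB\neq\mA}\mathbf{a}_\mB$, and hence
\begin{align*}
\kappa_{\mA\mA}\,|\mathbf{a}_\mA|^2 = -\sum_{\mB\neq\mA}\kappa_{\mA\mA}\,\mathbf{a}_\mA\cdot\mathbf{a}_\mB .
\end{align*}
Substituting this into the full double sum leaves only off-diagonal pairs:
\begin{align*}
\sum_{\mA,\mB}\kappa_{\mA\mB}\,\mathbf{a}_\mA\cdot\mathbf{a}_\mB=\sum_{\mA\neq\mB}(\kappa_{\mA\mB}-\kappa_{\mA\mA})\,\mathbf{a}_\mA\cdot\mathbf{a}_\mB .
\end{align*}

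\textbf{Step 2: symmetrize over ordered pairs.} Using $\kappa_{\mA\mB}=\kappa_{\mB\mA}$ and the symmetry of the dot product, combine the ordered pairs $(\mA,\mB)$ and $(\mB,\mA)$ with $\mA<\mB$. This gives
\begin{align*}
\sum_{\mA<\mB}\bigl(2\kappa_{\mA\mB}-\kappa_{\mA\mA}-\kappa_{\mB\mB}\bigr)\,\mathbf{a}_\mA\cdot\mathbf{a}_\mB=-\sum_{\mA<\mB}\sigma_{\mA\mB}\,\mathbf{a}_\mA\cdot\mathbf{a}_\mB .
\end{align*}
Multiplying by $\tfrac12$ yields the claimed identity. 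The symmetry $\sigma_{\mA\mB}=\sigma_{\mB\mA}$ is immediate from the definition. Since $\sigma_{\mA\mA}=0$, the self-interactions disappear, which justifies the remark that $\sigma$ only accounts for interfaces between different phases.

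\textbf{Step 3: the sign condition (main obstacle).} The inequality $\sigma_{\mA\mB}\ge 0$ does not follow from the algebra above; it has to come from an admissibility requirement on $\kappa$. The approach is to note that
\begin{align*}
\sigma_{\mA\mB}=(\mathbf{e}_\mA-\mathbf{e}_\mB)^{\mathsf T}\kappa\,(\mathbf{e}_\mA-\mathbf{e}_\mB),
\end{align*}
where $\mathbf{e}_\mA-\mathbf{e}_\mB$ lies in the tangent space $\{\boldsymbol{\xi}:\sum_\mA\xi_\mA=0\}$ of the Gibbs simplex. The natural assumption is that the gradient energy is nonnegative for all admissible gradient fields, i.e.\ $\kappa$ is positive semidefinite on this tangent space. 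This is the condition needed for the interfacial energy to be physically meaningful and bounded below. Under that assumption $\sigma_{\mA\mB}\ge0$ follows at once. The write-up should state this explicitly as the standing admissibility assumption on $\kappa$ rather than as a consequence of the axioms; I expect making this precise to be the only delicate point.
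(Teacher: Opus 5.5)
Your proof is correct. Steps 1 and 2 establish the identity using only $\sum_\alpha\nabla\phi_\alpha=\mathbf{0}$ and $\kappa_{\alpha\beta}=\kappa_{\beta\alpha}$, and the symmetry $\sigma_{\alpha\beta}=\sigma_{\beta\alpha}$ and $\sigma_{\alpha\alpha}=0$ follow immediately from the definition. The paper gives no argument of its own here: its proof only points to \cite{surfacetension2026}. So there is no competing route to compare.

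Your handling of the inequality in Step~3 is also right. The inequality $\sigma_{\alpha\beta}\ge 0$ is not an identity; it fails, for instance, for $\kappa=-\mathbf{I}$. It must therefore be imposed as an admissibility condition on $\kappa$, and the paper does not derive such a condition from its axioms.

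Your tangent-space positive semidefiniteness is better adapted than the full positive semidefiniteness the paper invokes right after the lemma. The substitution $\kappa\mapsto\kappa+\mathbf{1}\mathbf{w}^{\mathsf T}+\mathbf{w}\mathbf{1}^{\mathsf T}$ changes neither the gradient energy of admissible fields nor any $\sigma_{\alpha\beta}$. Only the gauge-invariant part of $\kappa$, which is exactly the information carried by $\sigma$, can matter.

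The paper's own data illustrate this. The calibrated $\bar{\boldsymbol{\kappa}}$ for the three-phase benchmark in Appendix~\ref{sec appendix parameters} is indefinite on $\R^3$, since its leading $2\times 2$ minor is negative. Yet all three of its $\sigma_{\alpha\beta}$ are positive.

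For $N\ge 3$, your condition is sufficient but not necessary for pairwise nonnegativity of $\sigma$. It buys more than the lemma asserts, namely nonnegativity of the gradient energy for all admissible gradient fields.
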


\begin{proof}
    See \cite{surfacetension2026}.
\end{proof}
The chemical potential may now be written as
\begin{align}
    \mu_\mA =&~ \partial_{\phi_\mA} \Psi_0 +\sum_{\substack{\mB\\\mB > \mA}} \sigma_{\mA\mB}\Delta \phi_\mB.
\end{align}
Note that at the continuous level all these formulations are equivalent by using the saturation condition. Under the assumption that the matrix $\kappa_{\mA\mB}$ is symmetric positive semi-definite, \eqref{eq:Psi_final} yields a convex gradient energy, which is more amenable to discretization. We emphasize that the reduced form $\sigma_{\mA\mB}$ without the saturation constraint does not imply a convex energy functional, as this matrix is not symmetric positive semi-definite.

\begin{remark}[Capillarity coefficients are not surface tensions]
    We note that neither $\kappa_{\mA\mB}$ nor $\sigma_{\mA\mB}$ represents a surface tension quantity.
\end{remark}

\begin{remark}[Alternative ideal free-energy forms]
Again, noting that affine terms do not affect the model, the ideal contribution
\begin{align}
\Psi_0^{\mathrm{ideal}}(\boldsymbol{\phi})
=
\sum_{\alpha=1}^N W_\alpha \phi_\alpha \log \phi_\alpha
\end{align}
is for $W_\alpha=\Lambda_\alpha\rho_\alpha$, equivalent to 
\begin{align}
\Psi_0^{\mathrm{ideal}}(\tilde{\boldsymbol\rho})
\sim
\sum_{\alpha=1}^N \Lambda_\alpha \tilde\rho_\alpha \log\!\Big(\frac{\tilde\rho_\alpha}{\rho_0}\Big)
\end{align}
for any fixed reference density $\rho_0>0$, and for $W_\alpha=\Gamma_\alpha\rho_\alpha/M_\alpha$ equivalent to
\begin{align}
\Psi_0^{\mathrm{ideal}}(\mathbf c)
\sim
\sum_{\alpha=1}^N \Gamma_\alpha c_\alpha \log\!\Big(\frac{c_\alpha}{c_0}\Big),
\end{align}
where $c_\alpha=\tilde\rho_\alpha/M_\alpha$ is the molar concentration and $c_0>0$ is a fixed reference concentration.
\end{remark}

\subsection{Derivation of the mobility tensor}\label{sec:mobility_derivation}

In this section we derive a reduction-consistent Onsager mobility tensor $\mathbf{M}(\boldsymbol{\phi})$ for the flux law
in \cref{sec: NSCH model}. We work within the mobility model class of \cref{sec:axioms}, in particular symmetry, positive
semidefiniteness, and constraint compatibility (A4/M1--M3), and we impose the diffusive-flux reduction requirement
(A5).

From \eqref{eq:RC_flux_map} we find:
\begin{subequations}
\begin{align}
   - \sum_{\beta=2}^{N-1} (\mathbf{M}_{1\beta}+ \mathbf{M}_{N \beta}) \nabla \hat g_\beta - (\mathbf{M}_{11}+\mathbf{M}_{N1}) \nabla g_1  - (\mathbf{M}_{1N}+\mathbf{M}_{NN}) \nabla g_N \nn \\
   = - \sum_{\beta=2}^{N-1} \hat{\mathbf{M}}_{1\beta} \nabla \hat g_\beta  - \hat{\mathbf{M}}_{11} \nabla \hat{g}_1,\label{eq: J1N}\\
   - \sum_{\beta=2}^{N-1} \mathbf{M}_{\alpha\beta} \nabla g_\beta - \mathbf{M}_{\alpha1}\nabla g_1  - \mathbf{M}_{\alpha N} \nabla g_N    = - \sum_{\beta=2}^{N-1} \hat{\mathbf{M}}_{\alpha \beta} \nabla \hat g_\beta  - \hat{\mathbf{M}}_{\alpha1} \nabla \hat{g}_1\nn\\
   \text{for} \quad \alpha = 2,...,N-1.\label{eq: Jalpha}
\end{align}    
\end{subequations}
The arbitrariness of $\nabla g_\mA$ provides the conditions:
\begin{subequations}
    \begin{align}
        \mathbf{M}_{1\beta}+ \mathbf{M}_{N \beta} =&~ \hat{\mathbf{M}}_{1\beta}, \quad \text{for}\quad \beta = 2,...,N-1.\\
        \mathbf{M}_{\alpha\beta} = &~ \hat{\mathbf{M}}_{\alpha\beta} \quad \text{for}\quad \alpha = 2,...,N-1.
    \end{align}
\end{subequations}
We have the identities:
\begin{subequations}\label{eq:g_identities_for_mobility}
  \begin{align}
\nabla \hat g_\beta =&~ \nabla g_\beta \quad (\beta=2,\dots,N-1),\\
\phi_1\nabla g_1+\phi_N\nabla g_N =&~ \hat\phi_1\nabla\hat g_1,
  \end{align}
\end{subequations}
and hence:
\begin{align}
\label{eq:ghat1_as_convex_combo}
\nabla\hat g_1=\frac{\phi_1}{\hat\phi_1}\nabla g_1+\frac{\phi_N}{\hat\phi_1}\nabla g_N.
\end{align}
Using \eqref{eq:g_identities_for_mobility}--\eqref{eq:ghat1_as_convex_combo} in \eqref{eq: Jalpha}-\eqref{eq: J1N} gives
\begin{subequations}\label{eq:match_1alpha_equation}
\begin{align}
&-\sum_{\beta=2}^{N-1} (\mathbf{M}_{1\beta}+\mathbf{M}_{N\beta})\nabla g_\beta
-(\mathbf{M}_{11}+\mathbf{M}_{N1})\nabla g_1
-(\mathbf{M}_{1N}+\mathbf{M}_{NN})\nabla g_N
\nonumber\\
&\hspace{1cm}=
-\sum_{\beta=2}^{N-1}\hat{\mathbf{M}}_{1\beta}\nabla g_\beta
-\hat{\mathbf{M}}_{11}\Big(\frac{\phi_1}{\hat\phi_1}\nabla g_1+\frac{\phi_N}{\hat\phi_1}\nabla g_N\Big),
\label{eq:match_1_equation}\\
&-\sum_{\beta=2}^{N-1} \mathbf{M}_{\alpha\beta}\nabla g_\beta
-\mathbf{M}_{\alpha 1}\nabla g_1
-\mathbf{M}_{\alpha N}\nabla g_N
\nonumber\\
&\hspace{1cm}=
-\sum_{\beta=2}^{N-1} \hat{\mathbf{M}}_{\alpha\beta}\nabla g_\beta
-\hat{\mathbf{M}}_{\alpha 1}\Big(
\frac{\phi_1}{\hat\phi_1}\nabla g_1+\frac{\phi_N}{\hat\phi_1}\nabla g_N
\Big),  \quad \text{for} \quad \alpha = 2,...,N-1.
\label{eq:match_alpha_equation}
\end{align}
\end{subequations}
Since \eqref{eq:match_1alpha_equation} holds for all admissible driving fields, the coefficients of the
independent gradients must vanish, i.e.:
\begin{subequations}\label{eq: mob relations}
\begin{align}
%\hat{\mathbf{M}}_{1\beta}=&~\mathbf{M}_{1\beta}+\mathbf{M}_{N\beta}\quad \text{for} \quad\beta=2,\dots,N-1,\\
\mathbf{M}_{11}+\mathbf{M}_{N1}=&~\hat{\mathbf{M}}_{11}\frac{\phi_1}{\hat\phi_1}, \label{eq: mob relations 1}\\
\mathbf{M}_{1N}+\mathbf{M}_{NN}=&~\hat{\mathbf{M}}_{11}\frac{\phi_N}{\hat\phi_1}, \label{eq: mob relations 2}\\
%\mathbf{M}_{\alpha\beta}=&~\hat{\mathbf{M}}_{\alpha\beta} \qquad\qquad \text{for} \quad \beta=2,\dots,N-1,\\
\mathbf{M}_{\alpha 1}=&~\hat{\mathbf{M}}_{\alpha1}\frac{\phi_1}{\hat\phi_1},  \quad \text{for} \quad \alpha = 2,...,N-1, \label{eq: mob relations 3}\\
\mathbf{M}_{\alpha N}=&~\hat{\mathbf{M}}_{\alpha1}\frac{\phi_N}{\hat\phi_1},  \quad \text{for} \quad \alpha = 2,...,N-1.\label{eq: mob relations 4}
\end{align}
\end{subequations}
We proceed by considering variations $(\phi_1,\phi_N) \rightarrow (\phi_1 + \delta,\phi_N-\delta)$ that keep $\hat{\phi}_1$ fixed. The mobilities $\hat{\mathbf{M}}_{\alpha1}$ and $\hat{\mathbf{M}}_{11}$ are unchanged by these variations. Along redistributions $(\phi_1,\phi_N)\mapsto(\phi_1+\delta,\phi_N-\delta)$ that keep
  $\hat\phi_1=\phi_1+\phi_N$ (and thus $\hat{\boldsymbol{\phi}}$) fixed, relations \eqref{eq: mob relations 3}-\eqref{eq: mob relations 4} imply that 
  \begin{itemize}
    \item $\mathbf{M}_{11}+\mathbf{M}_{N1}$ depends linearly on $\phi_1$ (equivalently, $(\mathbf{M}_{11}+\mathbf{M}_{N1})/\phi_1$ is split-independent),
    \item $\mathbf{M}_{1N}+\mathbf{M}_{NN}$ depends linearly on $\phi_N$ (equivalently, $(\mathbf{M}_{1N}+\mathbf{M}_{NN})/\phi_N$ is split-independent),
    \item $\mathbf{M}_{\alpha 1}$ depends linearly on $\phi_1$ for $\alpha=2,\ldots,N-1$ (equivalently, $\mathbf{M}_{\alpha1}/\phi_1$ is split-independent),
    \item $\mathbf{M}_{\alpha N}$ depends linearly on $\phi_N$ for $\alpha=2,\ldots,N-1$ (equivalently, $\mathbf{M}_{\alpha N}/\phi_N$ is split-independent).
  \end{itemize}
These split-linearity statements are consequences of merging the \emph{specific} identical pair $(1,N)$ and therefore constrain only how mobility entries respond to redistributions of $\hat\phi_1=\phi_1+\phi_N$ between $\phi_1$ and $\phi_N$.
In particular, they enforce linear dependence on $\phi_1$ or $\phi_N$ \emph{along such redistributions}, but they do not
(by themselves) determine how the same entries depend on other volume fractions, e.g.\ on $\phi_\alpha$ for
$\alpha\in\{2,\ldots,N-1\}$, since these are held fixed during the redistribution. Symmetry of $\mathbf{M}$ only transfers
the same split-linearity between corresponding rows and columns (e.g.\ $\mathbf{M}_{\alpha1}=\mathbf{M}_{1\alpha}$), but it
does not introduce new dependence on $\phi_\alpha$. To obtain a closed constitutive form for \emph{all} pairs
$(\alpha,\beta)$ that is consistent with performing the same merge operation for arbitrary identical phases and for any
number of phases (A5), we now introduce an extensible, label-covariant mobility form compatible with M1-M4. We choose
$\mathbf{M}_{\alpha\beta}=M_{\alpha\beta}\mathbf{I}$ from pairwise exchange mobilities by prescribing, for $\alpha\neq\beta$,
\begin{align}
M_{\alpha\beta}(\boldsymbol{\phi}) = - m_{\alpha\beta}(\boldsymbol{\phi}) \mathcal{M}(\phi_\alpha,\phi_\beta),
\qquad
m_{\alpha\beta}(\boldsymbol{\phi})=m_{\beta\alpha}(\boldsymbol{\phi})\ge 0,
\label{eq:pairwise_form}
\end{align}
where $m_{\mA\mB}(\boldsymbol{\phi})=m_{\mA\mB}(\hat{\boldsymbol{\phi}})$ and where $\mathcal{M}:(0,1)^2\to{\R}$ is a single symmetric scalar function,
$\mathcal{M}(u,v)=\mathcal{M}(v,u)$, shared by all pairs (label covariance). The diagonal entries are then fixed by the
constraint compatibility $M(\boldsymbol{\phi})\mathbf{1}=\mathbf{0}$:
\begin{align}
M_{\alpha\alpha}(\boldsymbol{\phi}) = \sum_{\beta\neq\alpha} m_{\alpha\beta}(\boldsymbol{\phi}) \mathcal{M}(\phi_\alpha,\phi_\beta).
\label{eq:diag_from_constraint}
\end{align}

We now impose reduction consistency. Let phases $1$ and $N$ be identical. By identical-phases symmetry, $m_{1\mA}=m_{N\mA}$ for all $\mA=2,\dots,N-1$. The off-diagonal relation yields, for each
$\mA=2,\dots,N-1$,
\begin{align}
\hat M_{1\mA}=M_{1\mA}+M_{N\mA}.
\end{align}
Substituting the form \eqref{eq:pairwise_form} and using $\hat\phi_1=\phi_1+\phi_N$ and $\hat\phi_\mA=\phi_\mA$ gives
\begin{align}
-m_{1\mA}\mathcal{M}(\hat\phi_1,\phi_\mA)
=
-m_{1\mA}\mathcal{M}(\phi_1,\phi_\mA)
-m_{1\mA}\mathcal{M}(\phi_N,\phi_\mA).
\end{align}
Cancelling $-m_{1\mA}$ (for $m_{1\mA}>0$) yields the functional equation
\begin{align}
\mathcal{M}(\phi_1+\phi_N,\phi_\mA)=\mathcal{M}(\phi_1,\phi_\mA)+\mathcal{M}(\phi_N,\phi_\mA),
\qquad \mA=2,\dots,N-1.
\label{eq:M_additivity}
\end{align}
Since $\mA$ is arbitrary and $\phi_\mA$ can take any value in $(0,1)$, \eqref{eq:M_additivity} implies that for each fixed
$v\in(0,1)$ the map $u\mapsto \mathcal{M}(u,v)$ is additive on $(0,1)$. By symmetry of $\mathcal{M}$, the same holds in
the second argument. Together with the natural boundary condition $\mathcal{M}(0,v)=0$ (no mobility involving an absent
phase) and mild regularity (e.g.\ continuity), this forces $\mathcal{M}$ to be bilinear:
\begin{align}
\mathcal{M}(u,v)=cuv.
\label{eq:M_bilinear}
\end{align}
The constant $c>0$ may be absorbed into the coefficients $m_{\alpha\beta}$, and we therefore set, without loss of
generality,
\begin{align}
\mathcal{M}(u,v)=uv.
\label{eq:M_uv}
\end{align}
With \eqref{eq:M_uv}, the mobility tensor is
\begin{subequations}
\begin{align}
M_{\alpha\beta}(\boldsymbol{\phi})=&~-m_{\alpha\beta}\phi_\alpha\phi_\beta \quad (\alpha\neq\beta),\\
M_{\alpha\alpha}(\boldsymbol{\phi})=&~\sum_{\beta\neq\alpha} m_{\alpha\beta}\phi_\alpha\phi_\beta
=\phi_\alpha\sum_{\beta\neq\alpha} m_{\alpha\beta}\phi_\beta,
\label{eq:M_final_entries}
\end{align}
\end{subequations}
with $m_{\alpha\beta}=m_{\beta\alpha}\ge 0$.

\subsection{Mixture-aware properties}

We now collect the main mixture-aware consequences of the constitutive structure derived above. The first two results show that the model is reduction-consistent in the two basic situations of interest: merging physically identical phases, and restricting to a face of the Gibbs simplex corresponding to an absent phase.
\begin{theorem}[Reduction by merging identical phases]\label{thm:merge_reduction}
Assume $N\ge 2$ and consider two phases, without loss of generality phases $1$ and $N$, which are identical in the sense that
\begin{subequations}
\begin{align}
\rho_1&=\rho_N,\\
\chi_{1\mA}&=\chi_{N\mA},\qquad \mA=1,\dots,N,\\
\kappa_{1\mA}&=\kappa_{N\mA},\qquad \mA=1,\dots,N,\\
m_{1\mA}&=m_{N\mA},\qquad \mA=2,\dots,N-1.
\end{align}
\end{subequations}
Define the merged variables by
\begin{subequations}
\begin{align}
\widehat\phi_1&:=\phi_1+\phi_N,\\
\widehat\phi_\mA&:=\phi_\mA,\qquad \mA=2,\dots,N-1.
\end{align}
\end{subequations}
Then the $N$-phase NSCH system reduces to the $(N-1)$-phase NSCH system for
\begin{align}
\widehat{\boldsymbol\phi}=(\widehat\phi_1,\widehat\phi_2,\dots,\widehat\phi_{N-1}),
\end{align}
with the reduced coefficients:
\begin{subequations}
\begin{align}
\hat\rho_1&:=\rho_1,\qquad \hat\rho_\mA:=\rho_\mA,\qquad \mA=2,\dots,N-1,\\
\hat\chi_{11}&:=\chi_{11}=\chi_{1N}=\chi_{NN},\\
\hat\chi_{1\mA}&:=\chi_{1\mA}=\chi_{N\mA},\qquad \mA=2,\dots,N-1,\\
\hat\chi_{\mB\mA}&:=\chi_{\mB\mA},\qquad \mB,\mA=2,\dots,N-1,\\
\hat\kappa_{11}&:=\kappa_{11}=\kappa_{1N}=\kappa_{NN},\\
\hat\kappa_{1\mA}&:=\kappa_{1\mA}=\kappa_{N\mA},\qquad \mA=2,\dots,N-1,\\
\hat\kappa_{\mB\mA}&:=\kappa_{\mB\mA},\qquad \mB,\mA=2,\dots,N-1,\\
\hat m_{1\mA}&:=m_{1\mA}=m_{N\mA},\qquad \mA=2,\dots,N-1,\\
\hat m_{\mB\mA}&:=m_{\mB\mA},\qquad \mB,\mA=2,\dots,N-1.
\end{align}
\end{subequations}
\end{theorem}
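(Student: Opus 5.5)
The plan is to verify, equation by equation, that the merged fields $(\vv,\widehat{\boldsymbol\phi},\lambda)$ satisfy the $(N-1)$-phase system \eqref{eq:sys2} with the reduced coefficients of the statement. I fix the gauge $\hat\lambda=\lambda$. I also use the fact that identical phases share all material constants; in particular I take $W_1=W_N=:W_1$ in the ideal part of \eqref{eq:Psi_final}, and I will state this explicitly as part of the identity hypothesis.

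\textbf{Densities.} Since $\rho_1=\rho_N$, we have $\rho=\sum_\mA\rho_\mA\phi_\mA=\sum_\mA\hat\rho_\mA\hat\phi_\mA=\hat\rho$. This is the only place the density hypothesis enters the inertia and gravity terms.

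\textbf{Chemical potentials and capillary force.} Using $\chi_{1\mB}=\chi_{N\mB}$, the sums $\sum_\mB\chi_{\mA\mB}\phi_\mB$ collapse to $\sum_\mB\hat\chi_{\mA\mB}\hat\phi_\mB$. Likewise, $\kappa_{1\mB}=\kappa_{N\mB}$ gives $\sum_\mB\kappa_{\mA\mB}\Delta\phi_\mB=\sum_\mB\hat\kappa_{\mA\mB}\Delta\hat\phi_\mB$. These two identities yield:
\begin{itemize}
\item $\mu_\mA=\hat\mu_\mA$ for $\mA=2,\dots,N-1$;
\item $\mu_1=\hat\mu_1+W_1\log(\phi_1/\hat\phi_1)$ and $\mu_N=\hat\mu_1+W_1\log(\phi_N/\hat\phi_1)$.
\end{itemize}
The only discrepancy is therefore the split-dependent log term. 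The key identity is
\begin{align*}
\phi_1\nabla\log\frac{\phi_1}{\hat\phi_1}+\phi_N\nabla\log\frac{\phi_N}{\hat\phi_1}=\nabla\phi_1+\nabla\phi_N-\nabla\hat\phi_1=0,
\end{align*}
which gives $\phi_1\nabla\mu_1+\phi_N\nabla\mu_N=\hat\phi_1\nabla\hat\mu_1$. Hence \eqref{eq:RC_force} holds and the momentum equation is reproduced.

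\textbf{Generalized potentials and fluxes.} With $\rho_1=\rho_N$ and $\hat\lambda=\lambda$, the same computation gives two facts:
\begin{itemize}
\item $g_\mA=\hat g_\mA$ for $\mA=2,\dots,N-1$;
\item $\phi_1\nabla g_1+\phi_N\nabla g_N=\hat\phi_1\nabla\hat g_1$, which is \eqref{eq:g_identities_for_mobility}.
\end{itemize}
Next I write the fluxes from \eqref{eq:M_final_entries} in exchange form,
\begin{align*}
\bJ_\mA=-\phi_\mA\sum_{\mB\neq\mA}m_{\mA\mB}\phi_\mB(\nabla g_\mA-\nabla g_\mB).
\end{align*}

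For $\mA\in\{2,\dots,N-1\}$, the terms with $\mB=1,N$ combine via $m_{\mA1}=m_{\mA N}=\hat m_{\mA1}$ into $\hat m_{\mA1}\phi_\mA(\hat\phi_1\nabla\hat g_\mA-\hat\phi_1\nabla\hat g_1)$. Hence $\bJ_\mA=\hat\bJ_\mA$.

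For $\bJ_1+\bJ_N$, the mutual exchange terms $\mp m_{1N}\phi_1\phi_N(\nabla g_1-\nabla g_N)$ cancel exactly. This is why no condition on $m_{1N}$ is needed. For the remaining terms with $\mB\ge2$ I use $m_{1\mB}=m_{N\mB}$, which gives
\begin{align*}
-\hat m_{1\mB}\phi_\mB\big[(\phi_1\nabla g_1+\phi_N\nabla g_N)-\hat\phi_1\nabla g_\mB\big]=-\hat m_{1\mB}\hat\phi_1\hat\phi_\mB(\nabla\hat g_1-\nabla\hat g_\mB).
\end{align*}
Hence $\bJ_1+\bJ_N=\hat\bJ_1$, which is \eqref{eq:RC_flux_map}.

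\textbf{Mass balances and constraint.} Adding the mass balances of phases $1$ and $N$, and using $\rho_1=\rho_N=\hat\rho_1$, gives the reduced balance for $\hat\phi_1$. The other balances are unchanged. The divergence constraint \eqref{eq:sys2: div} is preserved because $\rho_1^{-1}\div\bJ_1+\rho_N^{-1}\div\bJ_N=\hat\rho_1^{-1}\div\hat\bJ_1$. Finally, the chemical-potential equation of the reduced system holds by construction, with $\hat\mu_\mA$ defined from $\hat\Psi$ built with the hatted coefficients.

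\textbf{Main obstacle.} I expect the main obstacle to be the mismatch $\mu_1\neq\hat\mu_1$. The statement must be read at the PDE level, not as $\Psi=\hat\Psi$ or $\mu_1=\hat\mu_1$. The argument hinges on showing that the log discrepancy is annihilated in exactly the two places where $\mu_1,\mu_N$ enter: the capillary force, via $\phi$-weighting, and the fluxes, via the bilinear $\phi_\mA\phi_\mB$ degeneracy. This is precisely why the $\phi\log\phi$ form and $\mathcal{M}(u,v)=uv$ are essential.
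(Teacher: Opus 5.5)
Your proposal is correct and follows essentially the same equation-by-equation verification as the paper's proof: you show $\mu_\alpha=\hat\mu_\alpha$ for the unmerged phases, establish $\phi_1\nabla\mu_1+\phi_N\nabla\mu_N=\hat\phi_1\nabla\hat\mu_1$ and its analogue for $g$, and then match the fluxes and the capillary force. Your refinements are worthwhile but do not change the route: you state $W_1=W_N$ explicitly (the paper tacitly uses a common $W$), you check $\rho=\hat\rho$ in the inertia and gravity terms, and you use the exchange form of the fluxes so that the $m_{1N}$ terms visibly cancel.
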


\begin{remark}
    The above result employs a constant viscosity $\nu$. However, this can be extended without further complications to the widely used linear viscosity case, i.e. $\nu = \sum_\mA \nu_\mA\phi_\mA$ for non-negative partial viscosities $\nu_\mA$. In view of Theorem \ref{thm:merge_reduction} one would require $\nu_1=\nu_N$.
\end{remark}

\begin{theorem}[Reduction for absent phase]\label{thm:face_reduction}

If a solution of the $N$-phase NSCH system satisfies $\phi_\gamma\equiv0$,
then the restriction of the $N$-phase NSCH system to $\mathcal G_\gamma$ coincides with the $(N-1)$-phase NSCH system obtained by deleting phase $\gamma$ and canonically relabeling the remaining indices.
\end{theorem}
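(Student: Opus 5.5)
The plan is to substitute $\phi_\gamma\equiv0$ directly into each constitutive ingredient of \eqref{eq:sys2}, using the derived closure \eqref{eq:Psi_final} for the free energy and \eqref{eq:M_final_entries} for the mobility. We then verify equation by equation that what remains is the $(N-1)$-phase system with coefficients $\{\rho_\alpha,\chi_{\alpha\beta},\kappa_{\alpha\beta},m_{\alpha\beta}\}_{\alpha,\beta\neq\gamma}$. The index relabeling is harmless by A8, so we may take $\gamma=N$.

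\textbf{Mobility, fluxes and the phase-$\gamma$ equations.} From \eqref{eq:M_final_entries}, every entry $M_{\gamma\beta}=-m_{\gamma\beta}\phi_\gamma\phi_\beta$ and $M_{\gamma\gamma}=\phi_\gamma\sum_{\beta}m_{\gamma\beta}\phi_\beta$ vanishes when $\phi_\gamma=0$. Hence $\bJ_\gamma=\mathbf 0$, regardless of whether $g_\gamma$ is finite. For $\alpha\neq\gamma$ the term $M_{\alpha\gamma}\nabla g_\gamma$ also vanishes. The remaining diagonal entry $M_{\alpha\alpha}=\phi_\alpha\sum_{\beta\neq\alpha,\gamma}m_{\alpha\beta}\phi_\beta$ loses exactly its $\gamma$-summand. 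Therefore $\bJ_\alpha=-\sum_{\beta\neq\gamma}M_{\alpha\beta}\nabla g_\beta$ coincides with the reduced Onsager flux. With $\bJ_\gamma=0$ and $\phi_\gamma\equiv0$, the mass equation \eqref{eq:sys2: mass} for $\gamma$ holds trivially, which is consistent with A7. The constraint equation \eqref{eq:sys2: div} loses its $\gamma$ term, and the saturation constraint becomes $\sum_{\alpha\neq\gamma}\phi_\alpha=1$.

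\textbf{Capillary force.} The momentum equation involves $\rho$, which visibly reduces, and the term $\sum_\beta\phi_\beta\nabla\mu_\beta$. The summand $\phi_\gamma\nabla\mu_\gamma$ must be handled with care, because $\mu_\gamma$ contains $W_\gamma(\log\phi_\gamma+1)$, which is singular at $\phi_\gamma=0$. We will use the identity $\phi_\gamma\nabla(W_\gamma\log\phi_\gamma)=W_\gamma\nabla\phi_\gamma=0$. Equivalently, we work with $\phi_\gamma\nabla\mu_\gamma$ as the continuous extension from the interior (A1), and the remaining non-singular parts of $\mu_\gamma$ are multiplied by $\phi_\gamma=0$. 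For $\alpha\neq\gamma$ we have
\begin{align}
\mu_\alpha=W_\alpha(\log\phi_\alpha+1)-2W\sum_\beta\chi_{\alpha\beta}\phi_\beta-\sum_\beta\kappa_{\alpha\beta}\Delta\phi_\beta,
\end{align}
and the $\beta=\gamma$ terms drop out. This leaves exactly the reduced chemical potential $\hat\mu_\alpha$. Both the chemical-potential equations \eqref{eq:sys2: chem} for $\alpha\neq\gamma$ and the force term therefore reduce. Consequently $g_\alpha=\hat g_\alpha$ for $\alpha\neq\gamma$, with $\lambda=\hat\lambda$.

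\textbf{Main obstacle.} The delicate step is the singular chemical potential $\mu_\gamma$ on the face, where $\log\phi_\gamma$ blows up. The fix is to interpret every occurrence of $\mu_\gamma$ and $g_\gamma$ only through the products $\phi_\gamma\nabla\mu_\gamma$ and $M_{\cdot\gamma}\nabla g_\gamma$. The first is handled by the identity above. For the second, $M_{\alpha\gamma}\nabla g_\gamma=-m_{\alpha\gamma}\phi_\alpha\phi_\gamma\nabla g_\gamma$, and $\phi_\gamma\nabla g_\gamma$ contains $\rho_\gamma^{-1}W_\gamma\nabla\phi_\gamma$ plus bounded terms times $\phi_\gamma$, so it vanishes. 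Once these limits are justified as extensions from $\mathrm{int}(\mathcal G)$, the energy and dissipation also reduce term by term, using $0\log0=0$. This last check is optional but confirms consistency with A6.
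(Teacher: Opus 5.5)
Your proposal is correct and follows essentially the same route as the paper: substitute $\phi_\gamma\equiv0$ directly into the derived closure, and neutralise the singular $\log\phi_\gamma$ through $\phi_\gamma\nabla\log\phi_\gamma=\nabla\phi_\gamma=0$, both in the capillary force and in the flux terms $M_{\alpha\gamma}\nabla g_\gamma$. One small correction: $\mathbf{J}_\gamma=\mathbf{0}$ does not hold ``regardless of whether $g_\gamma$ is finite'', because the diagonal term $M_{\gamma\gamma}\nabla g_\gamma$ vanishes only since $M_{\gamma\gamma}=\phi_\gamma\sum_{\delta\neq\gamma}m_{\gamma\delta}\phi_\delta$ carries the factor $\phi_\gamma$ that cancels the $1/\phi_\gamma$ in $\nabla g_\gamma$ — exactly the computation the paper performs and your own ``main obstacle'' paragraph points to.
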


\begin{corollary}[Absent phase remains absent]\label{cor:absence_invariance}
If a phase is absent, i.e. $
\phi_\gamma(\cdot,0)\equiv 0$,
then it remains absent, i.e. $
\phi_\gamma(\cdot,t)\equiv 0
\qquad\text{for all }t\ge 0$.
\end{corollary}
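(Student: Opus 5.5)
The plan is to show that the evolution equation for $\phi_\gamma$ is a linear transport equation whose diffusive term vanishes identically whenever $\phi_\gamma$ does, and then invoke uniqueness. By \cref{thm:face_reduction}, the restriction of the $N$-phase system to the face $\mathcal G_\gamma$ is a well-posed $(N-1)$-phase NSCH system. Take a solution $(\hat{\vv},\{\hat\phi_\mA\}_{\mA\neq\gamma},\hat\lambda,\{\hat\mu_\mA\})$ of that reduced system with the initial data $\phi_\mA(\cdot,0)$, $\mA\neq\gamma$. Embed it into the $N$-phase system by setting $\phi_\gamma\equiv0$. Define $\mu_\gamma$ by \eqref{eq:sys2: chem}, where the $\phi_\gamma\log\phi_\gamma$ term is understood via the continuous extension $W_\gamma(\log\phi_\gamma+1)$ times $\phi_\gamma$-weighted quantities as needed.

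The first step is to verify that this embedded tuple solves \eqref{eq:sys2}. In the momentum equation, the capillary force $\sum_\mB\phi_\mB\nabla\mu_\mB$ loses its $\gamma$-term because $\phi_\gamma=0$. The remaining $\mu_\mB$ agree with the reduced ones, since the $\chi$- and $\kappa$-couplings to $\phi_\gamma$ contribute $\chi_{\mB\gamma}\phi_\gamma$ and $\kappa_{\mB\gamma}\Delta\phi_\gamma$, both of which vanish. For the fluxes, the mobility of \eqref{eq:M_final_entries} gives $M_{\gamma\mB}=-m_{\gamma\mB}\phi_\gamma\phi_\mB=0$ and $M_{\gamma\gamma}=\phi_\gamma\sum m\phi_\mB=0$, so $\bJ_\gamma=0$. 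Moreover, $M_{\mA\mA}$ for $\mA\neq\gamma$ loses exactly its $\gamma$-summand, so $\bJ_\mA$ coincides with the reduced flux. Hence the $\gamma$-mass equation reads $0+{\rm div}(0\cdot\vv)+0=0$, which holds, and the divergence constraint \eqref{eq:sys2: div} coincides with the reduced one. This is the content of A6, and \cref{thm:face_reduction} packages it.

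The second step is the conclusion. The $N$-phase solution with $\phi_\gamma(\cdot,0)=0$ and the embedded solution share initial data, so uniqueness of solutions of the $N$-phase system gives $\phi_\gamma(\cdot,t)\equiv0$. If one prefers not to assume uniqueness at the PDE level, a more direct argument is available. Rewrite the $\gamma$-equation using the factorization $\bJ_\gamma=-\phi_\gamma\sum_{\mB\neq\gamma}m_{\gamma\mB}\phi_\mB\nabla(g_\gamma-g_\mB)=:\phi_\gamma\mathbf{w}_\gamma$. This follows from the explicit mobility together with M3. The $\gamma$-equation then becomes the linear continuity equation
\begin{align*}
\partial_t\phi_\gamma+{\rm div}\big(\phi_\gamma(\vv+\rho_\gamma^{-1}\mathbf{w}_\gamma)\big)=0.
\end{align*}
For a sufficiently regular velocity field, this transport equation preserves the zero state: by the method of characteristics, or by an $L^1$/Gronwall estimate on $\int|\phi_\gamma|$.

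The main obstacle is regularity of $\mathbf{w}_\gamma$ near $\phi_\gamma=0$. The quantity $g_\gamma$ contains $W_\gamma\log\phi_\gamma/\rho_\gamma$, which is singular there. I would handle this by noting that $\phi_\gamma\nabla\log\phi_\gamma=\nabla\phi_\gamma$. The singular part of $\bJ_\gamma$ is therefore the linear degenerate-diffusion term $-\rho_\gamma^{-1}W_\gamma(\sum_{\mB} m_{\gamma\mB}\phi_\mB)\nabla\phi_\gamma$. The equation thus takes the form of an advection–diffusion equation in $\phi_\gamma$ with a nonnegative diffusion coefficient and smooth drift. Testing with $\phi_\gamma$ (or with its positive and negative parts) and applying Gronwall then yields $\|\phi_\gamma(t)\|_{L^2}\le e^{Ct}\|\phi_\gamma(0)\|_{L^2}=0$, assuming the remaining fields are bounded in $W^{1,\infty}$.
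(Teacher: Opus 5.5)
Your first argument is the paper's proof, stated more carefully: solve the reduced system, embed it with $\phi_\gamma\equiv0$, verify via \cref{thm:face_reduction} that it solves the $N$-phase system, and conclude by uniqueness. The paper only notes that $\bJ_\gamma\equiv0$ on $\mathcal G_\gamma$, so that $\phi_\gamma\equiv0$ solves $\partial_t\phi_\gamma+{\rm div}(\phi_\gamma\vv)=0$, and then invokes uniqueness of classical solutions. To apply that to the coupled system, one needs a full solution with $\phi_\gamma\equiv0$, which is exactly the embedded candidate you construct. Note, however, that \cref{thm:face_reduction} only identifies the equations on the face and gives no well-posedness. Existence for the reduced problem and uniqueness for the full problem are therefore assumptions in both versions; the paper itself lists well-posedness as open.

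Your second argument takes a genuinely different route and is the stronger one. You split $\bJ_\gamma=-\rho_\gamma^{-1}W_\gamma\bigl(\sum_{\beta\neq\gamma}m_{\gamma\beta}\phi_\beta\bigr)\nabla\phi_\gamma+\phi_\gamma\tilde{\mathbf w}_\gamma$, where $\tilde{\mathbf w}_\gamma$ is regular because $\phi_\beta\nabla\log\phi_\beta=\nabla\phi_\beta$ for every $\beta$, not only $\beta=\gamma$. With coefficients frozen along the given solution, the $\gamma$-equation then becomes a linear degenerate advection--diffusion equation, valid on the whole closed simplex and not merely on the face. An $L^2$--Gronwall estimate then gives $\phi_\gamma\equiv0$ without any existence or uniqueness theory for the nonlinear system. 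It also makes transparent what drives the result: the factor $\phi_\gamma$ in every mobility entry coupling to phase $\gamma$, with the ideal-mixing singularity converted into nonnegative diffusion.

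To close the second argument you need the following.
\begin{itemize}
\item You need ${\rm div}\bigl(\vv+\rho_\gamma^{-1}\tilde{\mathbf w}_\gamma\bigr)\in L^\infty$. Since $\tilde{\mathbf w}_\gamma$ contains $\nabla\Delta\phi_\beta$ and $\nabla\lambda$, this requires roughly $\phi_\beta\in W^{4,\infty}$ and $\lambda\in W^{2,\infty}$, not merely $W^{1,\infty}$ fields.
\item The sign of the diffusion coefficient uses $\phi_\beta\ge0$, $W_\gamma>0$ and $m_{\gamma\beta}\ge0$.
\item The boundary terms vanish under the paper's no-penetration, no-flux and homogeneous Neumann conditions.
\end{itemize}
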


The preceding results show that the present closure is mixture-aware in the sense of reduction consistency. For comparison, the next two propositions show that two commonly used multiphase free-energy constructions do not satisfy this requirement.

\begin{proposition}[Boyer--Lapuerta free energies are not mixture-aware]\label{prop: Boyer}
For $N=3$, the Boyer--Lapuerta free energy is not mixture-aware. 
\end{proposition}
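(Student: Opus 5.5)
The plan is to exhibit a single explicit counterexample: a ternary configuration in which phases $1$ and $3$ are physically identical, and for which the capillary-force reduction requirement \eqref{eq:RC_force} of A5 fails.

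First I will recall the Boyer--Lapuerta ternary free energy, $\Psi = \sum_{\alpha} \tfrac{\Sigma_\alpha}{2}\big(F_0(\phi_\alpha) + \tfrac{\varepsilon^2}{2}|\nabla\phi_\alpha|^2\big)$ up to scaling, with the double-well $F_0(u)=u^2(1-u)^2$ (possibly plus the extra $\Lambda\phi_1^2\phi_2^2\phi_3^2$ term, which I set to zero). Here $\Sigma_\alpha$ is determined by the surface tensions $\sigma_{\alpha\beta}$. Two physically identical phases $1,3$ should have $\sigma_{13}=0$ and $\sigma_{12}=\sigma_{23}$. This gives $\Sigma_1=\Sigma_3=\sigma_{12}\neq 0$ and $\Sigma_2=\sigma_{12}-\sigma_{13}... $; the precise values do not matter. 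The reduced two-phase Boyer--Lapuerta energy is the same functional form in $(\hat\phi_1,\hat\phi_2)$ with the merged coefficients.

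The key step is to restrict attention to the bulk part. In homogeneous-gradient-free comparisons, or equivalently by isolating the terms of \eqref{eq:RC_force} that involve only first derivatives of $\boldsymbol\phi$, reduction consistency reduces, exactly as in the derivation leading to \eqref{eq:coeff_equalities}, to the coefficient identities
\[
\phi_1 f_1''(\phi_1) = \hat\phi_1 f_1''(\hat\phi_1), \qquad \phi_3 f_3''(\phi_3)=\hat\phi_1 f_1''(\hat\phi_1),
\]
with $f_\alpha = \tfrac{\Sigma_\alpha}{2}F_0$ separable. Indeed, the Boyer--Lapuerta bulk part has precisely the one-body form \eqref{eq:one_body_ansatz}, so the argument of the ideal-mixing subsection applies verbatim. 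It forces $u f_1''(u)$ to be constant on $(0,1)$. However, $uF_0''(u)=u(2-12u+12u^2)$ is not constant, so the identity fails, for instance at $\phi_1=\phi_3=1/4$.

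To be fully concrete, I will pick smooth fields with $\nabla\phi_1$ and $\nabla\phi_3$ independent and with $\phi_2$ fixed. I will then evaluate both sides of \eqref{eq:RC_force} and display a nonzero difference. The gradient terms $\kappa$-type contributions are diagonal with $\kappa_{11}=\kappa_{33}$ but $\kappa_{13}=0\neq\kappa_{11}$. By Theorem~\ref{thm:merge_reduction}'s necessity pattern, this separately violates the identical-phase requirement $\kappa_{11}=\kappa_{1N}=\kappa_{NN}$. So I may alternatively exhibit a second-derivative mismatch, choosing linear $\phi_1,\phi_3$ with $\Delta$-terms present.

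The main obstacle is bookkeeping. I must check that no gauge choice (affine terms, the Lagrange multiplier $\lambda$, or the saturation constraint used to rewrite $\phi_2$) can absorb the discrepancy. I will handle this by noting that the discrepancy is a term $\big(\phi_1 f_1''(\phi_1)-\hat\phi_1 f_1''(\hat\phi_1)\big)\nabla\phi_1$ along a redistribution direction $\nabla\phi_1=-\nabla\phi_3$ with $\nabla\phi_2=0$. Such a term cannot be a gradient of a scalar independent of the split unless it vanishes.
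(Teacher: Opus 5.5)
Your counterexample rests on the claim $\Sigma_1=\Sigma_3=\sigma_{12}\neq0$, and this is false for the identification you adopt. With $\sigma_{13}=0$ and $\sigma_{12}=\sigma_{23}$ one gets
\[
\Sigma_1=\sigma_{12}+\sigma_{13}-\sigma_{23}=0,\qquad \Sigma_3=\sigma_{13}+\sigma_{23}-\sigma_{12}=0,\qquad \Sigma_2=2\sigma_{12}.
\]
So the precise values matter decisively. First, $f_1=f_3\equiv0$, so the condition $uf_1''(u)=\mathrm{const}$ holds trivially. Second, the capillarity coefficients are $\kappa_{11}=\kappa_{13}=\kappa_{33}=0$, which is exactly the identical-phase pattern, so your second route detects nothing either. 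Theorem~\ref{thm:merge_reduction} is in any case a sufficiency statement about the paper's closure, not a source of necessary conditions for other energies.

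In fact, for $\Lambda=0$ the Boyer--Lapuerta energy collapses to
\[
\Psi^{\mathrm B}=\frac{12}{\varepsilon}\,\sigma_{12}\,\phi_2^2(\phi_1+\phi_3)^2+\frac{3\varepsilon}{4}\,\sigma_{12}\,|\nabla\phi_2|^2 .
\]
This depends on $(\hat\phi_1,\hat\phi_2)$ only, and on the simplex it coincides with the binary Boyer--Lapuerta energy with $\hat\sigma_{12}=\sigma_{12}$. Hence $\mu_1=\mu_3=\hat\mu_1$, $\mu_2=\hat\mu_2$, and \eqref{eq:RC_force} holds identically. Your chosen setting therefore contains no counterexample at all. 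The only split-sensitive ingredient left there is the $\Lambda$-term you discarded, plus the degeneracy that the gradient energy no longer controls $\nabla(\phi_1-\phi_3)$. Relatedly, applying the ideal-mixing argument ``verbatim'' presupposes that the merged phase inherits the one-body function $f_1$. In Boyer--Lapuerta, however, the merged coefficient is $\hat\Sigma_1=\hat\sigma_{12}$, not $\Sigma_1$.

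Your handling of the gauge issue would also fail. Test fields with $\phi_2$ fixed cannot have independent $\nabla\phi_1,\nabla\phi_3$, since saturation forces $\nabla\phi_1=-\nabla\phi_3$. Coefficient separation is therefore unavailable on them. Moreover, along such a redistribution the bulk discrepancy $h(\phi_1)\nabla\phi_1$ is the gradient $\nabla H(\phi_1)$. More generally,
\[
\sum_\alpha\phi_\alpha\nabla\partial_{\phi_\alpha}\Psi_0=\nabla\big(\sum_\alpha\phi_\alpha\partial_{\phi_\alpha}\Psi_0-\Psi_0\big)
\]
for every bulk energy. So bulk terms only ever produce gradient discrepancies, and whether these count depends on the simplex gauge of A2. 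This is the very gauge that relates your one-body representation $\sum_\alpha\frac{\Sigma_\alpha}{2}F_0(\phi_\alpha)$ to the $\sigma_{\alpha\beta}$-form: the two differ by a multiple of $\sum_\alpha\phi_\alpha-1$, which shifts all $\mu_\alpha$ by a common, split-dependent function.

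The paper's proof avoids your pitfall by keeping a nonzero mutual coefficient for the merged pair. It merges phases $1$ and $2$ with $\sigma_{12}\neq0$ and restricts to the face $\phi_3=0$, where the merged state $\hat\phi_1\equiv1$ is constant and the reduced force vanishes. It then notes that the ternary force keeps split-dependent contributions from $\sigma_{12}\phi_1^2\phi_2^2$ and from $\Sigma_1|\nabla\phi_1|^2+\Sigma_2|\nabla\phi_2|^2$. The bulk part there is again a pure gradient, $3\sigma_{12}\nabla(\phi_1^2\phi_2^2)$, but the gradient-energy part is not. For $\Sigma_1=\Sigma_2=\Sigma\neq0$ (e.g.\ $\sigma_{13}=\sigma_{23}$, giving $\Sigma=\sigma_{12}$) it equals $-\tfrac{3\varepsilon}{4}\Sigma(2\phi_1-1)\nabla\Delta\phi_1$ on that face, which cannot be absorbed into $\nabla\lambda$.

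To repair your proof, you have two options. Either adopt this notion of identical phases and base the contradiction on the gradient energy, or retain $\Lambda\neq0$. In the latter case, keep in mind that the $\Lambda$-term's bulk capillary contribution, proportional to $\nabla\big((\phi_1\phi_2\phi_3)^2\big)$, is once more a pure gradient.
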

The ternary case already exhibits the obstruction clearly. For $N\ge 4$, the energy also seems not mixture-aware. However, we were not able to extract a closed-form free energy from \cite{boyer2014hierarchy}, as the presentation of the constitutive structure seems incomplete.

\begin{proposition}[Steinbach free energy is not mixture-aware]\label{prop: Steinbach}
The free energy structure
\begin{subequations}
   \begin{align}
\Psi =&~ \Psi_0(\boldsymbol{\phi}) + \Psi_{\nabla}(\boldsymbol{\phi},\nabla \boldsymbol{\phi}),\\
\Psi_{\nabla}(\boldsymbol{\phi},\nabla \boldsymbol{\phi}) = &~ \sum_{\alpha<\beta}
\omega_{\alpha\beta} \bigl|\phi_\alpha\nabla \phi_\beta-\phi_\beta\nabla \phi_\alpha\bigr|^2,
\end{align}
\end{subequations}
for coefficients $\omega_{\alpha\beta}\geq 0$ is not mixture-aware.
\end{proposition}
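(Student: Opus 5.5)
We argue by contradiction through the capillary-force reduction requirement \eqref{eq:RC_force} of A5. Suppose the Steinbach-type energy were mixture-aware. Then whenever phases $1$ and $N$ are identical, the gradient part must produce a capillary force for the merged variables that equals that of some $(N-1)$-phase energy of the same form. This means there must be coefficients $\hat\omega_{\alpha\beta}$, together with the merged bulk part, such that
\[
\sum_{\beta}\phi_\beta\nabla\mu_\beta=\sum_{\beta}\hat\phi_\beta\nabla\hat\mu_\beta .
\]
Since the bulk part only enters through terms without gradients of order two and higher, we isolate the gradient contribution. The cleanest test is to consider the highest-order derivative terms. In $\phi_\beta\nabla\mu_\beta$, the $\nabla\Delta\phi$ terms come with composition-dependent coefficients. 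Matching these terms forces a purely algebraic identity, because the third derivatives $\nabla\Delta\phi_1$ and $\nabla\Delta\phi_N$ may be chosen independently of lower-order data at a point.

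\emph{Step 1.} Compute the principal part of $\mu_\alpha$ for $\Psi_\nabla$. We have
\[
\partial\Psi_\nabla/\partial\nabla\phi_\alpha=2\sum_{\beta\ne\alpha}\omega_{\alpha\beta}\,\phi_\beta(\phi_\beta\nabla\phi_\alpha-\phi_\alpha\nabla\phi_\beta),
\]
so the second-order part of $\mu_\alpha$ is
\[
-2\sum_{\beta\neq\alpha}\omega_{\alpha\beta}\bigl(\phi_\beta^2\Delta\phi_\alpha-\phi_\alpha\phi_\beta\Delta\phi_\beta\bigr).
\]
Hence the third-order part of $\sum_\alpha\phi_\alpha\nabla\mu_\alpha$ is $-\sum_\gamma K_\gamma(\boldsymbol\phi)\nabla\Delta\phi_\gamma$, where
\[
K_\gamma=2\sum_{\beta\ne\gamma}\omega_{\gamma\beta}\,\phi_\beta\,(\phi_\gamma\phi_\beta-\phi_\beta\phi_\gamma)=0 .
\]
So the leading coefficients cancel; this is the Gibbs–Duhem-type degeneracy. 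This means the obstruction must be sought at the next order, or more simply in the bulk setting via a frozen-coefficient argument. This is the main obstacle, and I would try the following route instead.

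\emph{Step 2 (preferred route).} Work directly with the energy. A more robust approach is to note that the capillary force equals $\operatorname{div}\boldsymbol\varsigma$ via \eqref{eq: korteweg}. Reduction consistency for all smooth fields then forces the Korteweg tensors to agree up to a divergence-free, isotropic part. Matching the anisotropic parts $\sum_\alpha\nabla\phi_\alpha\otimes\partial\Psi/\partial\nabla\phi_\alpha$ pointwise, up to a term of the form $c\mathbf I$ that is irrelevant, gives the condition
\[
\sum_\alpha\nabla\phi_\alpha\otimes\partial_{\nabla\phi_\alpha}\Psi_\nabla=\nabla\hat\phi_1\otimes\partial_{\nabla\hat\phi_1}\hat\Psi_\nabla+\dots .
\]
For quadratic forms this is $2\Psi_\nabla$-type data. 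I therefore expect the requirement to reduce to an equality of the quadratic gradient energies as forms in $\nabla\boldsymbol\phi$, restricted to the traceless parts:
\[
\Psi_\nabla(\boldsymbol\phi,\nabla\boldsymbol\phi)=\hat\Psi_\nabla(\hat{\boldsymbol\phi},\nabla\hat{\boldsymbol\phi})
\]
for all $\boldsymbol\phi$ in the interior of $\mathcal G$. Justifying this step rigorously, without relying on a pointwise Korteweg identification, is where most care is needed. As a fallback, I would pick an explicit one-dimensional test field and compare both sides.

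\emph{Step 3.} Exhibit the contradiction. Take $N=3$ (or embed in general $N$ using A6), with phases $1$ and $3$ identical and $\omega_{12}=\omega_{23}=\omega>0$. Choose a state with $\nabla\phi_2=0$, so that $\nabla\hat\phi_1=\nabla\phi_1+\nabla\phi_3=0$ while $\nabla\phi_1=-\nabla\phi_3=\mathbf e\neq0$. The reduced energy then has no gradient contribution, since all reduced gradients vanish. The $N$-phase energy, however, is
\[
\omega\phi_2^2|\mathbf e|^2+\omega\phi_2^2|\mathbf e|^2+\omega_{13}(\phi_1+\phi_3)^2|\mathbf e|^2>0 .
\]
This nonzero internal splitting energy produces a nonzero capillary force for suitable fields. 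One such field has $\phi_2$ constant and $\phi_1$ varying, with $\hat{\boldsymbol\phi}$ constant. The right-hand side of \eqref{eq:RC_force} then vanishes identically, while on the left the term $\phi_2\nabla\mu_2$ contains $2\omega\phi_2^2\phi_1\nabla\Delta\phi_1$-type contributions that do not cancel. I would verify this by a direct computation with $\phi_1=\hat\phi_1/2+s(x)$ and $\phi_3=\hat\phi_1/2-s(x)$, showing that the left side is a nonzero multiple of $\nabla(|\nabla s|^2)$ or $\nabla\Delta s$. This contradicts \eqref{eq:RC_force}.
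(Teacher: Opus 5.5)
Your Step 3 test family is the right one, and in substance it is the paper's argument. The paper restricts to the edge $\phi_2=0$ (so $\hat\phi_1\equiv1$); your choice $\phi_2\equiv c$, $\phi_{1,3}=\tfrac{1-c}{2}\pm s$ is an interior version of it. In both cases $\hat{\boldsymbol\phi}$ is constant, so the reduced capillary force vanishes.

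The gap is that the decisive computation is deferred, and the outcome you predict for it is wrong. Your own Step 1 shows that the coefficient of every $\nabla\Delta\phi_\gamma$ in $\sum_\alpha\phi_\alpha\nabla\mu_\alpha$ vanishes identically. So the $\nabla\Delta s$ contribution of $\phi_2\nabla\mu_2$ is cancelled exactly by those of $\phi_1\nabla\mu_1+\phi_3\nabla\mu_3$, and your claim that such terms ``do not cancel'' contradicts Step 1. Carrying out the computation with $\omega_{12}=\omega_{23}=\omega$ gives the Korteweg stress $\boldsymbol\varsigma=\tfrac{3k}{2}|\nabla s|^2\mathbf I+k\,\nabla s\otimes\nabla s$, and hence
\[
\sum_{\alpha=1}^{3}\phi_\alpha\nabla\mu_\alpha^{\nabla}=k\left(2\nabla|\nabla s|^2+\Delta s\,\nabla s\right),\qquad k=2\omega_{13}(1-c)^2+4\omega c^2 .
\]
For $c=0$ this is exactly the paper's expression $4\omega_{13}\nabla|\nabla\phi_1|^2+2\omega_{13}\Delta\phi_1\nabla\phi_1$.

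The term that makes this conclusive is $\Delta s\,\nabla s$. Its curl $\nabla\Delta s\times\nabla s$ is generically nonzero for $d\ge2$, so the discrepancy cannot be absorbed into $\nabla\lambda$, as the paper's proof also notes. Both forms you anticipated, $\nabla|\nabla s|^2$ and $\nabla\Delta s$, are pure gradients and would leave exactly that loophole open. Under the fixed gauge $\nabla\hat\lambda=\nabla\lambda$ of A5 any nonzero discrepancy already suffices, but the stronger statement needs the non-gradient part. For the same reason your one-dimensional fallback is too weak: there the expression collapses to $\tfrac{5k}{2}(|s'|^2)'$, a pure derivative.

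Step 2 should be discarded rather than repaired. Equality of $\operatorname{div}\boldsymbol\varsigma$ determines $\boldsymbol\varsigma$ only up to a divergence-free field. Moreover, reduction consistency deliberately does not require $\Psi=\hat\Psi$, and the paper's own admissible closure violates that equality through the ideal-mixing term, since $\phi_1\log\phi_1+\phi_N\log\phi_N\neq\hat\phi_1\log\hat\phi_1$. Consequently, positivity of the splitting energy in Step 3 is not evidence of a violation. For affine $s$ that energy equals $\tfrac{k}{2}|\nabla s|^2>0$, yet the formula above shows the capillary force vanishes.

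Your removal of $\Psi_0$ by order counting is sound and can be made precise on the same family. The bulk part contributes $(C_1-C_3)(\boldsymbol\phi)\nabla s$, with $C_\gamma=\sum_\beta\phi_\beta\,\partial^2\Psi_0/\partial\phi_\beta\partial\phi_\gamma$, which depends only on the $1$-jet of $s$. By contrast, changing the Hessian of $s$ at a point from $0$ to $\mathbf e\otimes\mathbf e$ with $\nabla s=\mathbf e$ fixed changes the gradient part by $5k|\mathbf e|^2\mathbf e\neq\mathbf 0$, so no bulk energy can compensate it. The paper simply works with the gradient-induced potentials $\mu_\alpha^\nabla$ alone.
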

This free energy structure was proposed in \cite{steinbach1996phase} and adopted, for example, in \cite{garcke1999multiphase,ben2014phase}.

The proofs of Theorems  \ref{thm:merge_reduction}-\ref{thm:face_reduction}, Corollary \ref{cor:absence_invariance} and Propositions \ref{prop: Boyer} are \ref{prop: Steinbach} are provided in the Appendix \ref{section: appendix: Proofs}.

\section{Constitutive closure specification}
\label{sec:practical_closure}

Section~\ref{sec: mixture-aware closure} identified the admissible \emph{structure} of the constitutive closure from the mixture-aware axioms. To obtain a concrete closed model, one still has to specify the free-energy and mobility coefficients within this admissible class. In this section, we introduce a convenient parameterization of the free energy and mobility tensor and discuss the resulting constitutive behavior, in particular for distinct and identical phases.

\subsection{Free energy}
\label{sec:practical_free_energy}

To expose explicitly the diffuse-interface thickness scale, we introduce a parameter $\varepsilon_0>0$, choose $W_\alpha=W=\bar{W}/\varepsilon_0$ and $\kappa_{\alpha\beta}=\varepsilon_0 \bar{\kappa}_{\alpha\beta}$ where $\bar W>0$ and $\bar\kappa_{\alpha\beta}$ are independent of $\varepsilon_0$. This yields the free energy
\begin{align}\label{eq: free energy epsilon}
\Psi(\boldsymbol{\phi},\nabla\boldsymbol{\phi})
=
\frac{\bar W}{\varepsilon_0}
\Bigl(
\sum_{\alpha=1}^{N}\phi_\alpha\log\phi_\alpha
-
\sum_{\alpha,\beta=1}^{N}\chi_{\alpha\beta}\phi_\alpha\phi_\beta
\Bigr)
+
\frac{\varepsilon_0}{2}
\sum_{\alpha,\beta=1}^{N}\bar\kappa_{\alpha\beta}\nabla\phi_\alpha\cdot\nabla\phi_\beta.
\end{align}
Here $\bar W$ sets the bulk energy scale, $\chi_{\alpha\beta}$ determines the location and depth of the minima of the homogeneous free energy on the Gibbs simplex, $\bar\kappa_{\alpha\beta}$ determines the energetic cost of gradients, and $\varepsilon_0$ sets the diffuse thickness scale.

The parameterization \eqref{eq: free energy epsilon} separates the bulk and gradient contributions by the standard factors $1/\varepsilon_0$ and $\varepsilon_0$. In the NSCH model, as in diffuse-interface models more generally, the diffuse-interface thickness is an emergent property of the free energy rather than an independently prescribed quantity. The parameter $\varepsilon_0$ controls its overall scale, but the actual interface width also depends on the coefficients $\bar W$ and $\bar\kappa_{\alpha\beta}$, and therefore is not equal to $\varepsilon_0$ in general. In fact, because an $N$-phase model generally contains multiple pairwise interfaces, the diffuse-interface width is interface-dependent. Given the pairwise interface widths $\varepsilon_{\mA\mB}$ for interfaces $\mA$--$\mB$, we define the representative interface width by $\varepsilon = \varepsilon_0 \min_{\alpha<\beta} \varepsilon_{\alpha\beta}$. A precise definition of the corresponding pairwise interface widths $\varepsilon_{\mA\mB}$ is given in \cite{surfacetension2026}.

In practical terms, one chooses $\chi_{\alpha\beta}$ and $\bar\kappa_{\alpha\beta}$ so that the desired bulk states and interfacial couplings are represented, and then chooses $\varepsilon_0$ so that the resulting diffuse interfaces are resolvable on the mesh. The quantitative calibration of $\bar\kappa_{\alpha\beta}$ and $\varepsilon_0$ to target surface tensions and numerical interface widths is treated in \cite{surfacetension2026}. Below we discuss the homogeneous (bulk) free energy behavior.

We split the homogeneous part of the free energy into an entropic and an interaction enthalpic contribution,
\begin{subequations}
\begin{align}
\Psi_0(\boldsymbol{\phi})
=&~
\Psi_{0,\mathrm{ent}}(\boldsymbol{\phi})
+
\Psi_{\mathrm{int}}(\boldsymbol{\phi}),\\
\Psi_{0,\mathrm{ent}}(\boldsymbol{\phi})
=&~
\frac{\bar W}{\varepsilon_0}
\sum_{\alpha=1}^{N}\phi_\alpha\log\phi_\alpha,\\
\Psi_{0,\mathrm{int}}(\boldsymbol{\phi})
=&~
-
\frac{\bar W}{\varepsilon_0}
\sum_{\alpha,\beta=1}^{N}\chi_{\alpha\beta}\phi_\alpha\phi_\beta.
\end{align}
\end{subequations}
The first term represents the ideal-mixing entropy term that favors mixing, whereas the second term contains the coefficient-dependent interaction structure.

\begin{figure}
\centering
\captionsetup[subfigure]{justification=centering}

\makebox[\textwidth][c]{%
\begin{subfigure}{0.49\textwidth}
\centering
\includegraphics[scale=0.25]{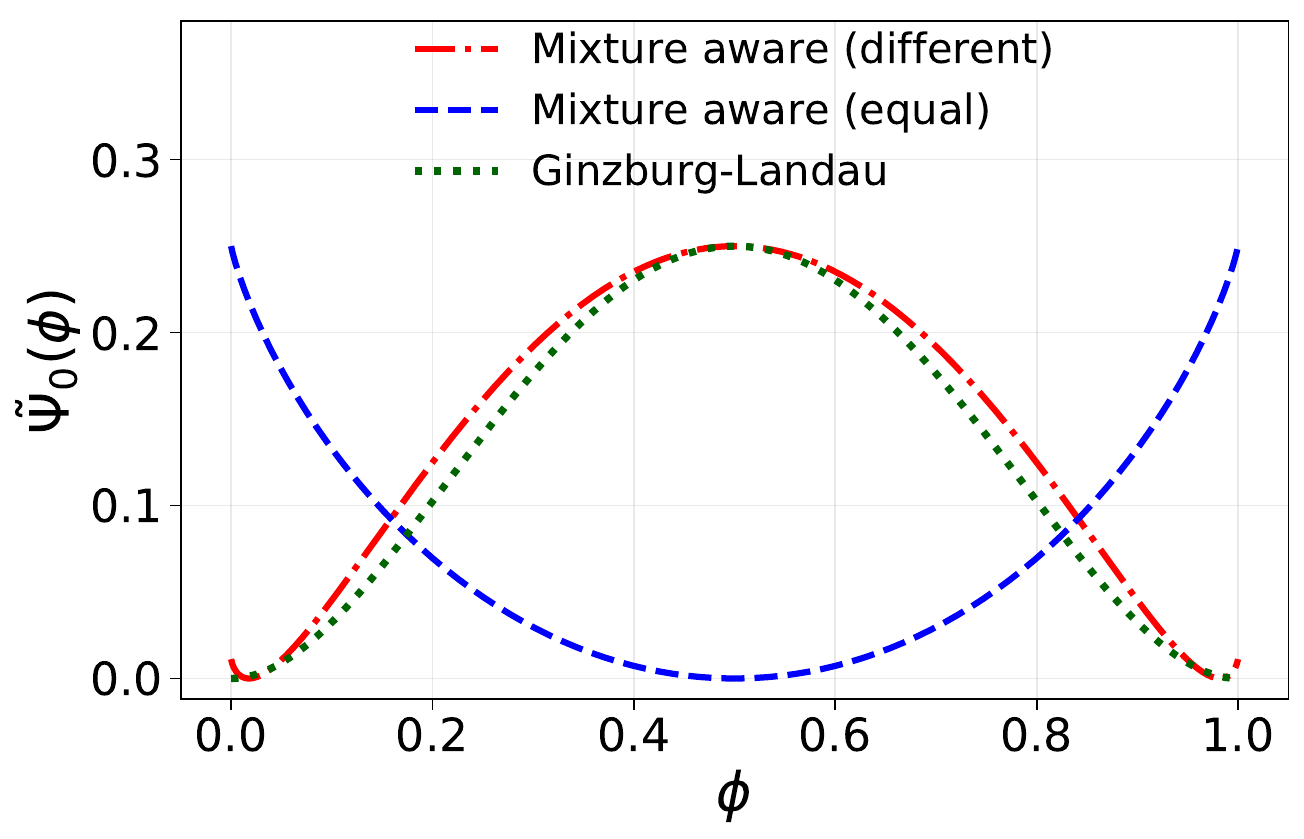}
\caption{Free energies}
\end{subfigure}
}

\medskip
\begin{subfigure}{0.49\textwidth}
\centering
\includegraphics[scale=0.25]{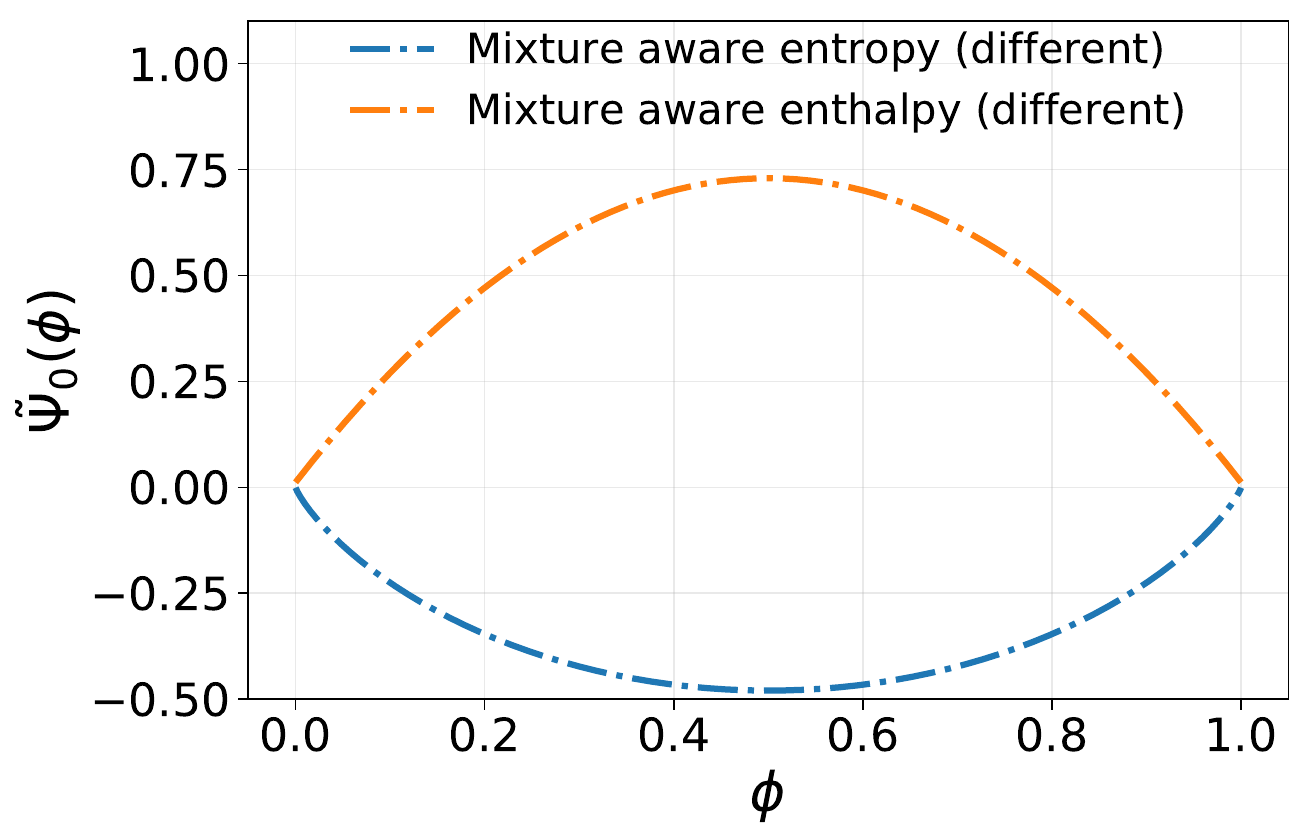}
\caption{Free energy: entropy--enthalpy splitting (different phases)}
\end{subfigure}
\begin{subfigure}{0.49\textwidth}
\centering
\includegraphics[scale=0.25]{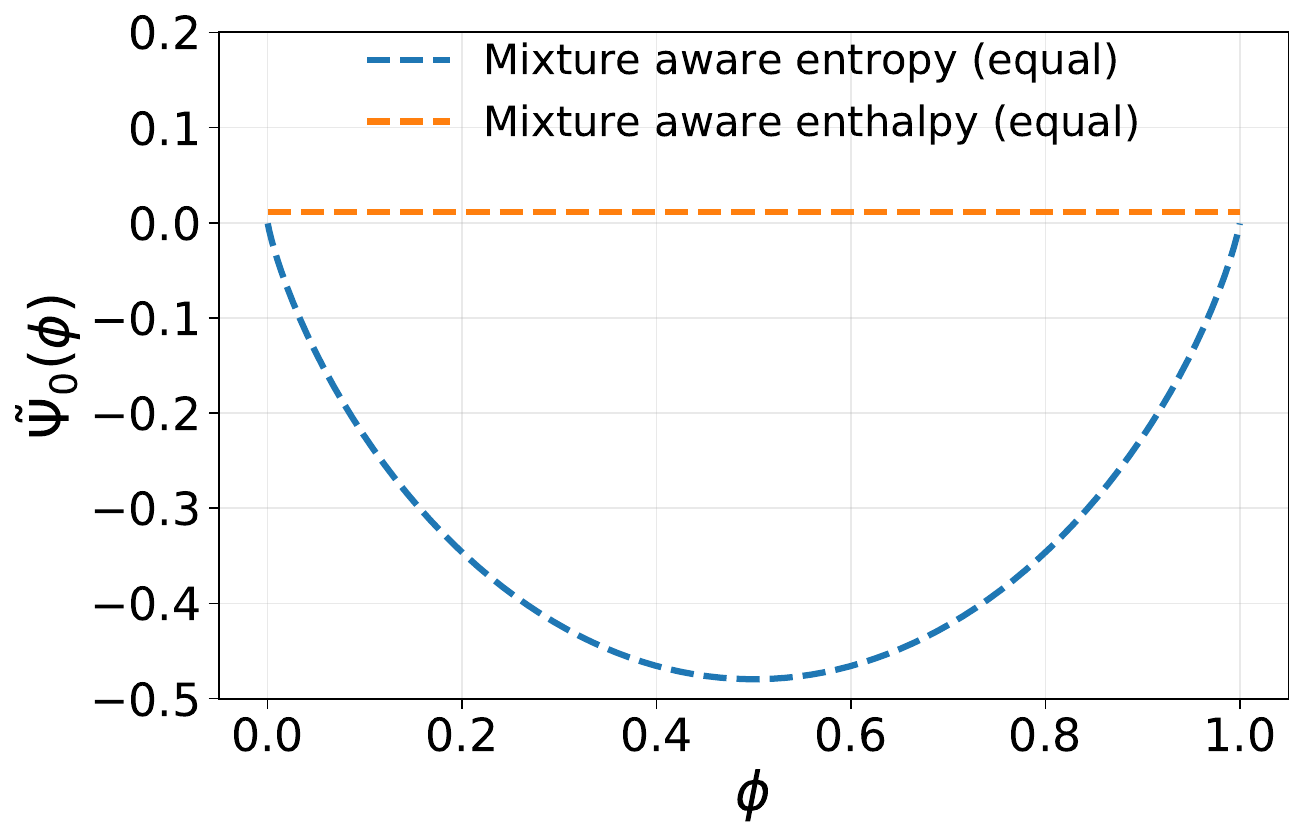}
\caption{Free energy: entropy--enthalpy splitting (identical phases)}
\end{subfigure}
    \caption{Homogeneous binary free-energy densities as functions of $\phi:=\phi_1$ where $\varepsilon_0=1$, $\chi=2.07532$, $\bar{W}=0.691952$. The mixture-aware energies are shown after a constant shift $\tilde{\Psi}_0=\Psi_0+b$, $b={\mathrm const}$, so that their scale roughly matches that of the Ginzburg--Landau potential. (a) Comparison of the Ginzburg--Landau potential $4\phi^2(1-\phi)^2$ with the mixture-aware bulk free energy for identical and different phases (b) Entropy--enthalpy splitting of the free energy for different phases (c) Entropy--enthalpy splitting of the free energy for identical phases.}
\label{fig: 2 phase}
\end{figure}

In \cref{fig: 2 phase}, we compare the homogeneous binary free-energy density of the mixture-aware closure with the quartic Ginzburg--Landau potential. We compare two cases, one representing different phases and one representing identical phases:
\begin{align}
    \boldsymbol{\chi} =\boldsymbol{\chi}^{\mathrm{diff}} =
    \begin{bmatrix}
        \chi_1 & 0 \\ 0 & \chi_2 
    \end{bmatrix}, \qquad     
    \boldsymbol{\chi} = \boldsymbol{\chi}^{\mathrm{identical}} = \begin{bmatrix}
        \chi_1 & \chi_1 \\ \chi_1 & \chi_1 
    \end{bmatrix},
\end{align}
where we take $\chi_1=\chi_2=\chi$ for simplicity.
For the case of different phases the mixture-aware energy exhibits a symmetric double-well structure that is qualitatively very similar to the quartic potential $4\phi^2(1-\phi)^2$. By contrast, in the case of identical phases the profile becomes convex and attains its minimum at equal splitting, $\phi=1/2$. The middle and right panels explain the origin of this difference by separating the homogeneous mixture-aware free energy into its entropic and enthalpic parts. In both cases, the entropic contribution favors mixing and is therefore minimized at $\phi=1/2$, while the enthalpic contribution penalizes mixing. For diagonal $\chi_{\alpha\beta}$, the enthalpic penalty is strong enough to overcome the entropic contribution near the midpoint, resulting in an overall double-well profile. In contrast, for $\chi_{\alpha\beta}=\chi$, the enthalpic contribution becomes constant along the binary edge, so that the remaining variation is entirely entropic and the total free energy is convex.

\begin{figure}
\captionsetup[subfigure]{justification=centering}
\begin{subfigure}{0.33\textwidth}
\centering
\includegraphics[scale=0.38]{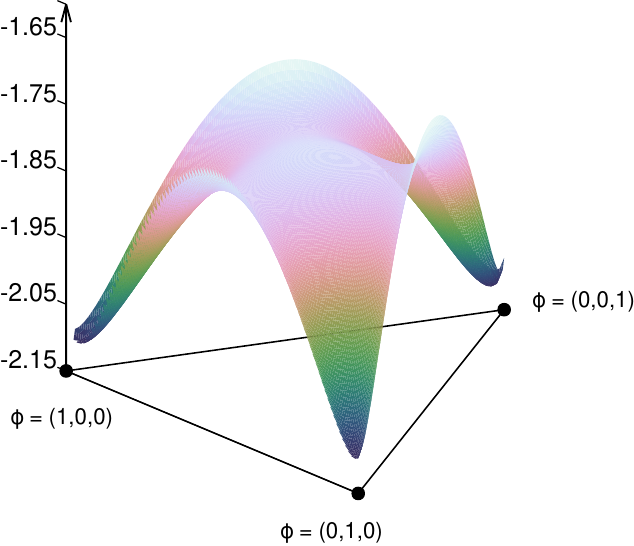}
\caption{Free energy}
\end{subfigure}
\begin{subfigure}{0.33\textwidth}
\centering
\includegraphics[scale=0.38]{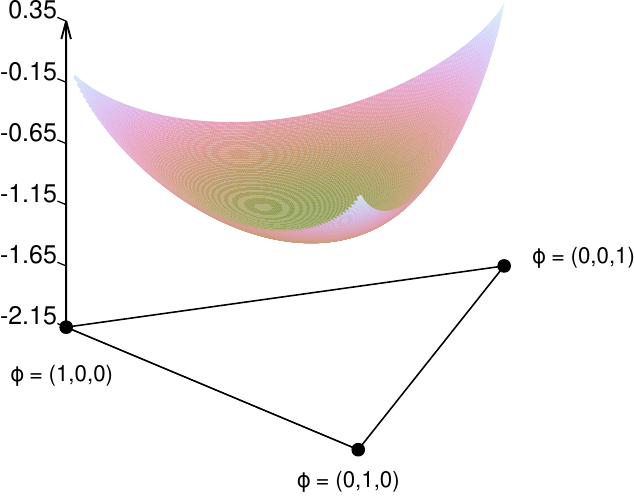}
\caption{Free energy: entropic part}
\end{subfigure}
\begin{subfigure}{0.33\textwidth}
\centering
\includegraphics[scale=0.38]{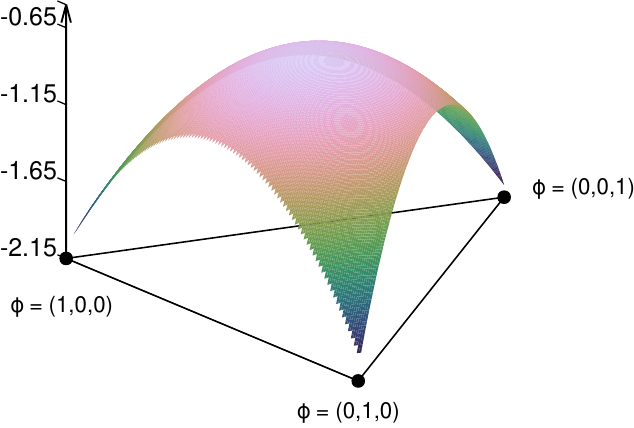}
\caption{Free energy: enthalpic part}
\end{subfigure}
\caption{Homogeneous ternary free-energy densities on the Gibbs simplex for three distinct phases and $\varepsilon_0=1$, $\bar{W}=1$, and $\chi=2.0722$. (a) Mixture-aware bulk free energy $\Psi_0$. (b) Entropic contribution. (c) Enthalpic contribution.}
\label{fig:3 phase}
\end{figure}

\begin{figure}
\captionsetup[subfigure]{justification=centering}
\begin{subfigure}{0.33\textwidth}
\centering
\includegraphics[scale=0.38]{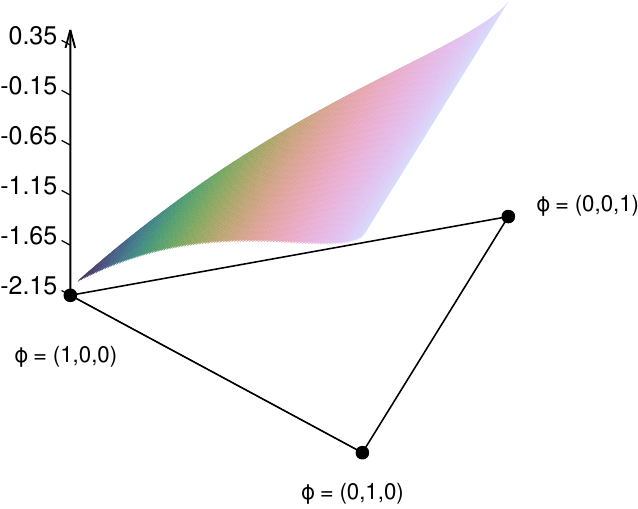}
\caption{Free energy $\Psi_{0,1}$}
\end{subfigure}
\begin{subfigure}{0.33\textwidth}
\centering
\includegraphics[scale=0.38]{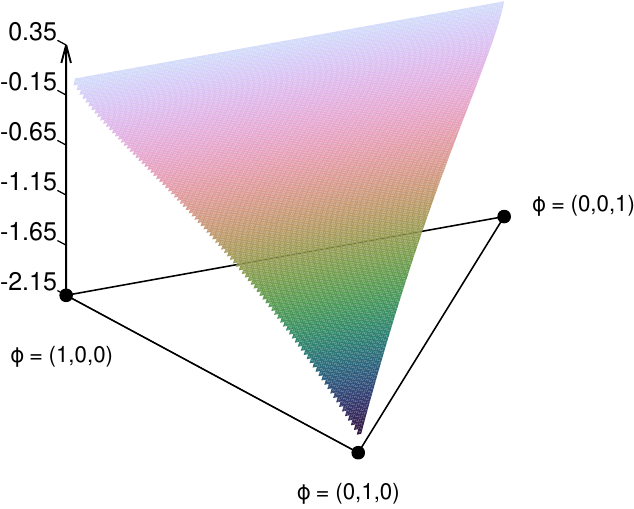}
\caption{Free energy $\Psi_{0,2}$}
\end{subfigure}
\begin{subfigure}{0.33\textwidth}
\centering
\includegraphics[scale=0.38]{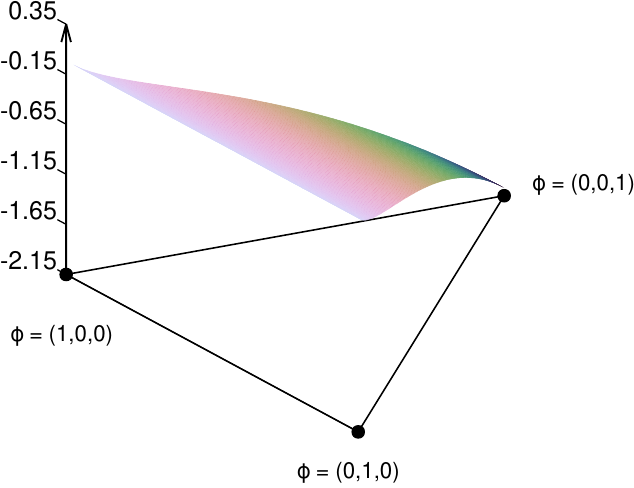}
\caption{Free energy $\Psi_{0,3}$}
\end{subfigure}
\caption{Homogeneous ternary free-energy densities on the Gibbs simplex for three distinct phases and $\varepsilon_0=1$, $\bar{W}=1$, and $\chi=2.0722$. (a)--(c) Partial free energies $\Psi_{0,1}$, $\Psi_{0,2}$, and $\Psi_{0,3}$.}
\label{fig: partial free energies 3 phase}
\end{figure}

Next, we consider the ternary case. We compare a case representing $3$ different phases with a case where phases $1$ and $2$ are identical:
%\begin{subequations}
\begin{align}
    \boldsymbol{\chi} =\boldsymbol{\chi}^{\mathrm{diff}} =
    \begin{bmatrix}
        \chi_1 & 0 & 0\\ 
        0 & \chi_2 & 0 \\ 
        0 & 0 & \chi_3 
    \end{bmatrix}, \qquad     
    \boldsymbol{\chi} = \boldsymbol{\chi}^{\mathrm{identical}} =\begin{bmatrix}
        \chi_1 & \chi_1 & 0\\ 
        \chi_1 & \chi_1 & 0 \\ 
        0& 0 & \chi_3 
    \end{bmatrix},
\end{align}
%\end{subequations}
where again we take $\chi_1=\chi_2=\chi_3=\chi$ for simplicity.

Starting with the case of different phases, the corresponding homogeneous free-energy landscape for three distinct phases, together with its entropic and enthalpic parts, is shown in \cref{fig:3 phase} (see also \cref{fig: heat different} for the heat map of the free energy); its partial free-energy contributions $\Psi_{0,1}$, $\Psi_{0,2}$, and $\Psi_{0,3}$ are shown in \cref{fig: partial free energies 3 phase}. The entropy contribution is smooth and convex toward the interior of the simplex, whereas the interaction part shapes the wells and barriers that determine the preferred bulk states. In the present example, this results in three distinct minima of equal height associated with the pure phases. The partial free-energy plots provide a phase-wise view of this landscape. We emphasize that these are not to be interpreted as independent single-phase energies, but rather as contributions of individual phases to the total bulk free energy. In particular, they make transparent the symmetry or asymmetry of the chosen coefficient set and help identify which phases dominate the energetic barriers between phases. 

Next, for the case of two identical phases, the corresponding homogeneous free-energy landscape, together with its entropic and enthalpic parts, is shown in \cref{fig:3 phase same} (see also \cref{fig: heat equal} for the heat map of the free energy); the associated partial free-energy contributions $\Psi_{0,1}$, $\Psi_{0,2}$, and $\Psi_{0,3}$ are shown in \cref{fig: partial free energies 3 phase same}. The symmetry of the interaction energy is reflected in the Gibbs-simplex plots. Along the edge connecting phases $1$ and $2$, the bulk free energy develops a convex profile with a minimum at equal splitting, $\phi_1=\phi_2=1/2$, exactly as in the binary identical-phase case. By contrast, along the edges connecting phase $3$ to either phase $1$ or phase $2$, one recovers the usual double-well structure associated with two distinct phases. Hence the ternary identical-phase case combines both behaviors in a single Gibbs-simplex landscape: a redistribution trough along the edge between the identical labels, and ordinary binary double wells along the edges involving the distinct phase. Of course, the entropy contribution remains unchanged. The interaction part, however, is constant along the $1$--$2$ edge because labels $1$ and $2$ are energetically indistinguishable, so the variation along this edge is entirely entropic. Along the $1$--$3$ and $2$--$3$ edges, by contrast, the interaction term distinguishes the phases and therefore produces the familiar double-well structure. The partial free-energy plots provide a phase-wise view of the same phenomenon: $\Psi_{0,1}$ and $\Psi_{0,2}$ are symmetric images of one another, while $\Psi_{0,3}$ reflects the role of the distinct third phase.

\begin{figure}
\captionsetup[subfigure]{justification=centering}
\begin{subfigure}{0.33\textwidth}
\centering
\includegraphics[scale=0.38]{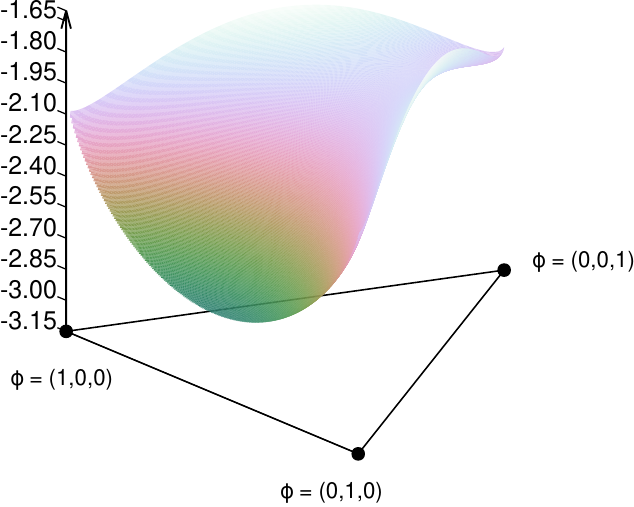}
\caption{Free energy}
\end{subfigure}
\begin{subfigure}{0.33\textwidth}
\centering
\includegraphics[scale=0.38]{figures/gibbs_N3a.pdf}
\caption{Free energy - entropic part}
\end{subfigure}
\begin{subfigure}{0.33\textwidth}
\centering
\includegraphics[scale=0.38]{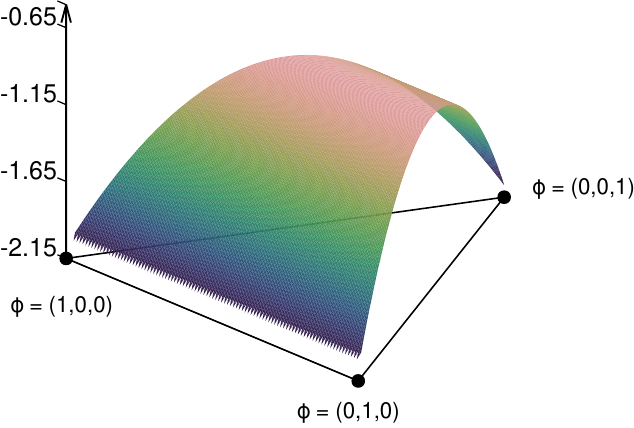}
\caption{Free energy - enthalpic part}
\end{subfigure}
\caption{Homogeneous ternary free-energy densities on the Gibbs simplex in the case where phases $1$ and $2$ are identical, and $\varepsilon_0=1$, $\bar{W}=1$, $\chi=2.0722$. (a) Mixture-aware bulk free energy $\Psi_0$. (b) Entropic contribution. (c) Enthalpic contribution.}
\label{fig:3 phase same}
\end{figure}

\begin{figure}
\captionsetup[subfigure]{justification=centering}
\begin{subfigure}{0.33\textwidth}
\centering
\includegraphics[scale=0.38]{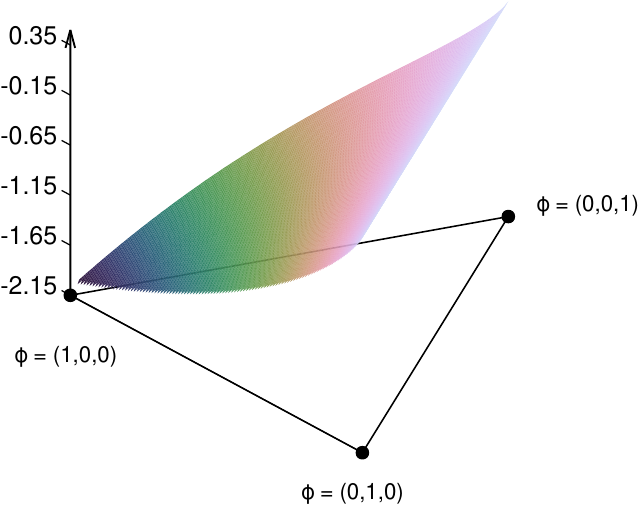}
\caption{Free energy $\Psi_{0,1}$}
\end{subfigure}
\begin{subfigure}{0.33\textwidth}
\centering
\includegraphics[scale=0.38]{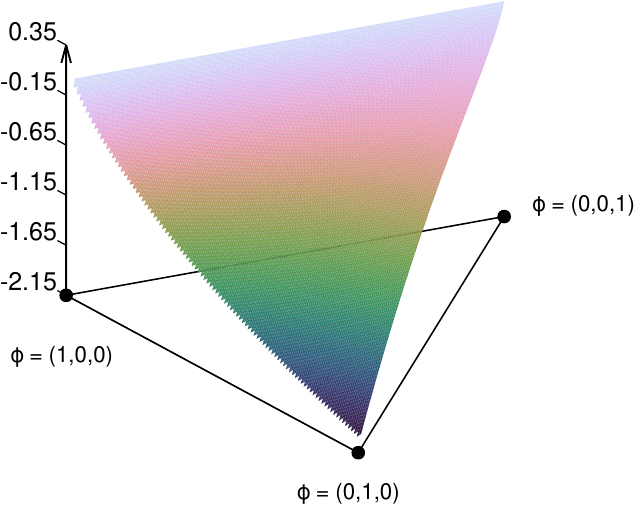}
\caption{Free energy $\Psi_{0,2}$}
\end{subfigure}
\begin{subfigure}{0.33\textwidth}
\centering
\includegraphics[scale=0.38]{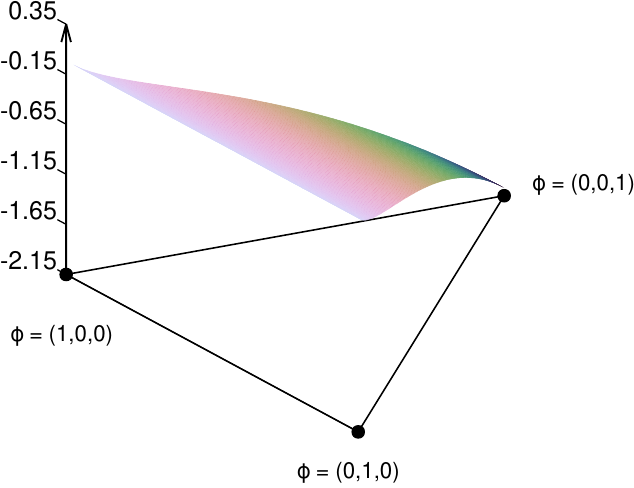}
\caption{Free energy $\Psi_{0,3}$}
\end{subfigure}
\caption{Homogeneous ternary free-energy densities on the Gibbs simplex in the case where phases $1$ and $2$ are identical, and $\varepsilon_0=1$, $\bar{W}=1$, $\chi=2.0722$.  (a)--(c) Partial free energies $\Psi_{0,1}$, $\Psi_{0,2}$, and $\Psi_{0,3}$.}
\label{fig: partial free energies 3 phase same}
\end{figure}

\begin{figure}
\captionsetup[subfigure]{justification=centering}
\begin{subfigure}{0.49\textwidth}
\centering
\includegraphics[scale=0.35]{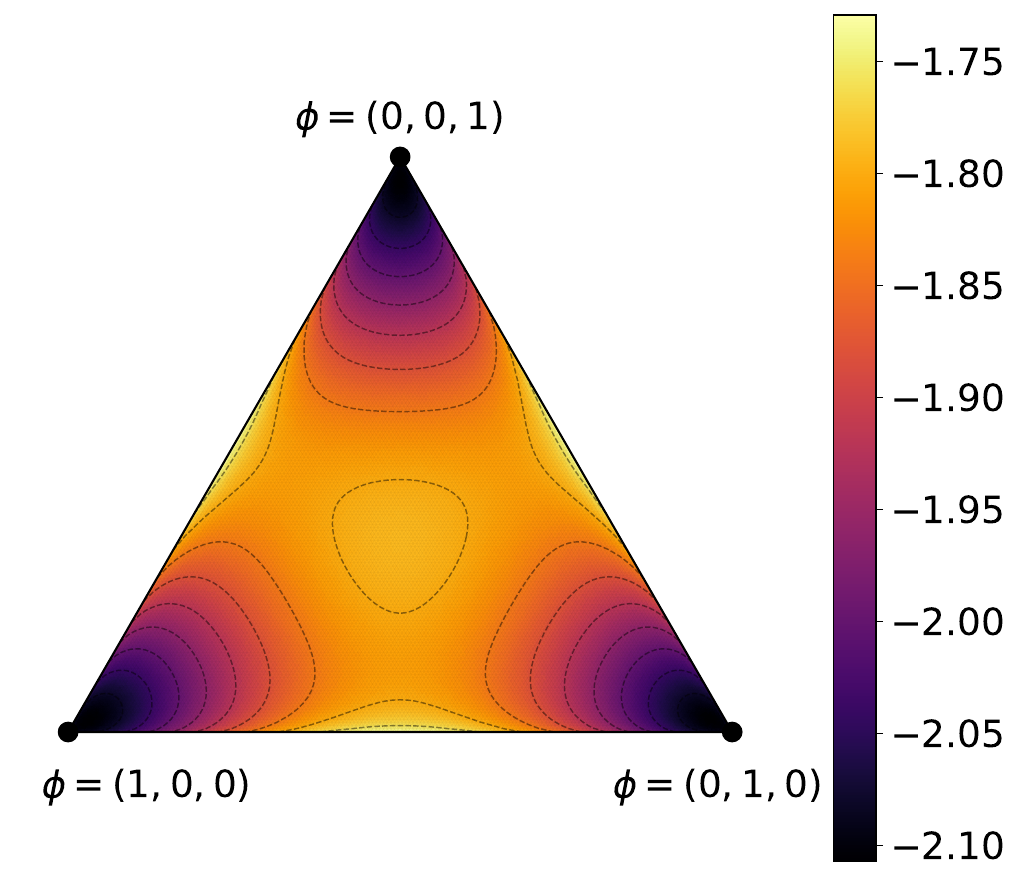}
\caption{Different phases}
\label{fig: heat different}
\end{subfigure}
\begin{subfigure}{0.49\textwidth}
\centering
\includegraphics[scale=0.35]{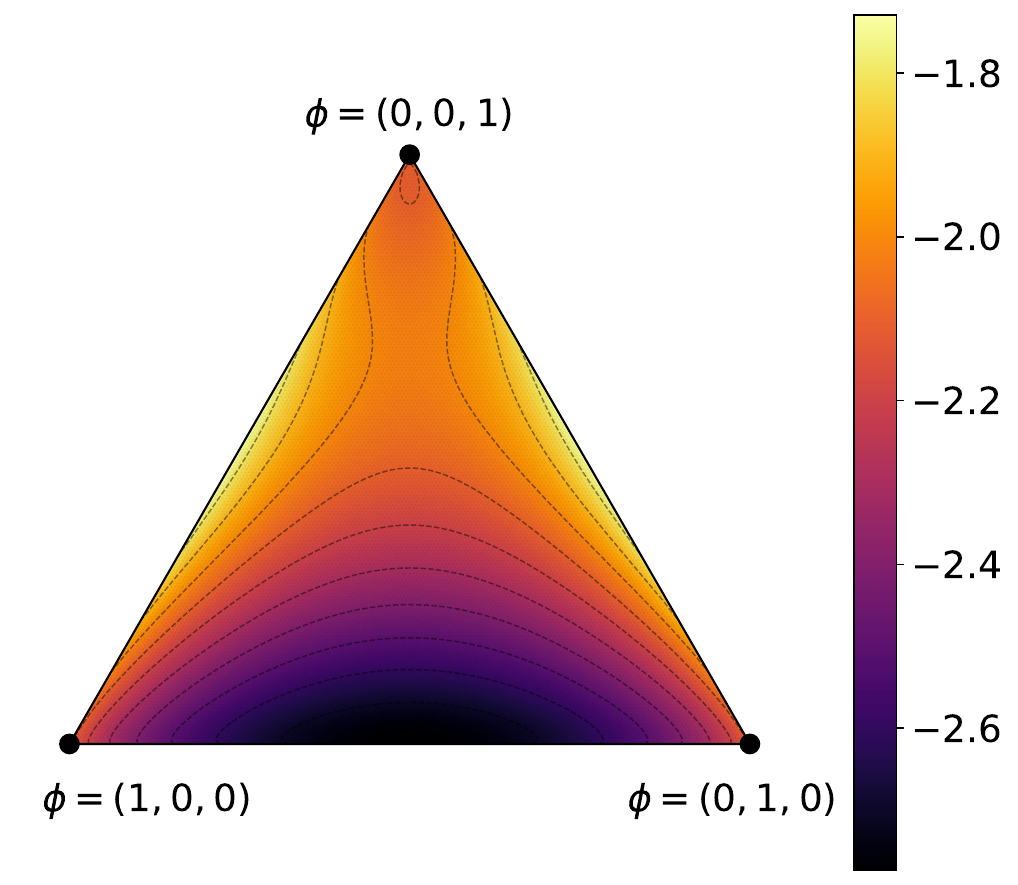}
\caption{Equal phases}
\label{fig: heat equal}
\end{subfigure}
\caption{Heat maps of the homogeneous ternary free energies:(a) different phases, (b) identical phases.}
\label{fig: heatmap}
\end{figure}

\subsection{Regularization of the free energy}

The Flory--Huggins (FH) bulk free-energy density contains entropic contributions of the form
$\phi_\alpha\log\phi_\alpha$. For admissible compositions $\phi_\alpha\in(0,1)$ this expression is smooth;
however, as $\phi_\alpha\to 0^+$ the associated chemical potential diverges, which makes the practical computation of minima or coexistence points very challenging.
To obtain a globally smooth bulk potential with bounded derivatives on $[0,1]$, we replace $\phi\log\phi$ by
a polynomial surrogate and calibrate the interaction parameters such that the pure phases become local
minima of the resulting bulk potential.

The symmetric interaction matrix
$\chi_{\alpha\beta}$ is, in general, fully populated. However, in the current calibration we retain only
its diagonal entries for convenience,
\begin{align}
\chi_{\alpha\beta}=\chi_\alpha\,\delta_{\alpha\beta},
\qquad \alpha,\beta\in\{1,\ldots,N\},
\label{eq:chi_diag}
\end{align}
although this restriction is not required by the formulation.

To remove the logarithmic singularity we approximate the function $x\log x$ on $(0,1]$ by a degree-six
polynomial
\begin{align}
P(x)=\sum_{k=0}^{6} a_k x^k \approx x\log x,
\qquad x\in(0,1],
\label{eq:polyfit}
\end{align}
where the coefficients $\{a_k\}_{k=0}^6$ are obtained by a least-squares fit over a dense sampling of
$(0,1]$. The polynomial bulk energy used in the computations is then
\begin{align}
\widehat{\Psi}_0(\boldsymbol{\phi})
= \frac{\bar W}{\varepsilon_0}\left(\sum_{\alpha=1}^{N} P(\phi_\alpha)
- \sum_{\alpha=1}^{N}\chi_{\alpha}\phi_\alpha^2\right).
\label{eq:Psi0_poly}
\end{align}
Replacing $x\log x$ by $P(x)$ yields a smooth bulk potential on $[0,1]$, but does not by itself enforce that
the pure phases are minima. We therefore calibrate the interaction parameter(s) using the binary case so
that the pure states are stationary points and, in fact, local minima of the polynomial bulk potential.

On the binary manifold $(\phi,1-\phi)$, with the remaining phases set to zero,~\eqref{eq:Psi0_poly}
reduces to
\begin{align}
\widehat{\Psi}_0(\phi,1-\phi)
=
\frac{\bar W}{\varepsilon_0}\Big(
P(\phi)+P(1-\phi)-\chi\big(\phi^2+(1-\phi)^2\big)
\Big),
\label{eq:binary_poly}
\end{align}
where $\chi$ denotes the effective binary interaction parameter for this calibration. Imposing stationarity
at the pure states,
$\mathrm{d}\widehat{\Psi}_0(\phi,1-\phi)/\mathrm{d}\phi=0$ at $\phi=0$ (and by symmetry at $\phi=1$), yields
\begin{align}
\chi
=
\frac12\Big(2a_2+3a_3+4a_4+5a_5+6a_6\Big)
=
2.0753225165.
\label{eq:chi_calib}
\end{align}
With~\eqref{eq:chi_calib}, the pure phases $\phi\in\{0,1\}$ are stationary points of
$\widehat{\Psi}_0(\phi,1-\phi)$. Moreover, for the calibrated value~\eqref{eq:chi_calib} one has
$\mathrm{d}^2\widehat{\Psi}_0/\mathrm{d}\phi^2\big|_{\phi=0,1} > 0$, so that $\phi=0$ and $\phi=1$ are local
minima of the polynomial bulk potential. In the diagonal choice~\eqref{eq:chi_diag}, we take
$\chi_\alpha=\chi$ for all phases in the present implementation.

To match the barrier height of the standard Ginzburg--Landau (GL) double well, we choose the bulk prefactor
$\bar{W}$ such that the barrier of the polynomial surrogate equals $\max_{\varphi \in [-1,1]}W_\mathrm{GL}(\varphi)=1/4$,
where $\bar{W}_\mathrm{GL}(\varphi)=(1-\varphi^2)^2/4$ for $\varphi\in[-1,1]$. Using the affine mapping
$\varphi=2\phi-1$ on the binary manifold, the GL potential becomes $\bar{W}_\mathrm{GL}(\phi)=4\phi^2(1-\phi)^2$,
which attains its maximum $1/4$ at $\phi=1/2$.

Let $\widehat{\psi}(\phi)$ denote the binary polynomial bulk energy with unit prefactor ($\bar{W}=1$),
\begin{align}
\widehat{\psi}(\phi)
=
P(\phi)+P(1-\phi)-\chi\big(\phi^2+(1-\phi)^2\big).
\label{eq:psi_hat_unit}
\end{align}
We then set
\begin{align}
\bar{W}
=
\frac{1/4}{\widehat{\psi}(1/2)-\widehat{\psi}(1)}
=
0.6923807595.
\label{eq:W_calib}
\end{align}
With~\eqref{eq:chi_calib}--\eqref{eq:W_calib}, the surrogate~\eqref{eq:Psi0_poly} is smooth on $[0,1]$,
renders the pure phases as local minima, and reproduces the GL barrier height by construction. The
resulting partial bulk-energy contributions are shown in Figure~\ref{fig: calibration partial free energies}, and a comparison
between GL, classical FH, and the polynomial surrogate is provided in Figure~\ref{fig: calibration free energies}.

\begin{figure}
\captionsetup[subfigure]{justification=centering}
\begin{subfigure}{0.48\textwidth}
\centering
\includegraphics[scale=0.41]{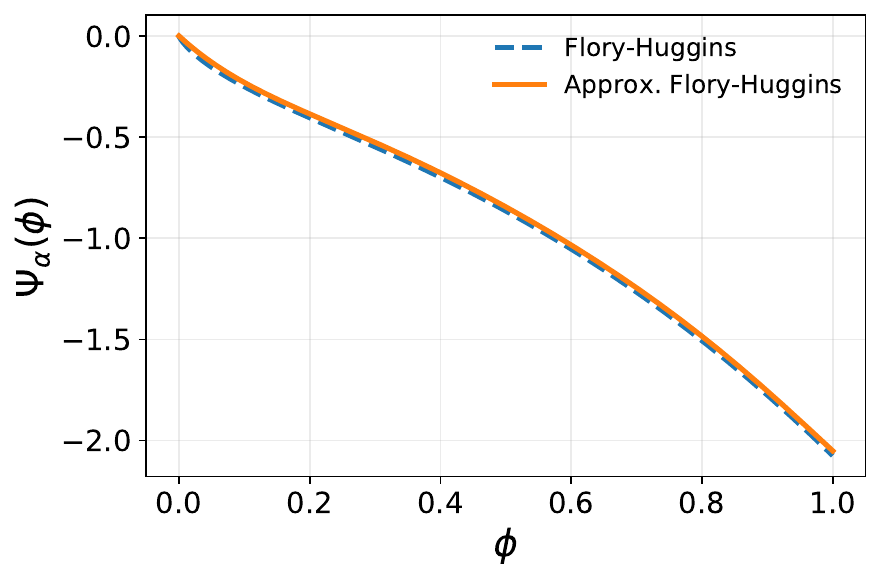}
\caption{Partial free energies}
\label{fig: calibration partial free energies}
\end{subfigure}
\begin{subfigure}{0.49\textwidth}
\centering
\includegraphics[scale=0.41]{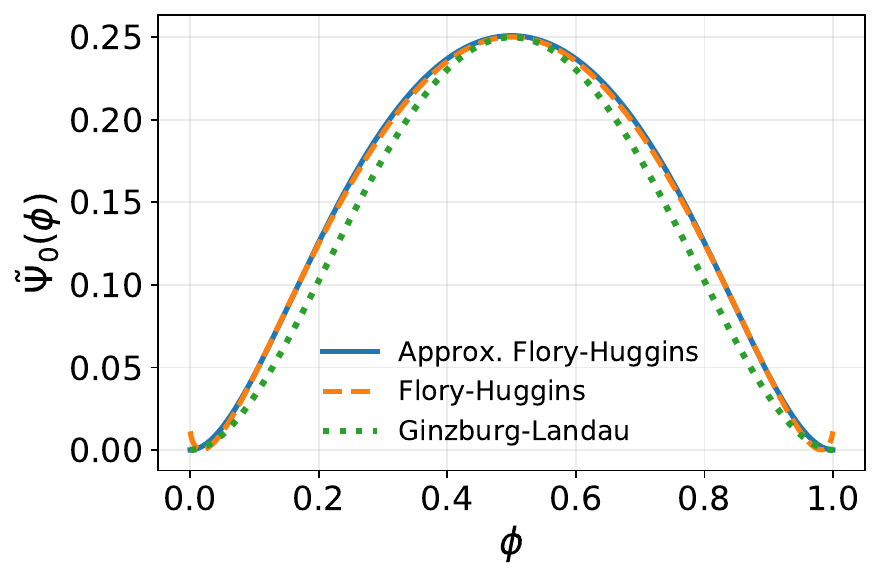}
\caption{Free energies (shifted so that minima are zero)}
\label{fig: calibration free energies}
\end{subfigure}
\caption{Calibration of the free energy.}
\label{fig: calibration free energy}
\end{figure}

\subsection{Mobility tensor}

In \cref{sec:mobility_derivation} we derived a reduction-consistent mobility class and, for practical use, we restrict here to an isotropic mobility
tensor of the form
\begin{align}\label{eq:isotropic_M_def}
  \mathbf{M}_{\mA\mB}(\boldsymbol{\phi}) = M_{\mA\mB}(\boldsymbol{\phi})\mathbf{I},
  \qquad \mA,\mB=1,\ldots,N,
\end{align}
with a scalar mobility matrix $M_{\mA\mB}$ that is symmetric, positive semidefinite, and satisfies
$M(\boldsymbol{\phi})\mathbf{1}=\mathbf{0}$.

Using the constraint compatibility $M\mathbf{1}=\mathbf{0}$, we have
$M_{\mA\mA}=-\sum_{\mB\neq \mA}M_{\mA\mB}$ and therefore, for each $\mA$,
\begin{align}\label{eq:exchange_rewrite}
  \bJ_\mA
  = -\sum_{\mB\neq \mA} M_{\mA\mB}\nabla (g_\mB-g_\mA)%  = \sum_{\mB\neq \mA} M_{\mA\mB}\nabla(g_\mA-g_\mB).
\end{align}
Thus diffusion is driven only by differences of generalized potentials. 

We choose the mobility as:
\begin{align}\label{eq: mob choice}
    M_{\mA\mB} = -4 m_0 \varepsilon_0 (\rho_\mA^{-1}+\rho_\mB^{-1})^{-2} \phi_\mA \phi_\mB,
\end{align}
with $m_0$ a constant for which we choose $m_0 = 10^{-3}$. This provides the standard mobility form in the two-phase case. 
We provide the details in \cref{sec: connection with binary model}. In numerical computations we require the matrix $M_{\mA\mB}$ to remain negative semi-definite, and therefore we choose the regularization:
\begin{align}
    M_{\mA\mB} = -4 m_0 \varepsilon_0 (\rho_\mA^{-1}+\rho_\mB^{-1})^{-2} (\phi_\mA)_+ (\phi_\mB)_+,
\end{align}
and $M_{\mA\mA} = -\sum_{\mB\neq\mA} M_{\mA\mB}$ which complies with the compatibility constraint $M\mathbf{1}=\mathbf{0}$.

\begin{remark}[Maxwell--Stefan mobility]
   Selecting $M_{\mA\mB}=m(\rho)\rho_\mA\rho_\mB/\rho$ yields the Maxwell--Stefan matrix
\begin{align}\label{eq:MS_offdiag_simplified}
  M_{\mA\mB}(\boldsymbol{\phi})
  = m(\rho)\Big(\delta_{\mA\mB}\tilde\rho_\mA-\frac{\tilde\rho_\mA\tilde\rho_\mB}{\rho}\Big),
\end{align}
so that for $\mA\neq \mB$,
\begin{align}\label{eq:MS_offdiag}
  M_{\mA\mB} = -m(\rho)\frac{\tilde\rho_\mA\tilde\rho_\mB}{\rho}
  = -m(\rho)\frac{\rho_\mA\rho_\mB}{\rho}\phi_\mA\phi_\mB.
\end{align}
Substituting \eqref{eq:MS_offdiag} into the form \eqref{eq:exchange_rewrite} shows explicitly that each pairwise
exchange contribution is proportional to $\phi_\mA\phi_\mB$, and hence vanishes when either phase is absent.  
\end{remark}

\begin{remark}[Elliott--Garcke mobility]
The mobility structure used in \cite{elliott1997diffusional} can be written in the form
\begin{align}
L_{\alpha\beta}
=
l_\alpha(\phi_\alpha)
\left(
\delta_{\alpha\beta}
-
\frac{l_\beta(\phi_\beta)}{\sum_{\gamma} l_\gamma(\phi_\gamma)}
\right).
\end{align}
Choosing $
l_\alpha(\phi_\alpha)=m(\rho)\tilde\rho_\alpha
= m(\rho)\rho_\alpha\phi_\alpha$, 
provides
\begin{align}
L_{\alpha\beta}
=
m(\rho)\tilde\rho_\alpha
\left(
\delta_{\alpha\beta}
-
\frac{\tilde\rho_\beta}{\rho}
\right)
=
m(\rho)
\left(
\delta_{\alpha\beta}\tilde\rho_\alpha
-
\frac{\tilde\rho_\alpha\tilde\rho_\beta}{\rho}
\right).
\end{align}
This coincides exactly with the Maxwell--Stefan mobility matrix \eqref{eq:MS_offdiag_simplified}.
\end{remark}

\section{Numerical computations}\label{sec: Numerical results}

In this section we provide supporting numerical computations. First, in \cref{subsec: num red} we provide numerical evidence for the mixture-aware properties established in \cref{sec: mixture-aware closure}. Then in \cref{sec:results:rising_bubble} and \cref{subsec: Quaternary rising bubble simulation} we demonstrate the capabilities of the framework for $N=3$ and $N=4$.

All computations are performed with the symmetric structure-preserving finite element method proposed in \cite{structureNphase2026}. In addition, we choose the capillary tensor to match the prescribed surface tensions, see \cite{surfacetension2026} for details. We report the calibrated capillarity matrices, interface widths, and scale factors in Appendix \ref{sec appendix parameters}.

\subsection{Mixture-aware properties}\label{subsec: num red}

Here, we numerically demonstrate the mixture-aware properties of the framework. For this purpose we perform ternary ($N=3$) simulations of an effectively binary ($N=2$) problem. We choose the standard rising bubble benchmark problem of \cite{hysing2009quantitative}. In this problem, a circular bubble of phase 2 (the lighter phase) with an initial diameter of $D_0 = 2R_0 = 0.5$ is positioned at $(0.5, 0.5)$ within a rectangular domain $[0,1] \times [0,2]$, surrounded by phase 1 (the heavier phase). The boundary conditions are no-penetration ($\vv \cdot \mathbf{n} = 0$) on the vertical boundaries (left and right) and no-slip ($\vv = 0$) on the horizontal boundaries (top and bottom). We visualize a schematic representation of the problem in \cref{fig:sketch 2D rising bubble problem}. The benchmark problem consists of two cases, each defined by different parameter values, as listed in \cref{table: parameters 2D RB cases}.

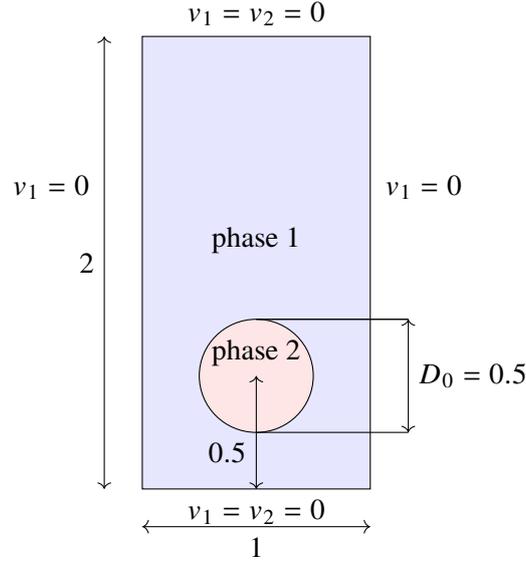
\begin{figure}[h]
\begin{center}
\begin{tikzpicture}
    % Draw the gray background rectangle
    \fill[fill=blue!10,draw=black] (0, 0) rectangle (3, 6);
    
    % Draw the white circle with a black border
    \draw[fill=red!10,draw=black] (1.5, 1.5) circle (0.75);
    
    % Add arrows to indicate dimensions
    \draw[<->] (0, -0.5) -- (3, -0.5) node[midway,below] {$1$};
    \draw[<->] (-0.5, 0) -- (-0.5, 6) node[midway,left] {$2$};
    \draw[<->] (3.5, 0.75) -- (3.5, 2.25) node[midway,right] {$D_0 = 0.5$};
    \draw[-] (1.5, 0.75) -- (3.5, 0.75);
    \draw[-] (1.5, 2.25) -- (3.5, 2.25);
    \draw[<->] (1.5, 0.0) -- (1.5, 1.5) node[midway,below left] {$0.5$};
    \draw[-] (1.5, 2.25) -- (3.5, 2.25);
    \node at (1.5, 6.3) {$v_1 = v_2 = 0$}; 
    \node at (1.5, -0.3) {$v_1 = v_2 = 0$};
    \node at (-1.2, 4.0) {$v_1 = 0$}; 
    \node at (3.7, 4.0) {$v_1 = 0$}; 
    \node at (1.5, 1.8) {phase 2};
    \node at (1.5, 3.3) {phase 1}; 
\end{tikzpicture}
    \caption{Schematic representation of the \cite{hysing2009quantitative} rising bubble problem}
    \label{fig:sketch 2D rising bubble problem}
\end{center}
\end{figure}

\begin{table}
\centering
\begin{tabularx}{\textwidth}{XXXXXXX}
%\hline\\[-6pt]
Case & \hspace{0.1cm} $\rho_1$ & \hspace{0.1cm} $\rho_2$ & $\nu_1$ & $\nu_2$ & \hspace{0.1cm} $\gamma_{12}$ & \hspace{0.1cm} $g$  \\[4pt]
\hline\\[-6pt]
\hspace{0.5cm}1 & $1000$ & $100$ & $10$ & $1$   & $24.5$ & $0.98$    \\[6pt]
\hspace{0.5cm}2 & $1000$ & \hspace{0.1cm} $1$   & \hspace{0.1cm} $1$  & $0.1$ &$1.96$  & $0.98$  \\[6pt]
\hline
\end{tabularx}
\caption{Parameters for the two-dimensional rising bubble cases.}
\label{table: parameters 2D RB cases}
\end{table}
We conduct simulations on a uniform rectangular mesh with element size $h = 1/128 $. The time step size is $\Delta t_n = 0.128 h $ and the interface width parameter is $\varepsilon = h$.

We consider three different scenarios.

\subsubsection*{Absent phase}
Here we test the scenario of an absent phase. We choose the initial phase fields as:
\begin{subequations}
  \begin{align}
    \phi^h_{1,0}(\mathbf{x}) =&~ \frac{1}{2}\left(1+\tanh{\dfrac{\sqrt{(x-0.5)^2+(y-0.5)^2}-R_0}{\varepsilon\sqrt{2}}}\right),\\
    \phi^h_{2,0}(\mathbf{x}) =&~ \frac{1}{2}\left(1-\tanh{\dfrac{\sqrt{(x-0.5)^2+(y-0.5)^2}-R_0}{\varepsilon\sqrt{2}}}\right),\\
    \phi^h_{3,0}(\mathbf{x}) =&~ 0.
\end{align}
\end{subequations}
In Figure \ref{fig: reduction case 1 2 - phi3=0} we visualize the phase fields for Case 1 and 2. We observe that the absent phase $\phi_3$ remains absent. In addition, we have verified that the profiles of $\phi_1$ and $\phi_2$ are in perfect agreement with reference results, such as \cite{brunk2026simple}. These observations are in agreement with axioms A6 (Reduction for absent phase) and A7 (Absent phase remains absent).

\begin{figure}
\captionsetup[subfigure]{justification=centering}
\begin{subfigure}{0.078\textwidth}
\centering
\includegraphics[width=1\textwidth]{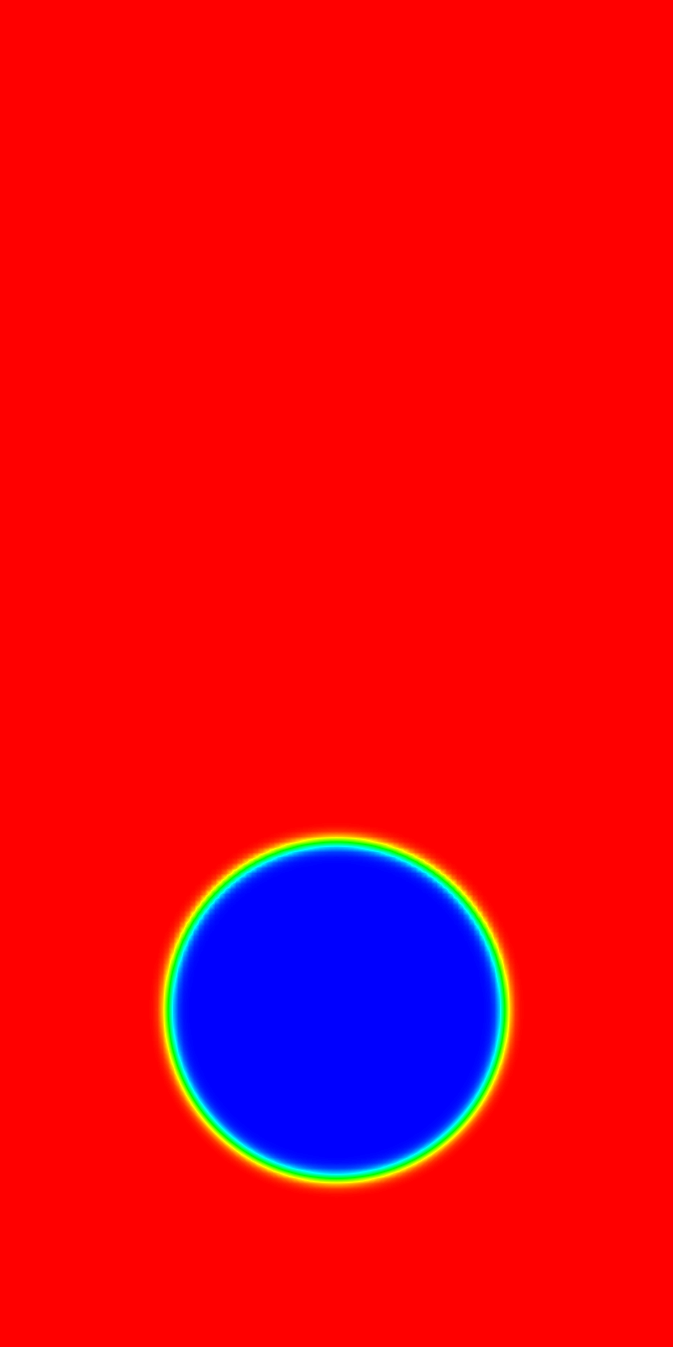}
%\caption{$t=0.0$}
\end{subfigure}
\begin{subfigure}{0.078\textwidth}
\centering
\includegraphics[width=1\textwidth]{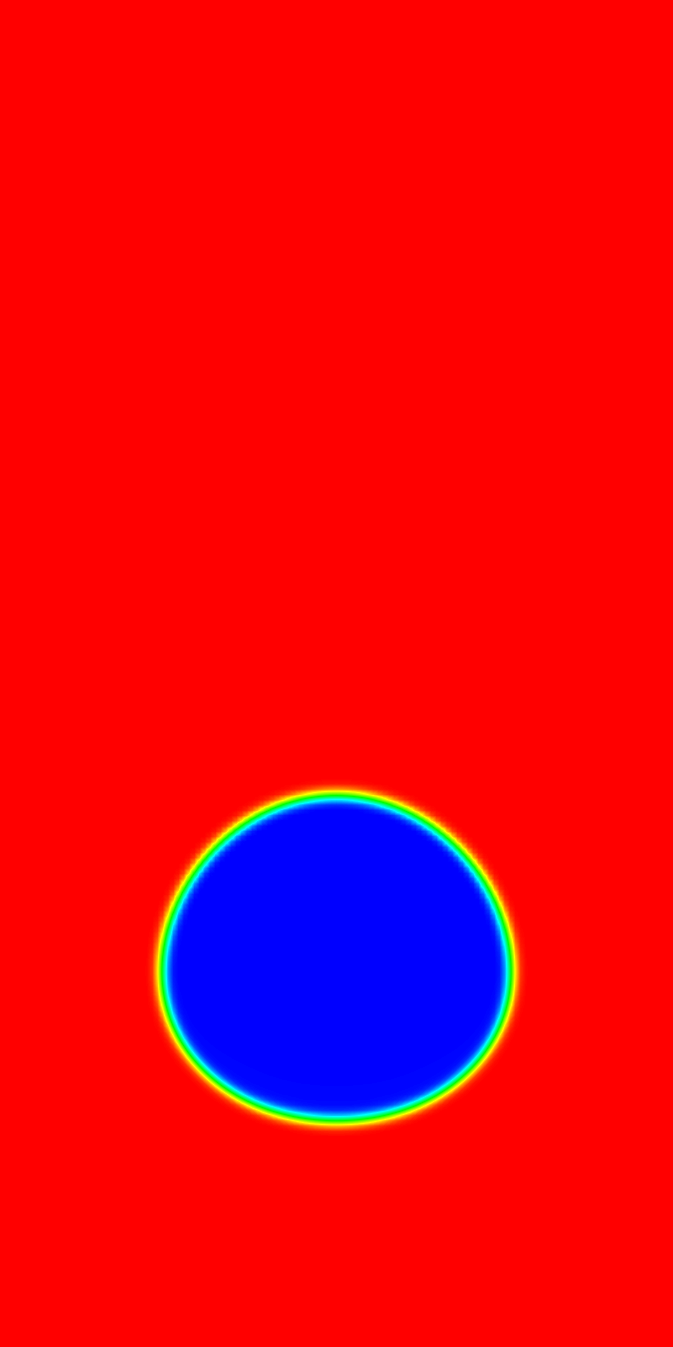}
%\caption{$t=0.6$}
\end{subfigure}
\begin{subfigure}{0.078\textwidth}
\centering
\includegraphics[width=1\textwidth]{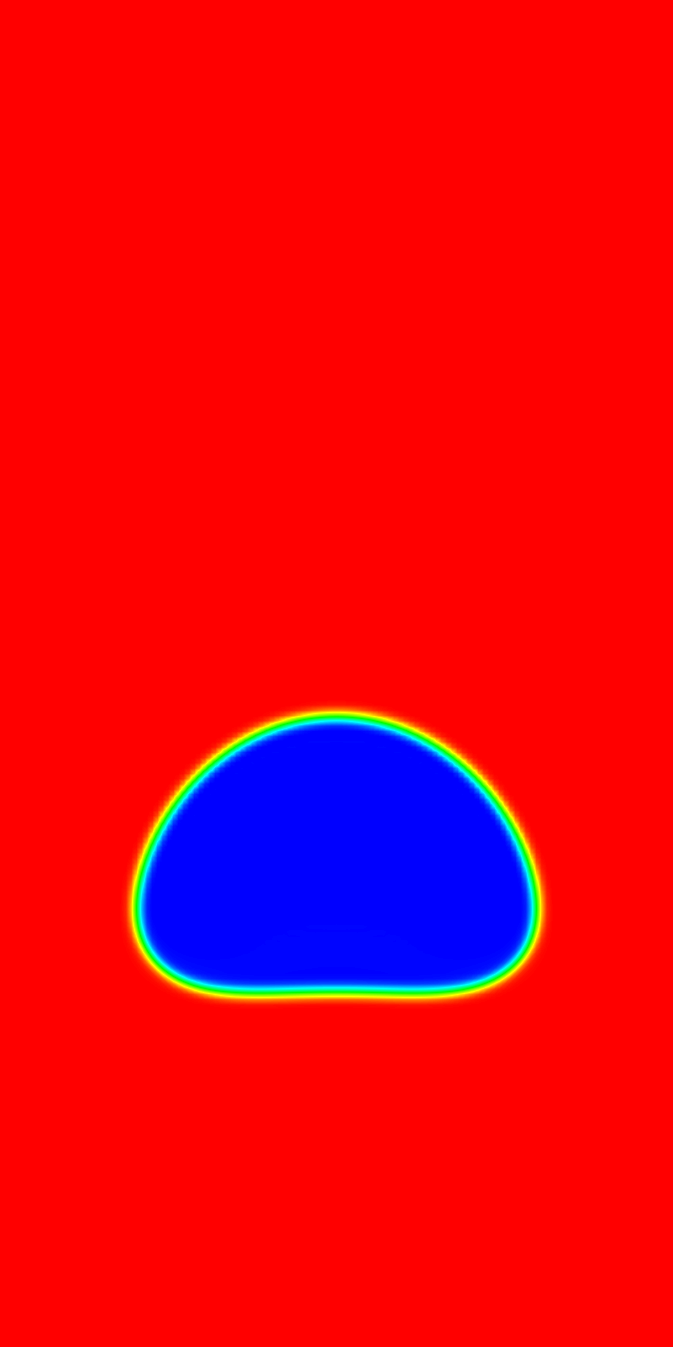}
%\caption{$t=1.2$}
\end{subfigure}
\begin{subfigure}{0.078\textwidth}
\centering
\includegraphics[width=1\textwidth]{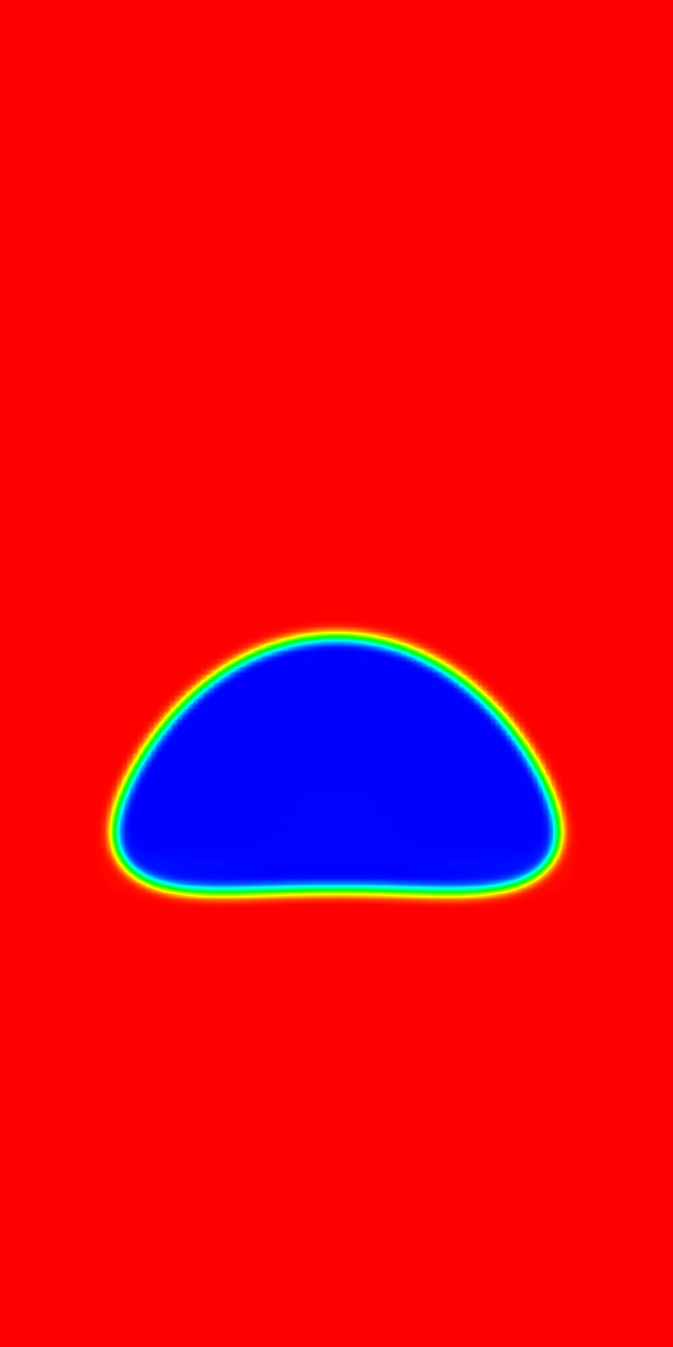}
%\caption{$t=1.8$}
\end{subfigure}
\begin{subfigure}{0.078\textwidth}
\centering
\includegraphics[width=1\textwidth]{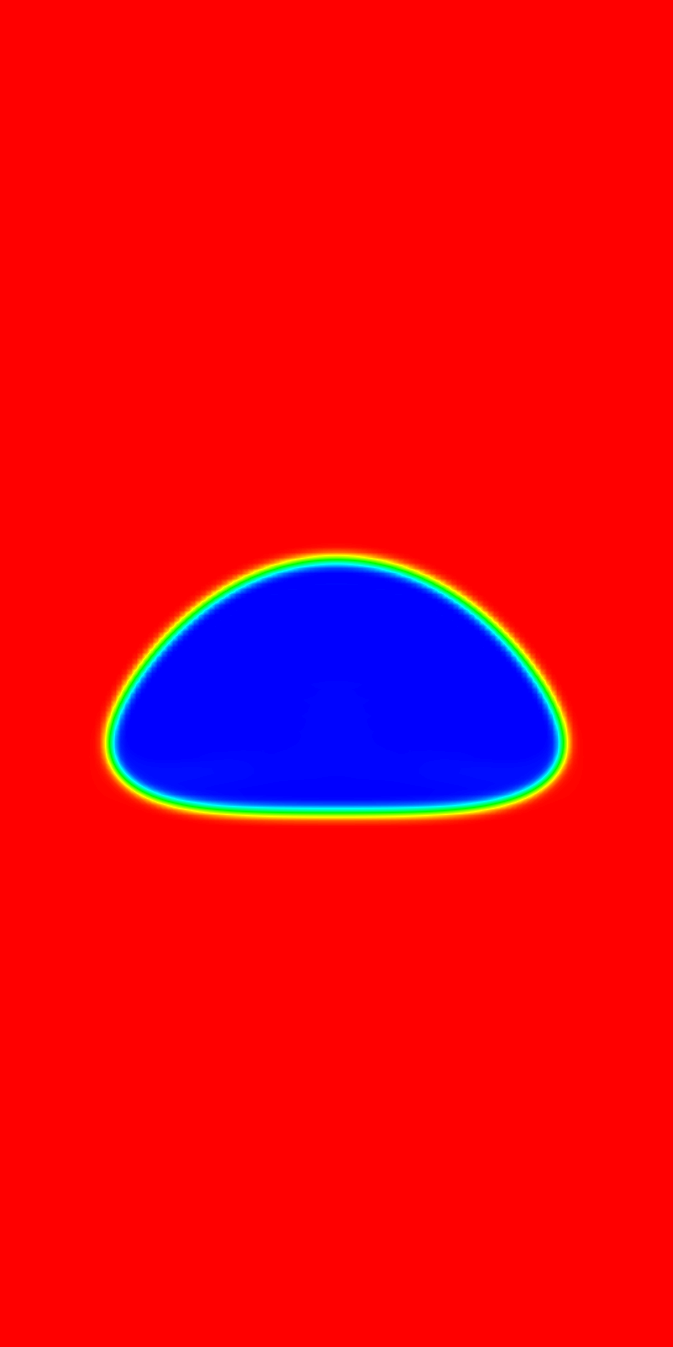}
%\caption{$t=2.4$}
\end{subfigure}
\begin{subfigure}{0.078\textwidth}
\centering
\includegraphics[width=1\textwidth]{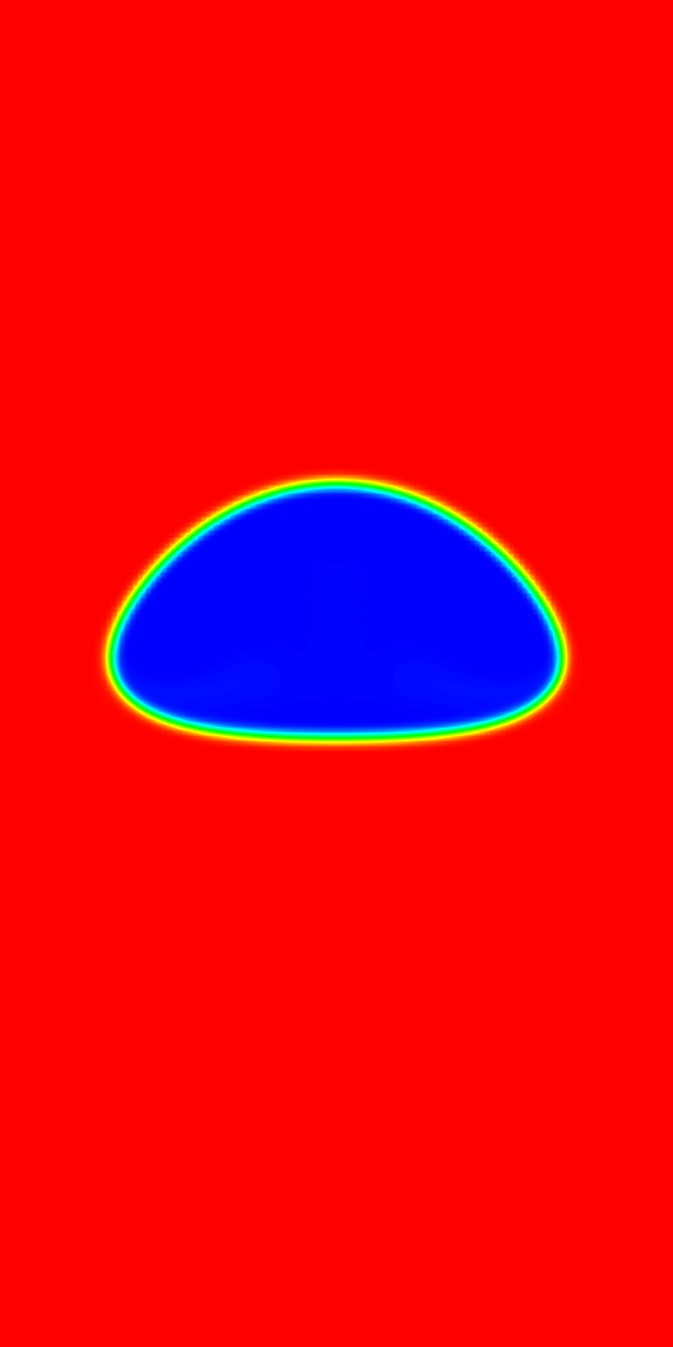}
%\caption{$t=3.0$}
\end{subfigure}
\begin{subfigure}{0.078\textwidth}
\centering
\includegraphics[width=1\textwidth]{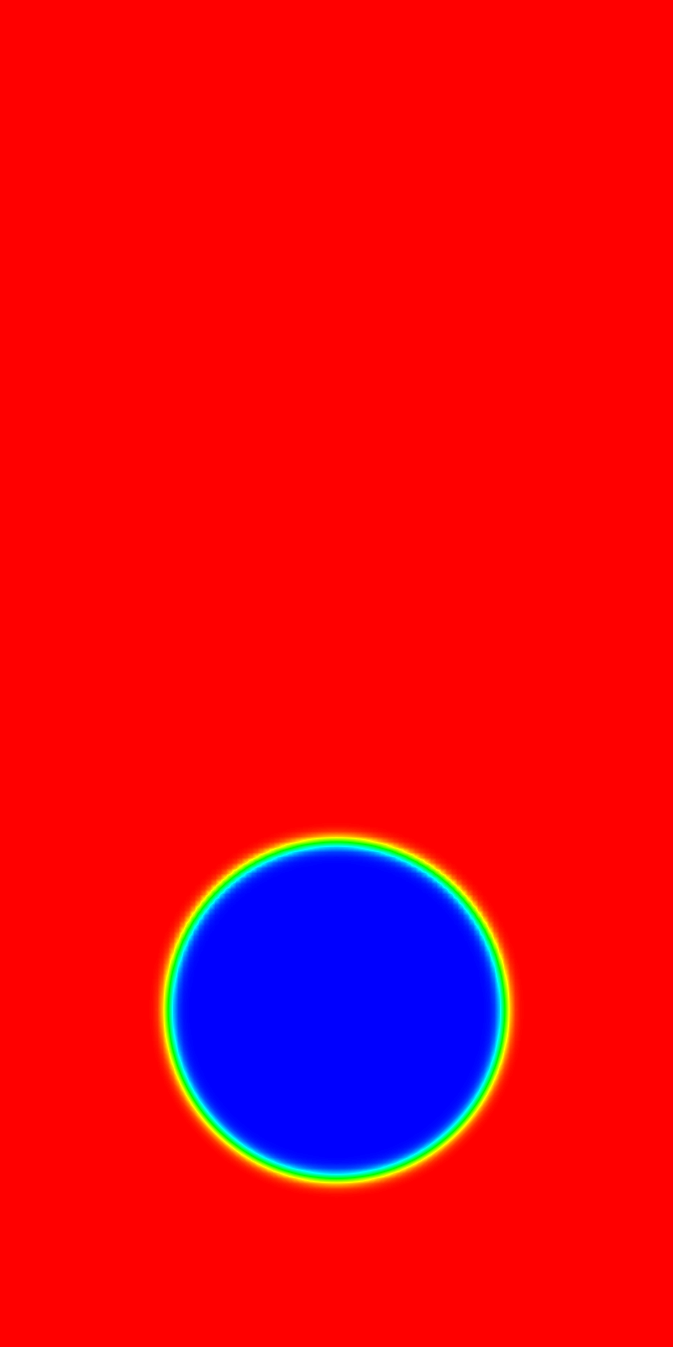}
%\caption{$t=0.0$}
\end{subfigure}
\begin{subfigure}{0.078\textwidth}
\centering
\includegraphics[width=1\textwidth]{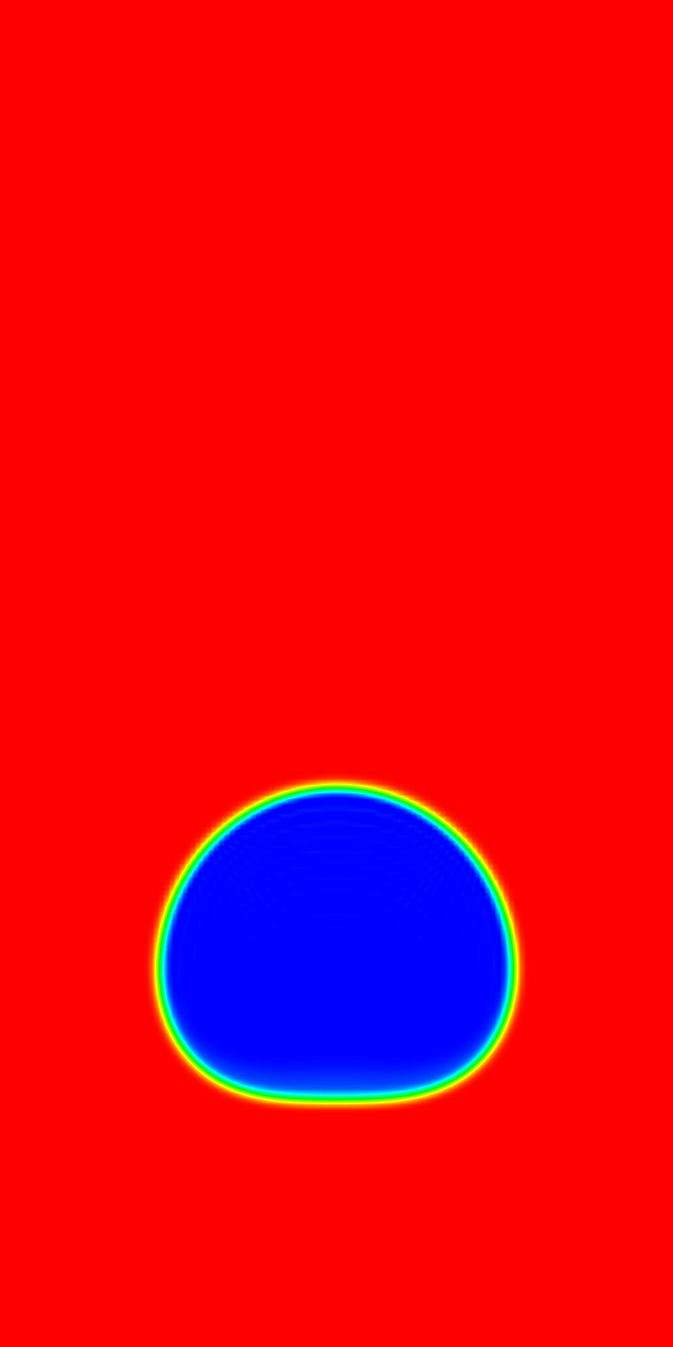}
%\caption{$t=0.6$}
\end{subfigure}
\begin{subfigure}{0.078\textwidth}
\centering
\includegraphics[width=1\textwidth]{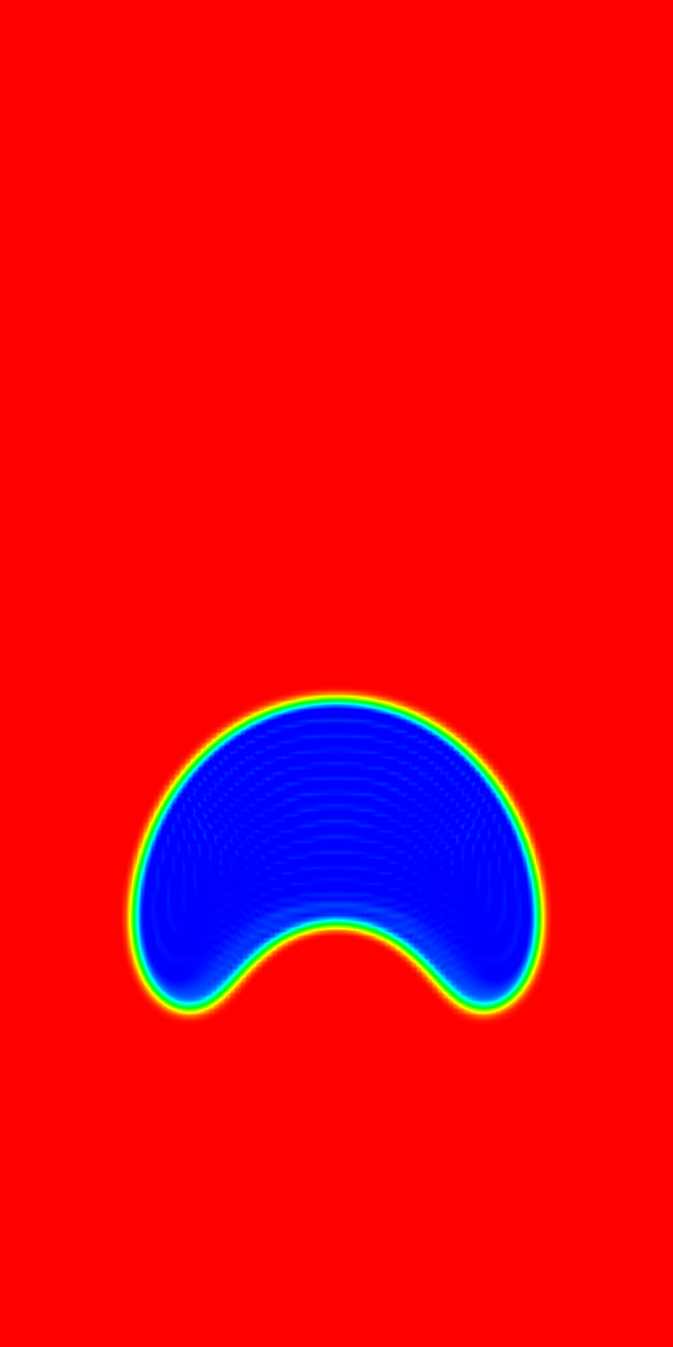}
%\caption{$t=1.2$}
\end{subfigure}
\begin{subfigure}{0.078\textwidth}
\centering
\includegraphics[width=1\textwidth]{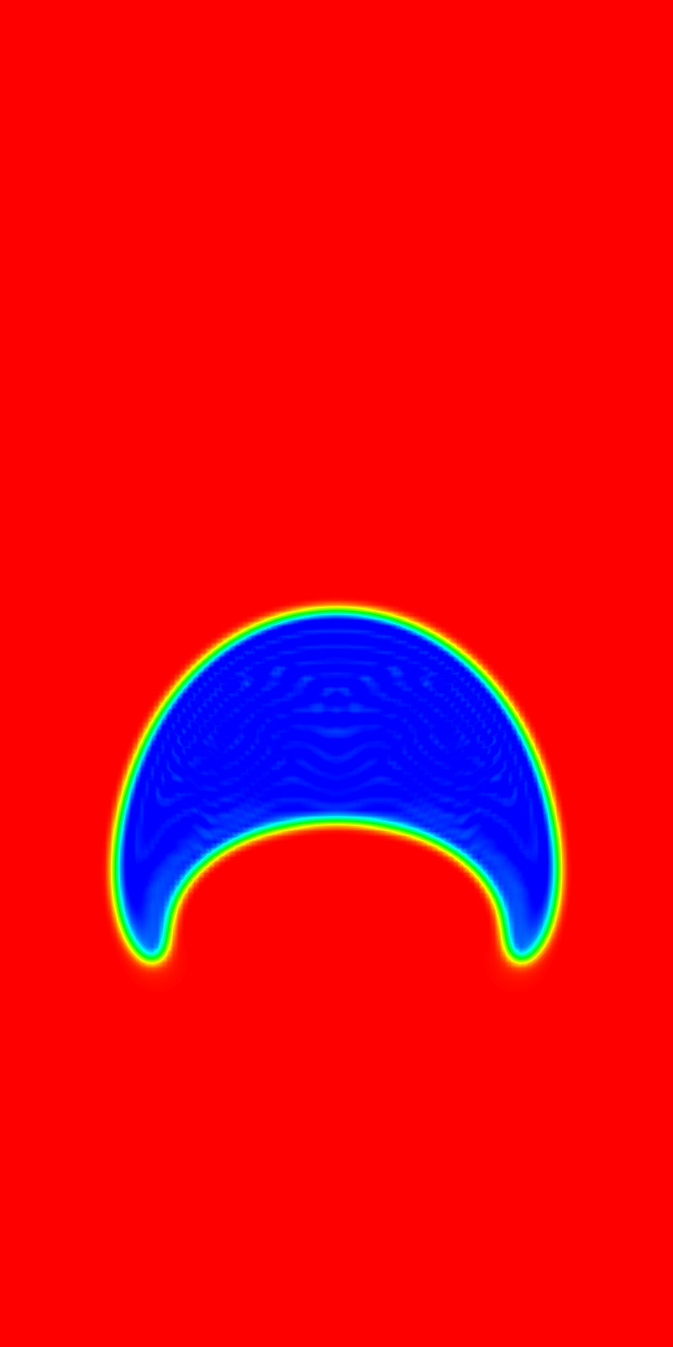}
%\caption{$t=1.8$}
\end{subfigure}
\begin{subfigure}{0.078\textwidth}
\centering
\includegraphics[width=1\textwidth]{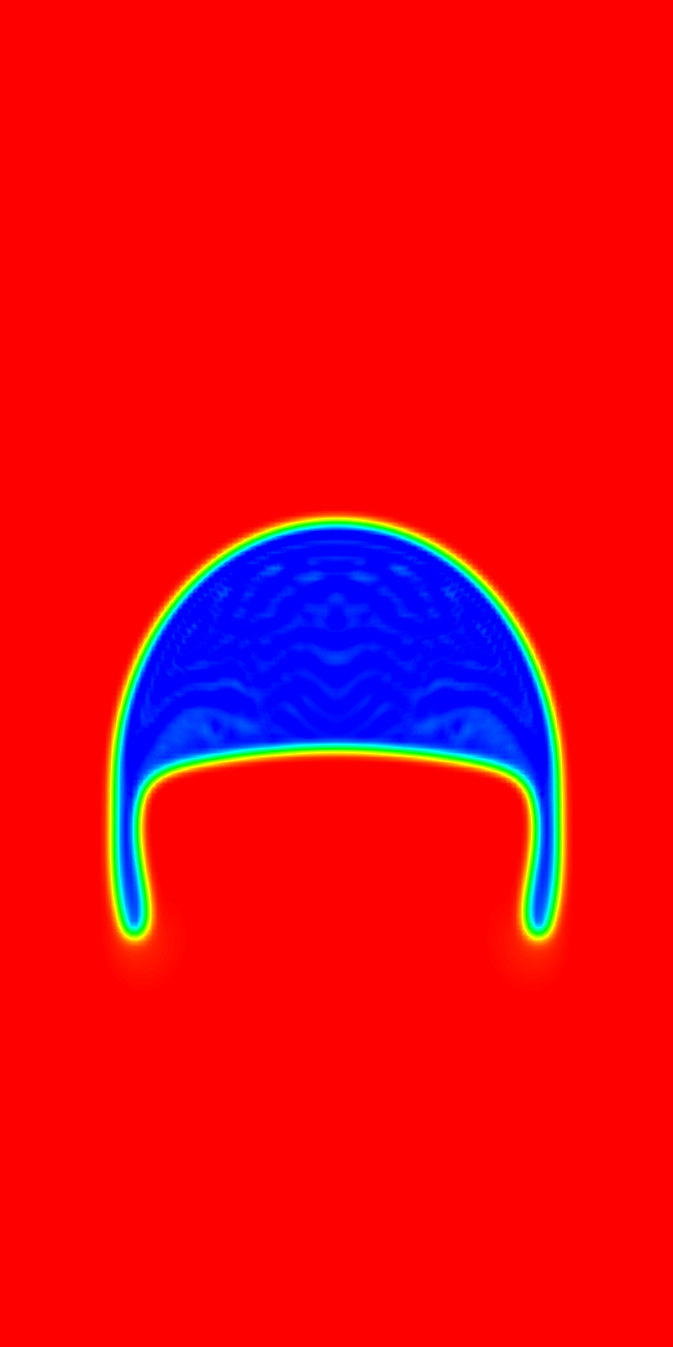}
%\caption{$t=2.4$}
\end{subfigure}
\begin{subfigure}{0.078\textwidth}
\centering
\includegraphics[width=1\textwidth]{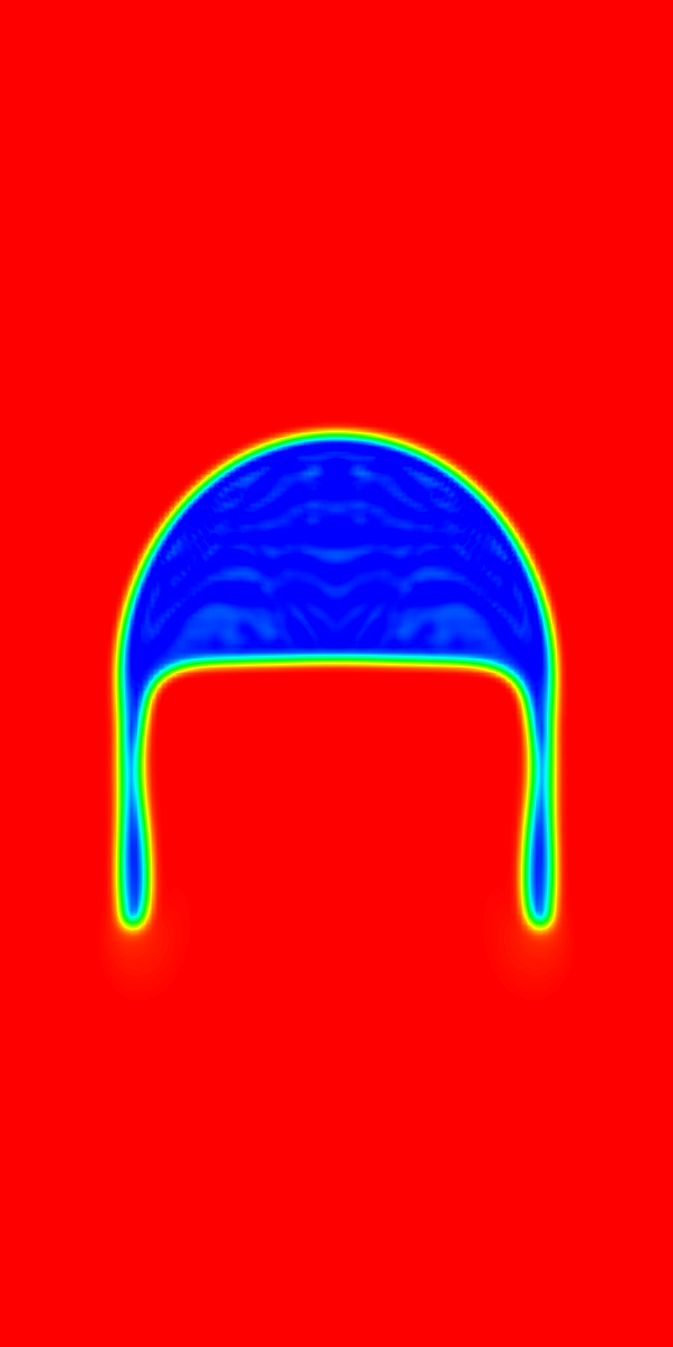}
%\caption{$t=3.0$}
\end{subfigure}
\begin{subfigure}{0.078\textwidth}
\centering
\includegraphics[width=1\textwidth]{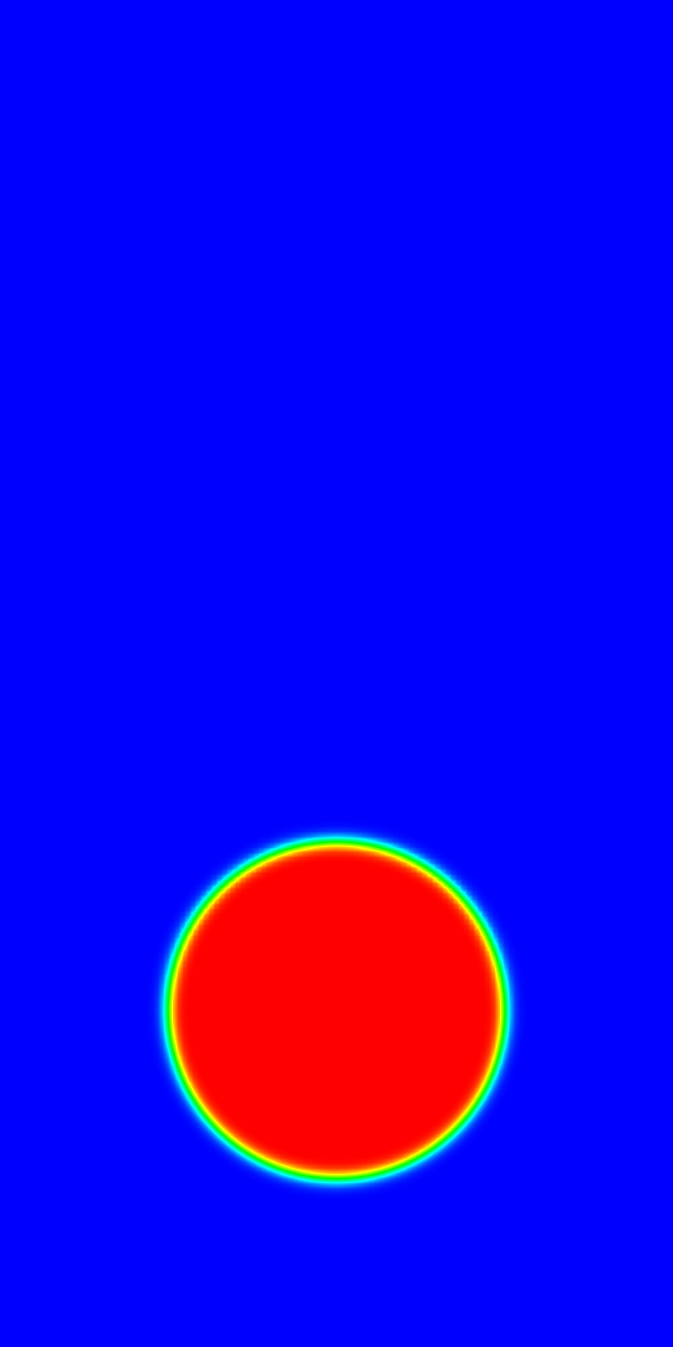}
%\caption{$t=0.0$}
\end{subfigure}
\begin{subfigure}{0.078\textwidth}
\centering
\includegraphics[width=1\textwidth]{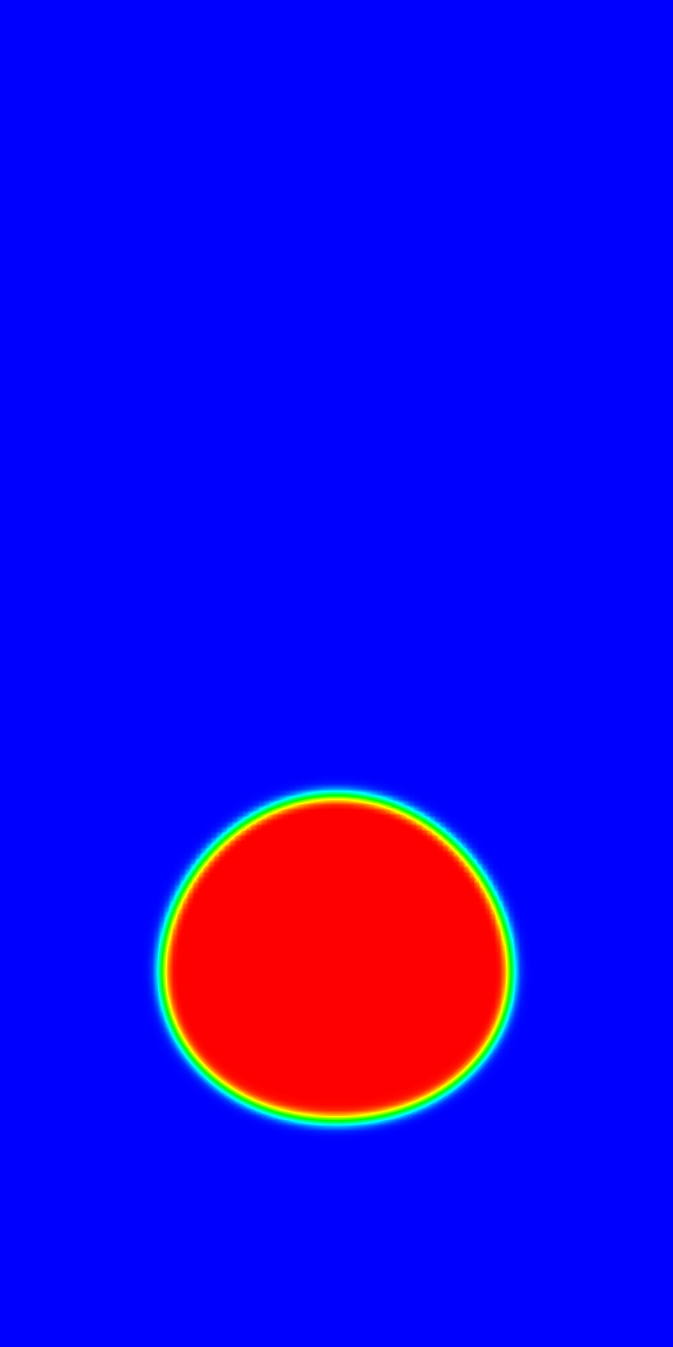}
%\caption{$t=0.6$}
\end{subfigure}
\begin{subfigure}{0.078\textwidth}
\centering
\includegraphics[width=1\textwidth]{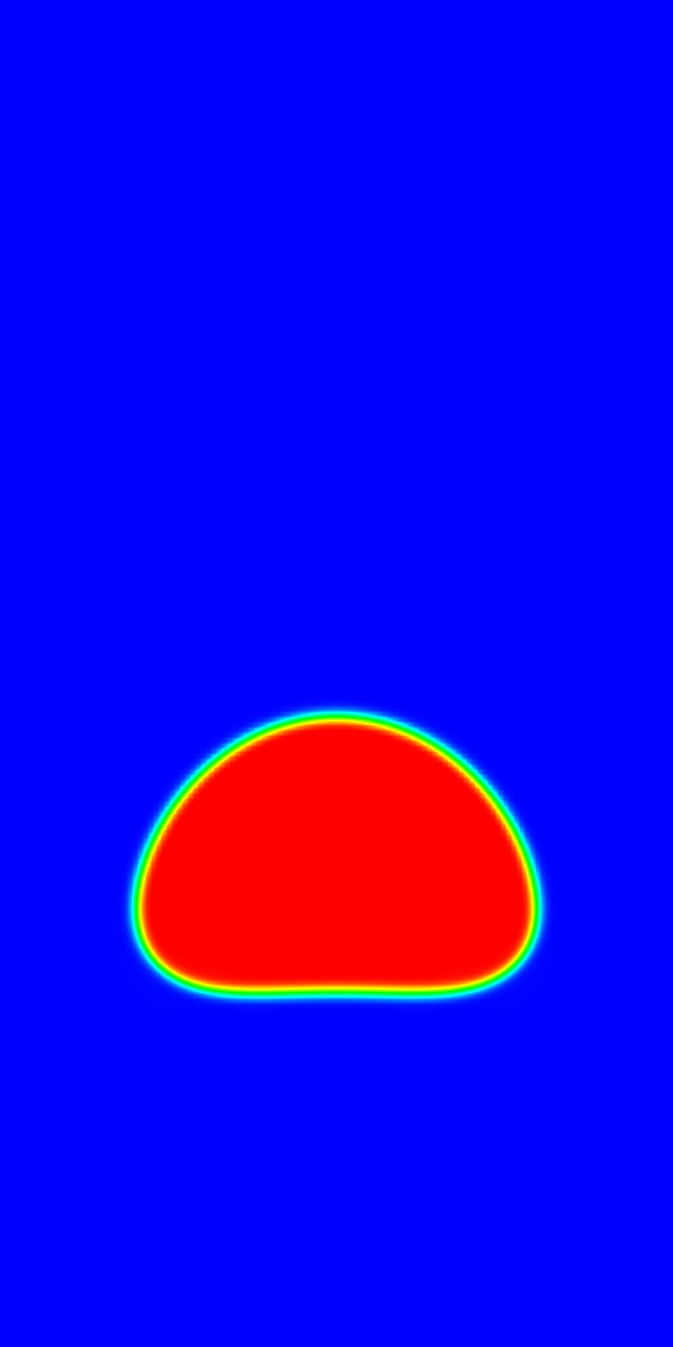}
%\caption{$t=1.2$}
\end{subfigure}
\begin{subfigure}{0.078\textwidth}
\centering
\includegraphics[width=1\textwidth]{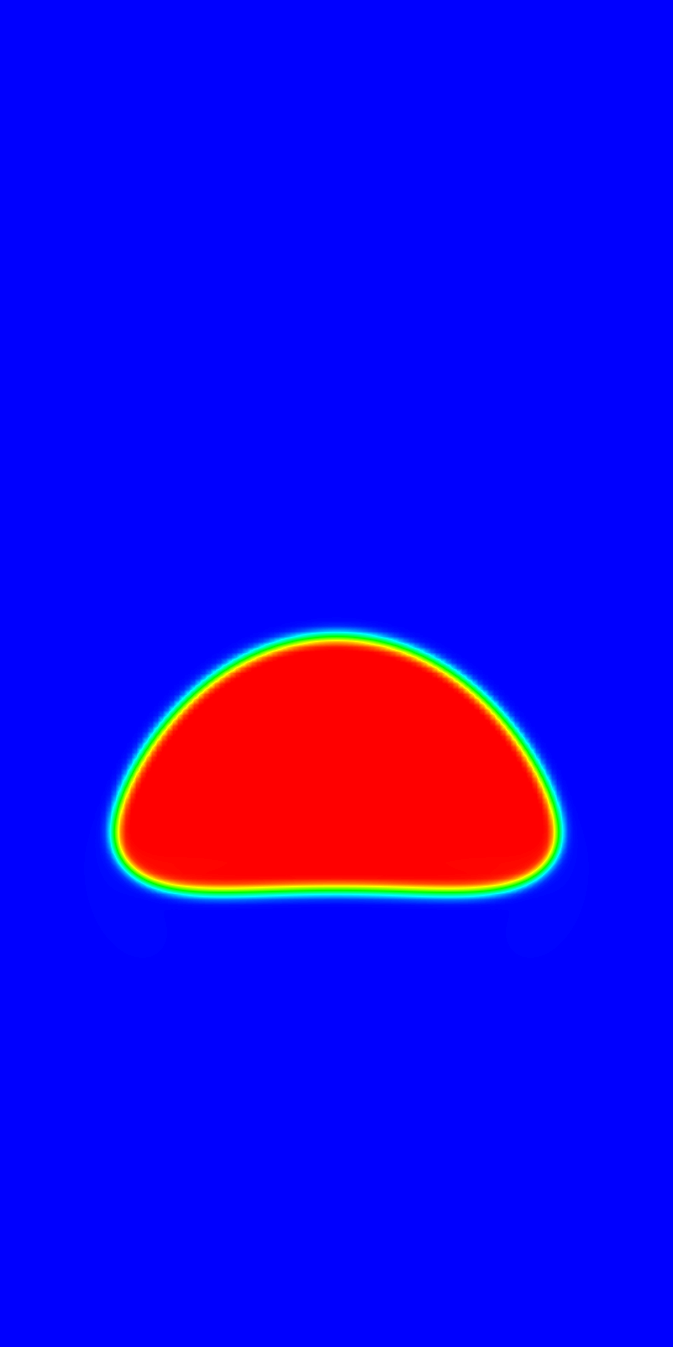}
%\caption{$t=1.8$}
\end{subfigure}
\begin{subfigure}{0.078\textwidth}
\centering
\includegraphics[width=1\textwidth]{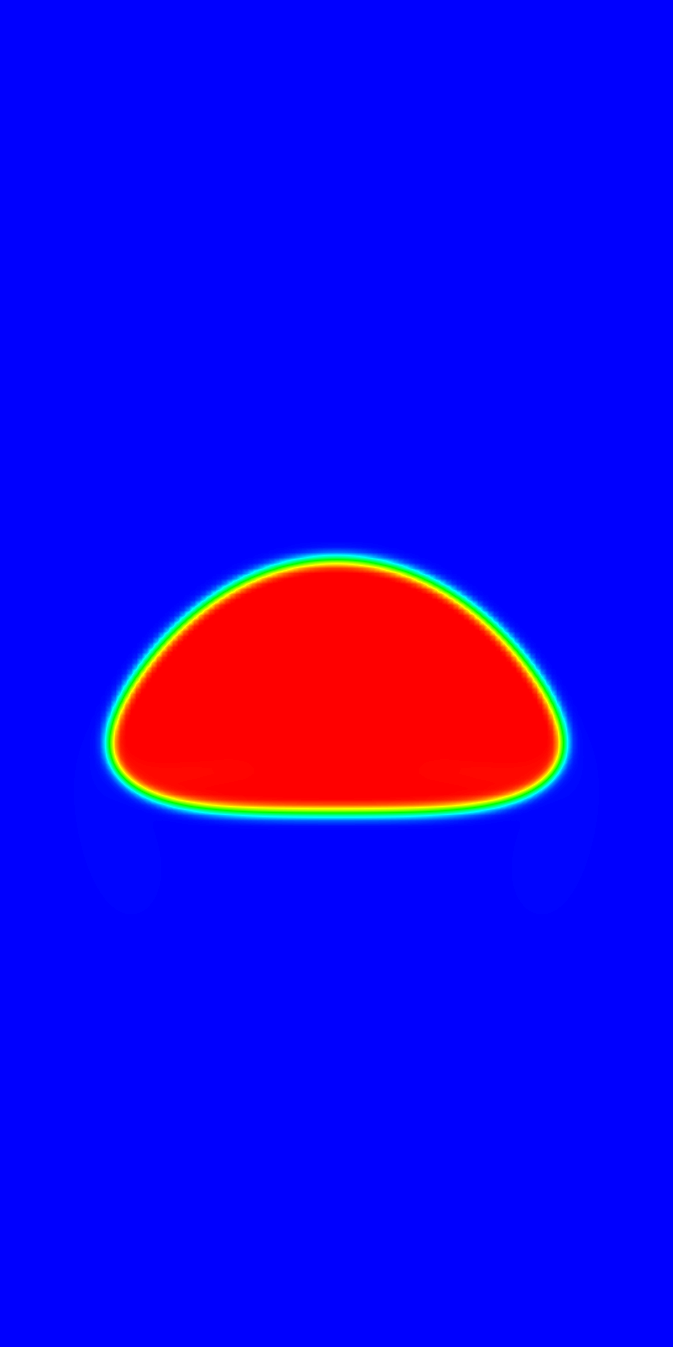}
%\caption{$t=2.4$}
\end{subfigure}
\begin{subfigure}{0.078\textwidth}
\centering
\includegraphics[width=1\textwidth]{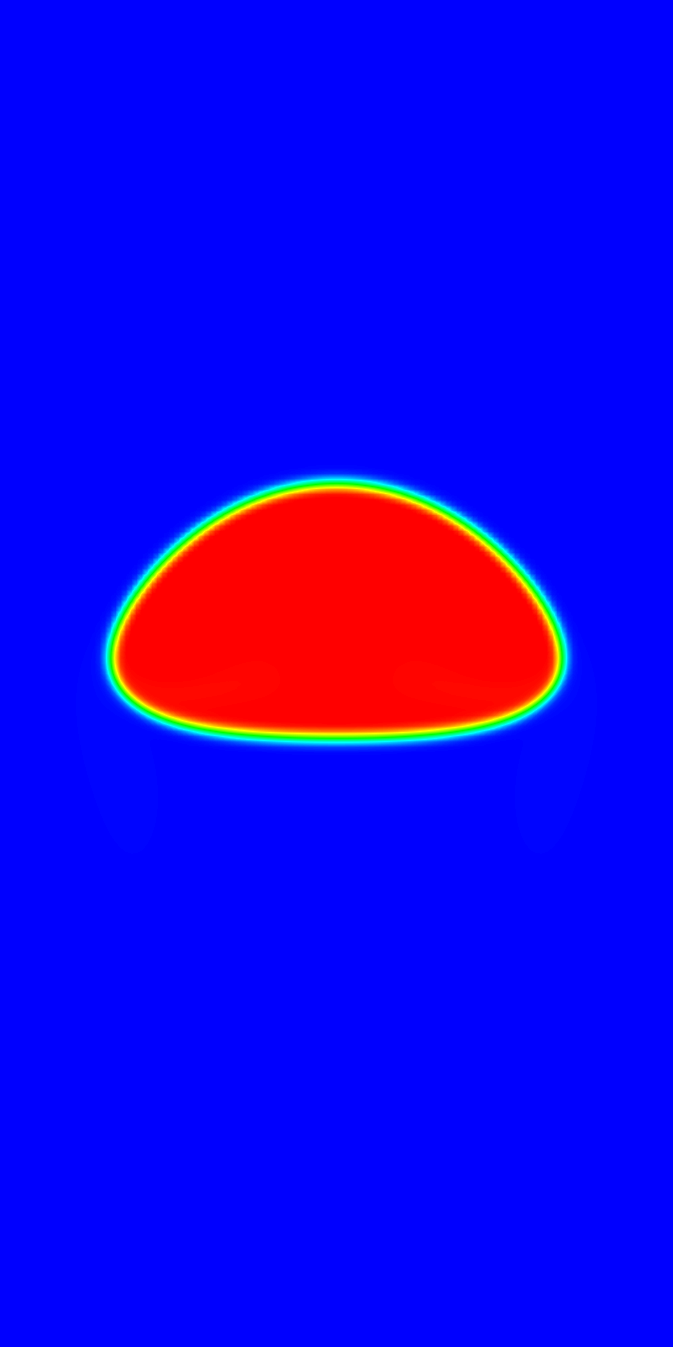}
%\caption{$t=3.0$}
\end{subfigure}
\begin{subfigure}{0.078\textwidth}
\centering
\includegraphics[width=1\textwidth]{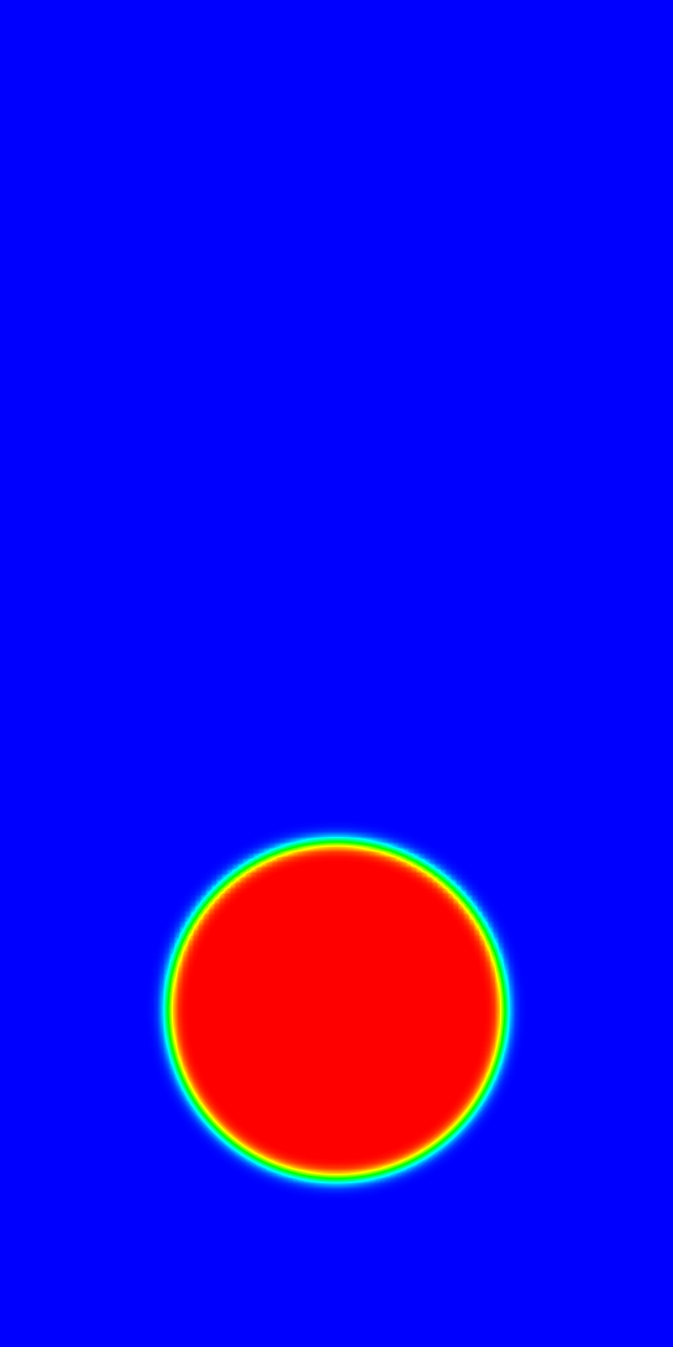}
%\caption{$t=0.0$}
\end{subfigure}
\begin{subfigure}{0.078\textwidth}
\centering
\includegraphics[width=1\textwidth]{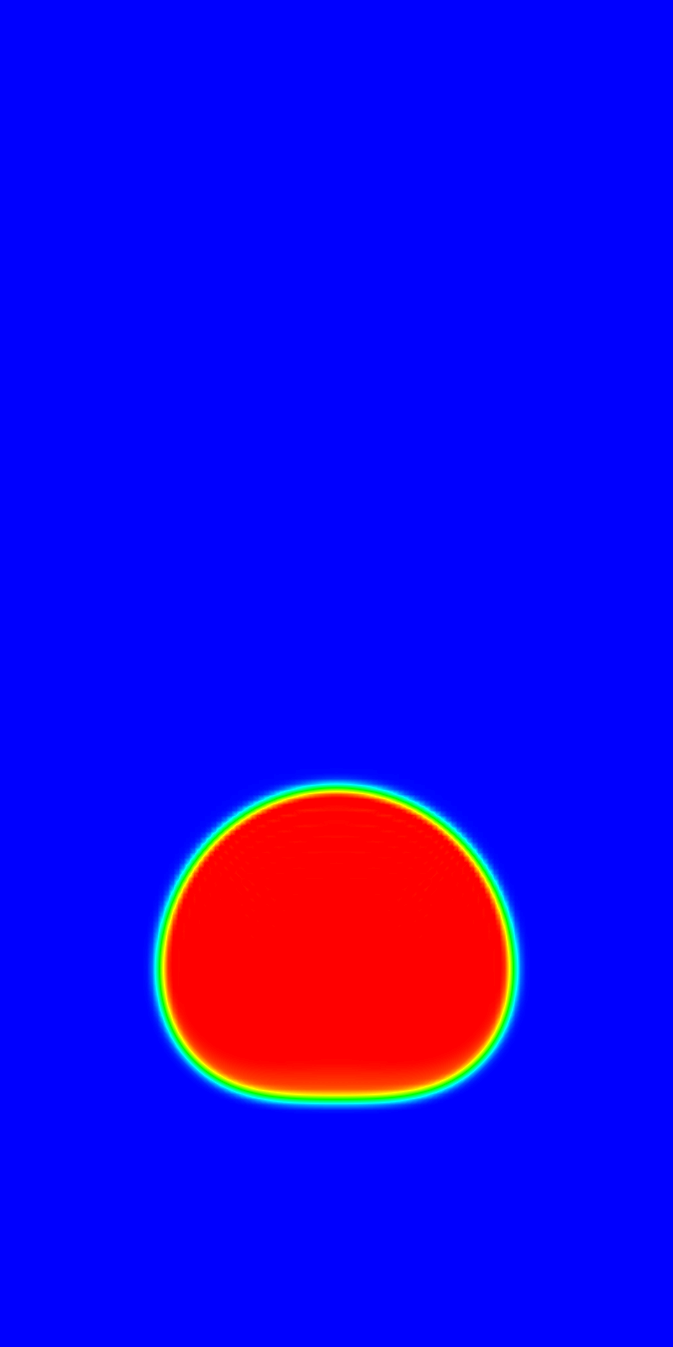}
%\caption{$t=0.6$}
\end{subfigure}
\begin{subfigure}{0.078\textwidth}
\centering
\includegraphics[width=1\textwidth]{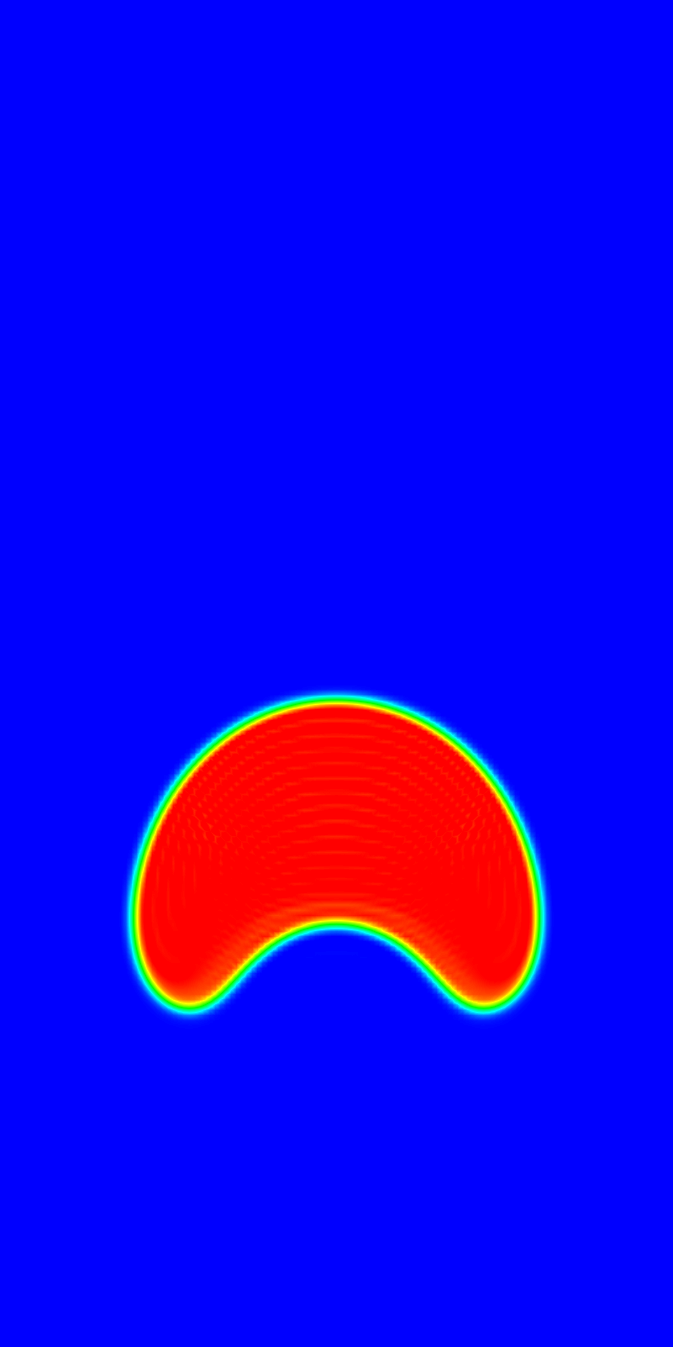}
%\caption{$t=1.2$}
\end{subfigure}
\begin{subfigure}{0.078\textwidth}
\centering
\includegraphics[width=1\textwidth]{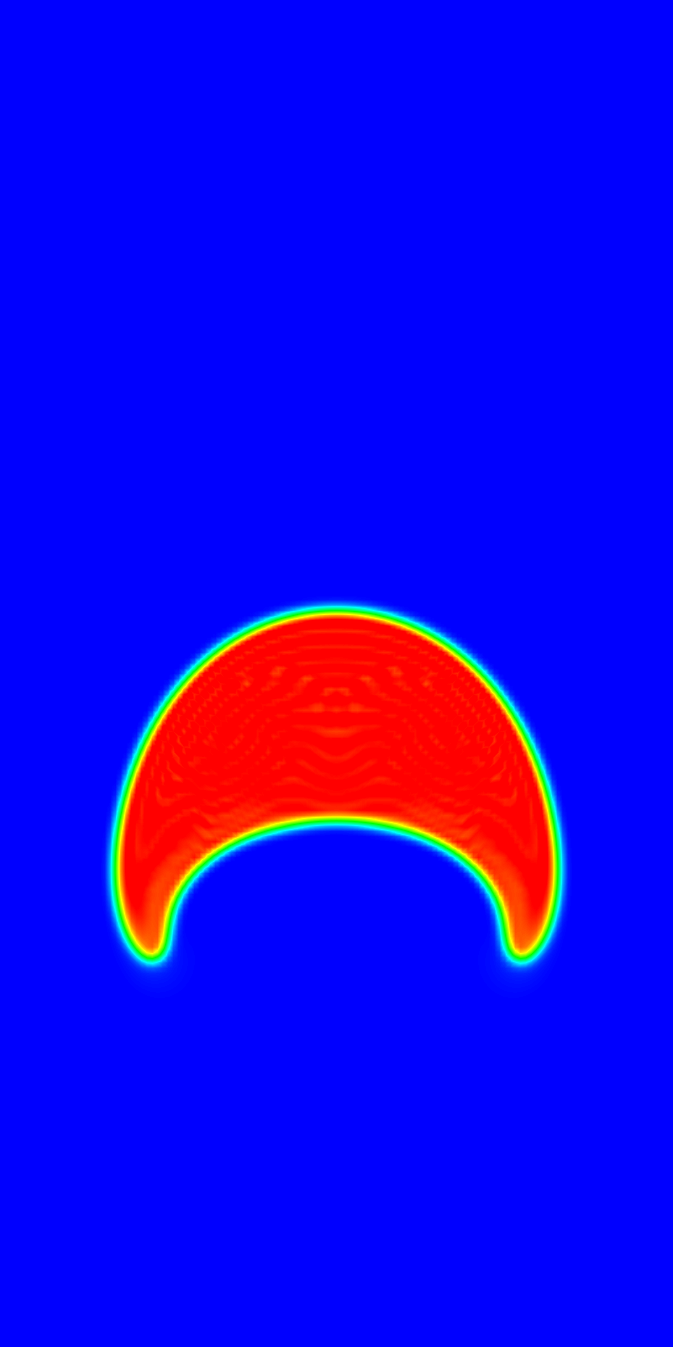}
%\caption{$t=1.8$}
\end{subfigure}
\begin{subfigure}{0.078\textwidth}
\centering
\includegraphics[width=1\textwidth]{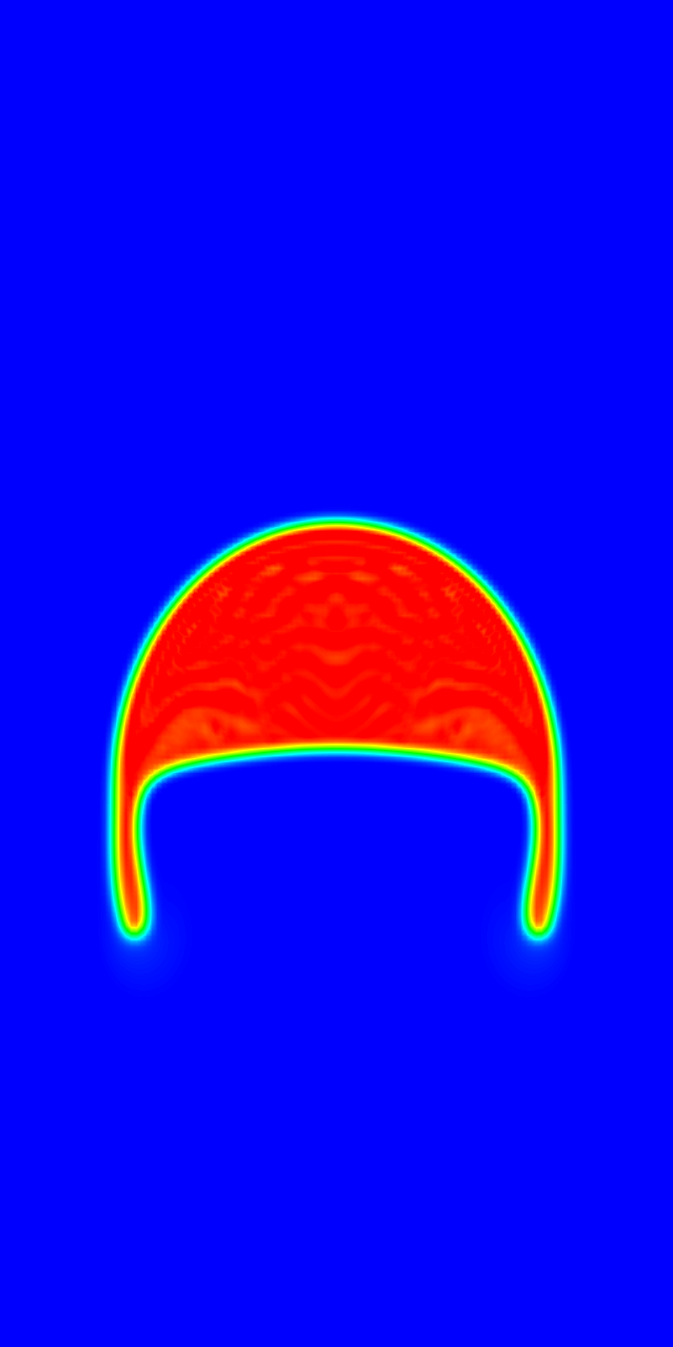}
%\caption{$t=2.4$}
\end{subfigure}
\begin{subfigure}{0.078\textwidth}
\centering
\includegraphics[width=1\textwidth]{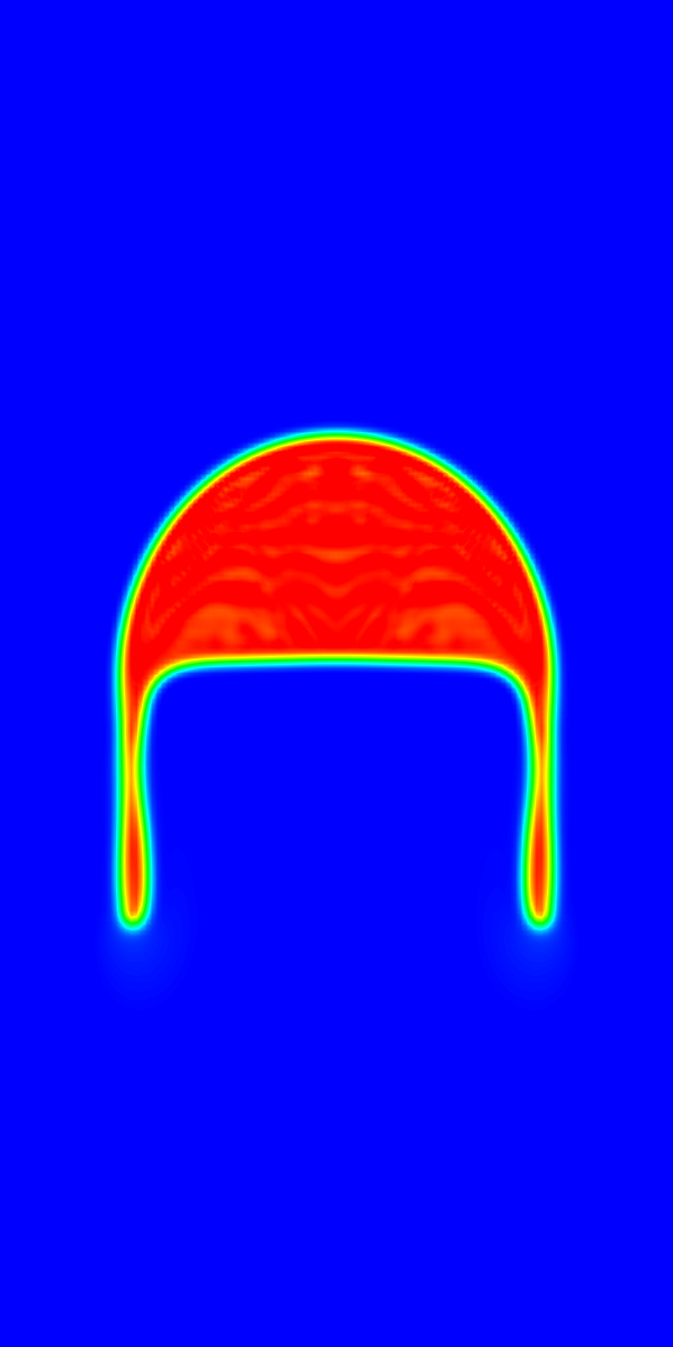}
%\caption{$t=3.0$}
\end{subfigure}
\begin{subfigure}{0.078\textwidth}
\centering
\includegraphics[width=1\textwidth]{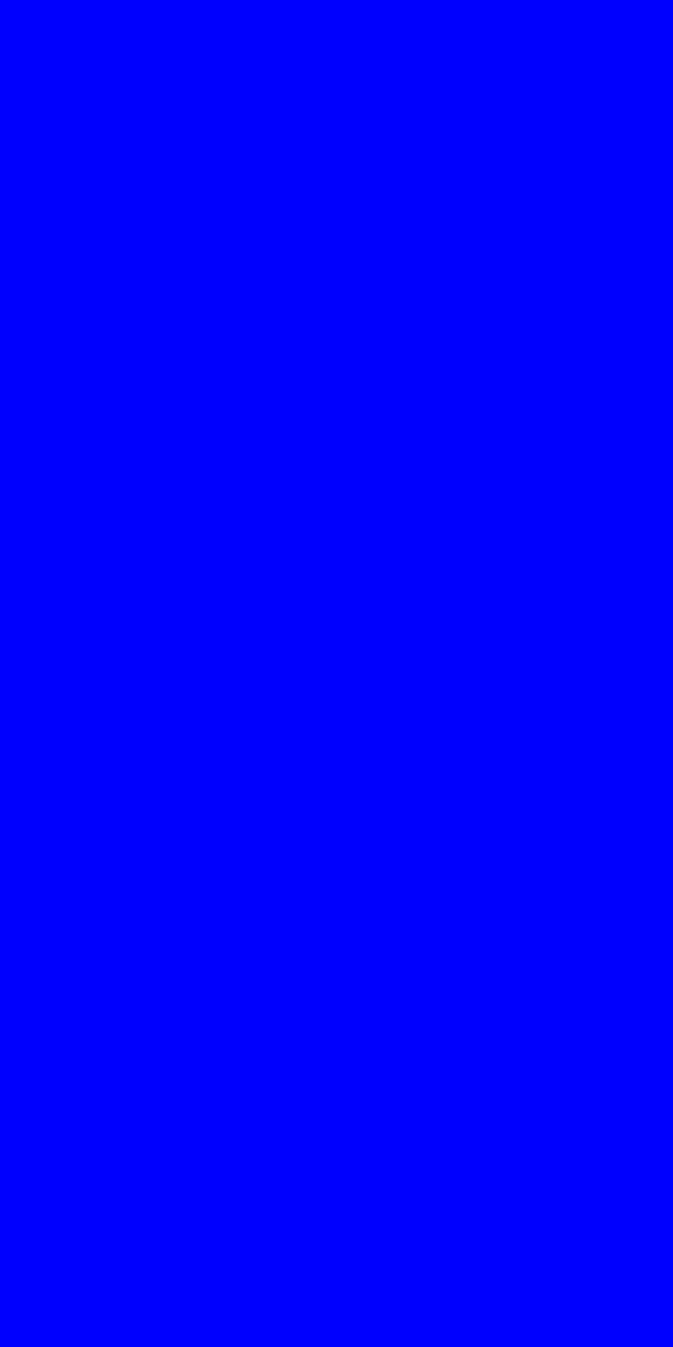}
%\caption{$t=0.0$}
\end{subfigure}
\begin{subfigure}{0.078\textwidth}
\centering
\includegraphics[width=1\textwidth]{figures/2D_N3_reduction/case1_absent/case1_absent_phi3.png}
%\caption{$t=0.6$}
\end{subfigure}
\begin{subfigure}{0.078\textwidth}
\centering
\includegraphics[width=1\textwidth]{figures/2D_N3_reduction/case1_absent/case1_absent_phi3.png}
%\caption{$t=1.2$}
\end{subfigure}
\begin{subfigure}{0.078\textwidth}
\centering
\includegraphics[width=1\textwidth]{figures/2D_N3_reduction/case1_absent/case1_absent_phi3.png}
%\caption{$t=1.8$}
\end{subfigure}
\begin{subfigure}{0.078\textwidth}
\centering
\includegraphics[width=1\textwidth]{figures/2D_N3_reduction/case1_absent/case1_absent_phi3.png}
%\caption{$t=2.4$}
\end{subfigure}
\begin{subfigure}{0.078\textwidth}
\centering
\includegraphics[width=1\textwidth]{figures/2D_N3_reduction/case1_absent/case1_absent_phi3.png}
%\caption{$t=3.0$}
\end{subfigure}
\begin{subfigure}{0.078\textwidth}
\centering
\includegraphics[width=1\textwidth]{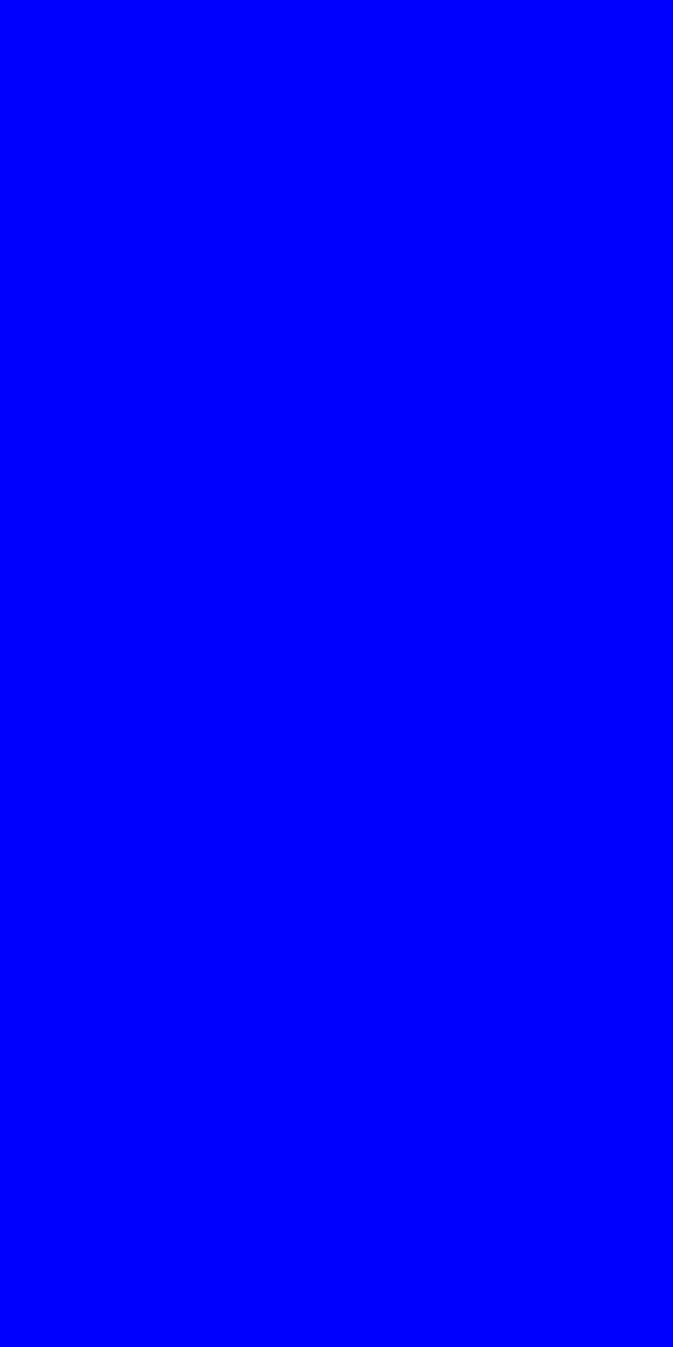}
%\caption{$t=0.0$}
\end{subfigure}
\begin{subfigure}{0.078\textwidth}
\centering
\includegraphics[width=1\textwidth]{figures/2D_N3_reduction/case2_absent/case2_absent_phi3.png}
%\caption{$t=0.6$}
\end{subfigure}
\begin{subfigure}{0.078\textwidth}
\centering
\includegraphics[width=1\textwidth]{figures/2D_N3_reduction/case2_absent/case2_absent_phi3.png}
%\caption{$t=1.2$}
\end{subfigure}
\begin{subfigure}{0.078\textwidth}
\centering
\includegraphics[width=1\textwidth]{figures/2D_N3_reduction/case2_absent/case2_absent_phi3.png}
%\caption{$t=1.8$}
\end{subfigure}
\begin{subfigure}{0.078\textwidth}
\centering
\includegraphics[width=1\textwidth]{figures/2D_N3_reduction/case2_absent/case2_absent_phi3.png}
%\caption{$t=2.4$}
\end{subfigure}
\begin{subfigure}{0.078\textwidth}
\centering
\includegraphics[width=1\textwidth]{figures/2D_N3_reduction/case2_absent/case2_absent_phi3.png}
%\caption{$t=3.0$}
\end{subfigure}
\caption{Mixture-aware simulations -- absent phase. Case 1 (left) and Case 2 (right). Visualization of the phase fields (top to below) $\phi_1, \phi_2, \phi_3$ at times $t=0.0, 0.6, 1.2, 1.8, 2.4, 3.0$ (left to right).}
\label{fig: reduction case 1 2 - phi3=0}
\end{figure}

\subsubsection*{Equal phases}
Here we test the merging of identical phases. We choose the initial phase fields as:
\begin{subequations}
  \begin{align}
    \phi^h_{1,0}(\mathbf{x}) =&~ \frac{1}{4}\left(1+\tanh{\dfrac{\sqrt{(x-0.5)^2+(y-0.5)^2}-R_0}{\varepsilon\sqrt{2}}}\right),\\
    \phi^h_{2,0}(\mathbf{x}) =&~ \frac{1}{2}\left(1-\tanh{\dfrac{\sqrt{(x-0.5)^2+(y-0.5)^2}-R_0}{\varepsilon\sqrt{2}}}\right),\\
    \phi^h_{3,0}(\mathbf{x}) =&~ \phi^h_{1,0}(\mathbf{x}).
\end{align}
\end{subequations}
In Figure \ref{fig: reduction case 1 2 - phi1=phi3} we visualize the phase fields for Case 1 and 2. We observe that the phase $\phi_2$ shows the exact same evolution as in the previous scenario, and phases $\phi_1$ and $\phi_3$ remain identical. These observations are in agreement with axiom A5 (Merging identical phases).

\begin{figure}
\captionsetup[subfigure]{justification=centering}
\begin{subfigure}{0.078\textwidth}
\centering
\includegraphics[width=1\textwidth]{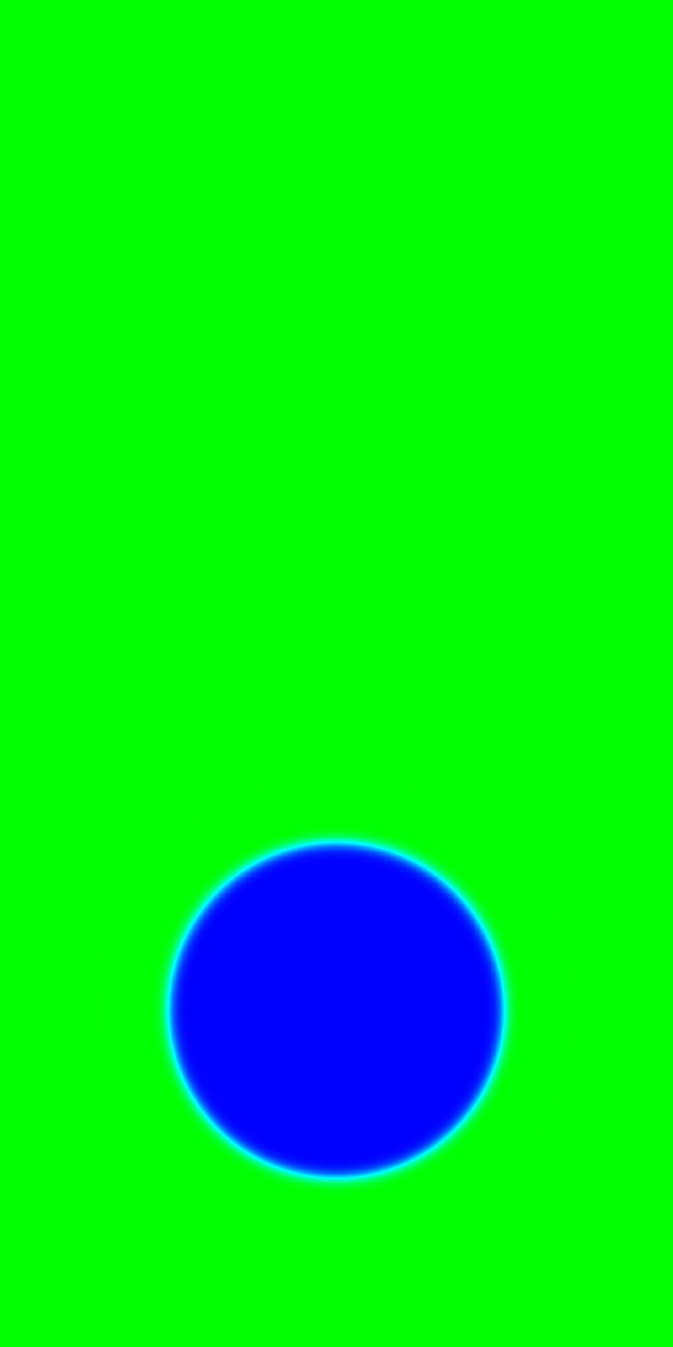}
%\caption{$t=0.0$}
\end{subfigure}
\begin{subfigure}{0.078\textwidth}
\centering
\includegraphics[width=1\textwidth]{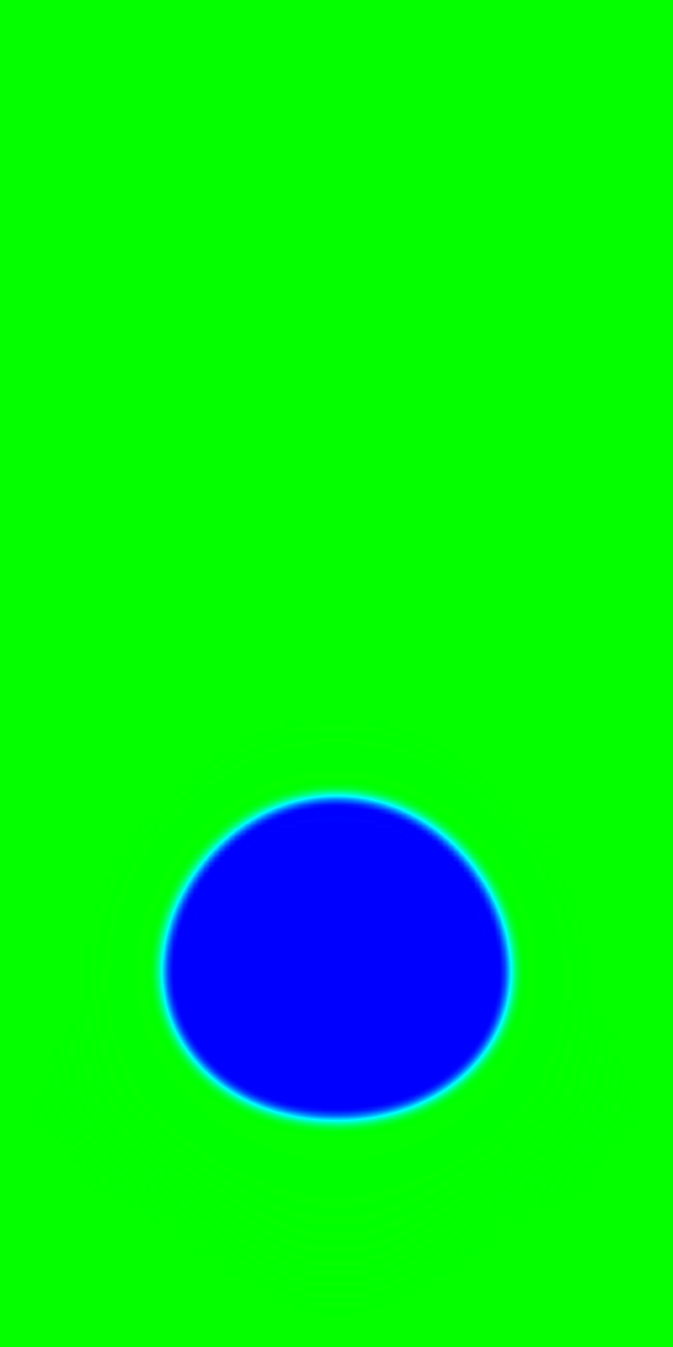}
%\caption{$t=0.6$}
\end{subfigure}
\begin{subfigure}{0.078\textwidth}
\centering
\includegraphics[width=1\textwidth]{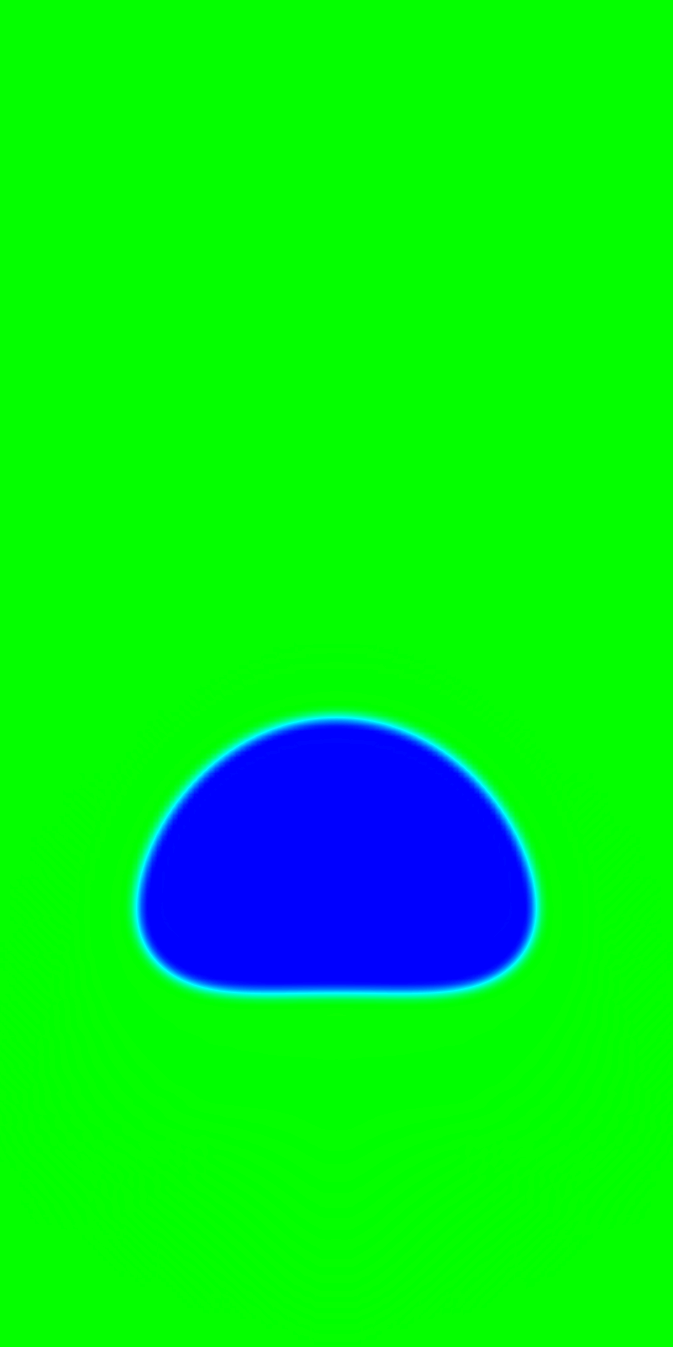}
%\caption{$t=1.2$}
\end{subfigure}
\begin{subfigure}{0.078\textwidth}
\centering
\includegraphics[width=1\textwidth]{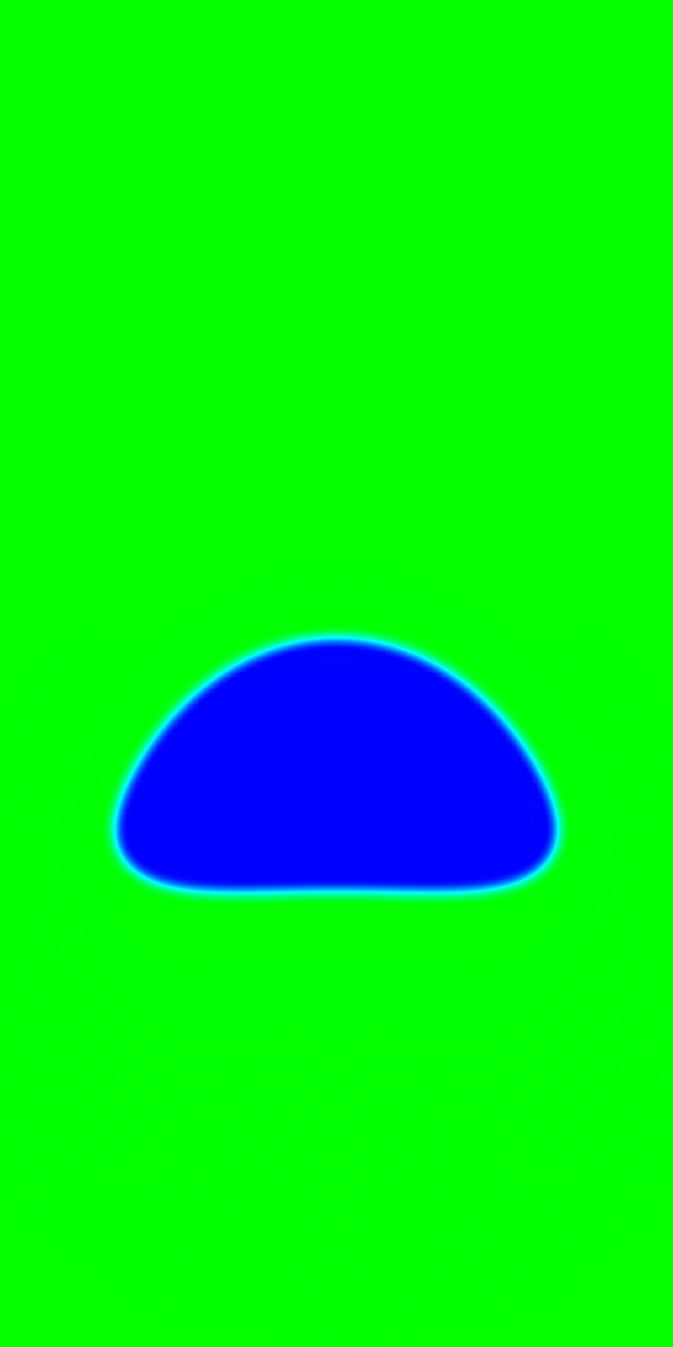}
%\caption{$t=1.8$}
\end{subfigure}
\begin{subfigure}{0.078\textwidth}
\centering
\includegraphics[width=1\textwidth]{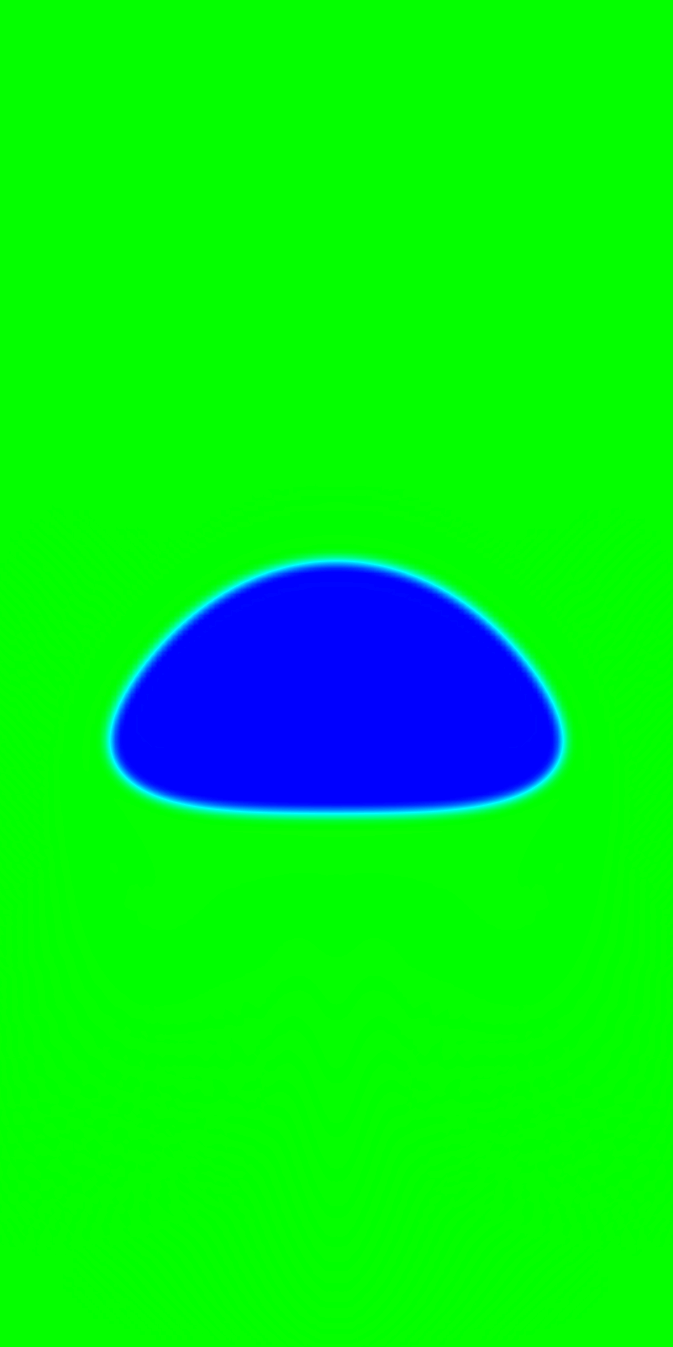}
%\caption{$t=2.4$}
\end{subfigure}
\begin{subfigure}{0.078\textwidth}
\centering
\includegraphics[width=1\textwidth]{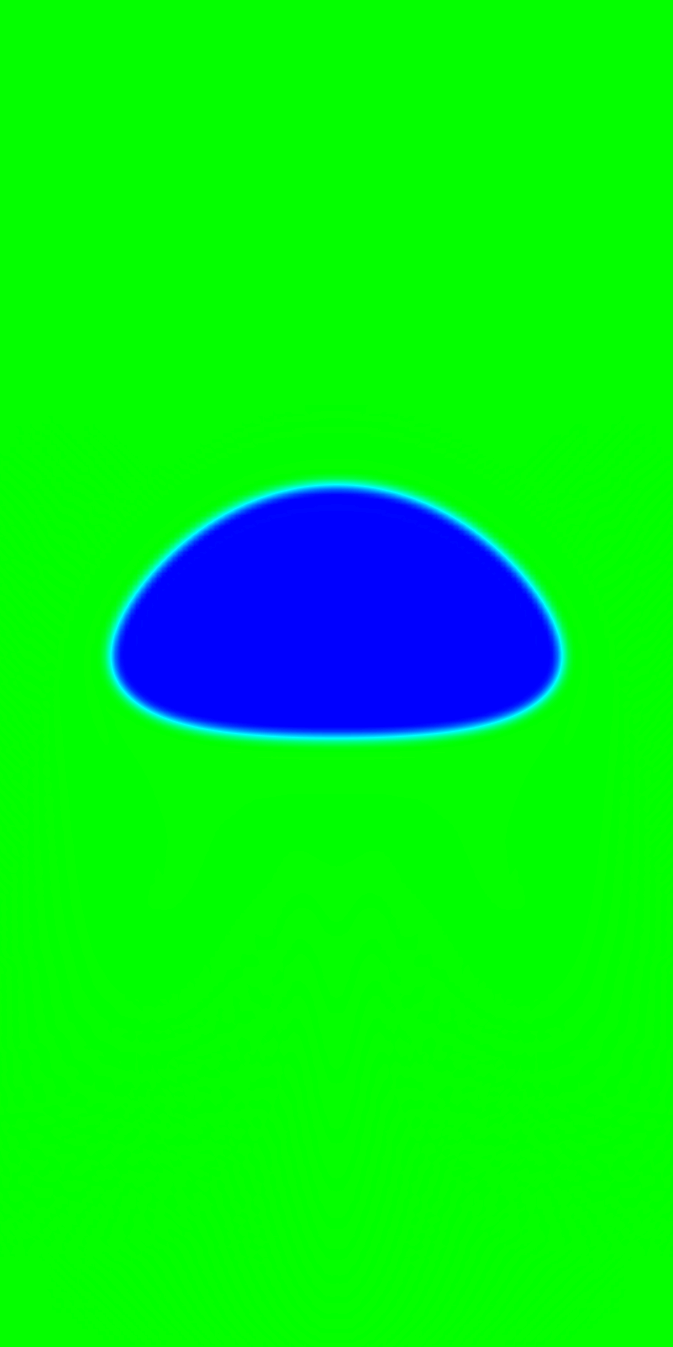}
%\caption{$t=3.0$}
\end{subfigure}
\begin{subfigure}{0.078\textwidth}
\centering
\includegraphics[width=1\textwidth]{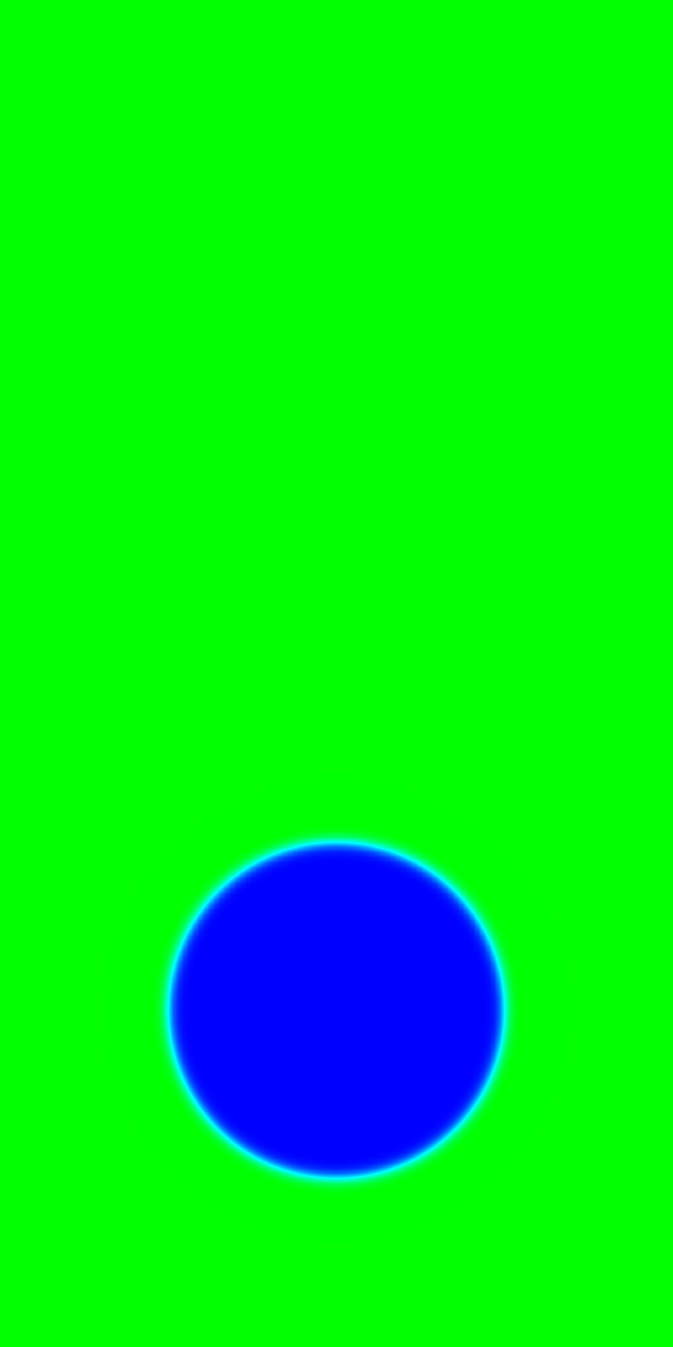}
%\caption{$t=0.0$}
\end{subfigure}
\begin{subfigure}{0.078\textwidth}
\centering
\includegraphics[width=1\textwidth]{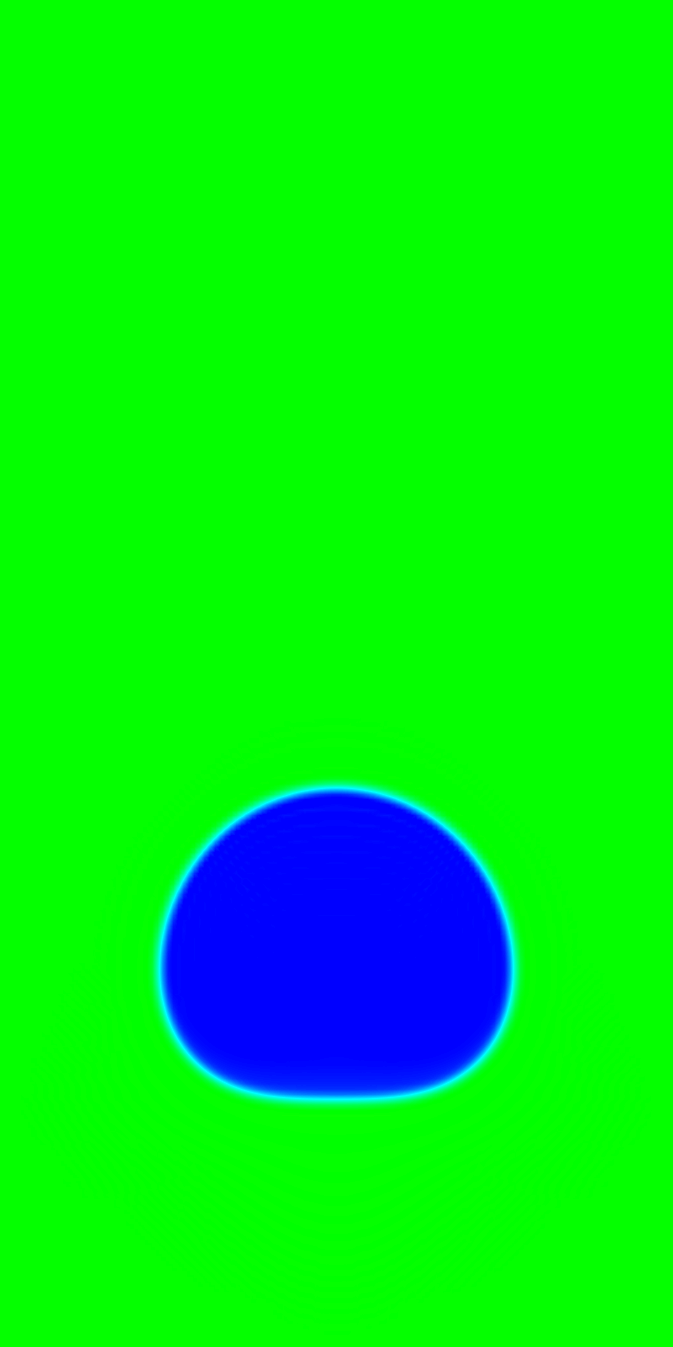}
%\caption{$t=0.6$}
\end{subfigure}
\begin{subfigure}{0.078\textwidth}
\centering
\includegraphics[width=1\textwidth]{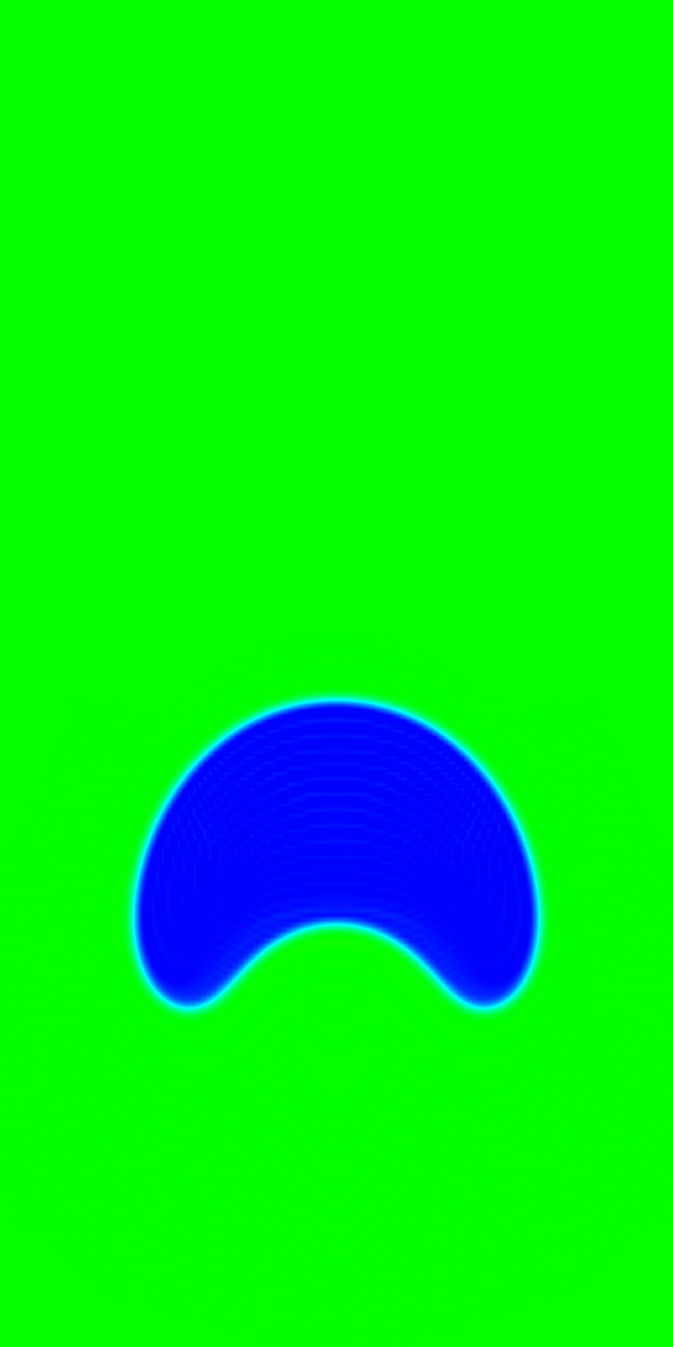}
%\caption{$t=1.2$}
\end{subfigure}
\begin{subfigure}{0.078\textwidth}
\centering
\includegraphics[width=1\textwidth]{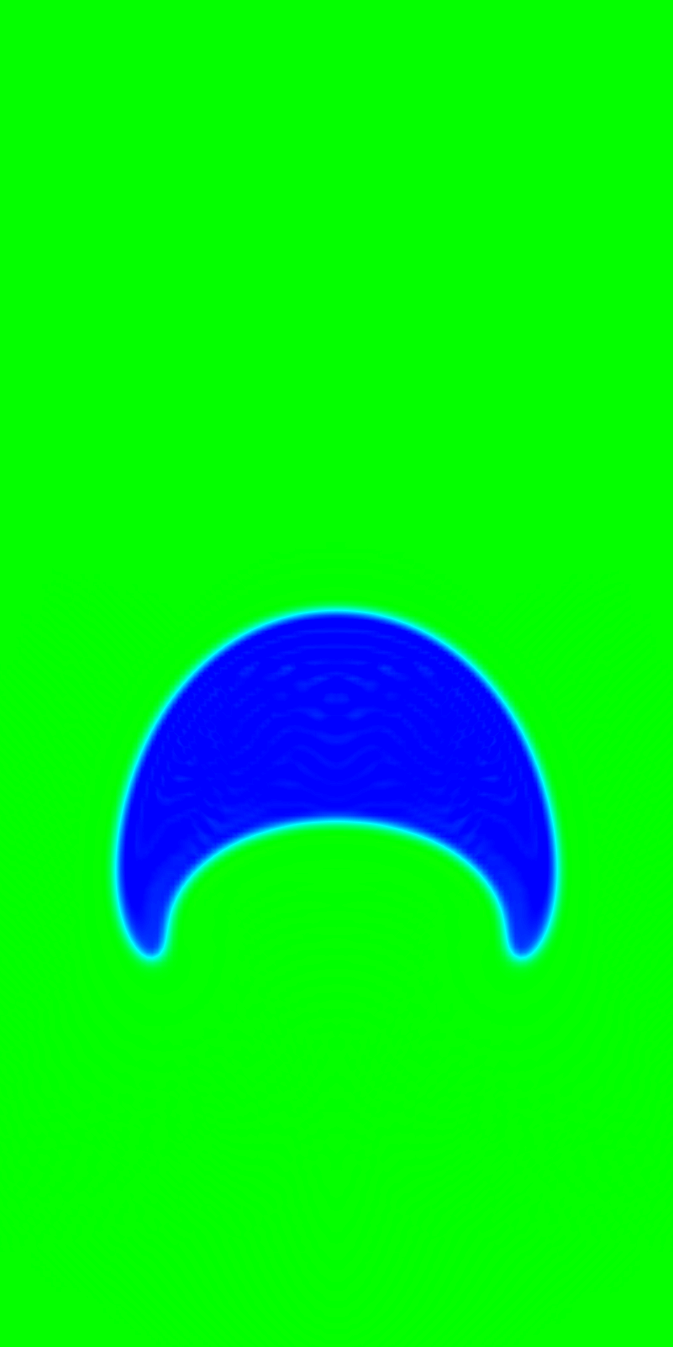}
%\caption{$t=1.8$}
\end{subfigure}
\begin{subfigure}{0.078\textwidth}
\centering
\includegraphics[width=1\textwidth]{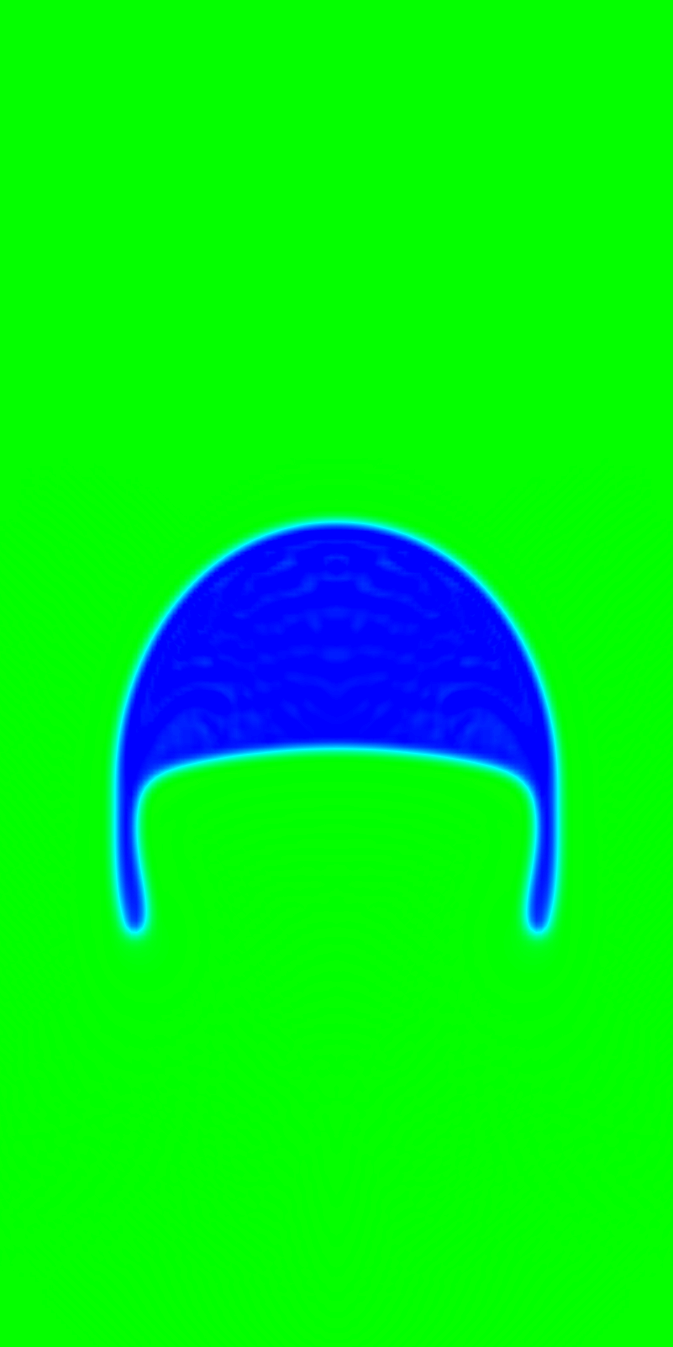}
%\caption{$t=2.4$}
\end{subfigure}
\begin{subfigure}{0.078\textwidth}
\centering
\includegraphics[width=1\textwidth]{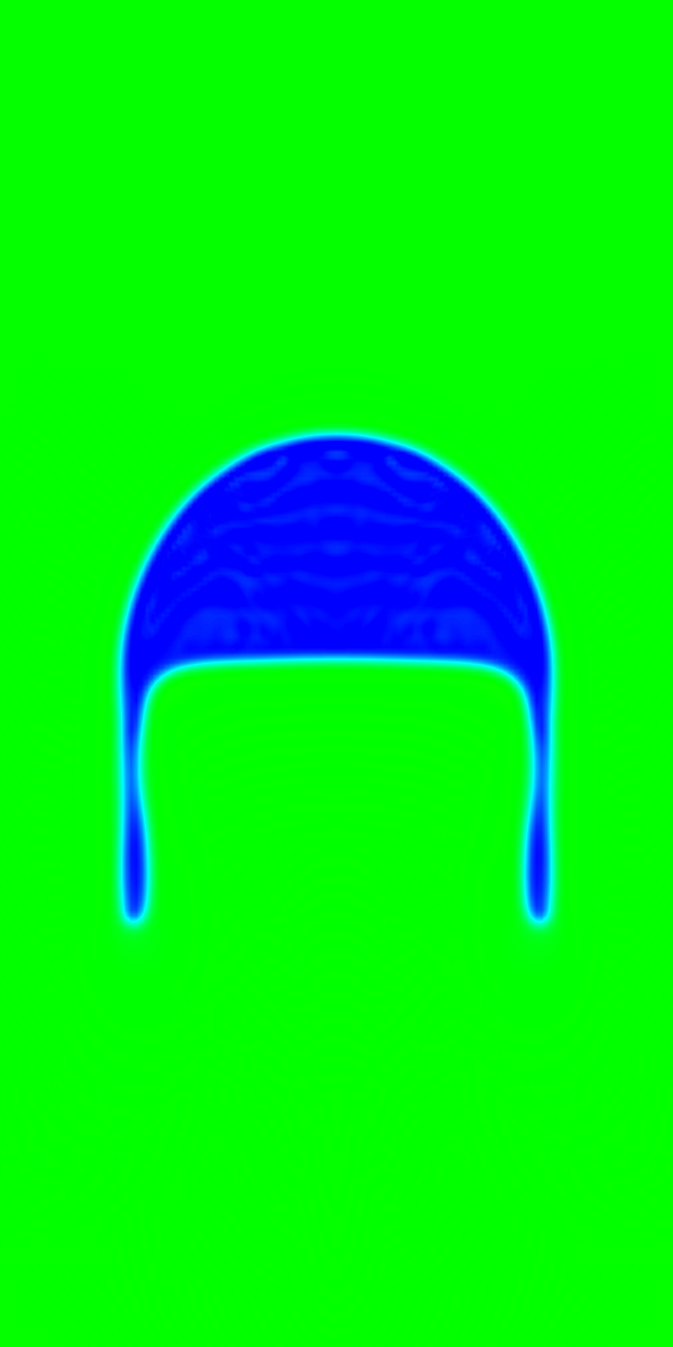}
%\caption{$t=3.0$}
\end{subfigure}
\begin{subfigure}{0.078\textwidth}
\centering
\includegraphics[width=1\textwidth]{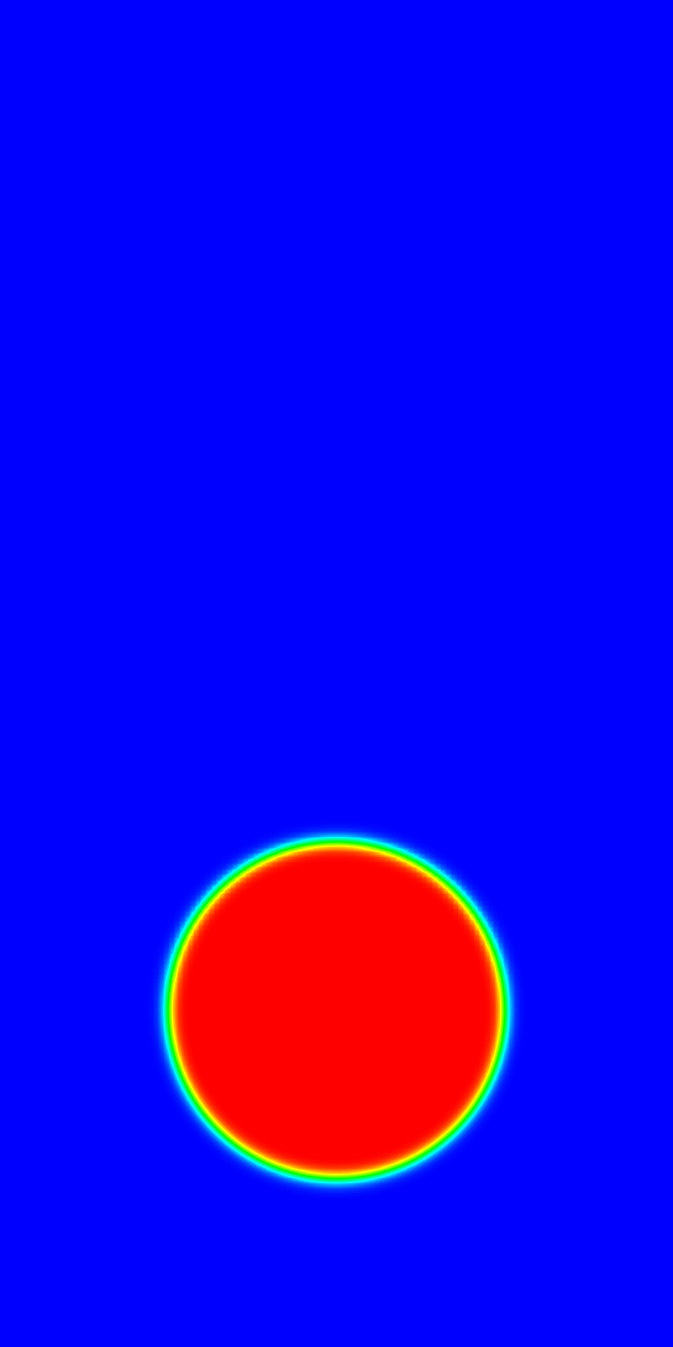}
%\caption{$t=0.0$}
\end{subfigure}
\begin{subfigure}{0.078\textwidth}
\centering
\includegraphics[width=1\textwidth]{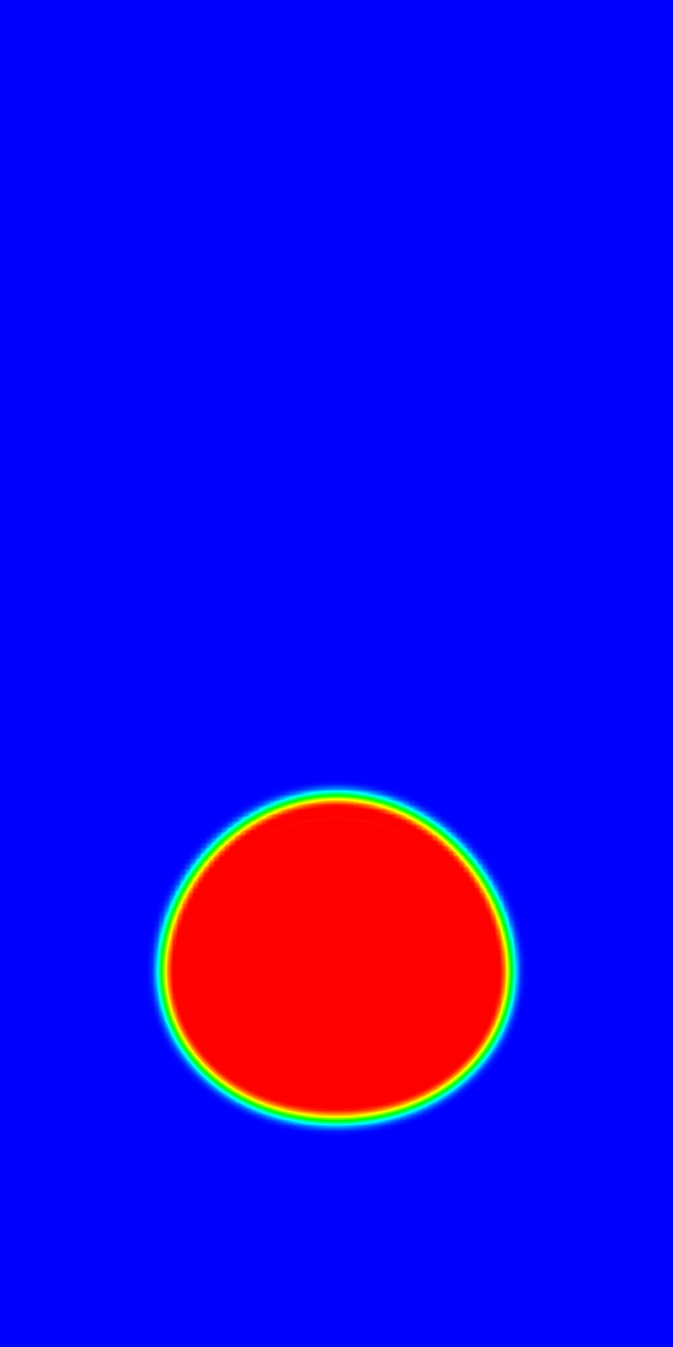}
%\caption{$t=0.6$}
\end{subfigure}
\begin{subfigure}{0.078\textwidth}
\centering
\includegraphics[width=1\textwidth]{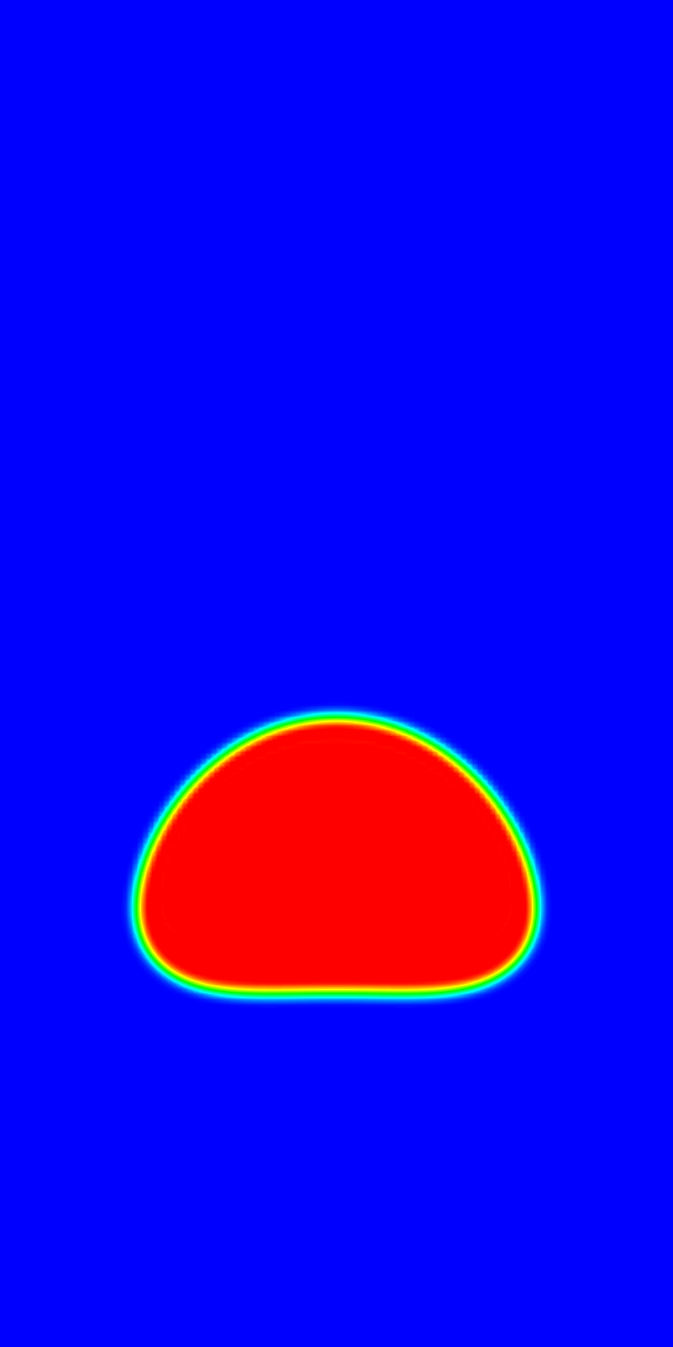}
%\caption{$t=1.2$}
\end{subfigure}
\begin{subfigure}{0.078\textwidth}
\centering
\includegraphics[width=1\textwidth]{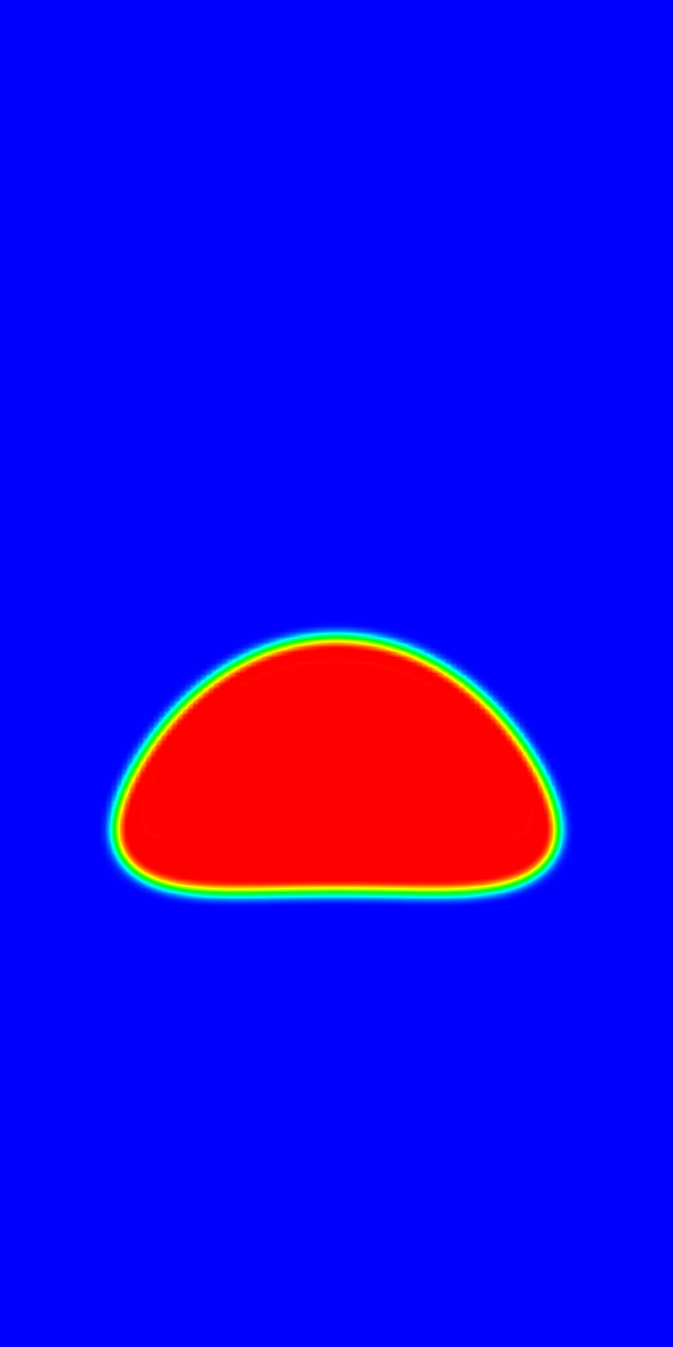}
%\caption{$t=1.8$}
\end{subfigure}
\begin{subfigure}{0.078\textwidth}
\centering
\includegraphics[width=1\textwidth]{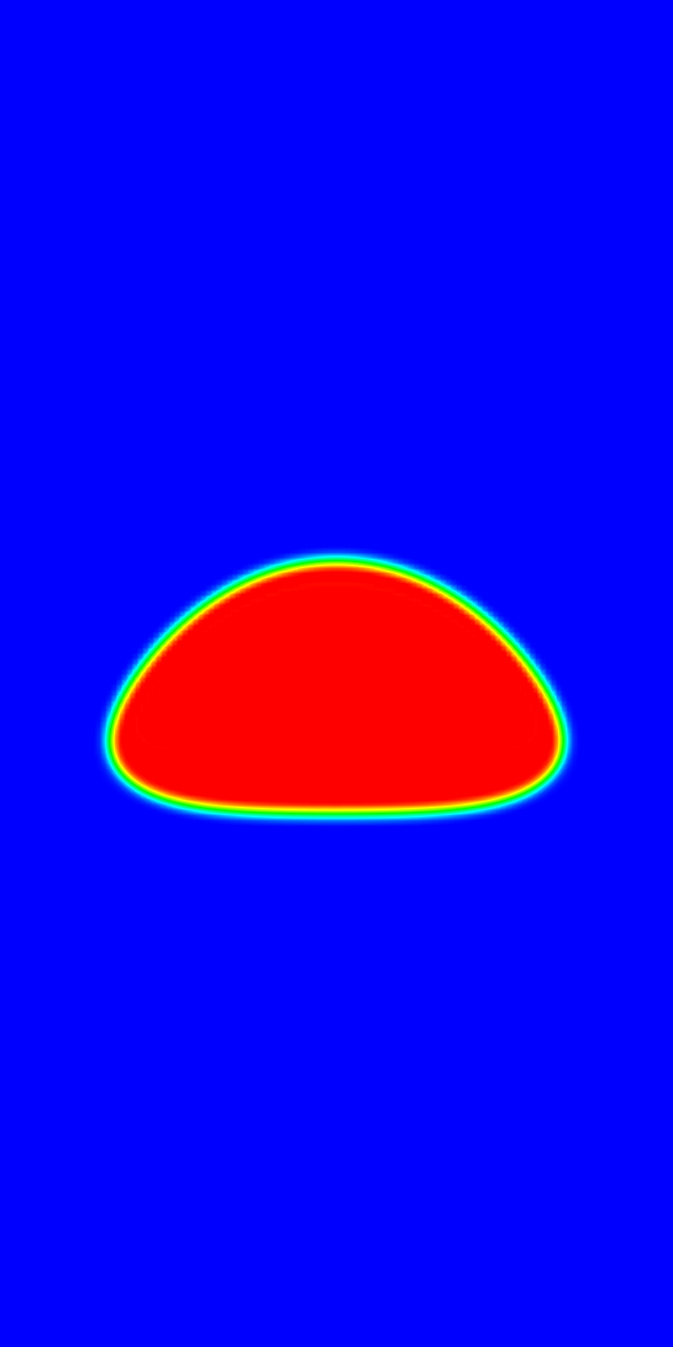}
%\caption{$t=2.4$}
\end{subfigure}
\begin{subfigure}{0.078\textwidth}
\centering
\includegraphics[width=1\textwidth]{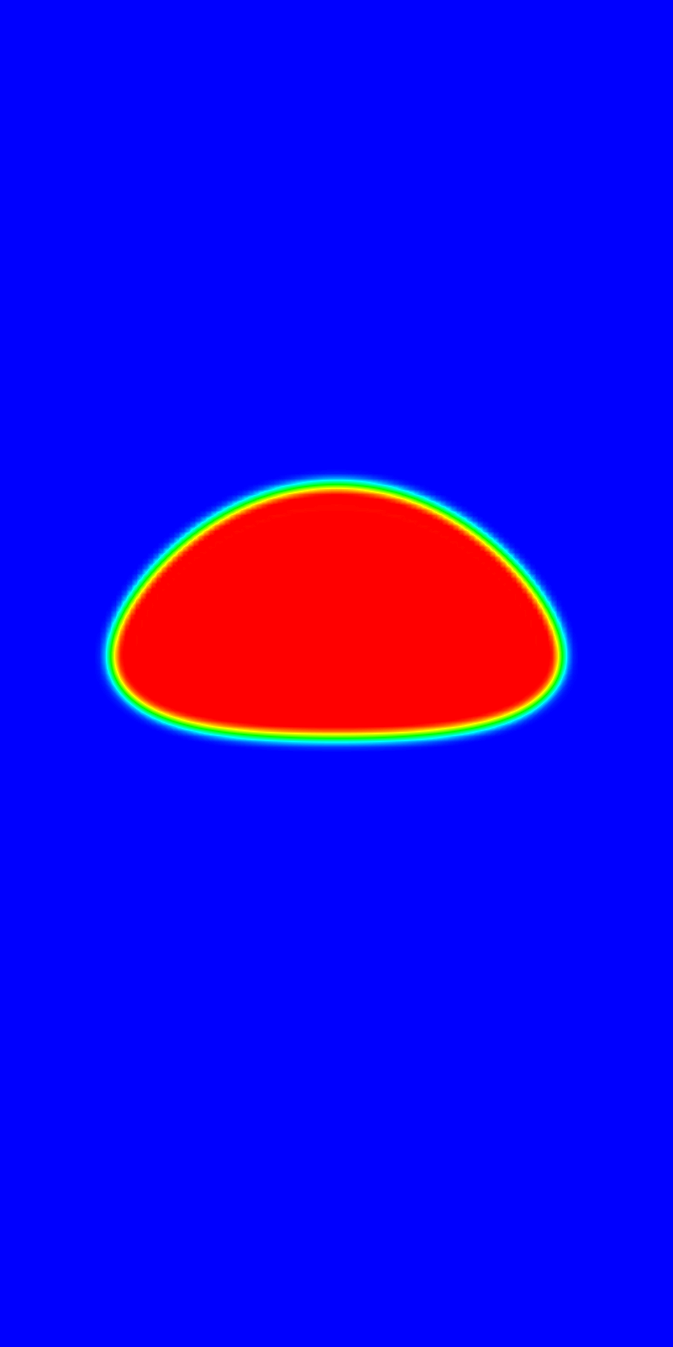}
%\caption{$t=3.0$}
\end{subfigure}
\begin{subfigure}{0.078\textwidth}
\centering
\includegraphics[width=1\textwidth]{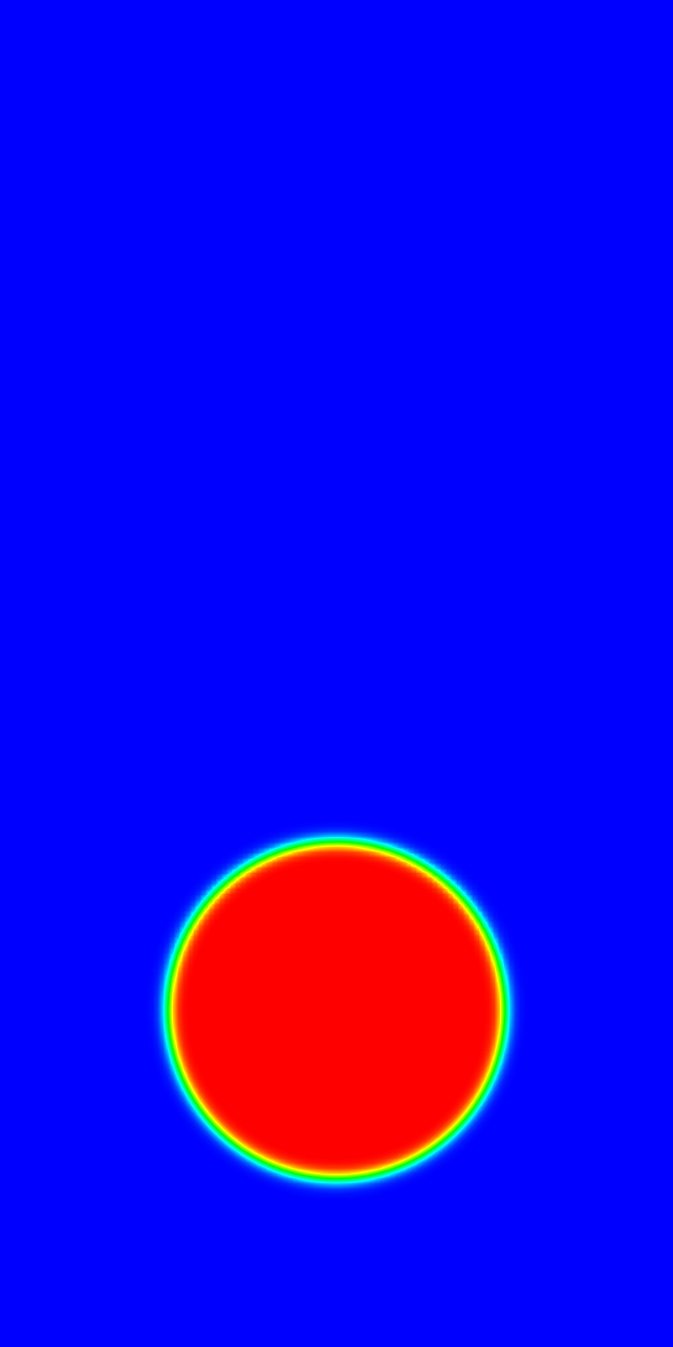}
%\caption{$t=0.0$}
\end{subfigure}
\begin{subfigure}{0.078\textwidth}
\centering
\includegraphics[width=1\textwidth]{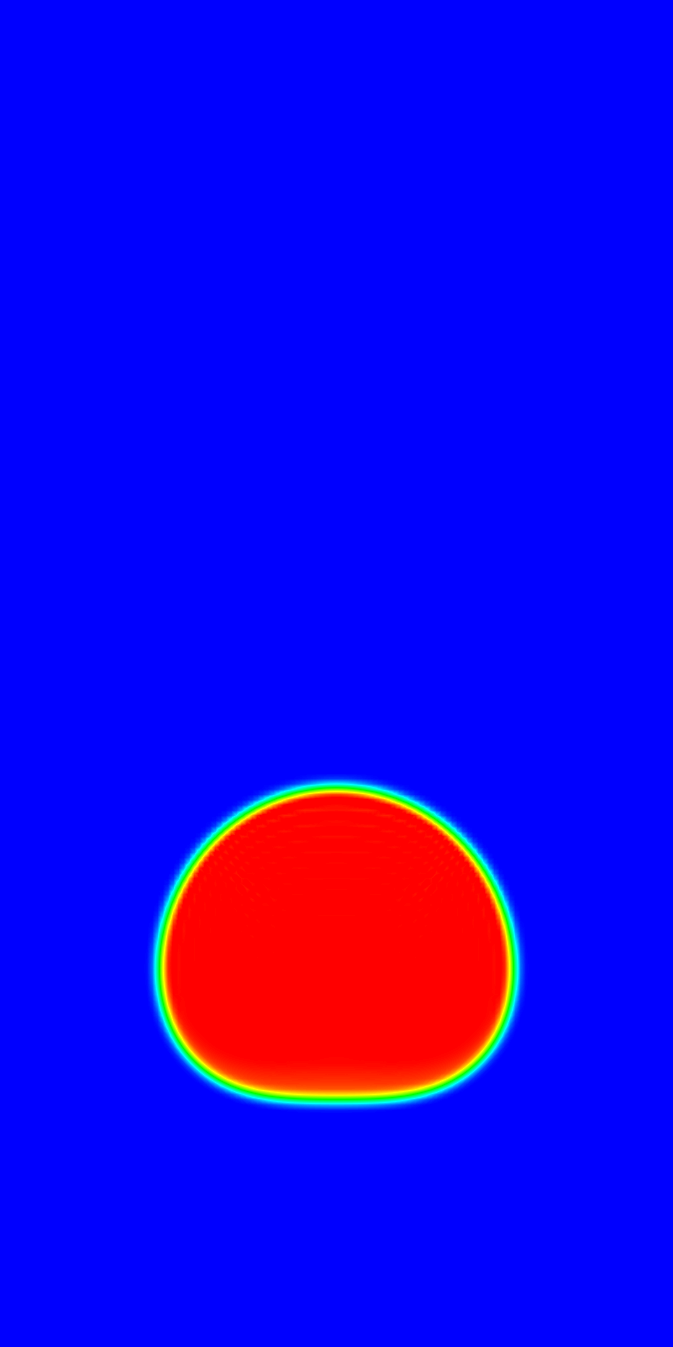}
%\caption{$t=0.6$}
\end{subfigure}
\begin{subfigure}{0.078\textwidth}
\centering
\includegraphics[width=1\textwidth]{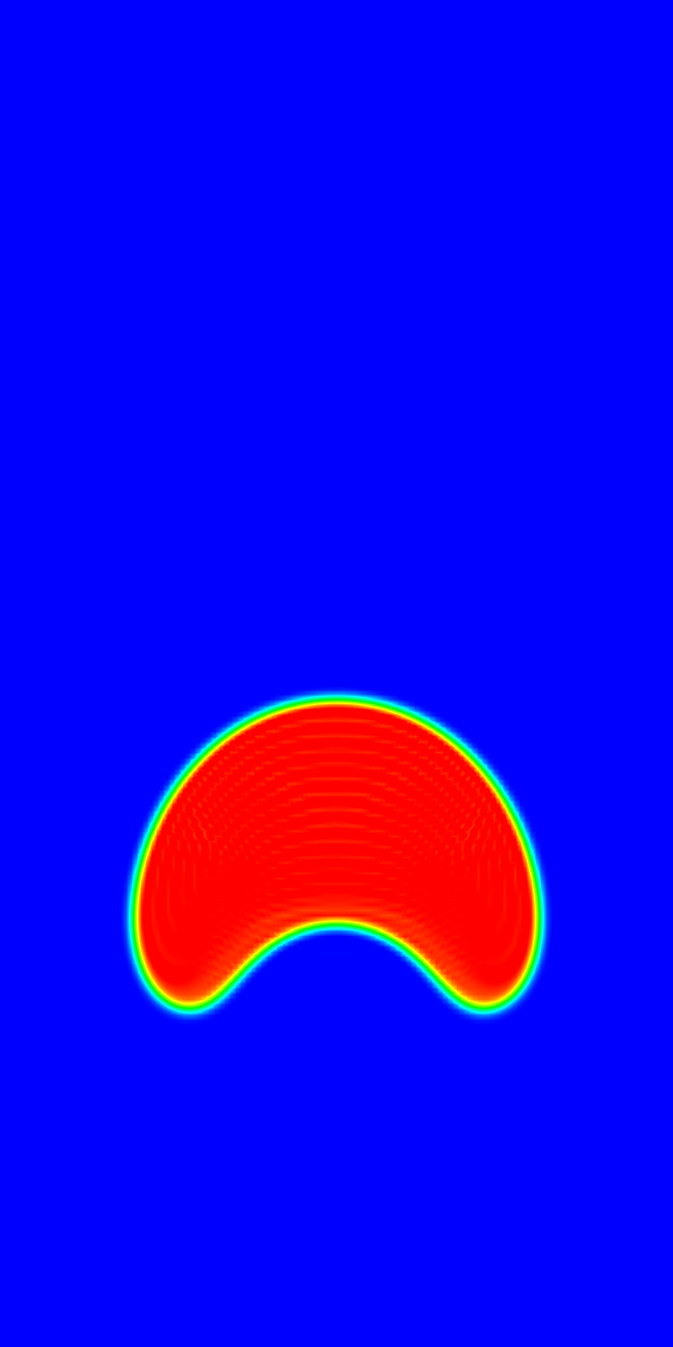}
%\caption{$t=1.2$}
\end{subfigure}
\begin{subfigure}{0.078\textwidth}
\centering
\includegraphics[width=1\textwidth]{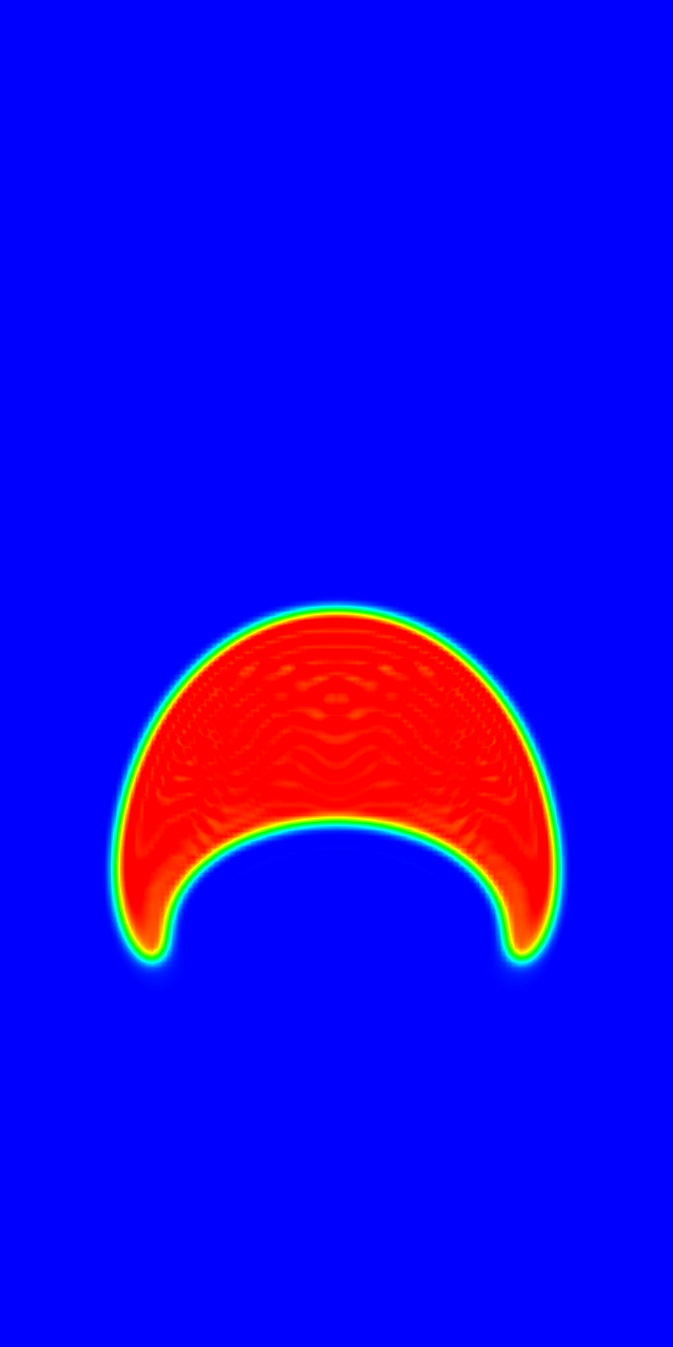}
%\caption{$t=1.8$}
\end{subfigure}
\begin{subfigure}{0.078\textwidth}
\centering
\includegraphics[width=1\textwidth]{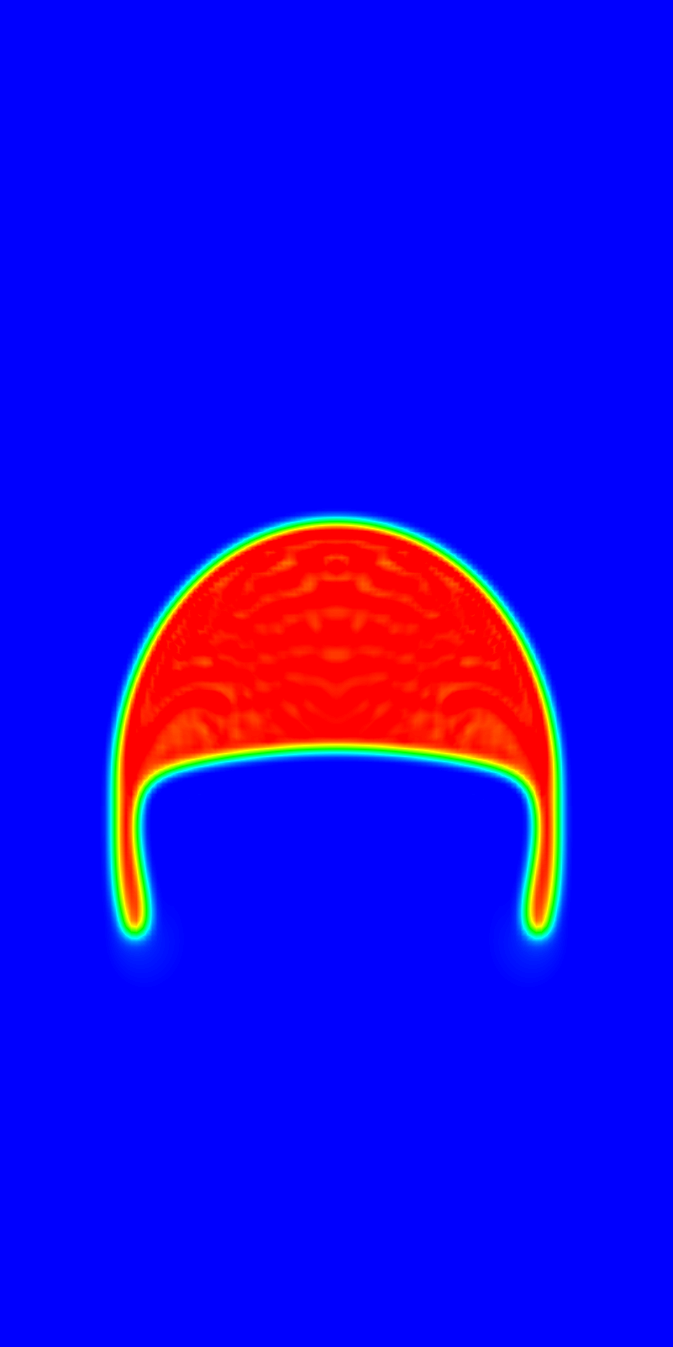}
%\caption{$t=2.4$}
\end{subfigure}
\begin{subfigure}{0.078\textwidth}
\centering
\includegraphics[width=1\textwidth]{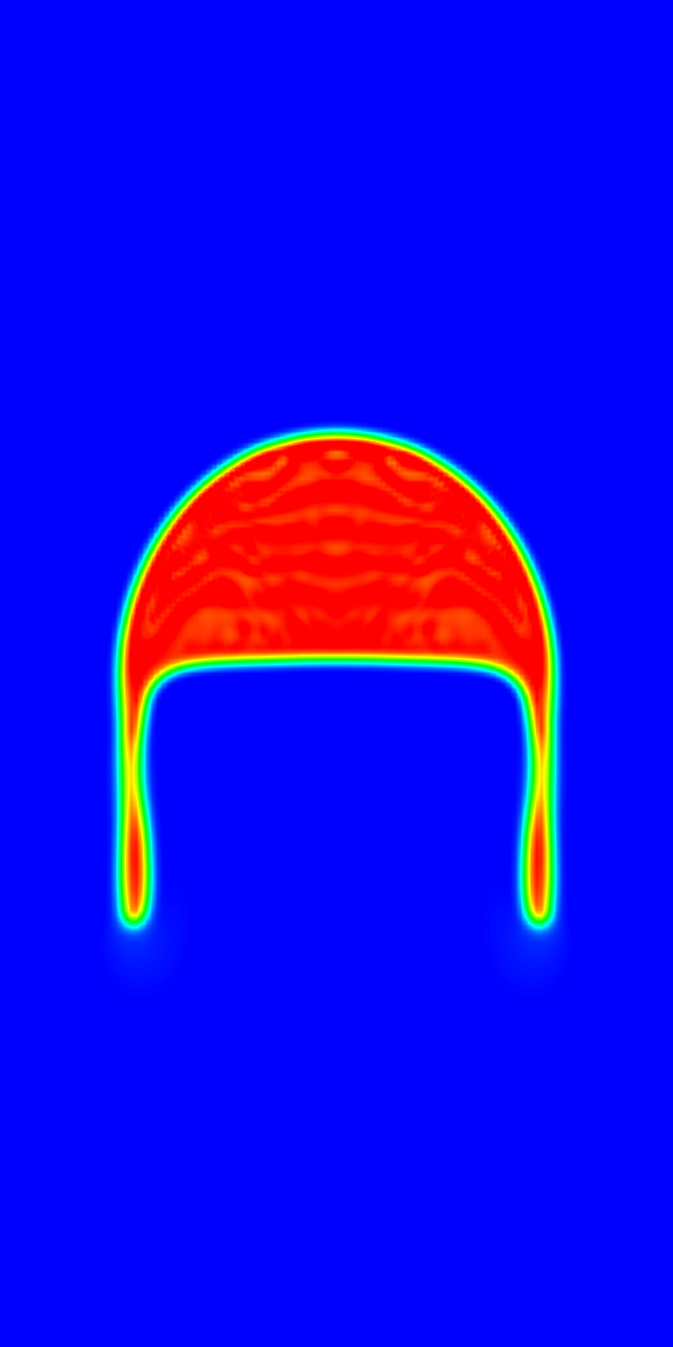}
%\caption{$t=3.0$}
\end{subfigure}
\begin{subfigure}{0.078\textwidth}
\centering
\includegraphics[width=1\textwidth]{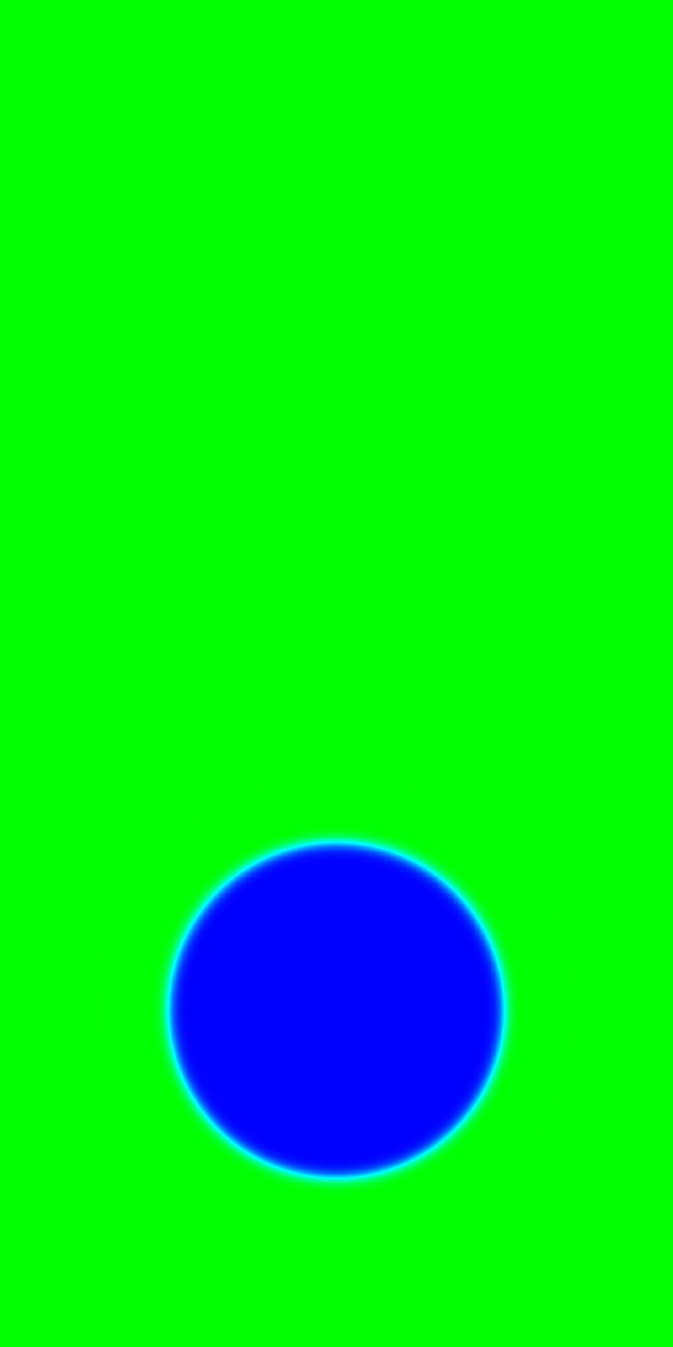}
%\caption{$t=0.0$}
\end{subfigure}
\begin{subfigure}{0.078\textwidth}
\centering
\includegraphics[width=1\textwidth]{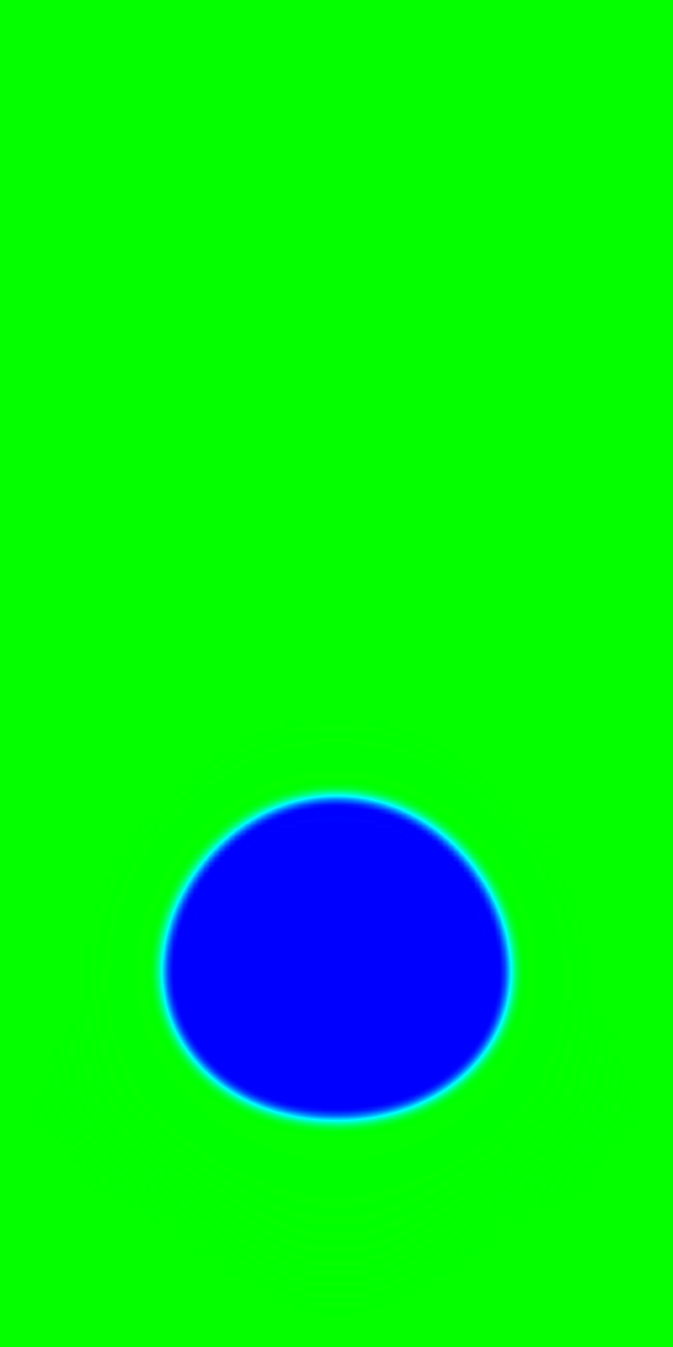}
%\caption{$t=0.6$}
\end{subfigure}
\begin{subfigure}{0.078\textwidth}
\centering
\includegraphics[width=1\textwidth]{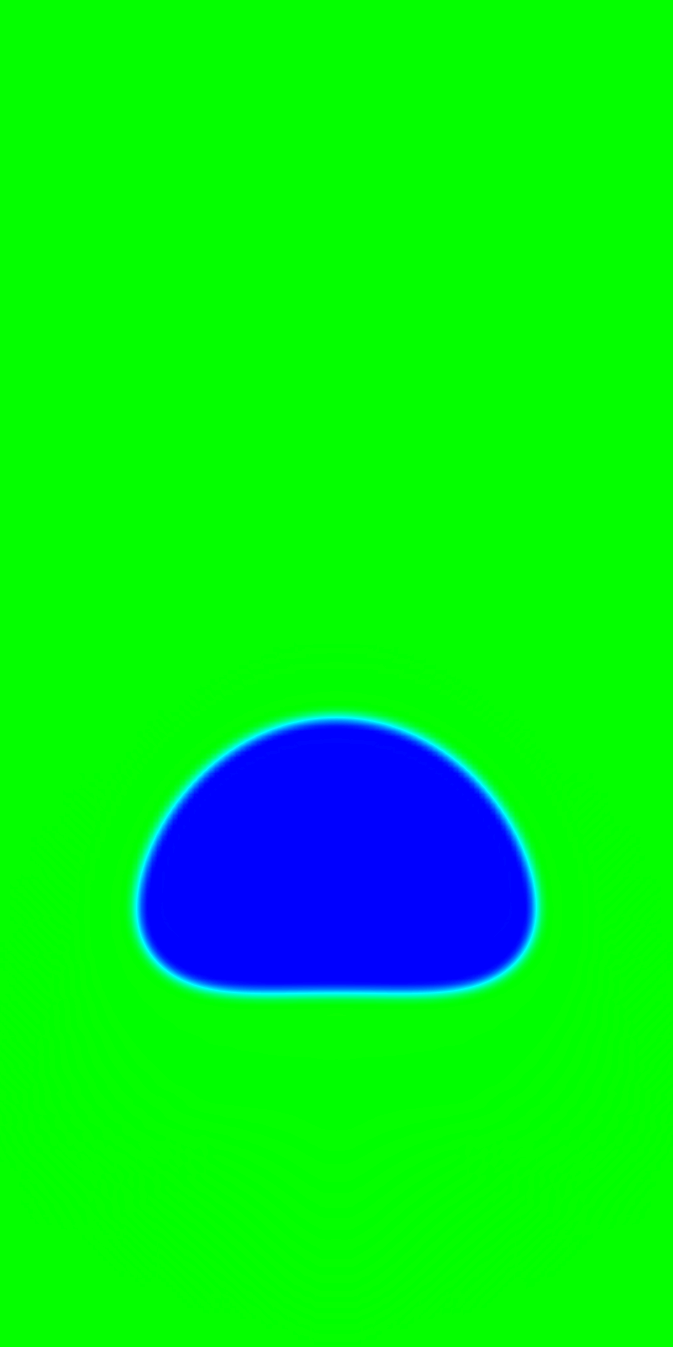}
%\caption{$t=1.2$}
\end{subfigure}
\begin{subfigure}{0.078\textwidth}
\centering
\includegraphics[width=1\textwidth]{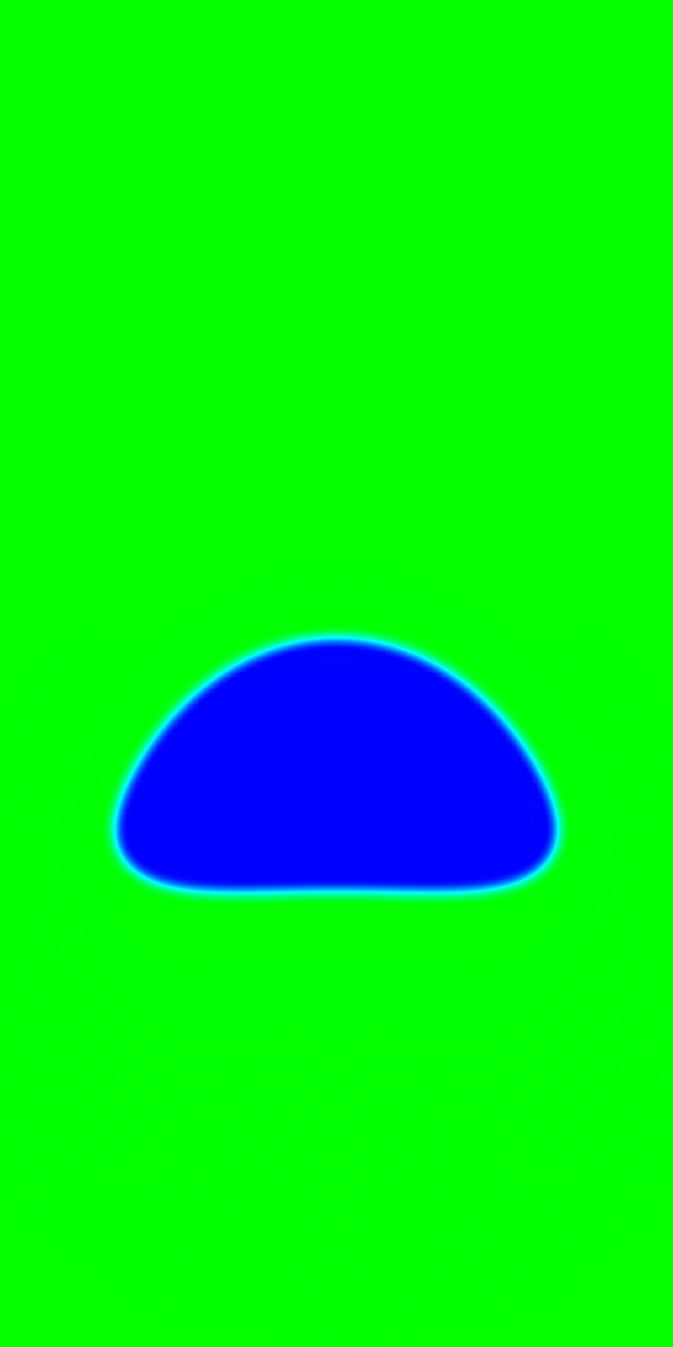}
%\caption{$t=1.8$}
\end{subfigure}
\begin{subfigure}{0.078\textwidth}
\centering
\includegraphics[width=1\textwidth]{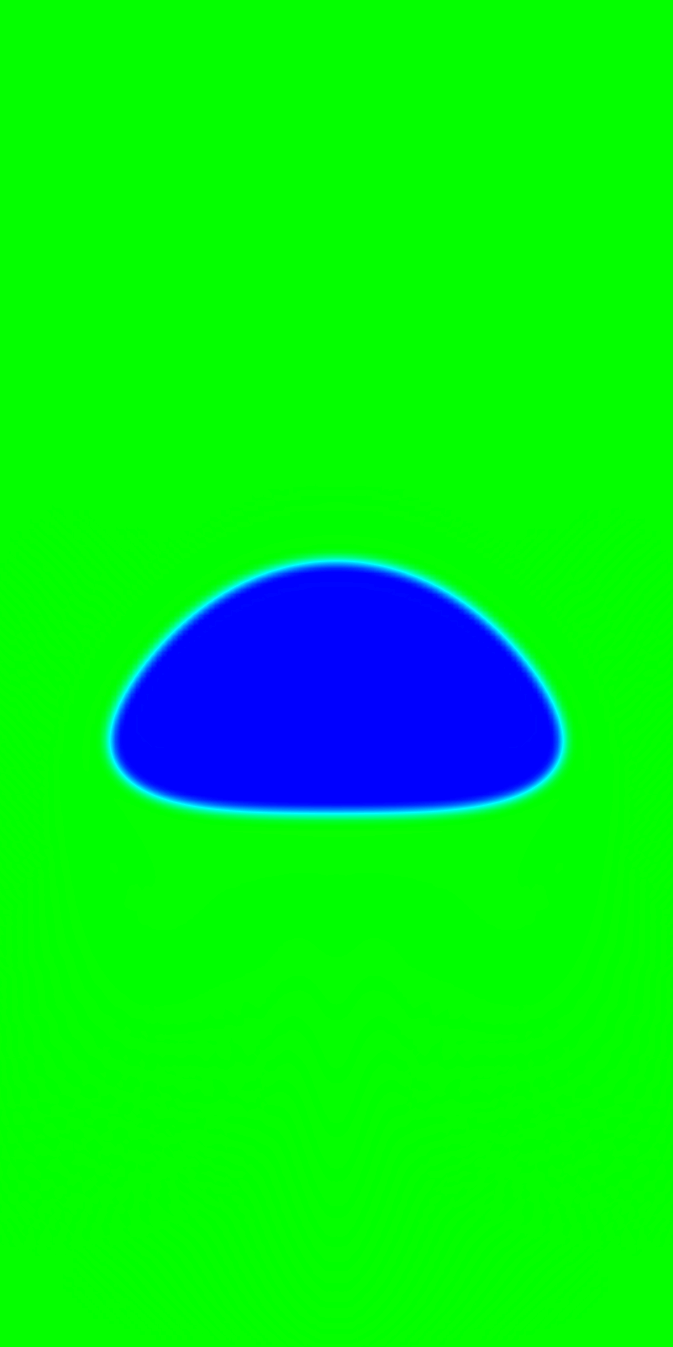}
%\caption{$t=2.4$}
\end{subfigure}
\begin{subfigure}{0.078\textwidth}
\centering
\includegraphics[width=1\textwidth]{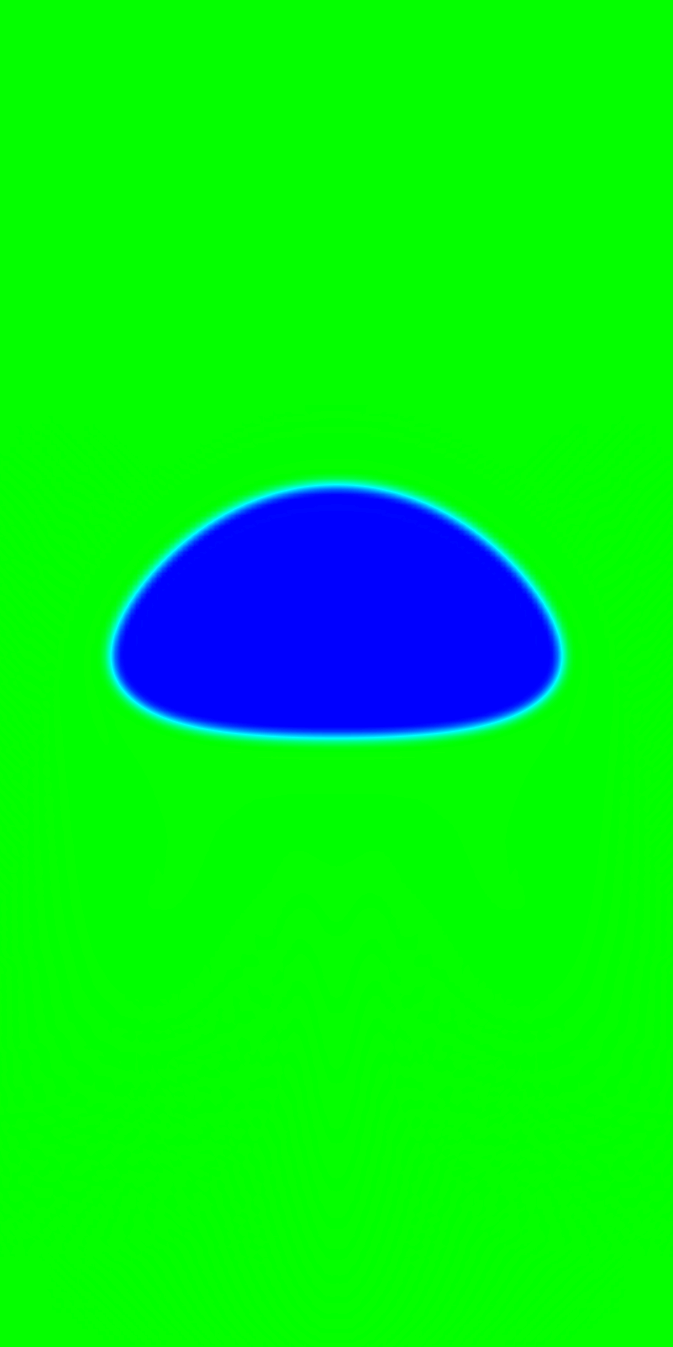}
%\caption{$t=3.0$}
\end{subfigure}
\begin{subfigure}{0.078\textwidth}
\centering
\includegraphics[width=1\textwidth]{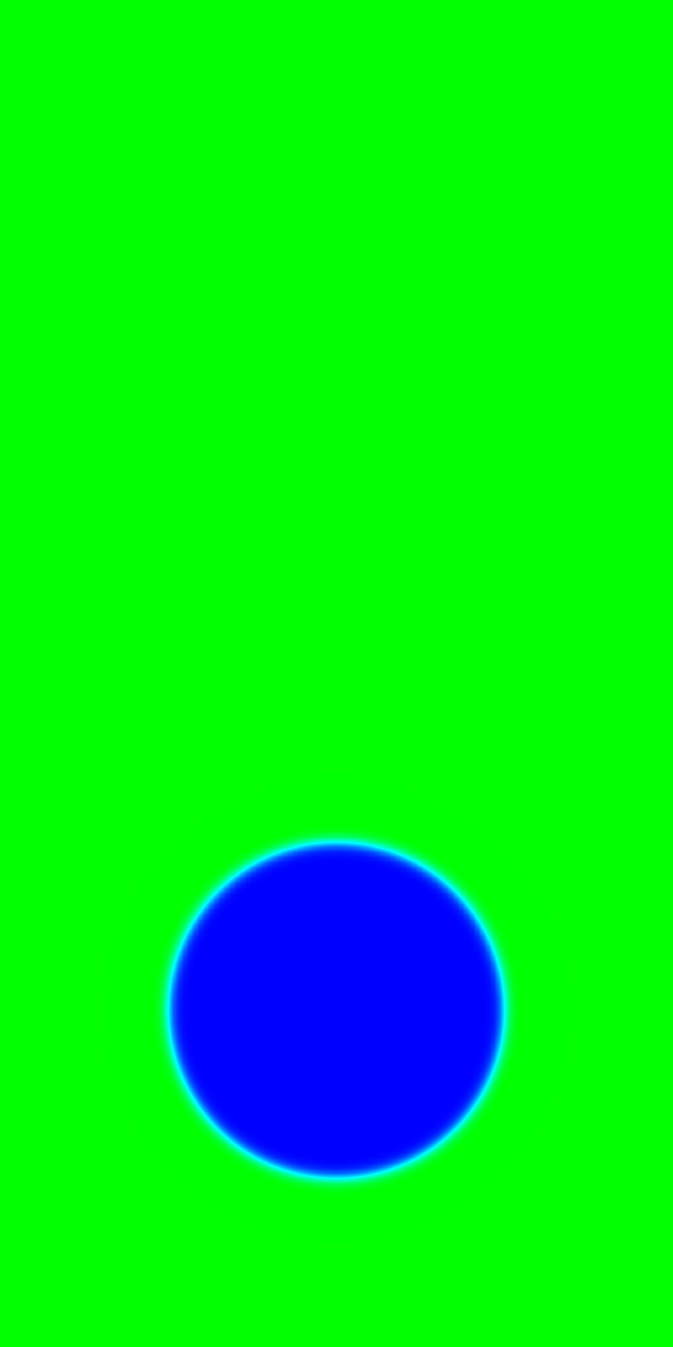}
%\caption{$t=0.0$}
\end{subfigure}
\begin{subfigure}{0.078\textwidth}
\centering
\includegraphics[width=1\textwidth]{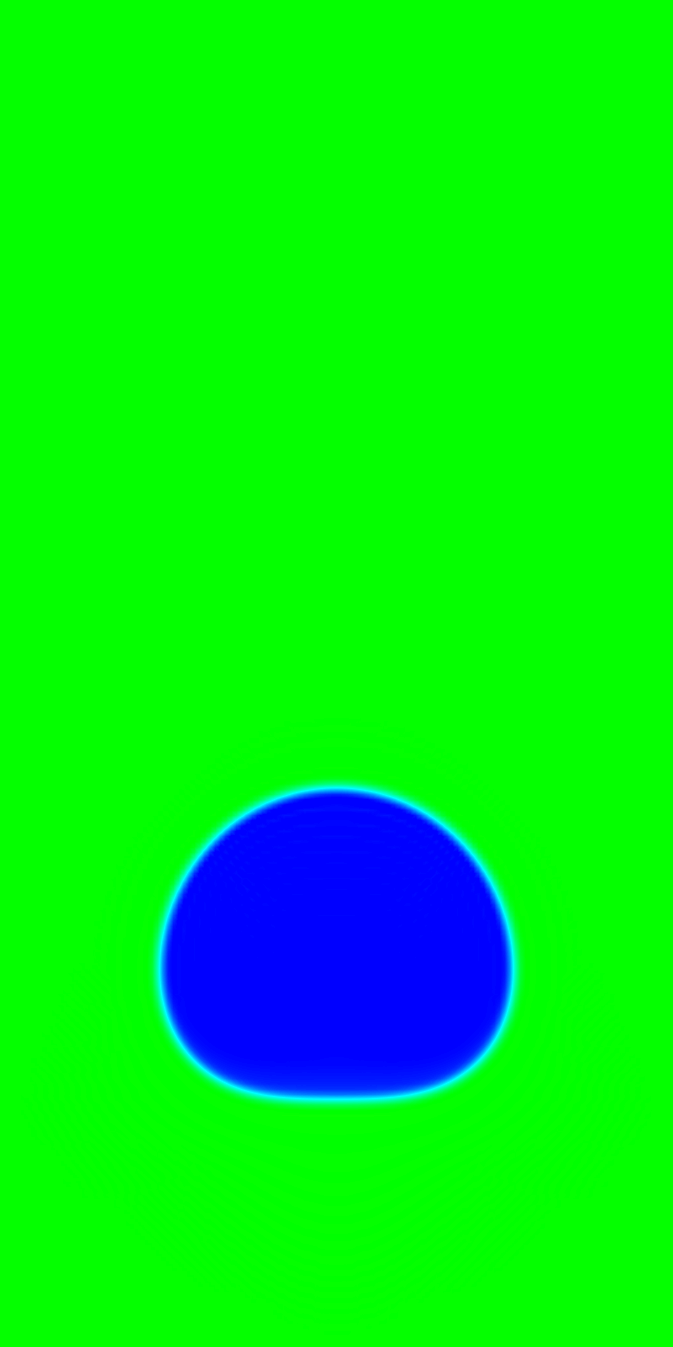}
%\caption{$t=0.6$}
\end{subfigure}
\begin{subfigure}{0.078\textwidth}
\centering
\includegraphics[width=1\textwidth]{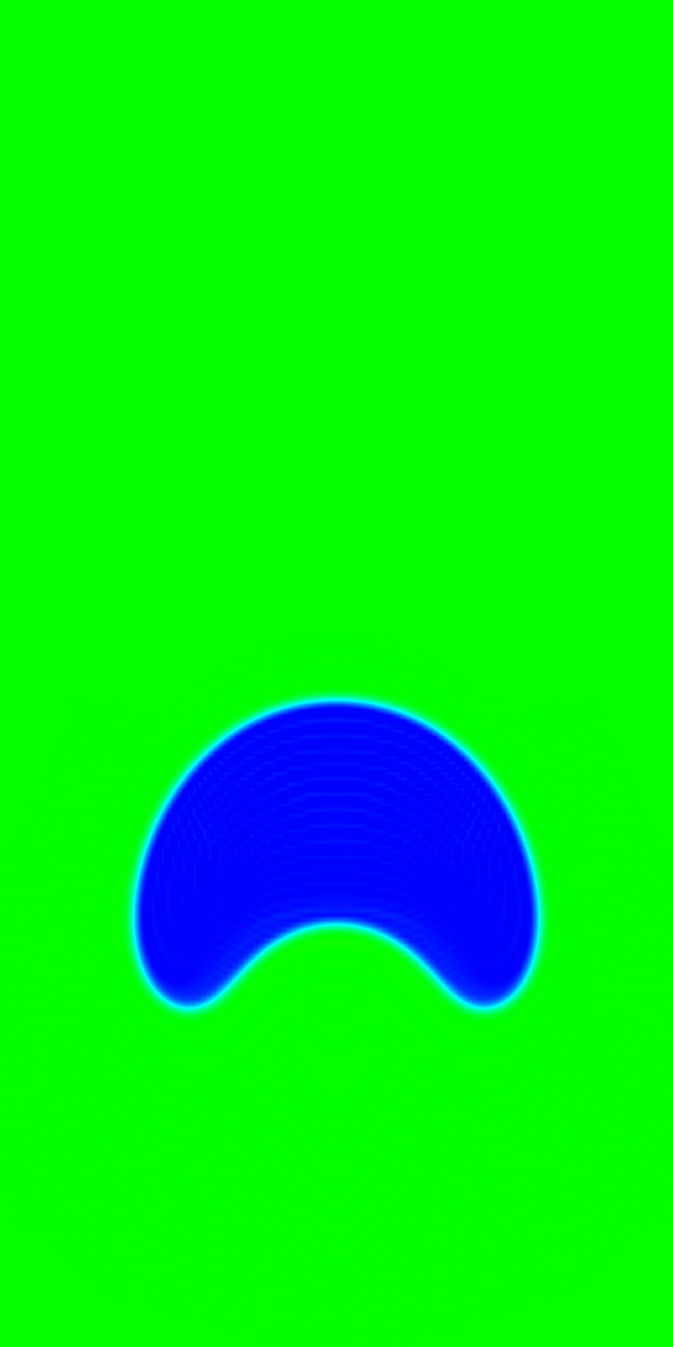}
%\caption{$t=1.2$}
\end{subfigure}
\begin{subfigure}{0.078\textwidth}
\centering
\includegraphics[width=1\textwidth]{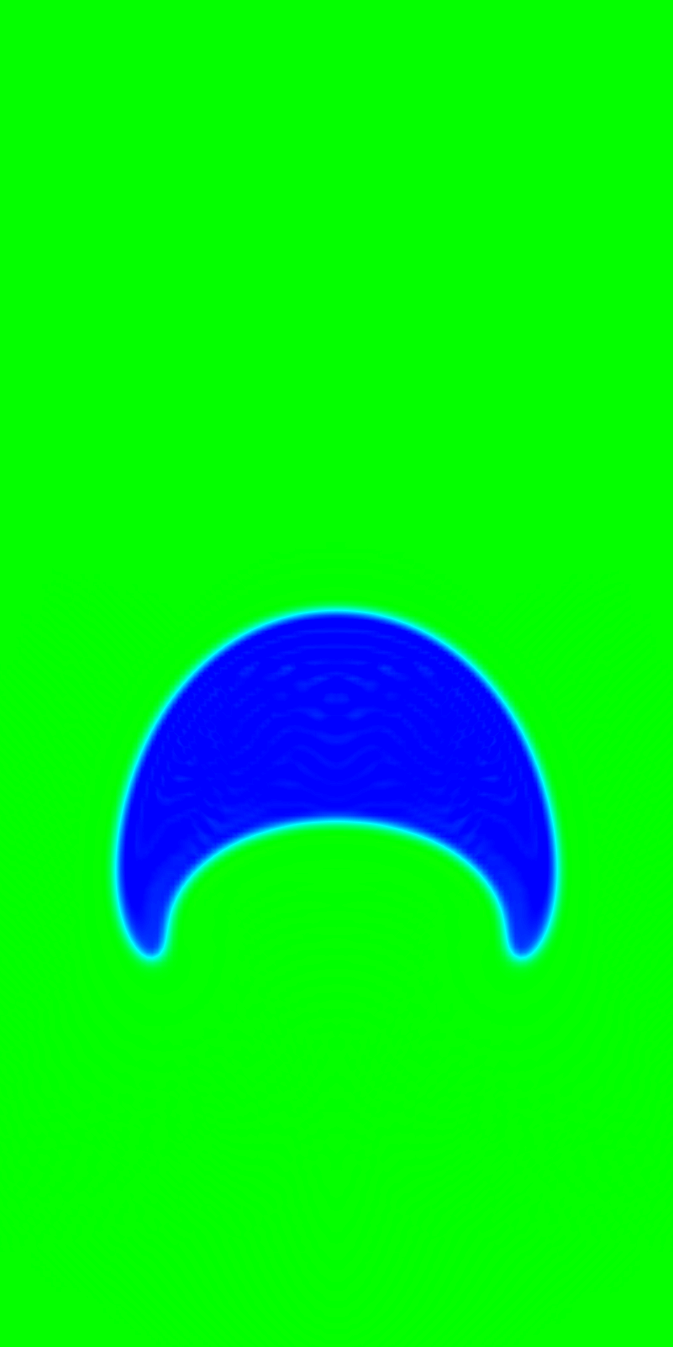}
%\caption{$t=1.8$}
\end{subfigure}
\begin{subfigure}{0.078\textwidth}
\centering
\includegraphics[width=1\textwidth]{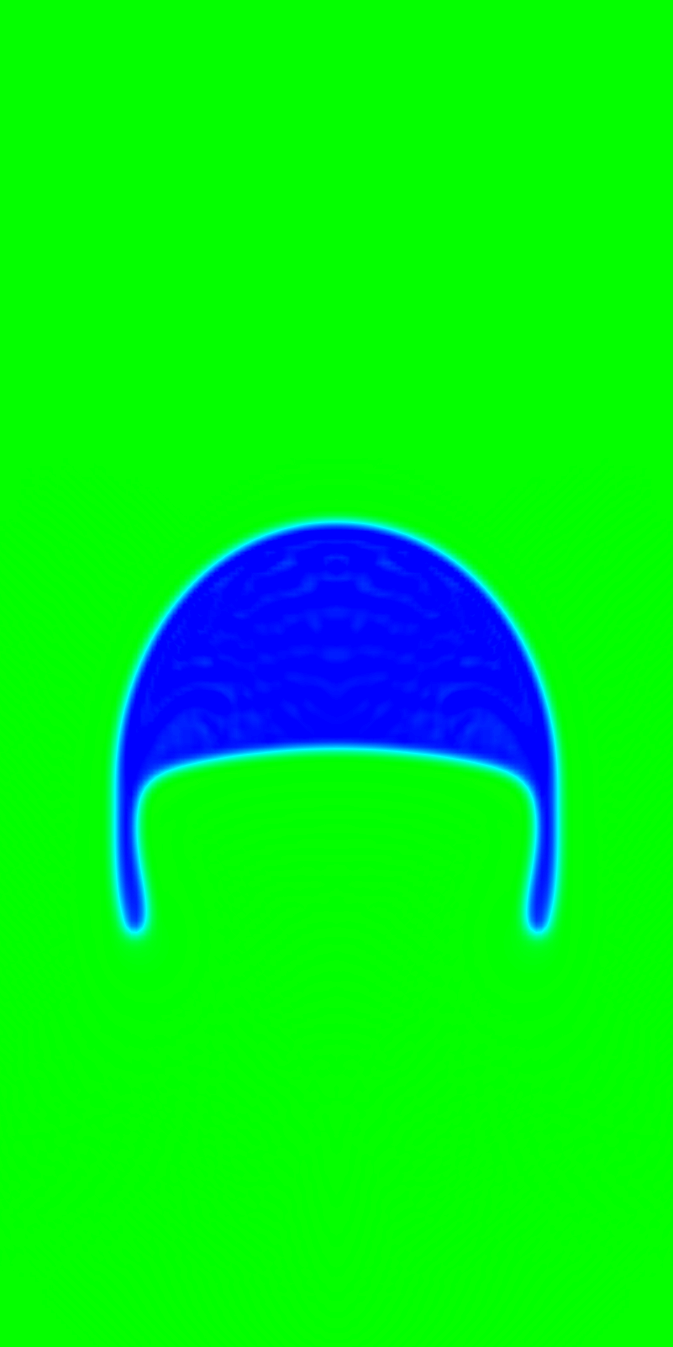}
%\caption{$t=2.4$}
\end{subfigure}
\begin{subfigure}{0.078\textwidth}
\centering
\includegraphics[width=1\textwidth]{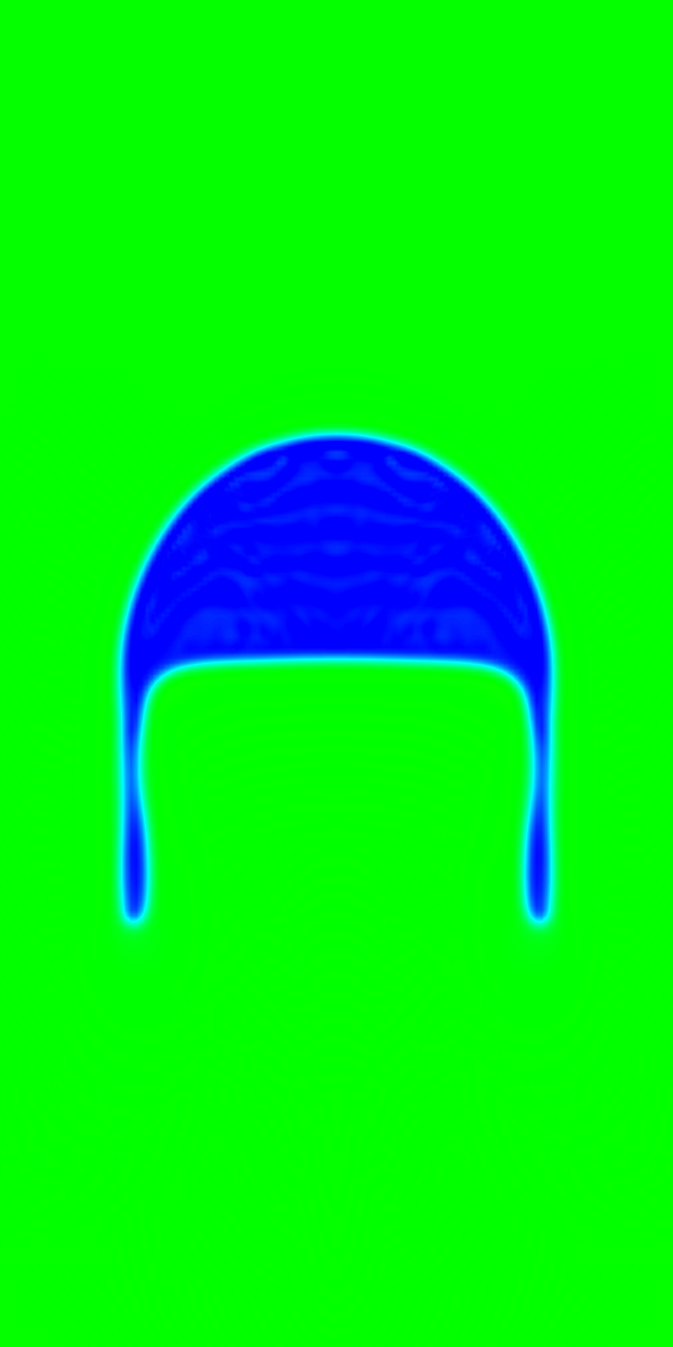}
%\caption{$t=3.0$}
\end{subfigure}
\caption{Mixture-aware simulations -- equal phases. Case 1 (left) and Case 2 (right). Visualization of the phase fields (top to below) $\phi_1, \phi_2, \phi_3$ at times $t=0.0, 0.6, 1.2, 1.8, 2.4, 3.0$ (left to right).}
\label{fig: reduction case 1 2 - phi1=phi3}
\end{figure}

\subsubsection*{Vertically separated identical phases}
Here we again test the merging of identical phases. We choose the initial phase fields as:
\begin{subequations}
  \begin{align}
    \phi^h_{1,0}(\mathbf{x}) =&~ \frac{1}{2}\left(1+\tanh{\dfrac{\sqrt{(x-0.5)^2+(y-0.5)^2}-R_0}{\varepsilon\sqrt{2}}}\right)-\phi^h_{3,0}(\mathbf{x}),\\
    \phi^h_{2,0}(\mathbf{x}) =&~ \frac{1}{2}\left(1-\tanh{\dfrac{\sqrt{(x-0.5)^2+(y-0.5)^2}-R_0}{\varepsilon\sqrt{2}}}\right),\\
    \phi^h_{3,0}(\mathbf{x}) =&~ \frac{1}{2}\left(1+\tanh{\dfrac{y-3.5 R_0}{\varepsilon\sqrt{2}}}\right).    
\end{align}
\end{subequations}
In Figure \ref{fig: reduction case 1 2 - last} we visualize the phase fields for Case 1 and 2. Compared to the previous scenarios, phases $\phi_1$ and $\phi_3$ now show significantly different behavior. However, we observe that, for both cases, the evolution of phase $\phi_2$ remains identical. These observations again confirm axioms A5 (Merging identical phases).

\begin{figure}
\captionsetup[subfigure]{justification=centering}
\begin{subfigure}{0.078\textwidth}
\centering
\includegraphics[width=1\textwidth]{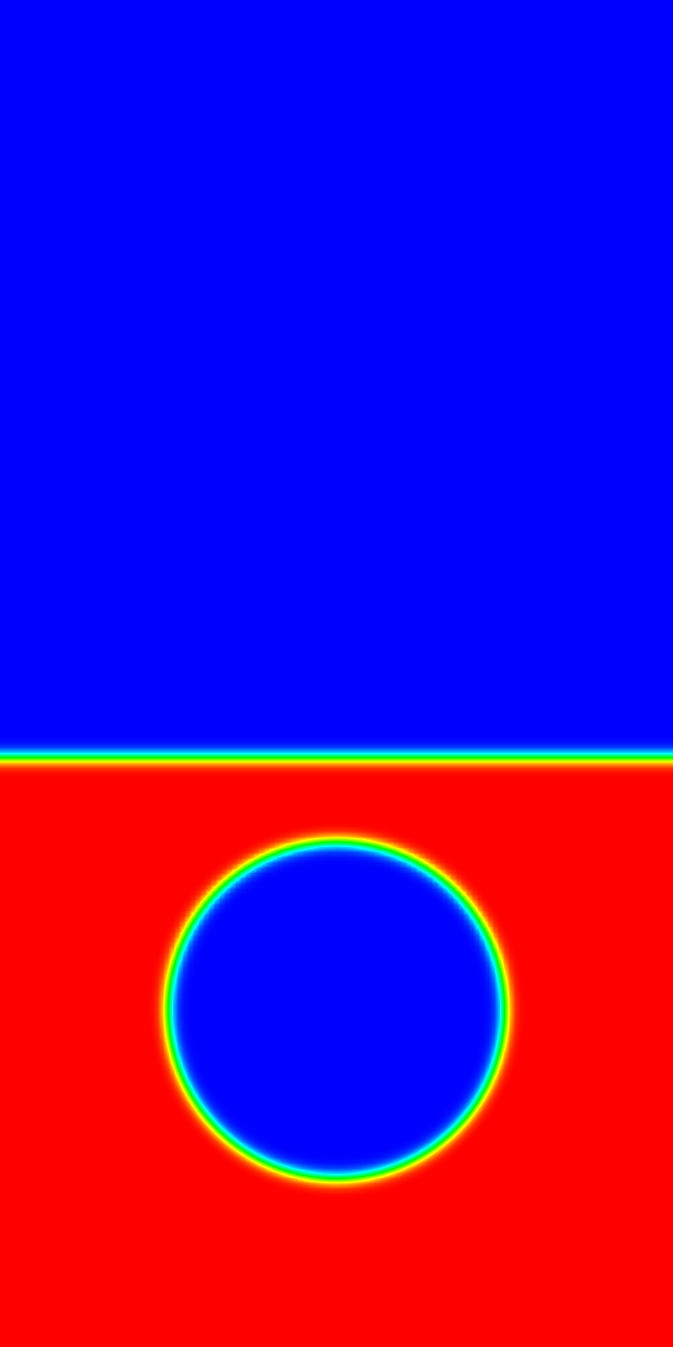}
%\caption{$t=0.0$}
\end{subfigure}
\begin{subfigure}{0.078\textwidth}
\centering
\includegraphics[width=1\textwidth]{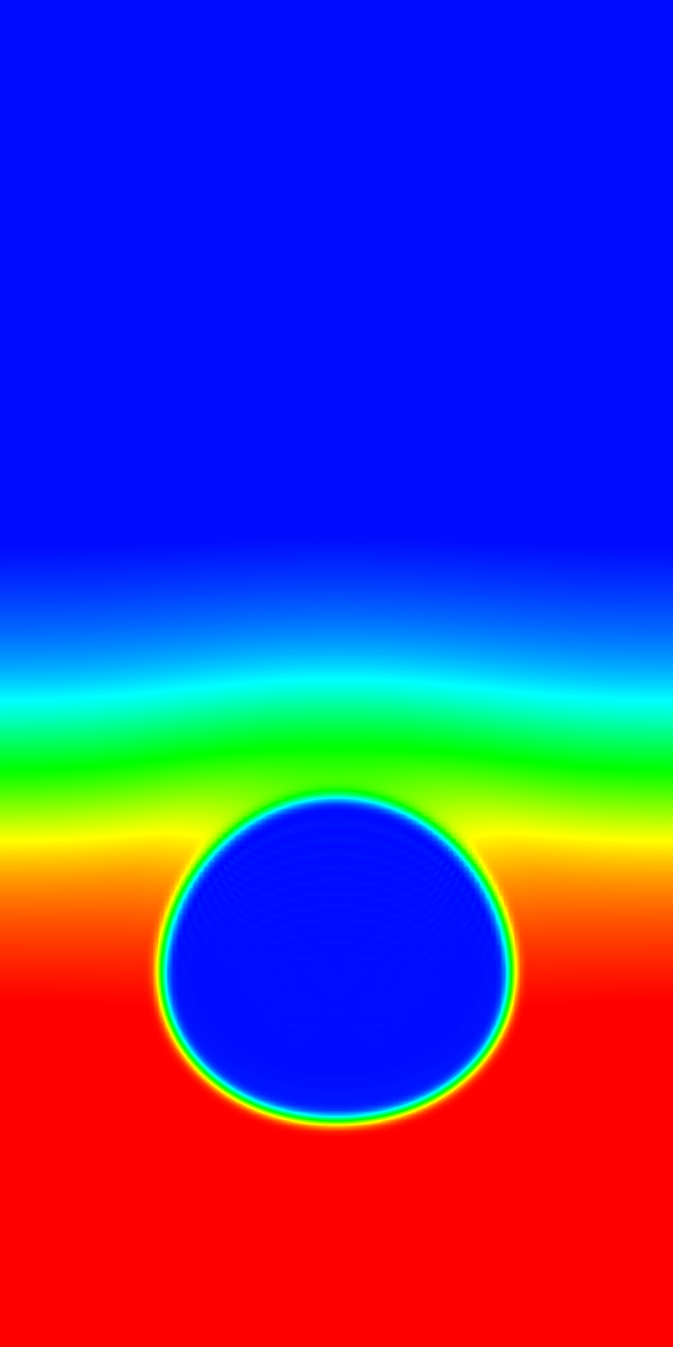}
%\caption{$t=0.6$}
\end{subfigure}
\begin{subfigure}{0.078\textwidth}
\centering
\includegraphics[width=1\textwidth]{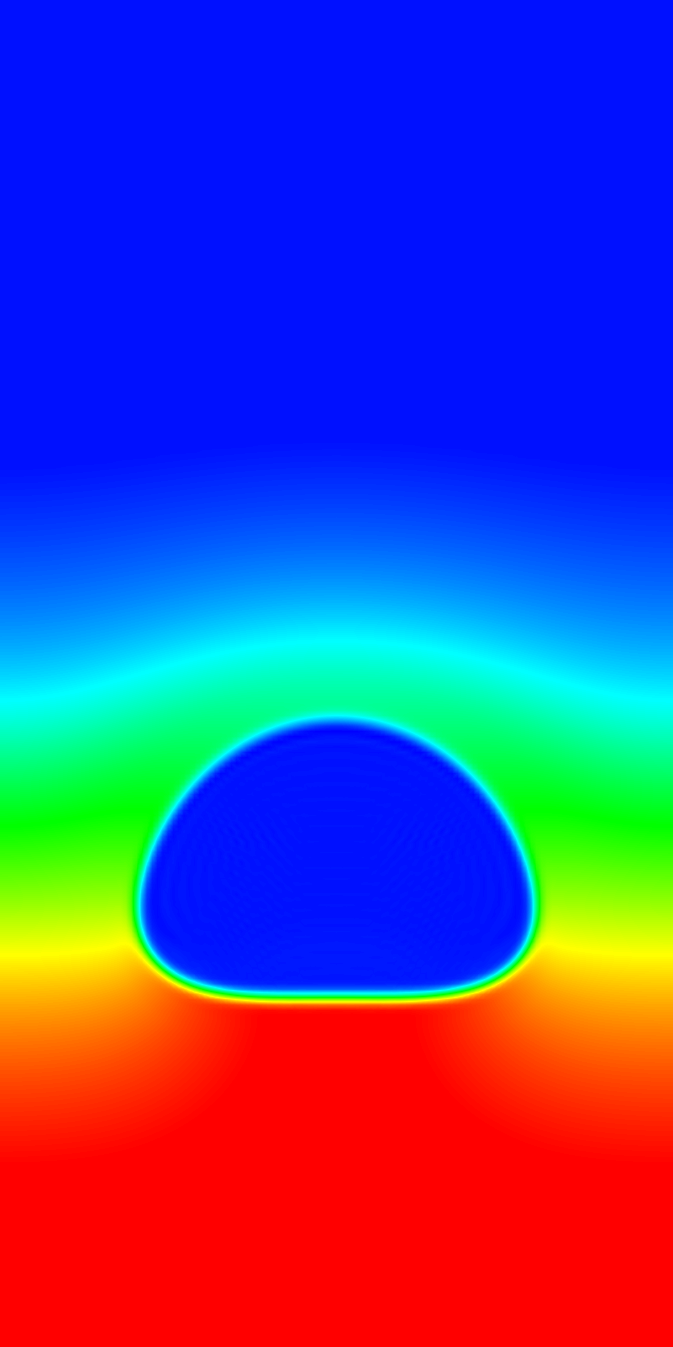}
%\caption{$t=1.2$}
\end{subfigure}
\begin{subfigure}{0.078\textwidth}
\centering
\includegraphics[width=1\textwidth]{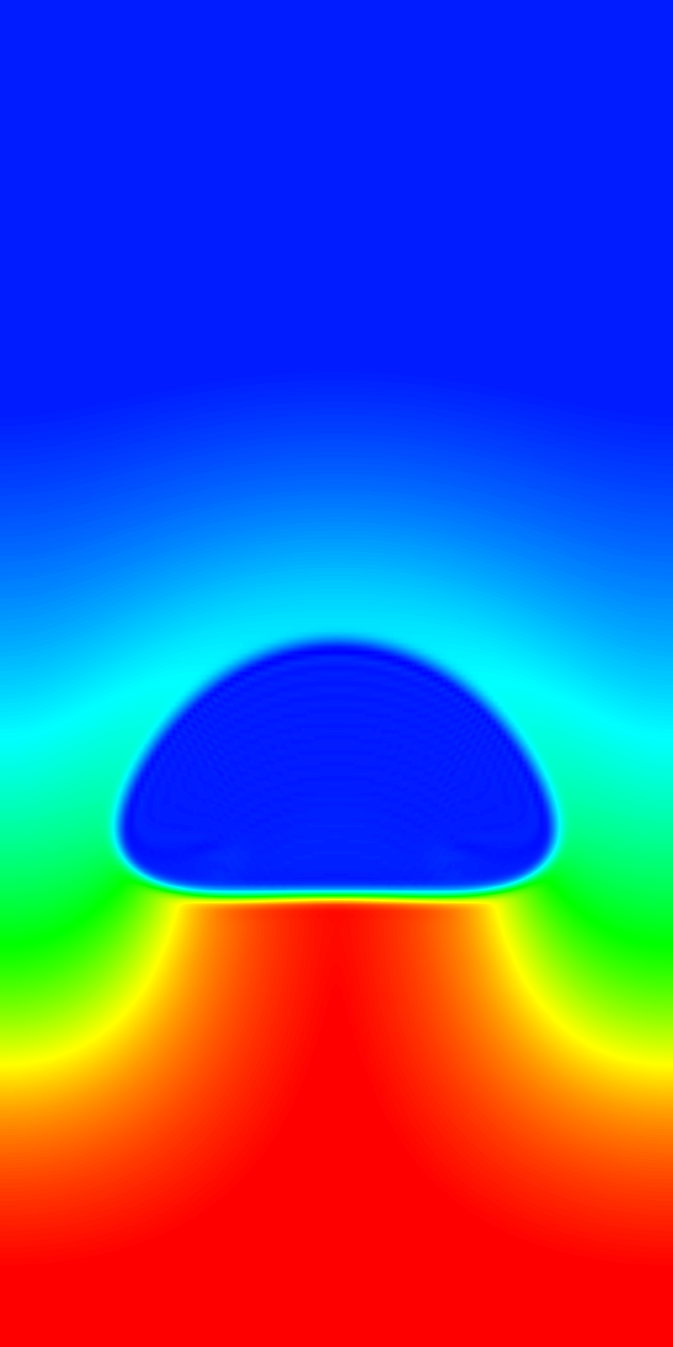}
%\caption{$t=1.8$}
\end{subfigure}
\begin{subfigure}{0.078\textwidth}
\centering
\includegraphics[width=1\textwidth]{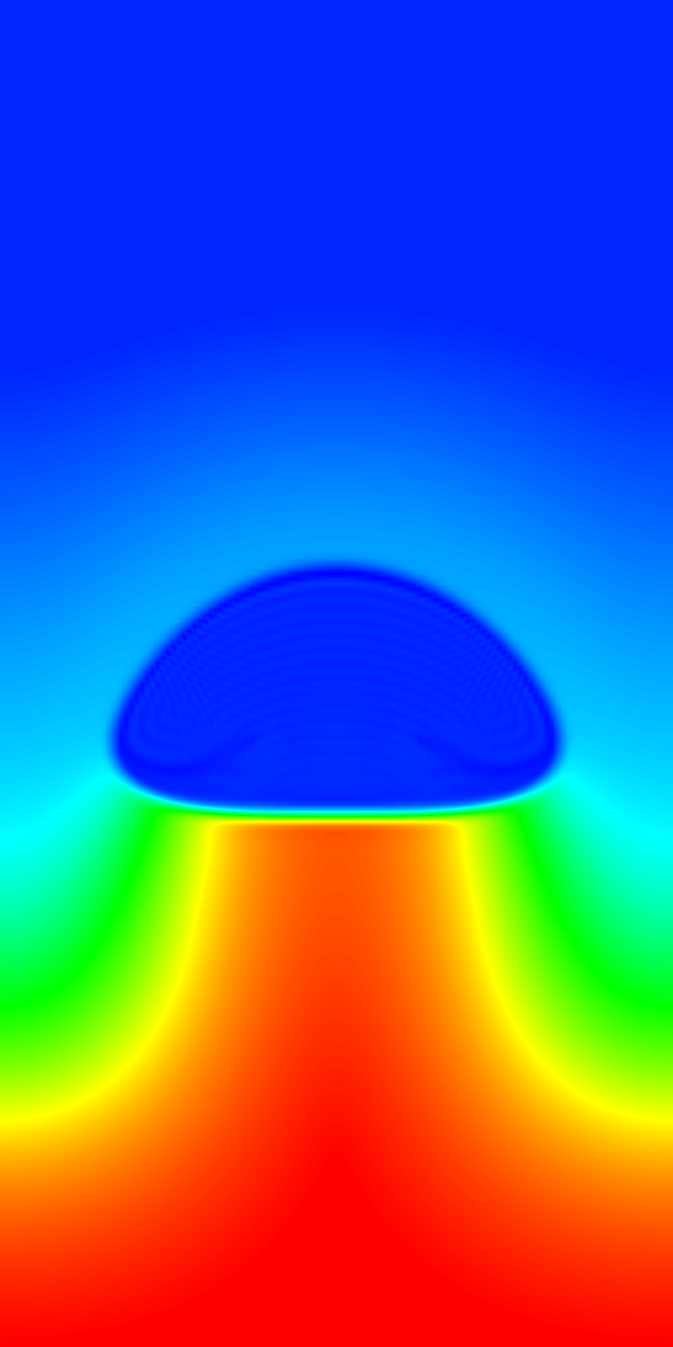}
%\caption{$t=2.4$}
\end{subfigure}
\begin{subfigure}{0.078\textwidth}
\centering
\includegraphics[width=1\textwidth]{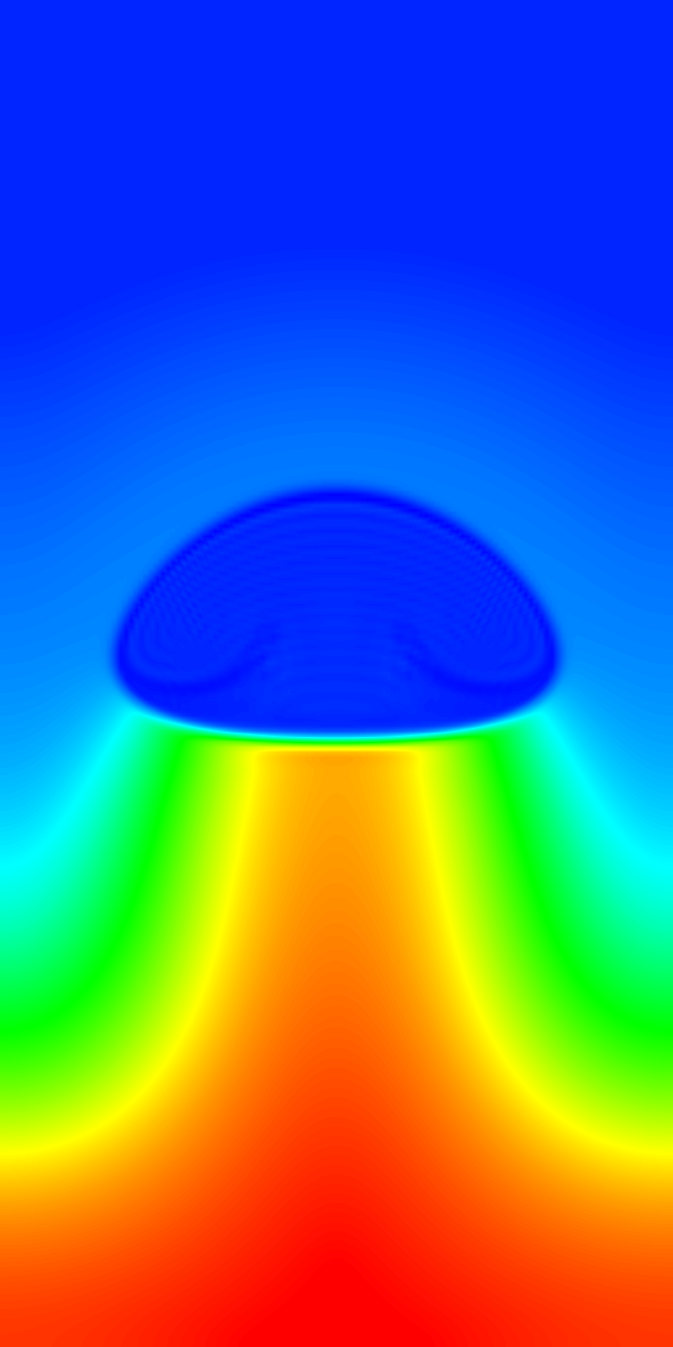}
%\caption{$t=3.0$}
\end{subfigure}
\begin{subfigure}{0.078\textwidth}
\centering
\includegraphics[width=1\textwidth]{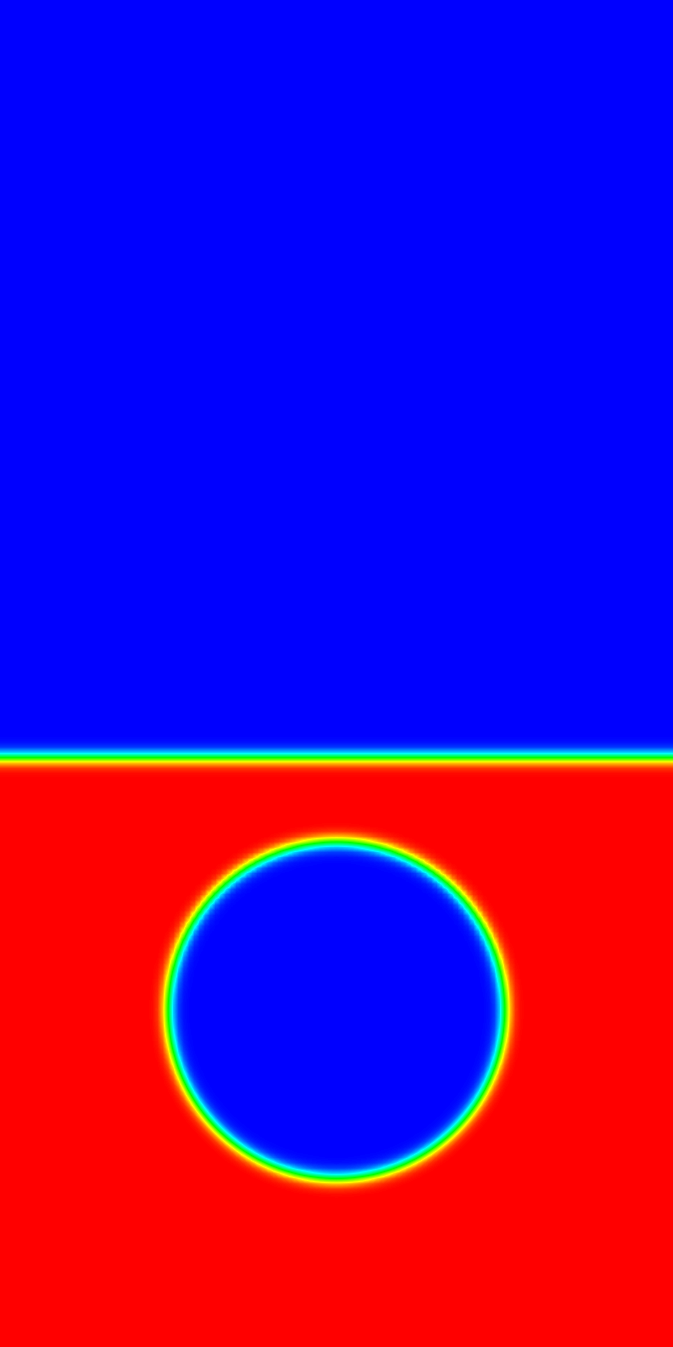}
%\caption{$t=0.0$}
\end{subfigure}
\begin{subfigure}{0.078\textwidth}
\centering
\includegraphics[width=1\textwidth]{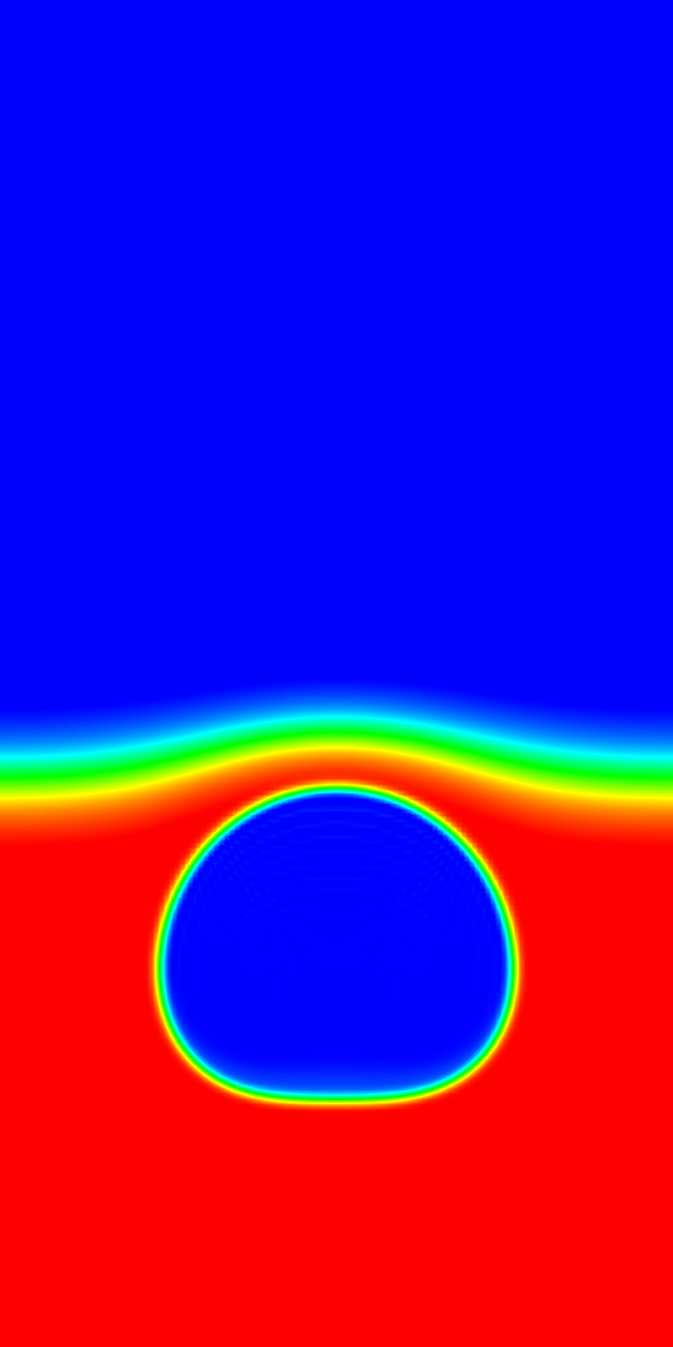}
%\caption{$t=0.6$}
\end{subfigure}
\begin{subfigure}{0.078\textwidth}
\centering
\includegraphics[width=1\textwidth]{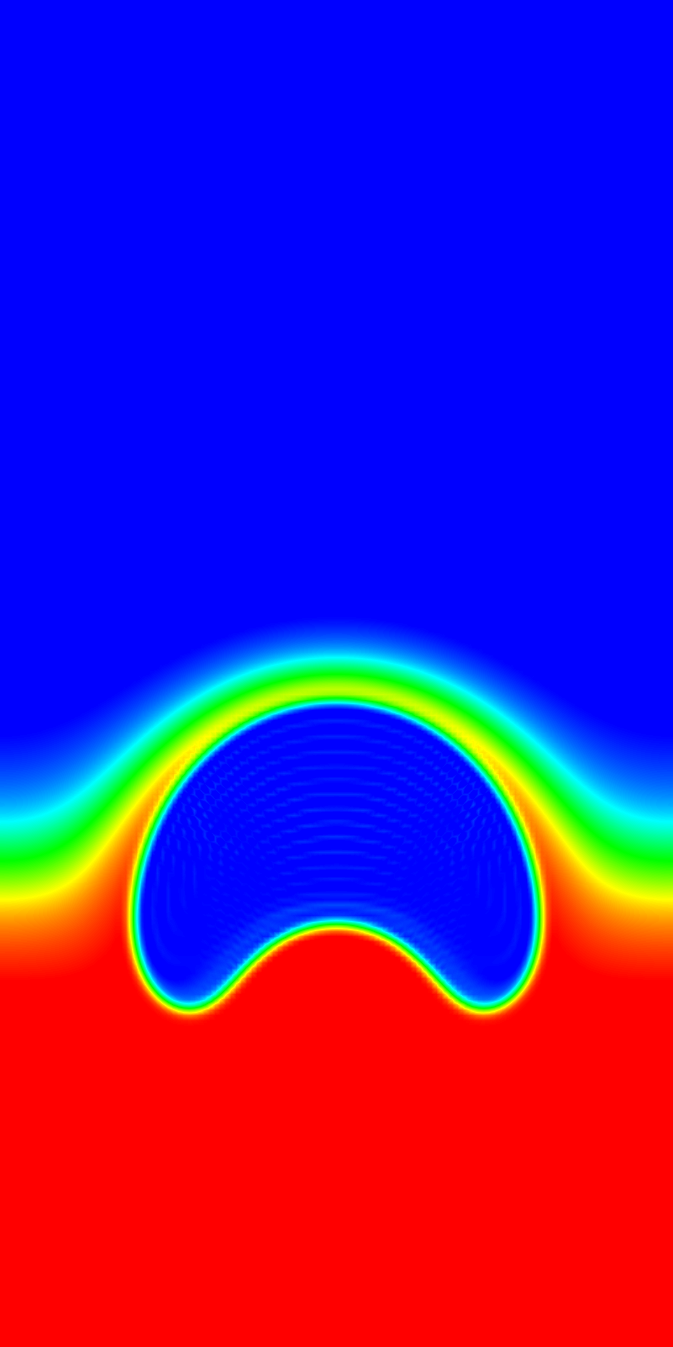}
%\caption{$t=1.2$}
\end{subfigure}
\begin{subfigure}{0.078\textwidth}
\centering
\includegraphics[width=1\textwidth]{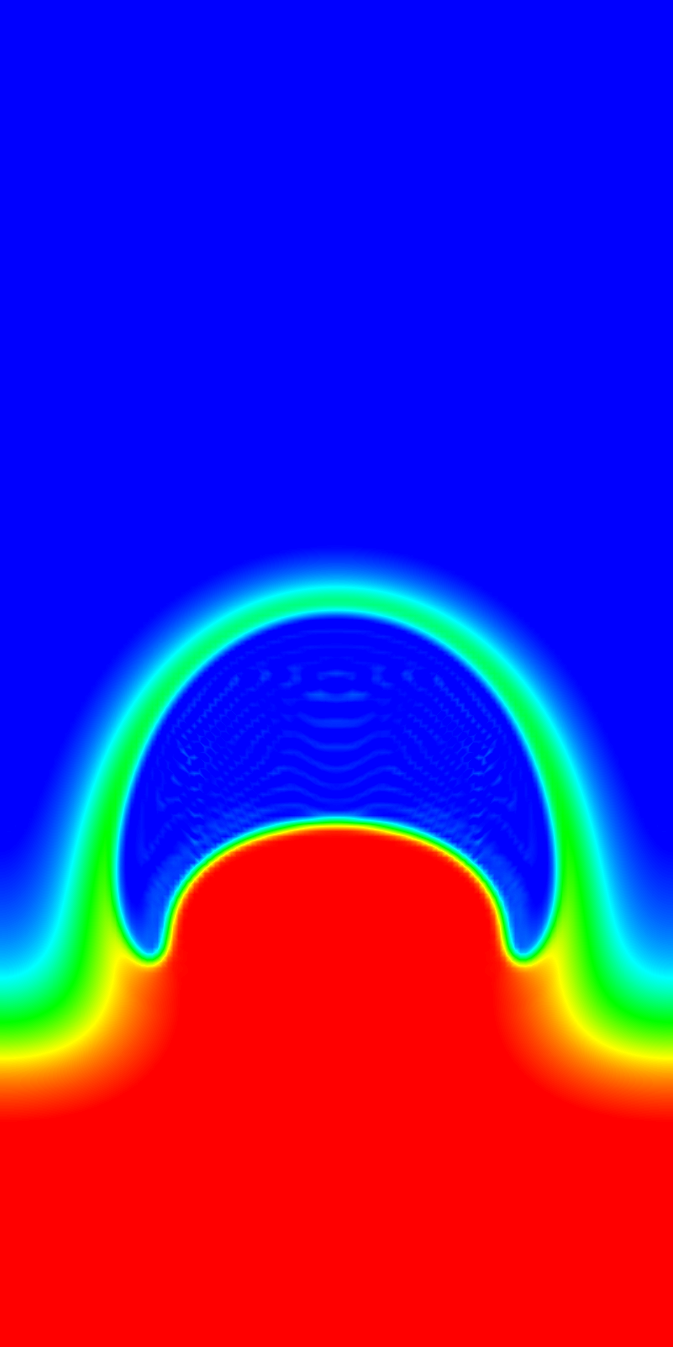}
%\caption{$t=1.8$}
\end{subfigure}
\begin{subfigure}{0.078\textwidth}
\centering
\includegraphics[width=1\textwidth]{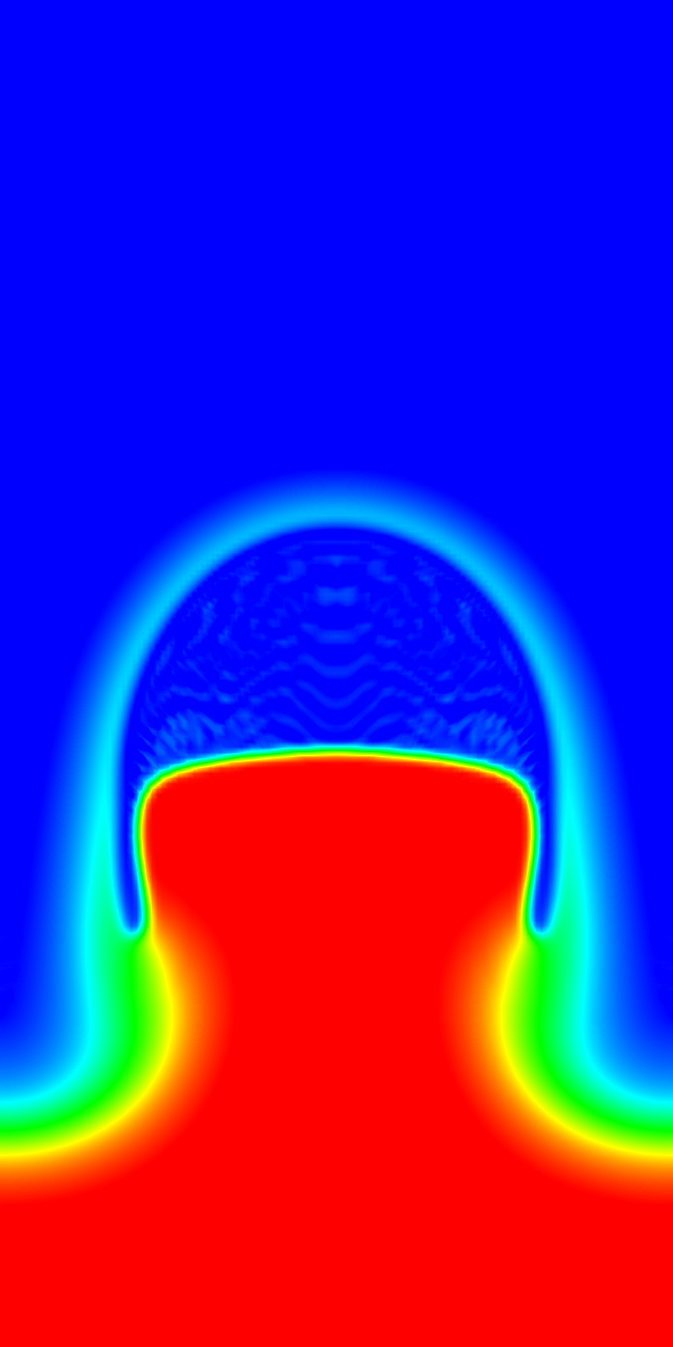}
%\caption{$t=2.4$}
\end{subfigure}
\begin{subfigure}{0.078\textwidth}
\centering
\includegraphics[width=1\textwidth]{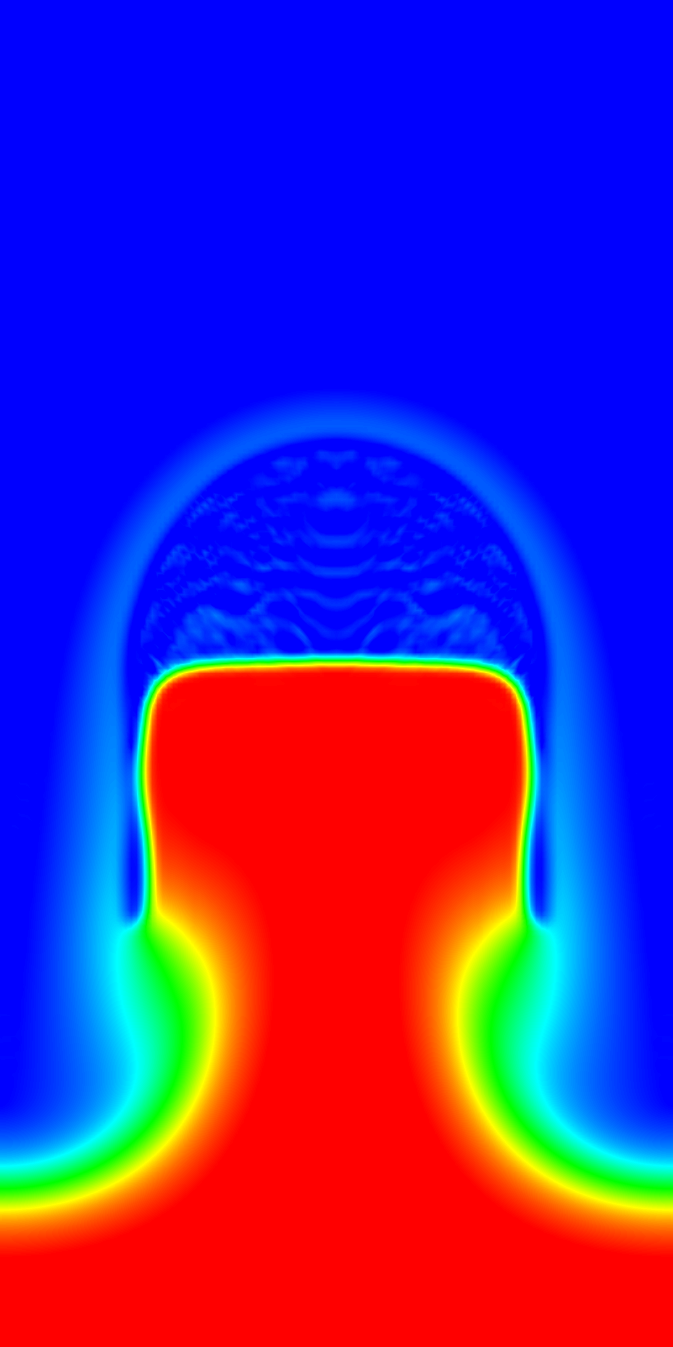}
%\caption{$t=3.0$}
\end{subfigure}
\begin{subfigure}{0.078\textwidth}
\centering
\includegraphics[width=1\textwidth]{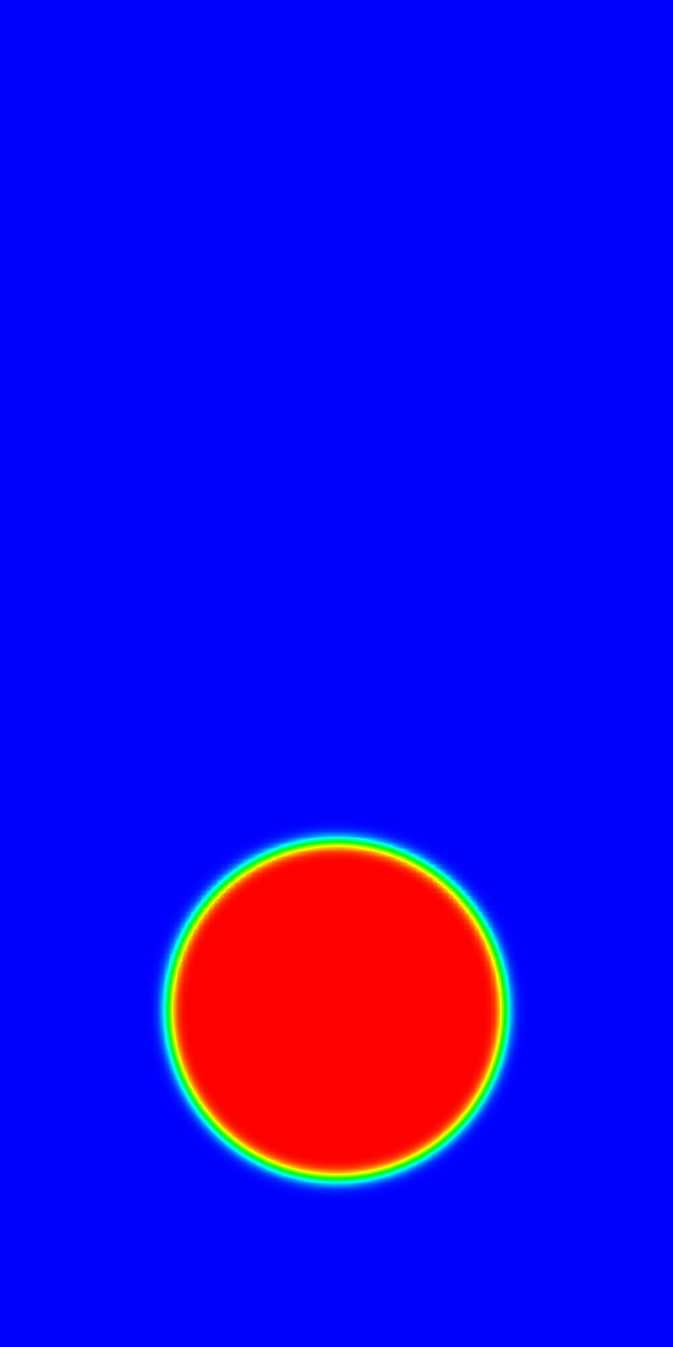}
%\caption{$t=0.0$}
\end{subfigure}
\begin{subfigure}{0.078\textwidth}
\centering
\includegraphics[width=1\textwidth]{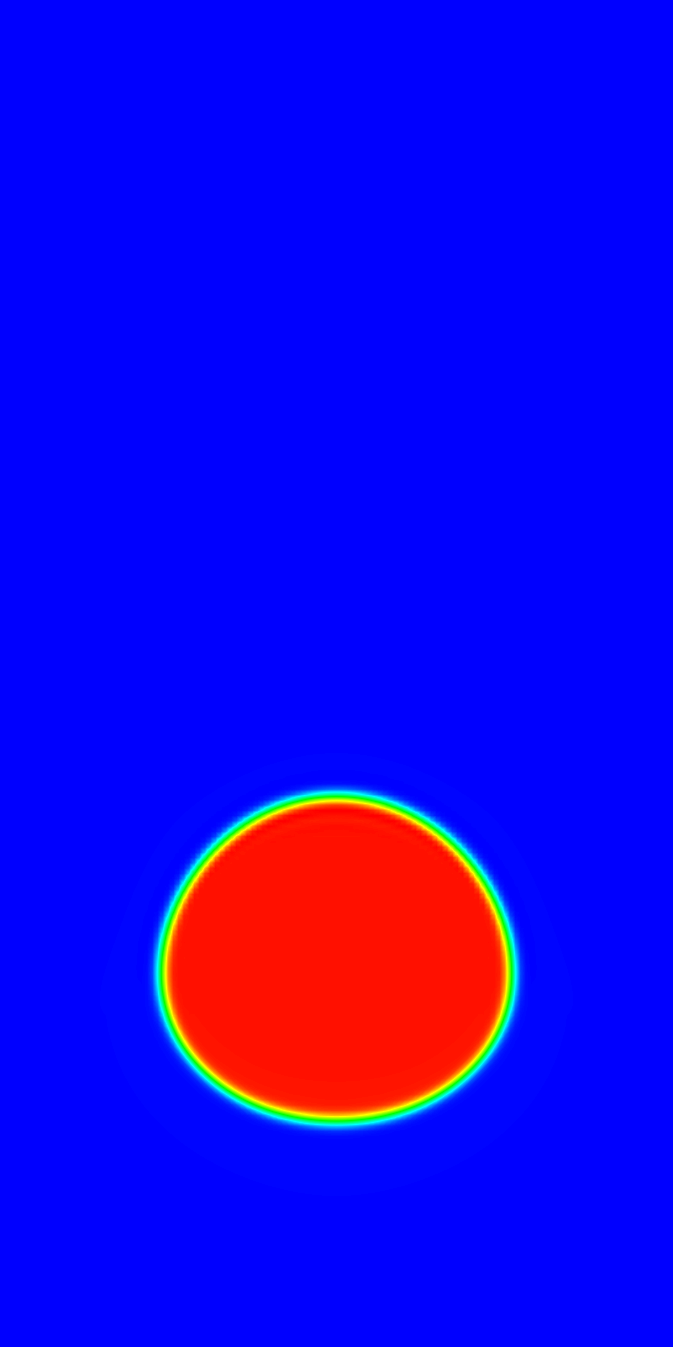}
%\caption{$t=0.6$}
\end{subfigure}
\begin{subfigure}{0.078\textwidth}
\centering
\includegraphics[width=1\textwidth]{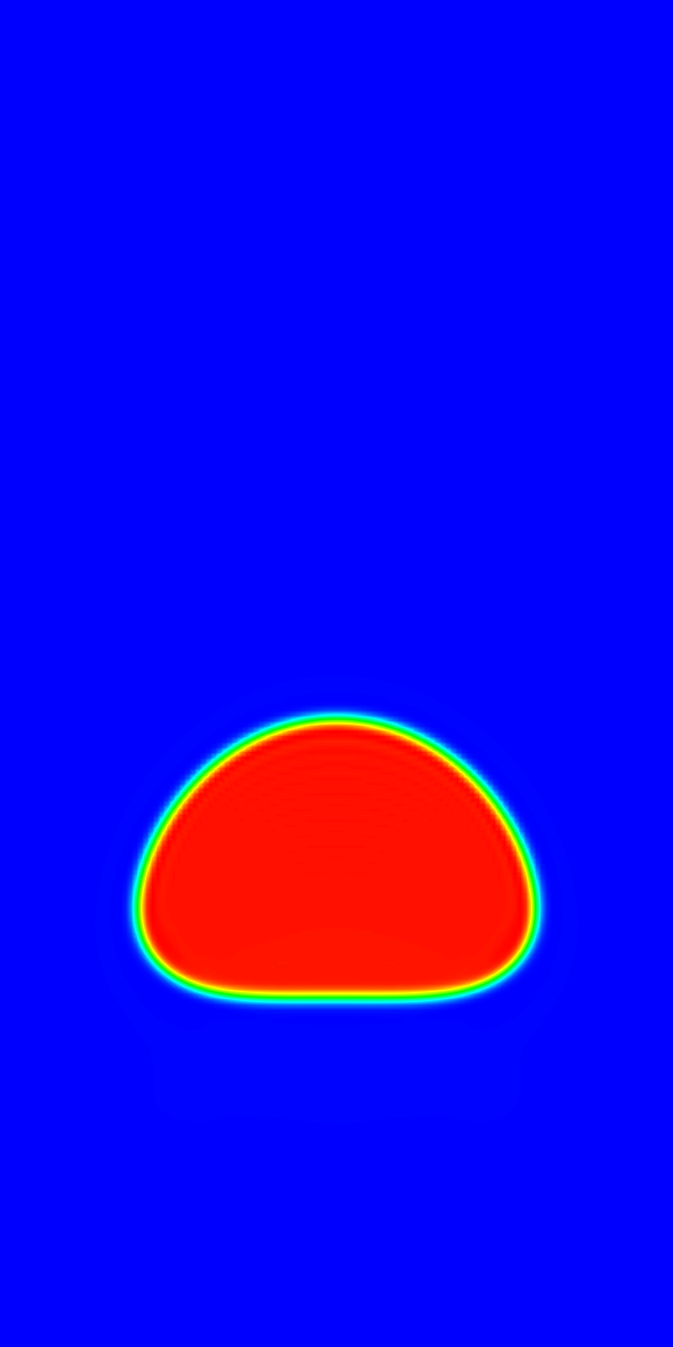}
%\caption{$t=1.2$}
\end{subfigure}
\begin{subfigure}{0.078\textwidth}
\centering
\includegraphics[width=1\textwidth]{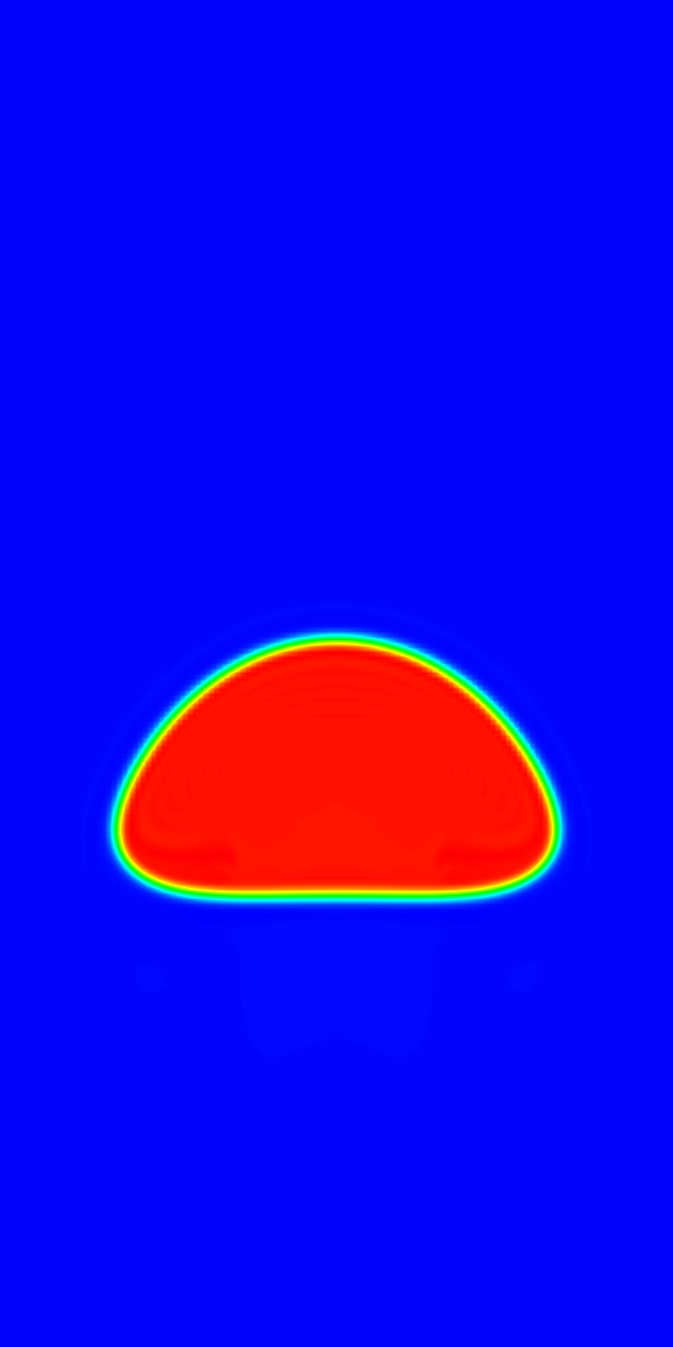}
%\caption{$t=1.8$}
\end{subfigure}
\begin{subfigure}{0.078\textwidth}
\centering
\includegraphics[width=1\textwidth]{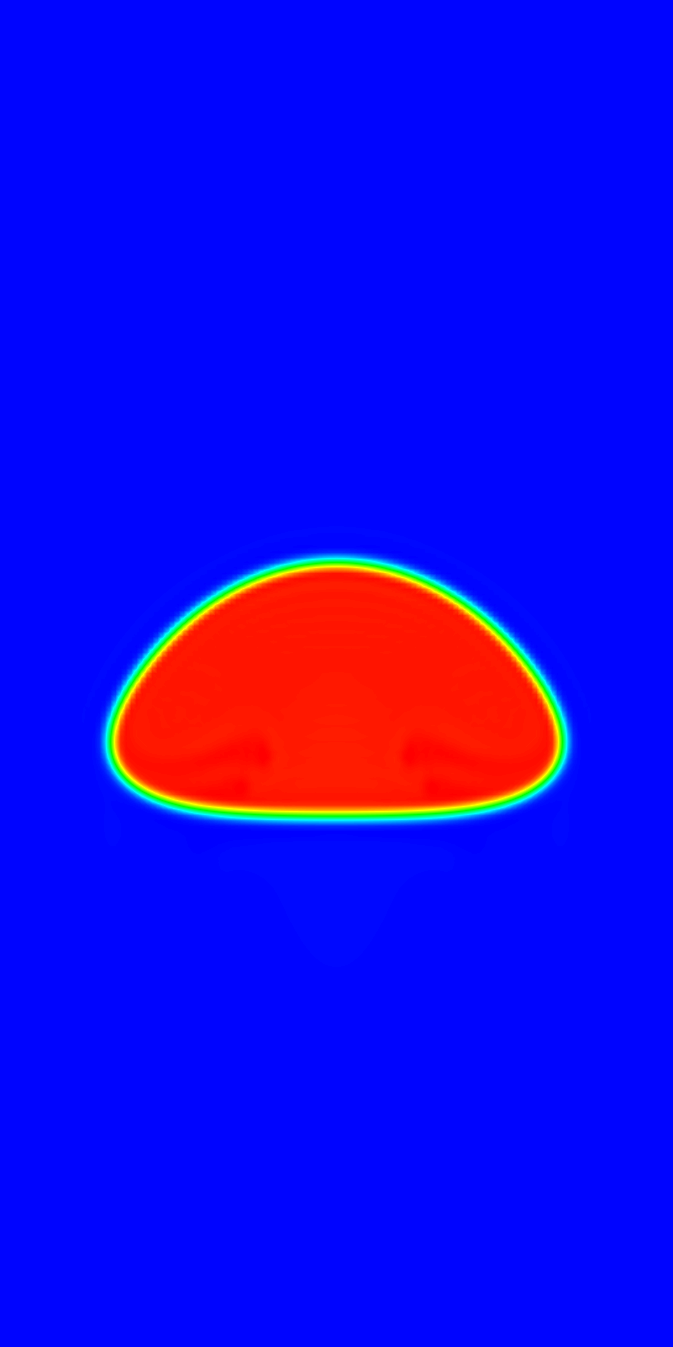}
%\caption{$t=2.4$}
\end{subfigure}
\begin{subfigure}{0.078\textwidth}
\centering
\includegraphics[width=1\textwidth]{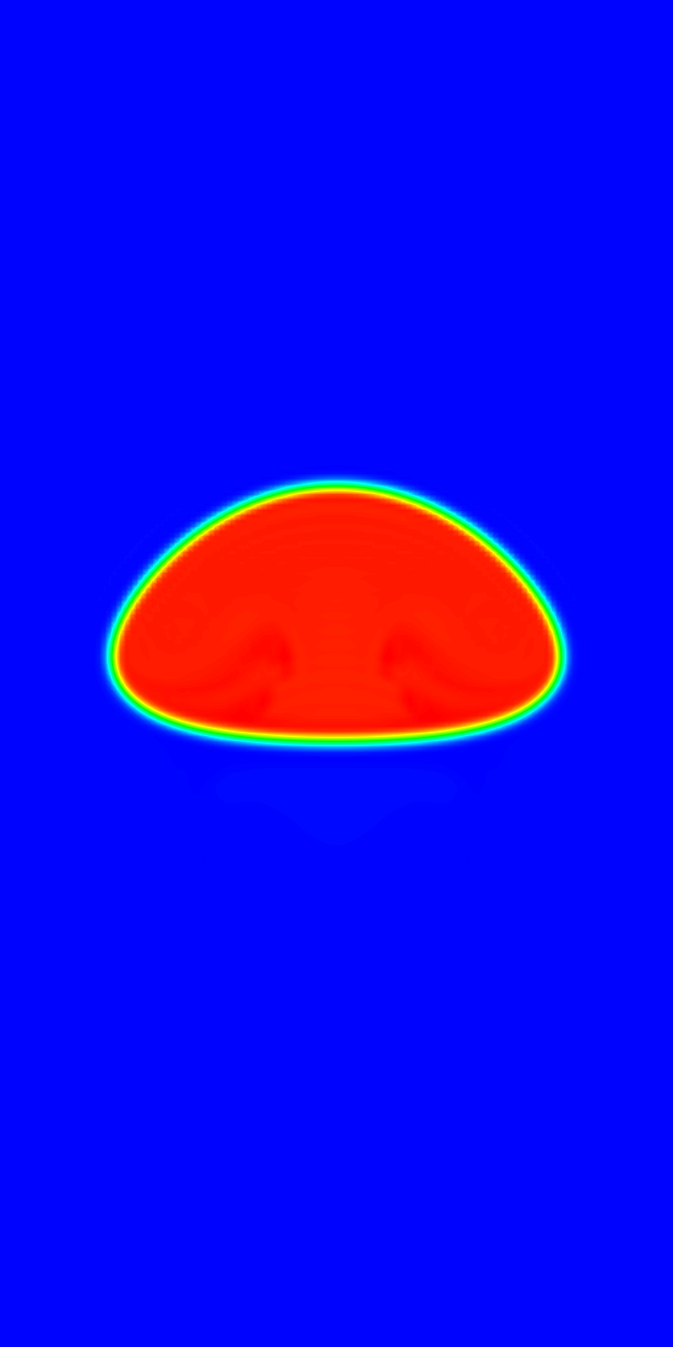}
%\caption{$t=3.0$}
\end{subfigure}
\begin{subfigure}{0.078\textwidth}
\centering
\includegraphics[width=1\textwidth]{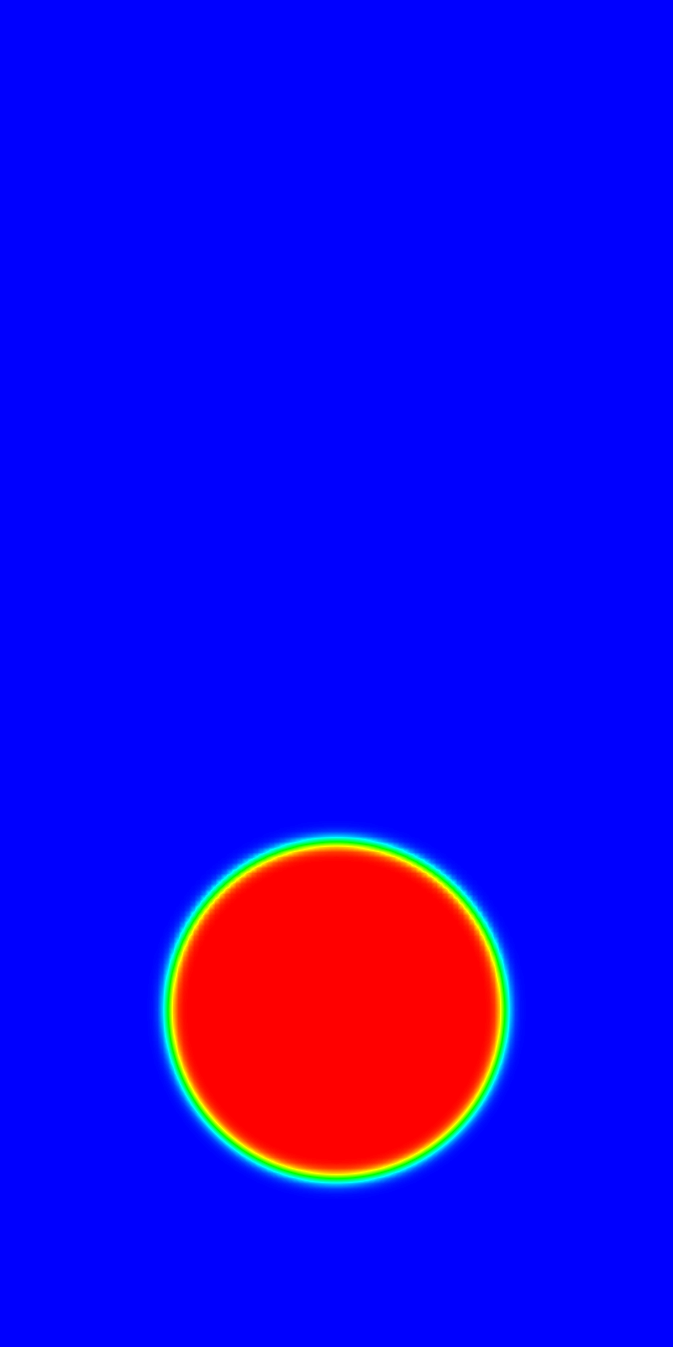}
%\caption{$t=0.0$}
\end{subfigure}
\begin{subfigure}{0.078\textwidth}
\centering
\includegraphics[width=1\textwidth]{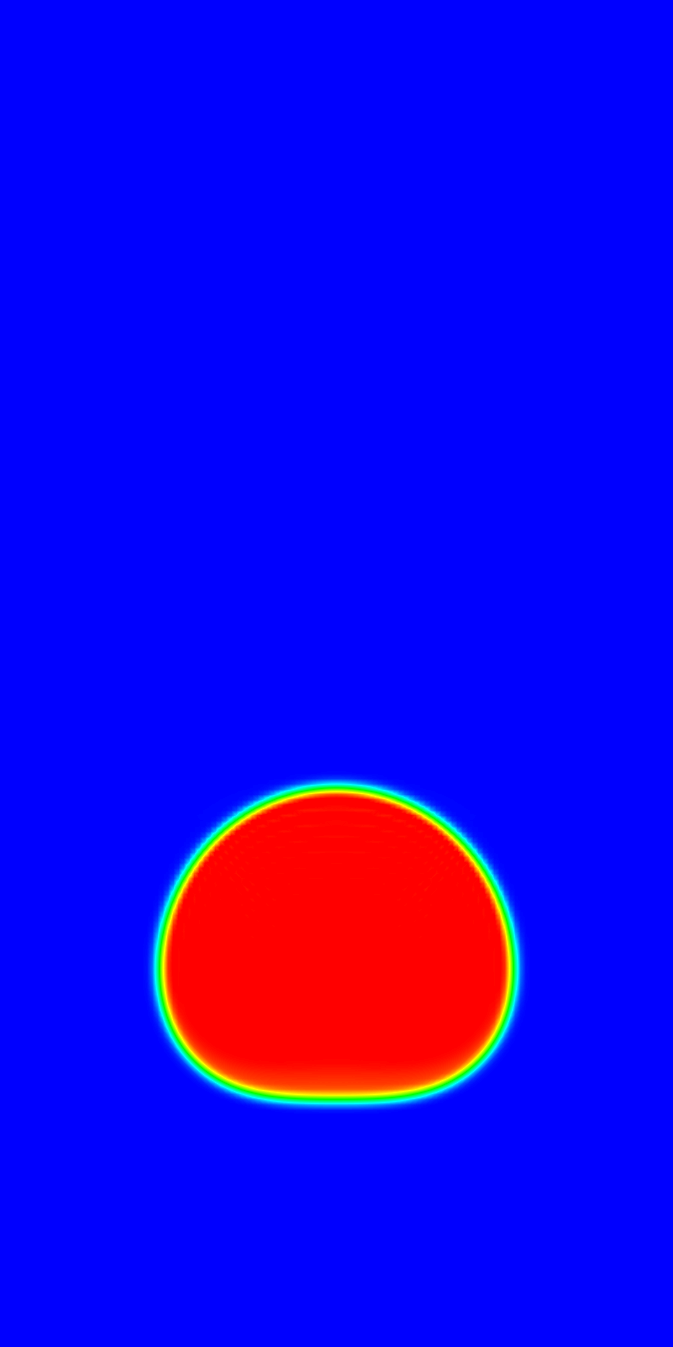}
%\caption{$t=0.6$}
\end{subfigure}
\begin{subfigure}{0.078\textwidth}
\centering
\includegraphics[width=1\textwidth]{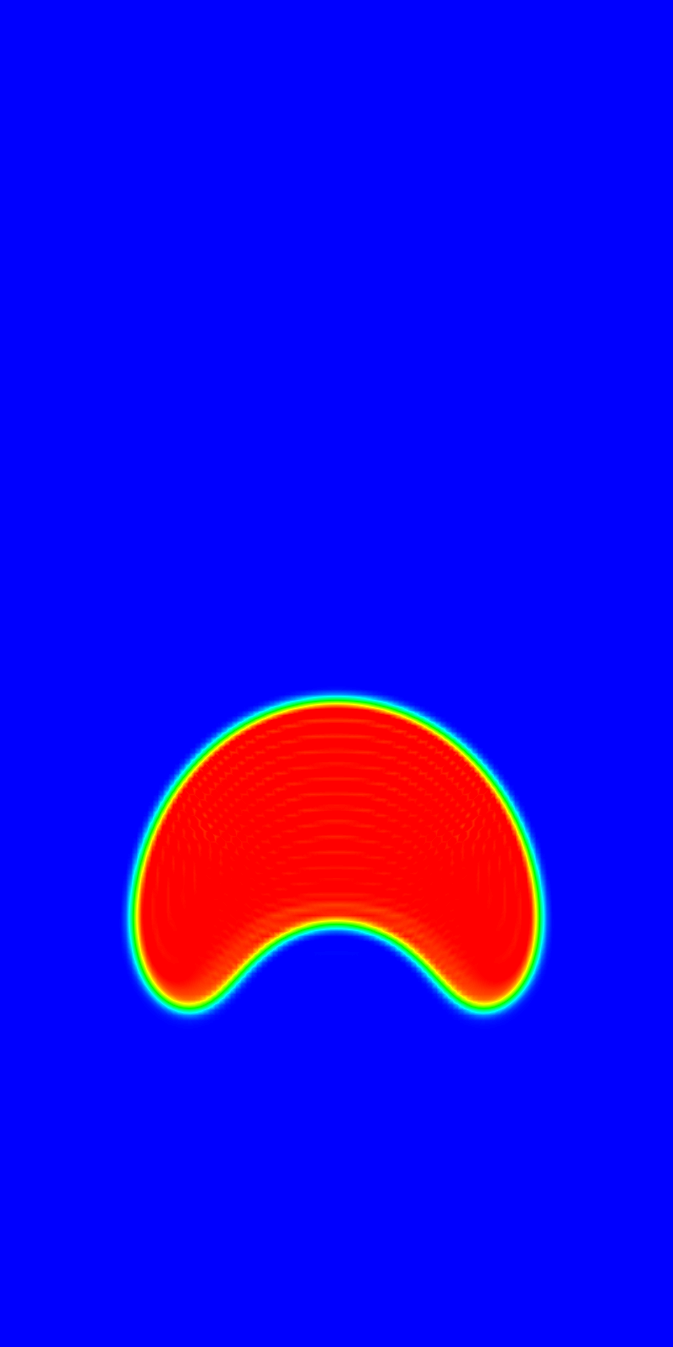}
%\caption{$t=1.2$}
\end{subfigure}
\begin{subfigure}{0.078\textwidth}
\centering
\includegraphics[width=1\textwidth]{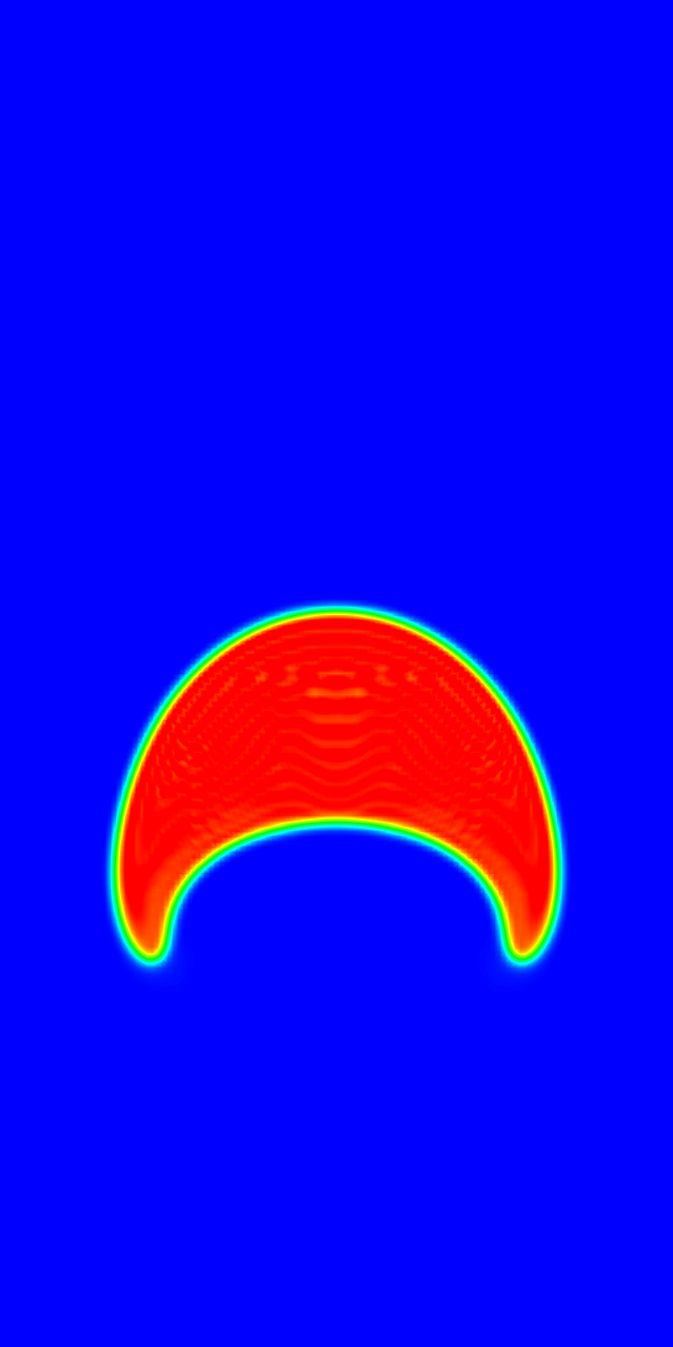}
%\caption{$t=1.8$}
\end{subfigure}
\begin{subfigure}{0.078\textwidth}
\centering
\includegraphics[width=1\textwidth]{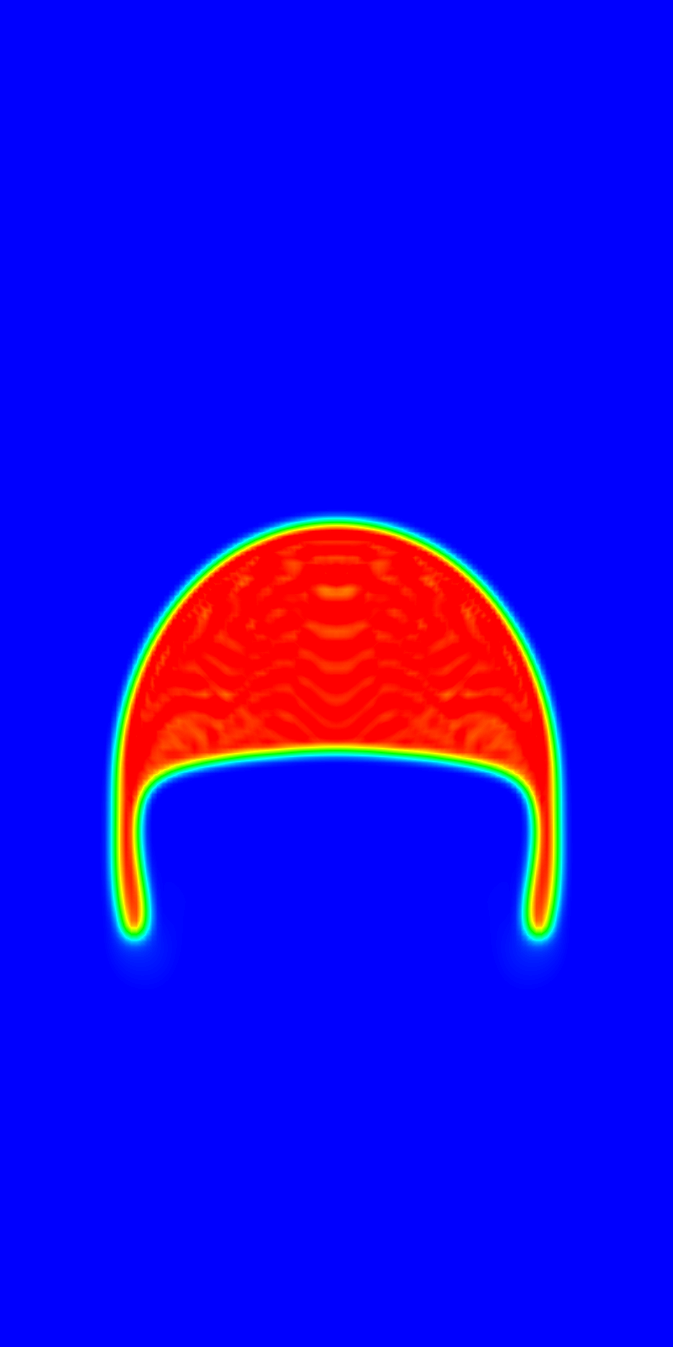}
%\caption{$t=2.4$}
\end{subfigure}
\begin{subfigure}{0.078\textwidth}
\centering
\includegraphics[width=1\textwidth]{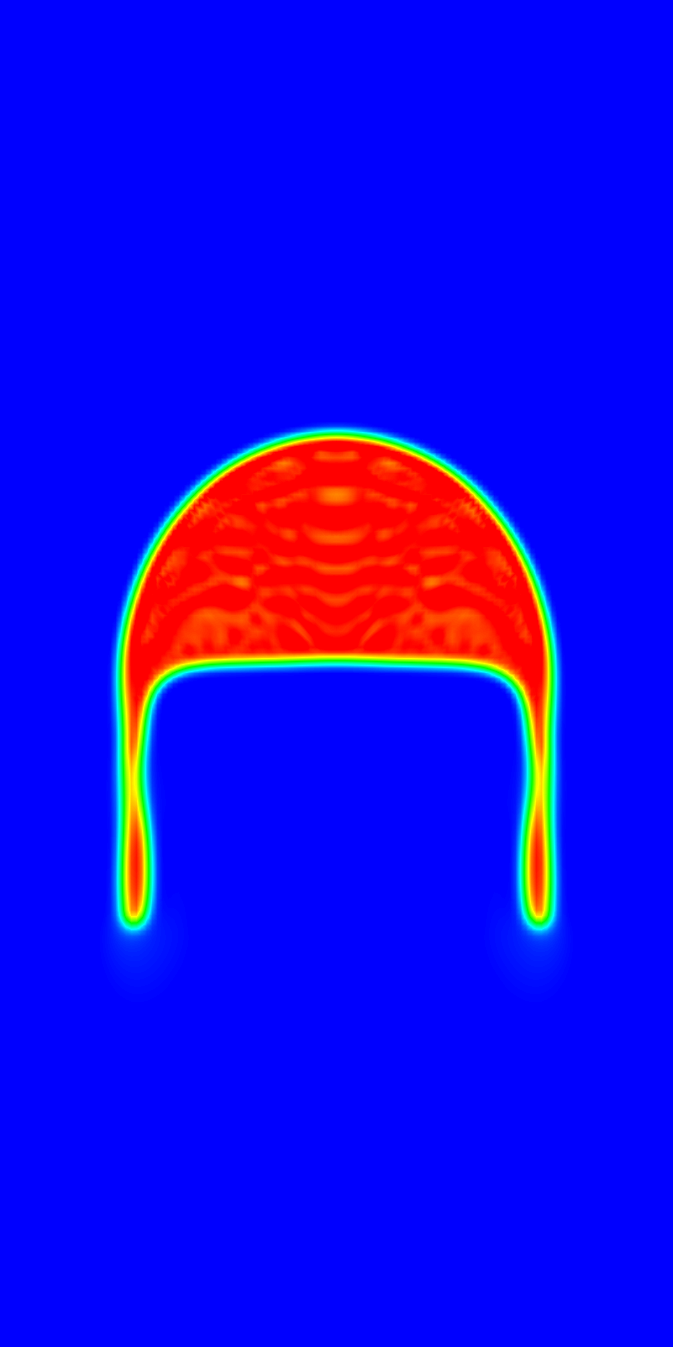}
%\caption{$t=3.0$}
\end{subfigure}
\begin{subfigure}{0.078\textwidth}
\centering
\includegraphics[width=1\textwidth]{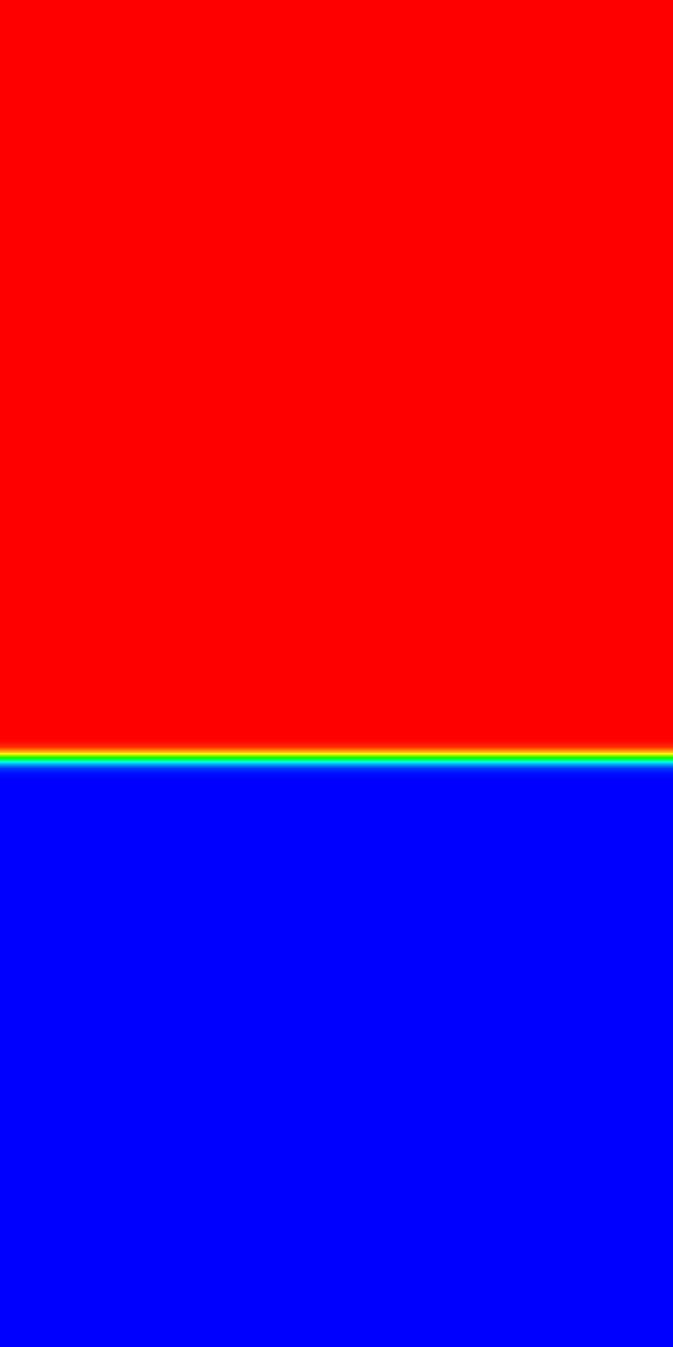}
%\caption{$t=0.0$}
\end{subfigure}
\begin{subfigure}{0.078\textwidth}
\centering
\includegraphics[width=1\textwidth]{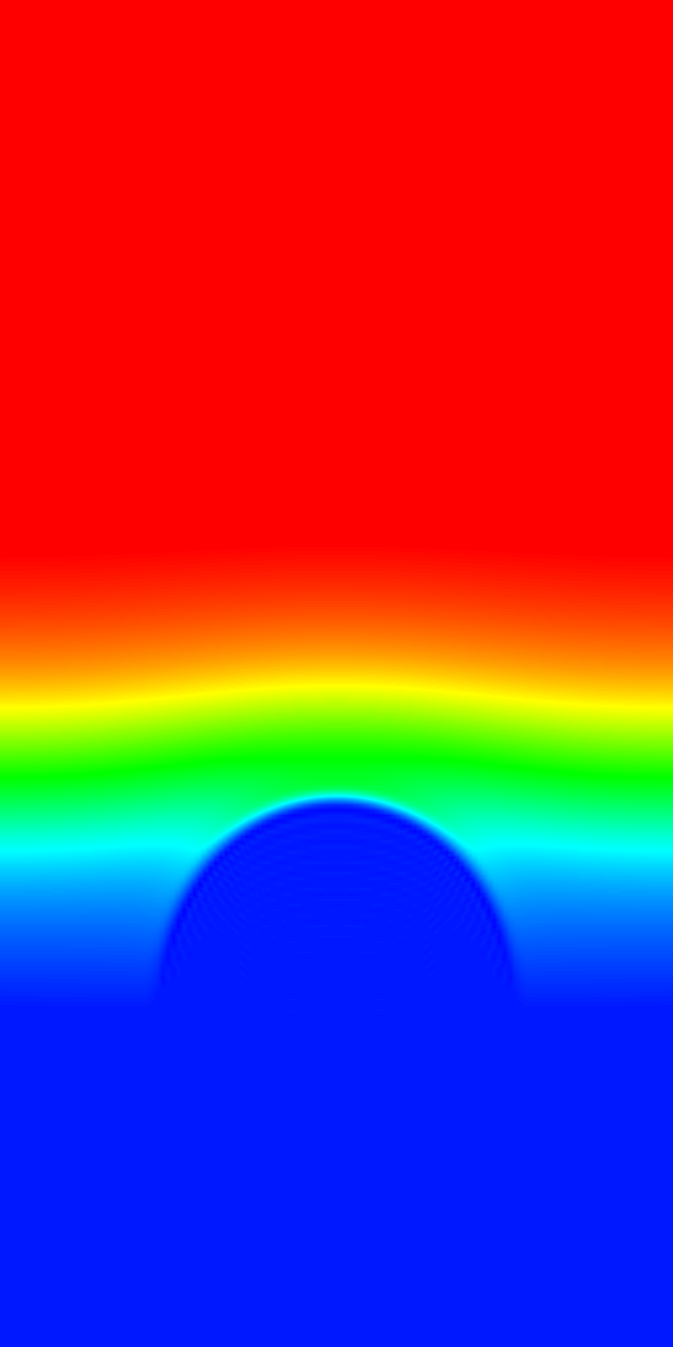}
%\caption{$t=0.6$}
\end{subfigure}
\begin{subfigure}{0.078\textwidth}
\centering
\includegraphics[width=1\textwidth]{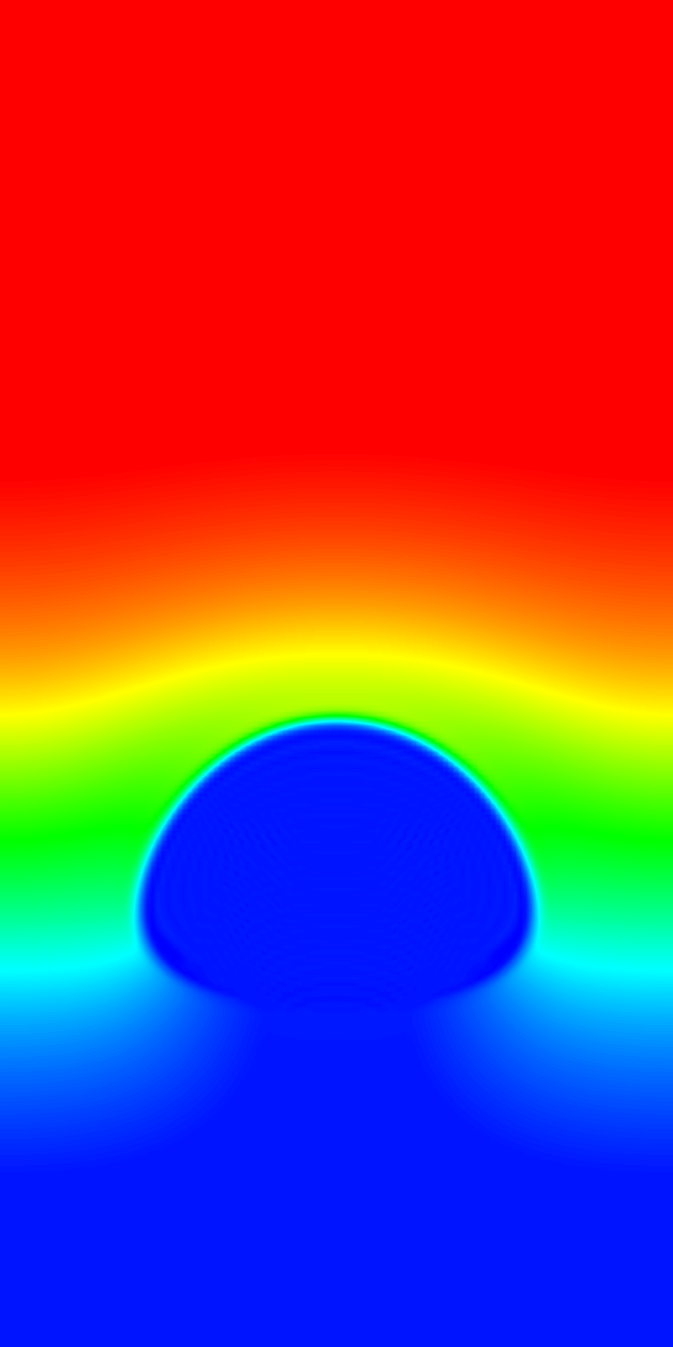}
%\caption{$t=1.2$}
\end{subfigure}
\begin{subfigure}{0.078\textwidth}
\centering
\includegraphics[width=1\textwidth]{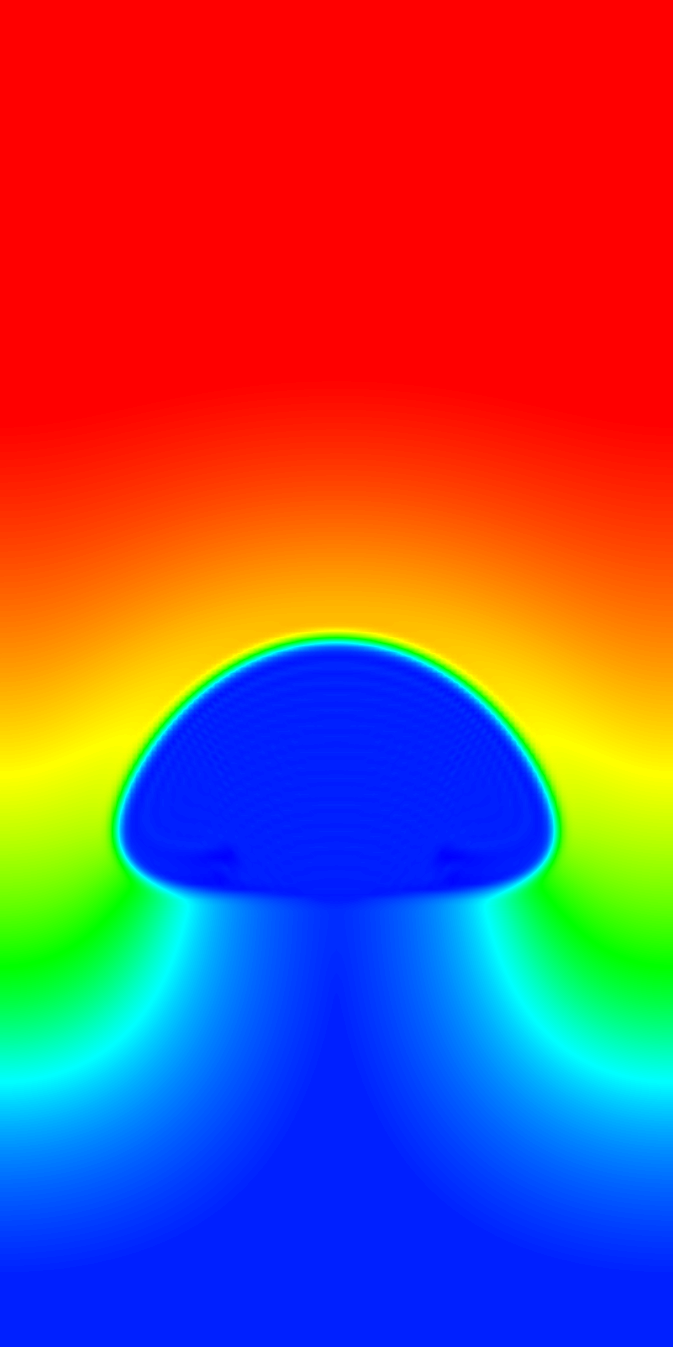}
%\caption{$t=1.8$}
\end{subfigure}
\begin{subfigure}{0.078\textwidth}
\centering
\includegraphics[width=1\textwidth]{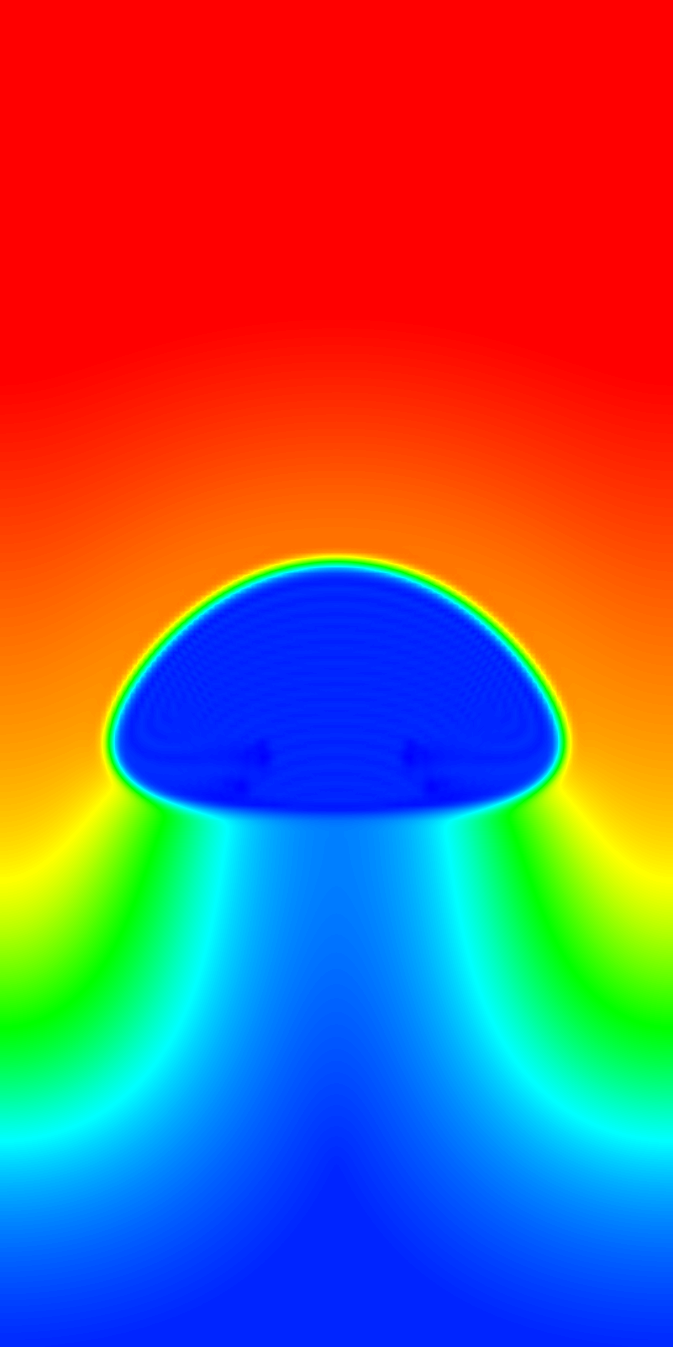}
%\caption{$t=2.4$}
\end{subfigure}
\begin{subfigure}{0.078\textwidth}
\centering
\includegraphics[width=1\textwidth]{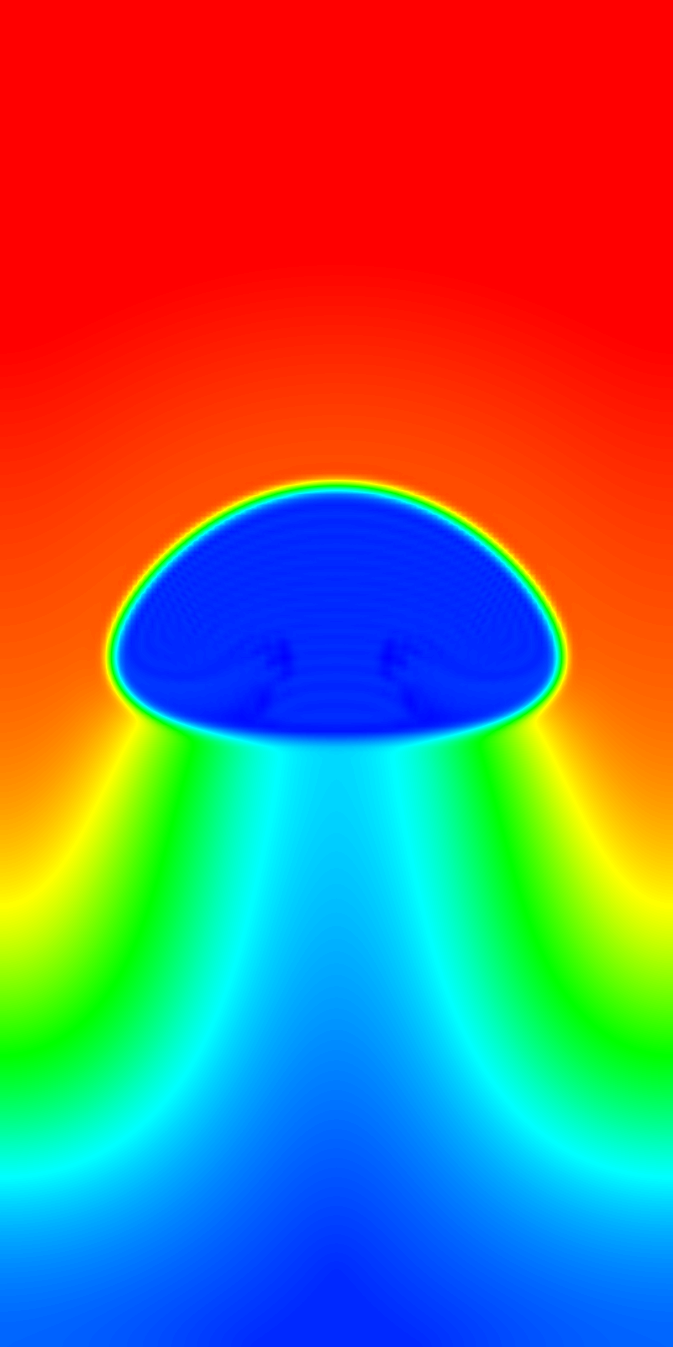}
%\caption{$t=3.0$}
\end{subfigure}
\begin{subfigure}{0.078\textwidth}
\centering
\includegraphics[width=1\textwidth]{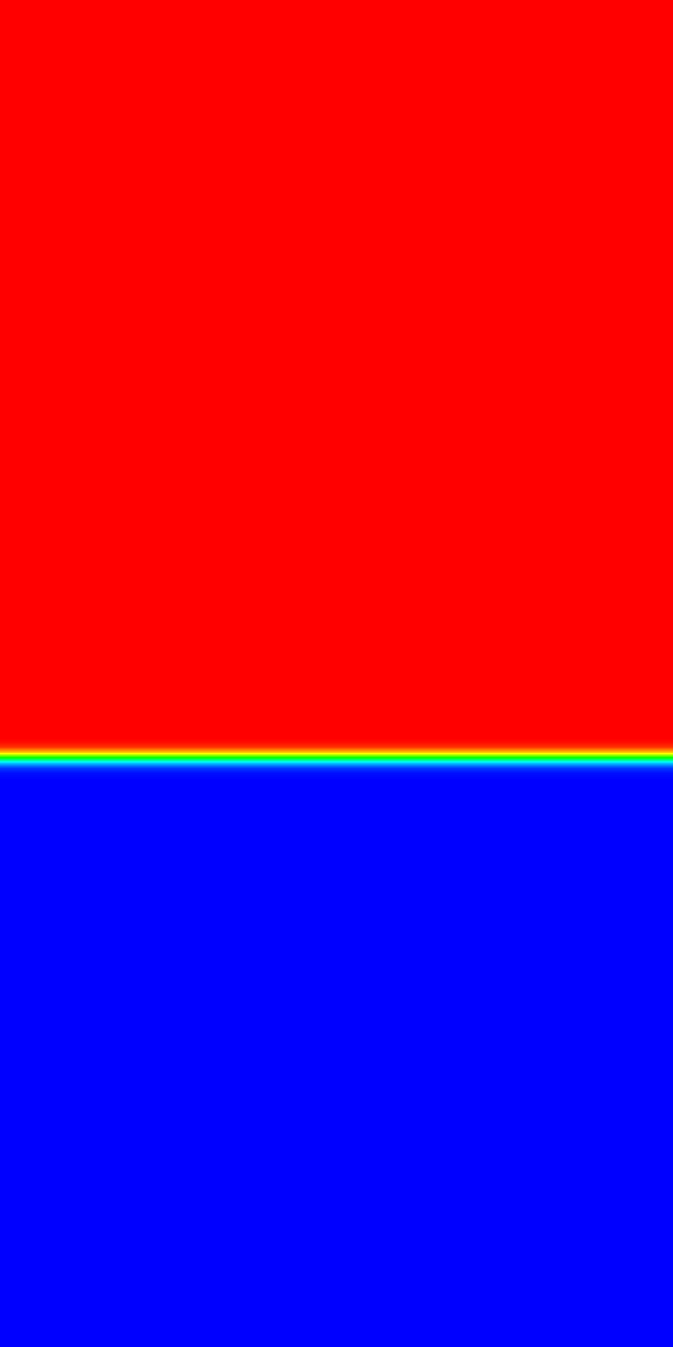}
%\caption{$t=0.0$}
\end{subfigure}
\begin{subfigure}{0.078\textwidth}
\centering
\includegraphics[width=1\textwidth]{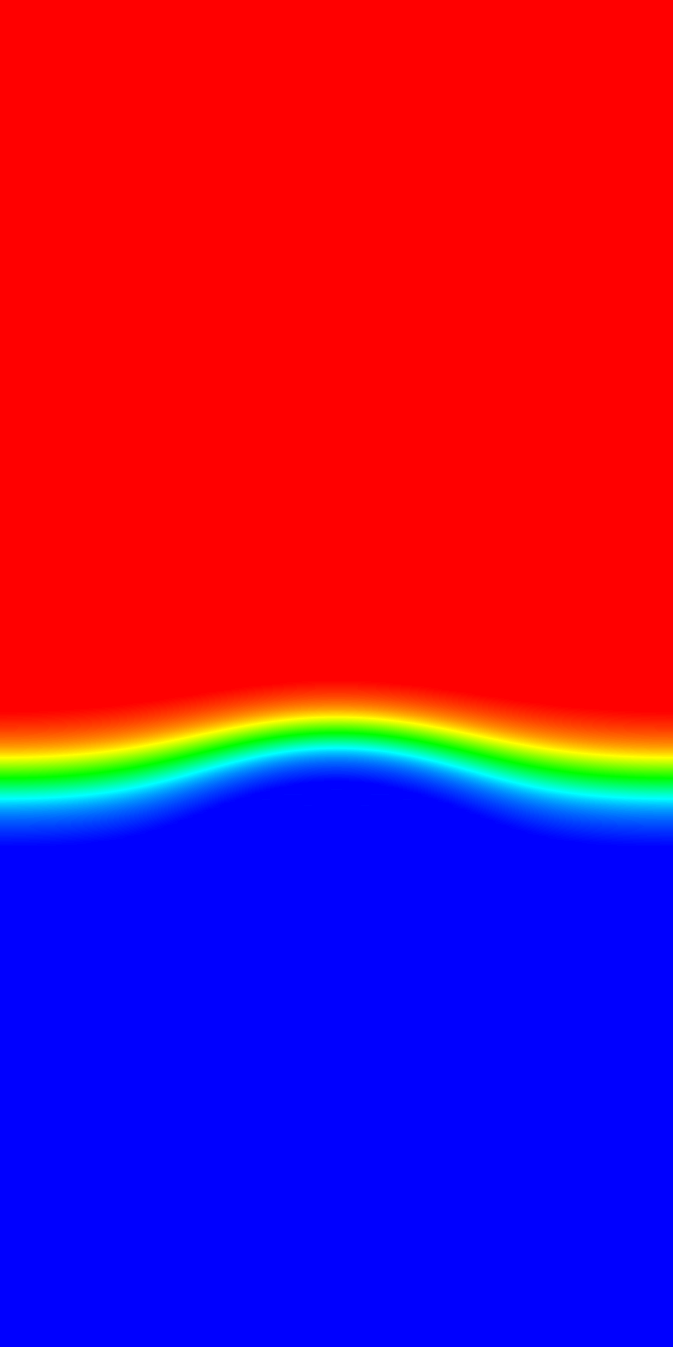}
%\caption{$t=0.6$}
\end{subfigure}
\begin{subfigure}{0.078\textwidth}
\centering
\includegraphics[width=1\textwidth]{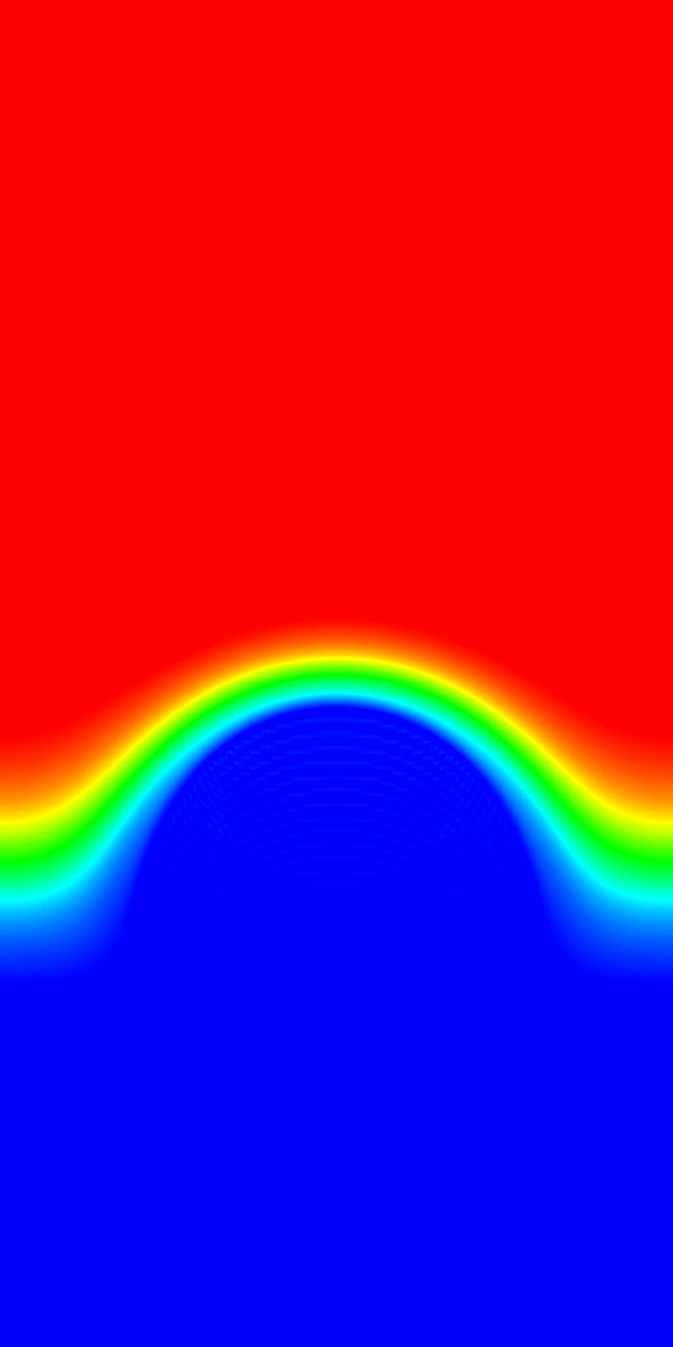}
%\caption{$t=1.2$}
\end{subfigure}
\begin{subfigure}{0.078\textwidth}
\centering
\includegraphics[width=1\textwidth]{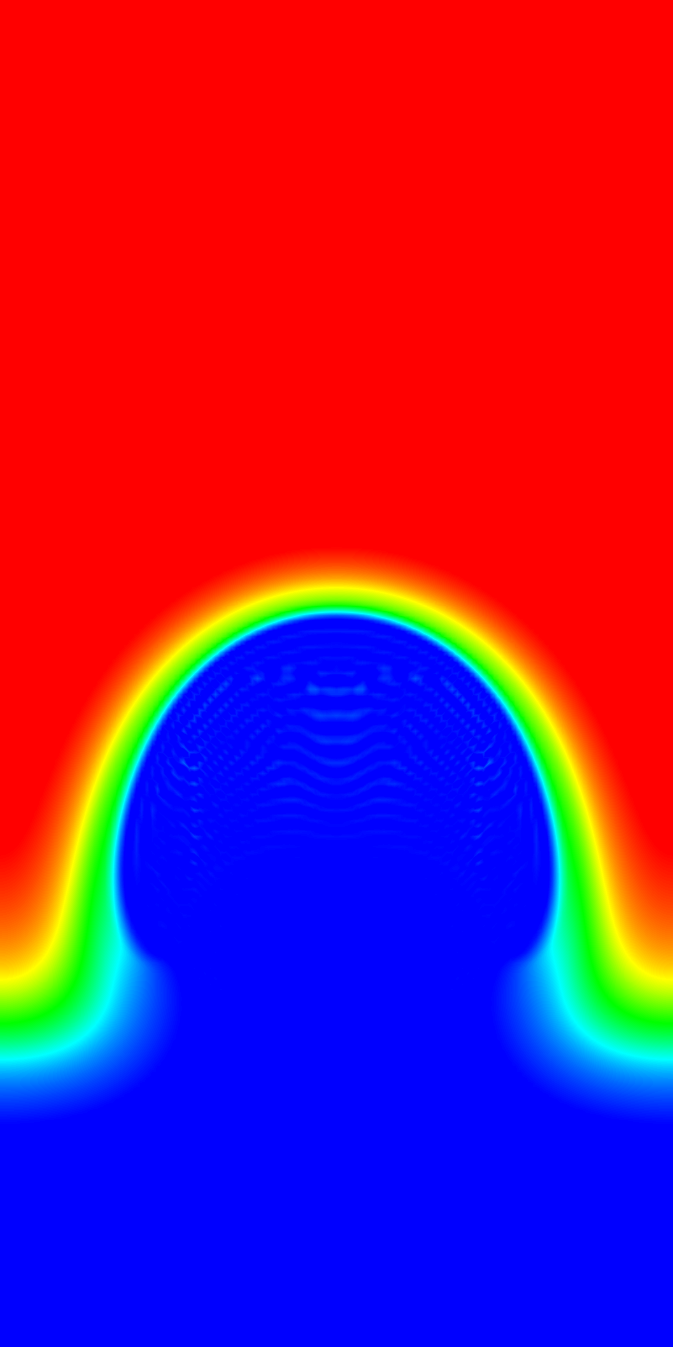}
%\caption{$t=1.8$}
\end{subfigure}
\begin{subfigure}{0.078\textwidth}
\centering
\includegraphics[width=1\textwidth]{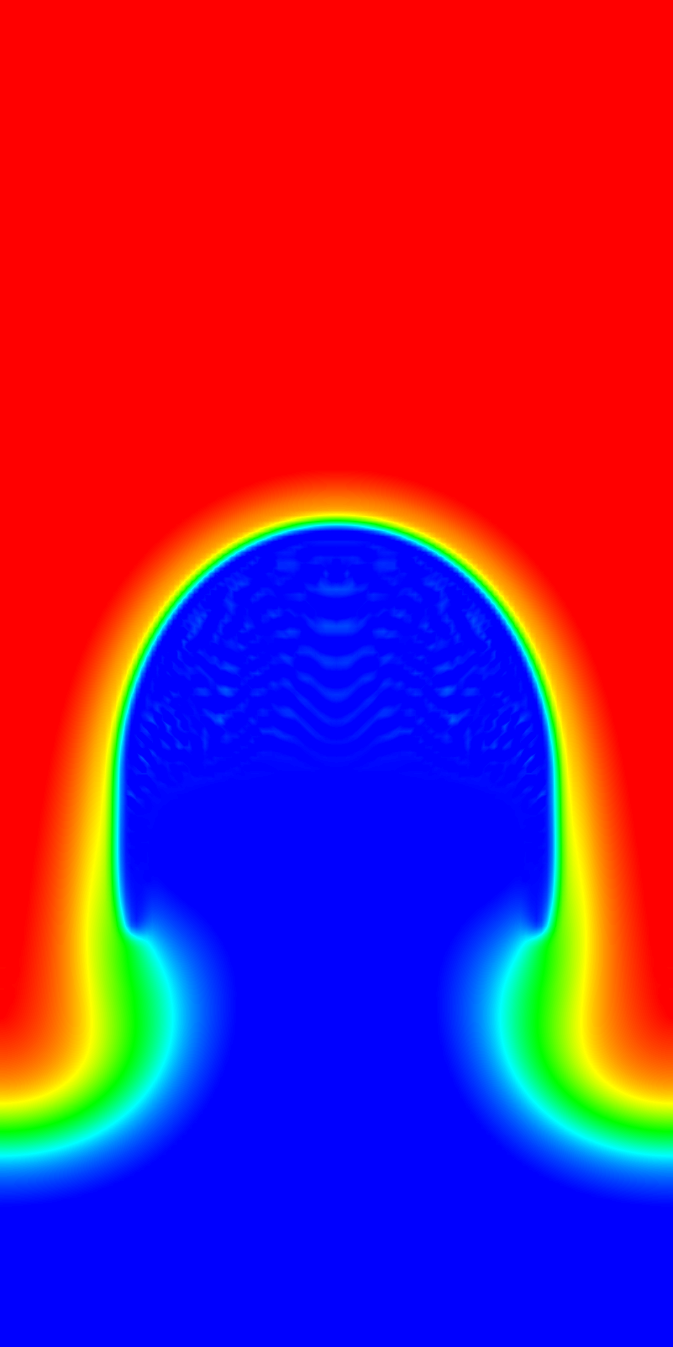}
%\caption{$t=2.4$}
\end{subfigure}
\begin{subfigure}{0.078\textwidth}
\centering
\includegraphics[width=1\textwidth]{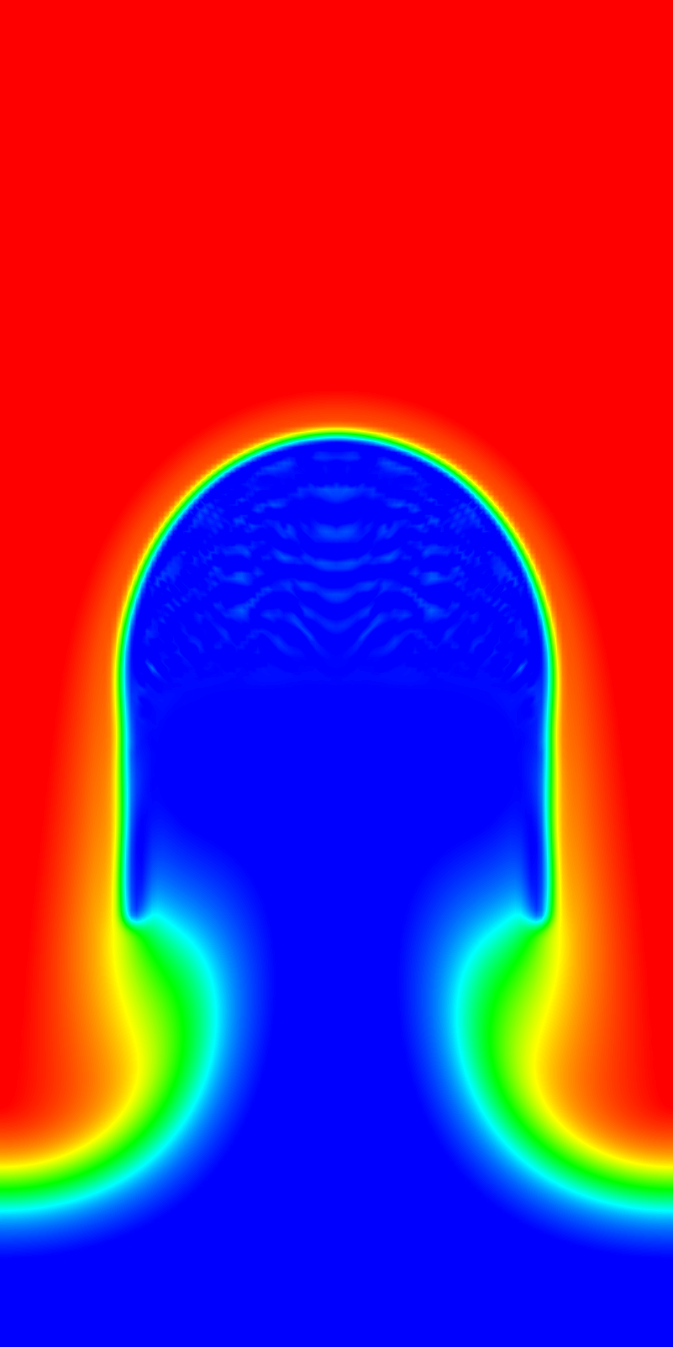}
%\caption{$t=3.0$}
\end{subfigure}
\caption{Mixture-aware simulations -- phases above each other. Case 1 (left) and Case 2 (right). Visualization of the phase fields (top to below) $\phi_1, \phi_2, \phi_3$ at times $t=0.0, 0.6, 1.2, 1.8, 2.4, 3.0$ (left to right).}
\label{fig: reduction case 1 2 - last}
\end{figure}

\subsection{Rising bubble through two stratified liquid layers}\label{sec:results:rising_bubble}

We consider the three-phase benchmark proposed by \cite{boyer2010cahn} for a gas bubble rising in a vertically
stratified configuration of two immiscible liquids. 
This benchmark combines (i) three distinct surface tensions, (ii) large density and viscosity ratios, and (iii) strong topological changes (interface approach, possible penetration, and entrainment). Initially, the heavy liquid (phase $2$) occupies the lower part of the domain and the light liquid (phase $3$) the upper part, separated by a planar interface. A spherical (axisymmetric) bubble of phase $1$ is placed in the heavy liquid below the interface and rises under gravity. 

We adopt the material parameters reported in \cite{boyer2010cahn}. The (target) surface tensions are
\begin{align}
  \gamma_{12}=\gamma_{13}=0.07~\mathrm{N\,m^{-1}},\qquad
  \gamma_{23}=0.05~\mathrm{N\,m^{-1}},
\end{align}
and the densities and viscosities are given in \cref{table: N=3}.
\begin{table}
\begin{center}
\begin{tabular}{lcc}
\hline
phase & density $\rho$ [$\mathrm{kg\,m^{-3}}$] & viscosity $\eta$ [$\mathrm{Pa\,s}$] \\
\hline
bubble ($\phi_1$) & $1$ & $10^{-4}$ \\
heavy liquid ($\phi_2$) & $1200$ & $0.15$ \\
light liquid ($\phi_3$) & $1000$ & $0.10$ \\
\hline
\end{tabular}
\end{center}
\caption{Rising bubble through two stratified liquid layers. Density and viscosity values.}
\label{table: N=3}
\end{table}
We use gravitational acceleration $\mathbf{g}=-g\,\mathbf{e}_y$ with $g=9.81~\mathrm{m\,s^{-2}}$.

The computations are performed in an axisymmetric mesh with $64 \times 480$ elements with $\epsilon = h$. The initial condition consists of (i) a diffuse but nearly planar liquid--liquid interface (transition between
phases $2$ and $3$) and (ii) a diffuse spherical bubble of radius $r$ (phase $1$) embedded in phase $2$ below the interface.
No-penetration boundary conditions are imposed on the velocity on the outer boundary, and homogeneous Neumann (no-flux) boundary conditions are imposed for the phase fields and chemical potentials (i.e. no diffusive mass flux through the boundary).

\begin{figure}
\captionsetup[subfigure]{justification=centering}
\begin{subfigure}{0.245\textwidth}
\centering
\includegraphics[width=\textwidth]{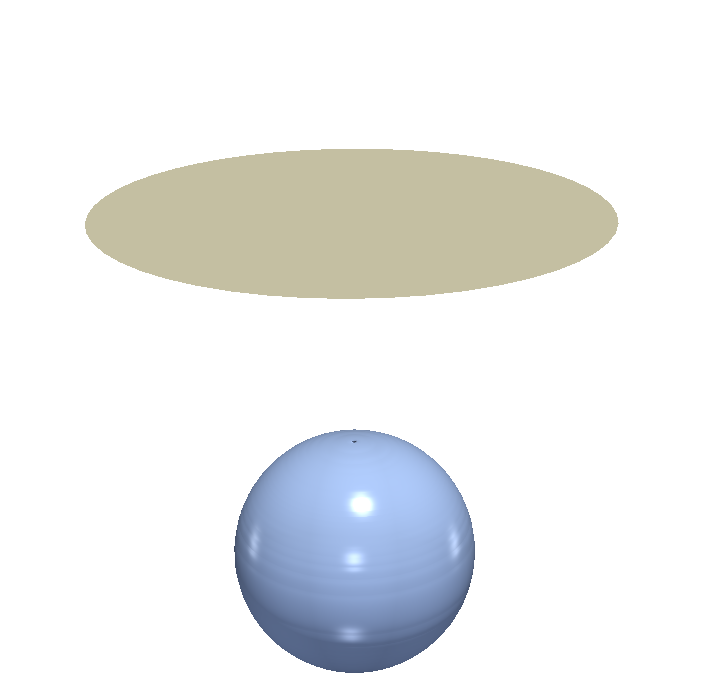}
\caption{$t=0.0$}
\end{subfigure}
\begin{subfigure}{0.245\textwidth}
\centering
\includegraphics[width=\textwidth]{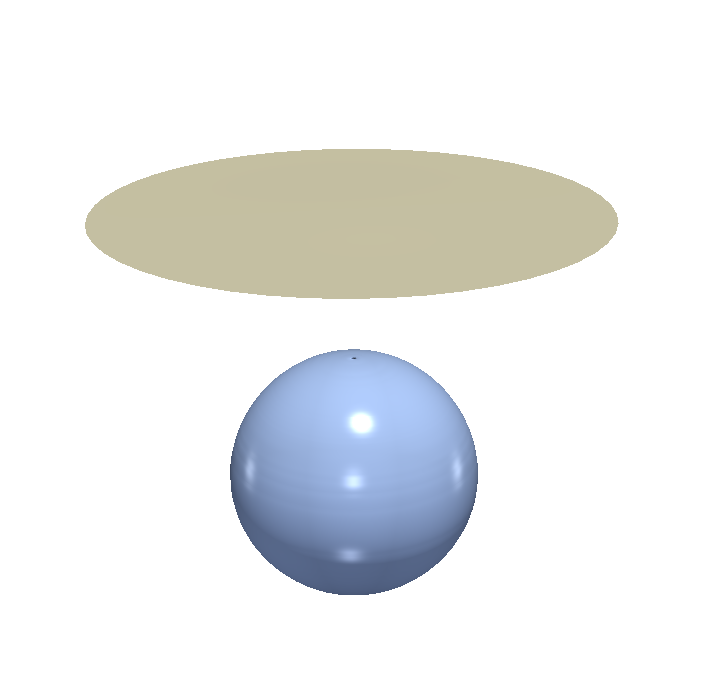}
\caption{$t=0.05$}
\end{subfigure}
\begin{subfigure}{0.245\textwidth}
\centering
\includegraphics[width=\textwidth]{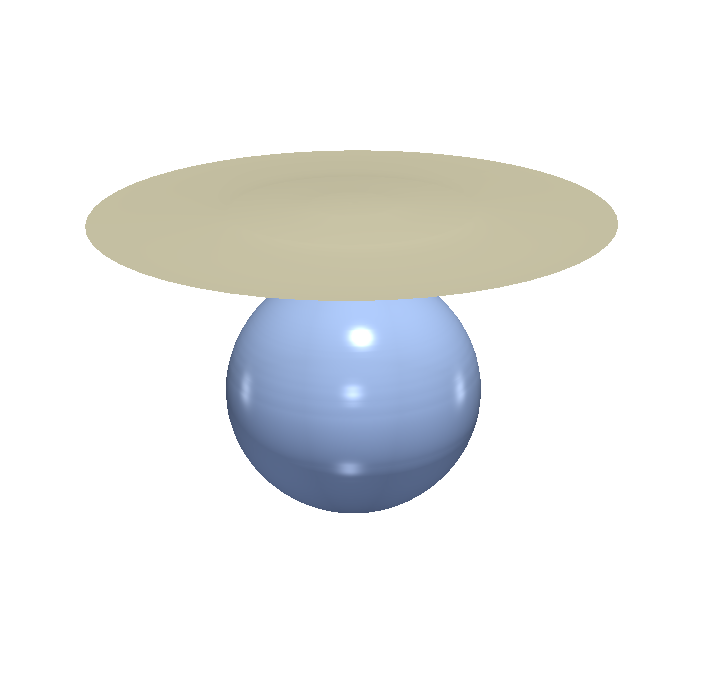}
\caption{$t=0.10$}
\end{subfigure}
\begin{subfigure}{0.245\textwidth}
\centering
\includegraphics[width=\textwidth]{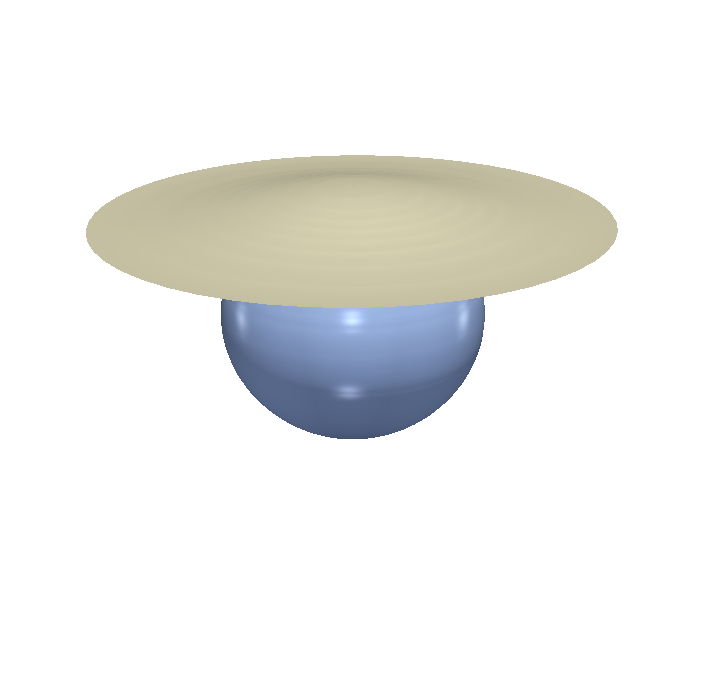}
\caption{$t=0.15$}
\end{subfigure}
\begin{subfigure}{0.245\textwidth}
\centering
\includegraphics[width=\textwidth]{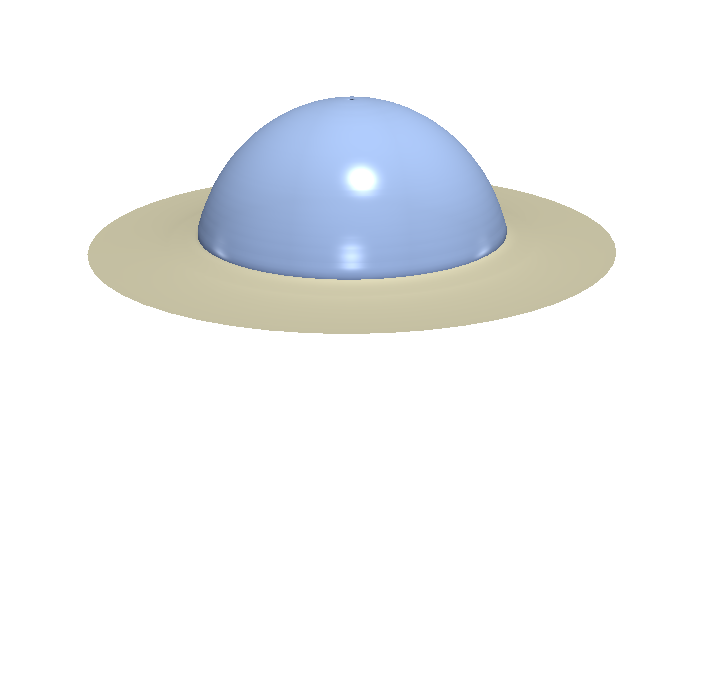}
\caption{$t=0.20$}
\end{subfigure}
\begin{subfigure}{0.245\textwidth}
\centering
\includegraphics[width=\textwidth]{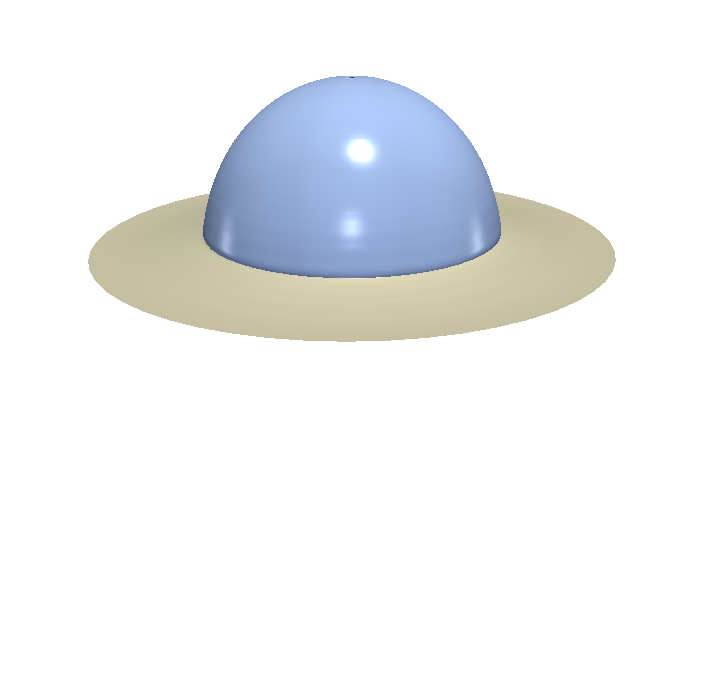}
\caption{$t=0.25$}
\end{subfigure}
\begin{subfigure}{0.245\textwidth}
\centering
\includegraphics[width=\textwidth]{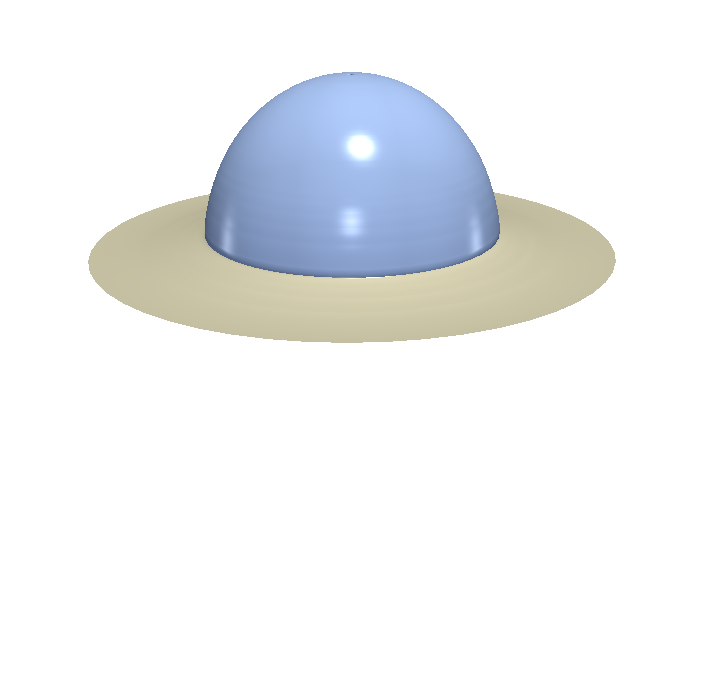}
\caption{$t=0.30$}
\end{subfigure}
\begin{subfigure}{0.245\textwidth}
\centering
\includegraphics[width=\textwidth]{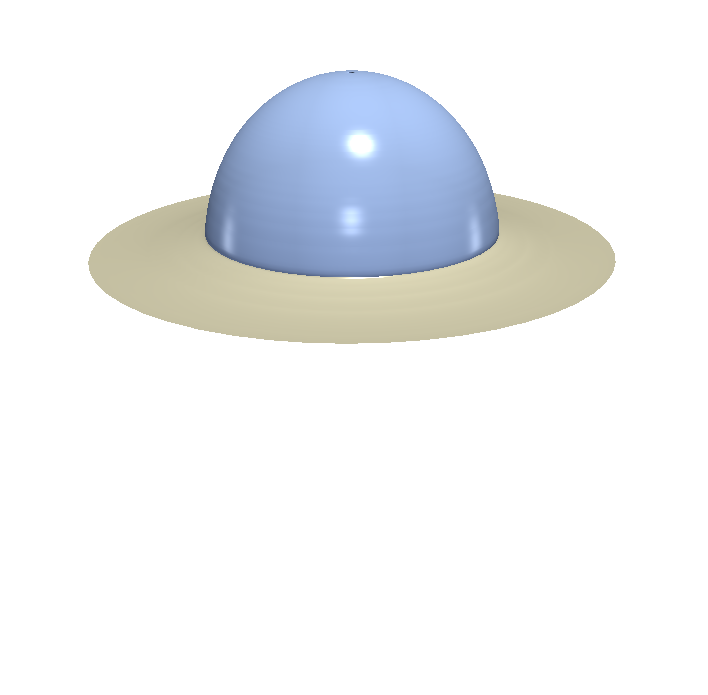}
\caption{$t=0.35$}
\end{subfigure}
\caption{Rising bubble in two stratified liquid layers. Time snapshots of the phase fields (iso-contours) for $r=r_1>r_p$.}
\label{fig: no penetration case}
\end{figure}

Depending on its size, the bubble either remains captured in the liquid--liquid interface or penetrates into the upper (lighter) liquid layer. The criterion of 
\cite{greene1991bubble,greene1988onset} states that the bubble penetrates the interface if its volume $V$ exceeds a critical value $V_p$, given by 
\begin{align}\label{eq:Vp_criterion}
  V_p =
  \left(
    2\pi\left(\frac{3}{4\pi}\right)^{1/3}
    \frac{\sigma_{23}}{(\rho_3-\rho_1)g}
  \right)^{3/2}
  \approx 8.87\times 10^{-8}\ \mathrm{m^3},
\end{align}
corresponding to a critical radius $r_p\approx 2.76\times 10^{-3}~\mathrm{m}$. This criterion is experimentally validated in \cite{greene1991bubble,greene1988onset}. In the present work we reproduce this behavior by considering two representative bubble radii,
\begin{align}
  r_1 = 2.0\times 10^{-3}~\mathrm{m}\quad(<r_p), \qquad
  r_2 = 2.9\times 10^{-3}~\mathrm{m}\quad(>r_p),
\end{align}
matching the reference study.

Figures~\ref{fig: no penetration case}-\ref{fig: penetration case} show time snapshots of the phase distribution for the two radii.
For the smaller bubble ($r=r_1$), the bubble rises until it reaches the liquid--liquid interface, where it deforms the
interface but remains trapped (no passage into the upper layer). For the larger bubble ($r=r_2$), the capillary barrier is overcome and the bubble penetrates into the light liquid, possibly
pulling a thin column (or filament) of heavy liquid upward during the penetration stage. This qualitative transition is
consistent with the penetration criterion \eqref{eq:Vp_criterion} and the numerical observations reported in \cite{boyer2010cahn}.

\begin{figure}
\captionsetup[subfigure]{justification=centering}
\begin{subfigure}{0.245\textwidth}
\centering
\includegraphics[width=\textwidth]{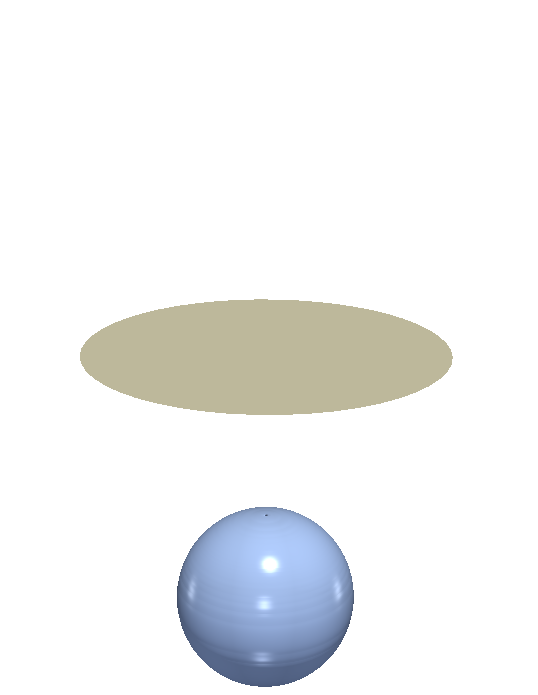}
\caption{$t=0.0$}
\end{subfigure}
\begin{subfigure}{0.245\textwidth}
\centering
\includegraphics[width=\textwidth]{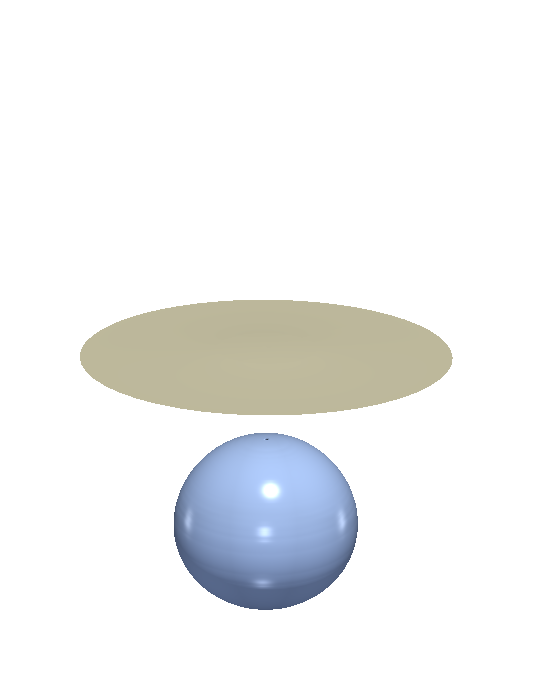}
\caption{$t=0.05$}
\end{subfigure}
\begin{subfigure}{0.245\textwidth}
\centering
\includegraphics[width=\textwidth]{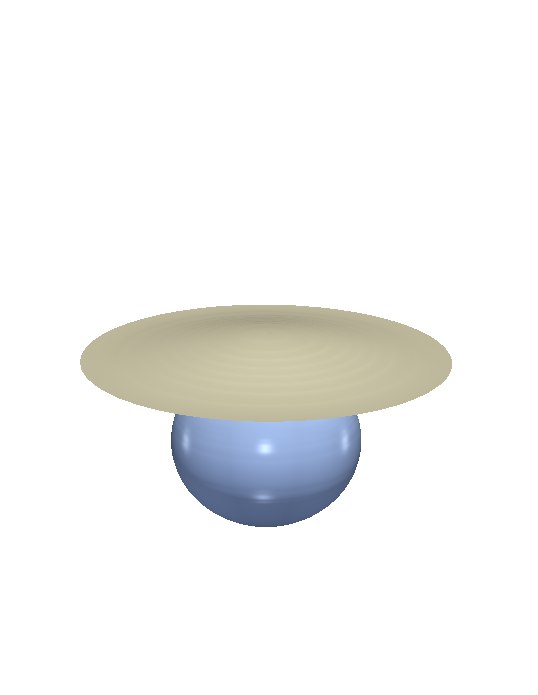}
\caption{$t=0.10$}
\end{subfigure}
\begin{subfigure}{0.245\textwidth}
\centering
\includegraphics[width=\textwidth]{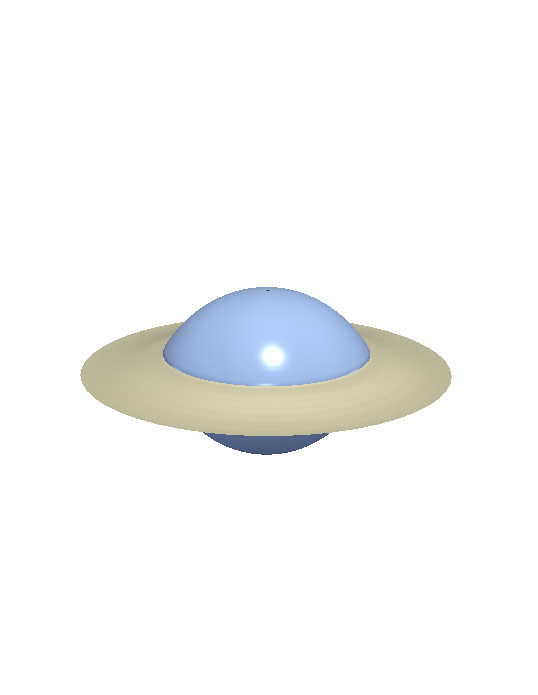}
\caption{$t=0.15$}
\end{subfigure}
\begin{subfigure}{0.245\textwidth}
\centering
\includegraphics[width=\textwidth]{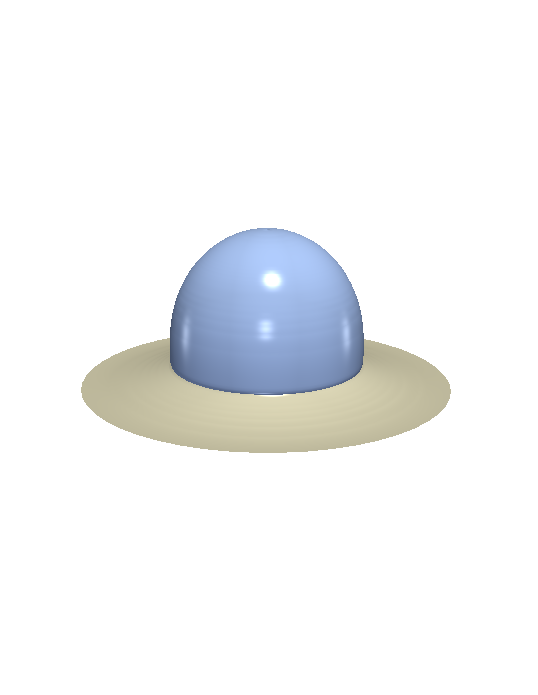}
\caption{$t=0.20$}
\end{subfigure}
\begin{subfigure}{0.245\textwidth}
\centering
\includegraphics[width=\textwidth]{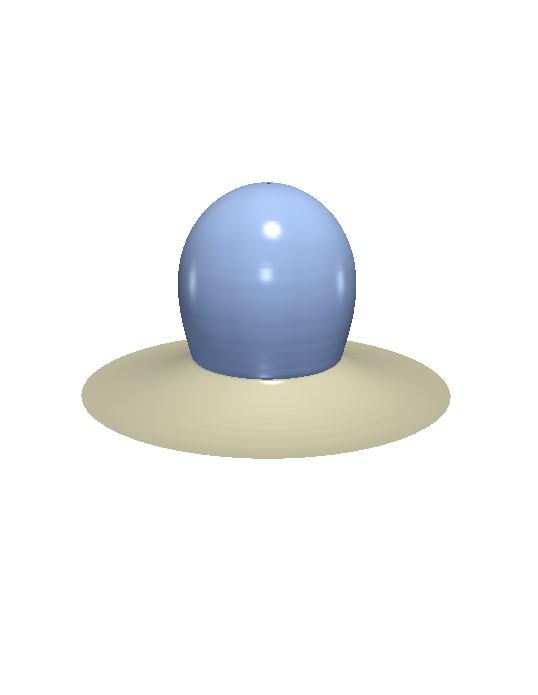}
\caption{$t=0.25$}
\end{subfigure}
\begin{subfigure}{0.245\textwidth}
\centering
\includegraphics[width=\textwidth]{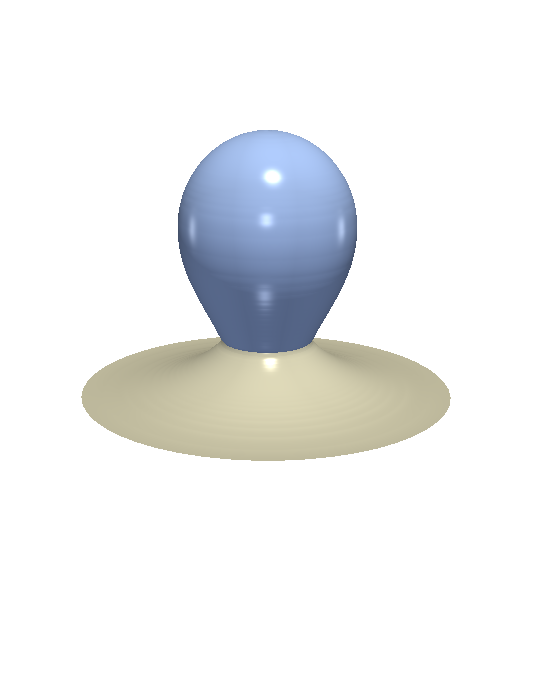}
\caption{$t=0.30$}
\end{subfigure}
\begin{subfigure}{0.245\textwidth}
\centering
\includegraphics[width=\textwidth]{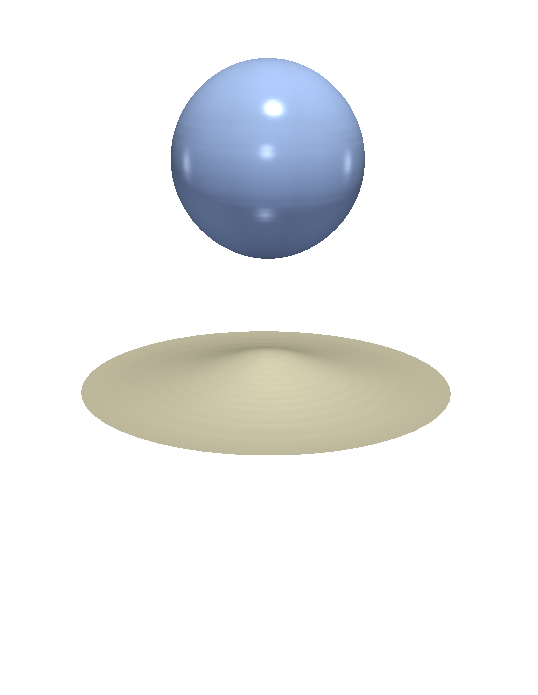}
\caption{$t=0.35$}
\end{subfigure}
\caption{Rising bubble in two stratified liquid layers. Time snapshots of the phase fields (iso-contours) for $r=r_2>r_p$.}
\label{fig: penetration case}
\end{figure}

\subsection{Quaternary droplet-bubble simulation}\label{subsec: Quaternary rising bubble simulation}
Finally, we demonstrate the capabilities of the framework for the $N=4$ case. Now we consider both a falling droplet (phase $4$) and a rising gas bubble (phase $2$) in a vertically
stratified configuration of immiscible liquids (phases $1$ and $3$). Initially, the heavy liquid (phase $1$) occupies the lower part of the domain and the lighter liquid (phase $3$) the upper part, separated by a planar interface. A spherical bubble of phase $2$ is placed in the heavy liquid below the interface and rises, and a spherical droplet of phase $4$ is placed in the lighter liquid above the interface.

We adopt the material parameters similar to those in \cref{sec:results:rising_bubble}. The (target) surface tensions are
\begin{align}
  \gamma_{12}=0.07~\mathrm{N\,m^{-1}}, \gamma_{13}=0.1~\mathrm{N\,m^{-1}},  \gamma_{14}=0.06~\mathrm{N\,m^{-1}},\nn\\
  \gamma_{23}=0.09~\mathrm{N\,m^{-1}},
  \gamma_{24}=0.05~\mathrm{N\,m^{-1}},
  \gamma_{34}=0.09~\mathrm{N\,m^{-1}},
\end{align}
and the densities and viscosities are given in \cref{table N4}.
Again, we use gravitational acceleration $\mathbf{g}=-g\,\mathbf{e}_y$ with $g=9.81~\mathrm{m\,s^{-2}}$.
\begin{table}
\begin{center}
\begin{tabular}{lcc}
\hline
phase & density $\rho$ [$\mathrm{kg\,m^{-3}}$] & viscosity $\eta$ [$\mathrm{Pa\,s}$] \\
\hline
lower heavy phase ($\phi_1$) & $1000$ & $0.1$ \\
bubble in lower heavy phase ($\phi_2$) & $1$ & $10^{-4}$ \\
upper light gas ($\phi_3$) & $1$ & $10^{-4}$ \\
upper heavy droplet ($\phi_4$) & $1200$ & $0.15$ \\
\hline
\end{tabular}
\end{center}
\caption{Quaternary droplet-bubble simulation. Density and viscosity values.}
\label{table N4}
\end{table}
We perform the computations in an axisymmetric geometry of $64\times 256$ elements. We impose no-penetration boundary conditions on the velocity on the outer boundary, and homogeneous Neumann (no-flux) boundary conditions for the phase fields and chemical potentials (i.e. no diffusive mass flux through the boundary). 

Figures~\ref{fig: 4phase 1}-\ref{fig: 4phase 2} show time snapshots of the phase distributions. The bubble ($r=3.48\times 10^{-3}~\mathrm{m}$) rises until it reaches the $1$-$3$ interface, where it deforms the interface but remains trapped (no passage into the upper layer). On the other hand, the droplet ($r=3.19\times 10^{-3}~\mathrm{m}$) falls toward the $1$--$3$ interface, deforms it, and also remains trapped.

\begin{figure}
\captionsetup[subfigure]{justification=centering}
\begin{subfigure}{0.245\textwidth}
\centering
\includegraphics[width=\textwidth]{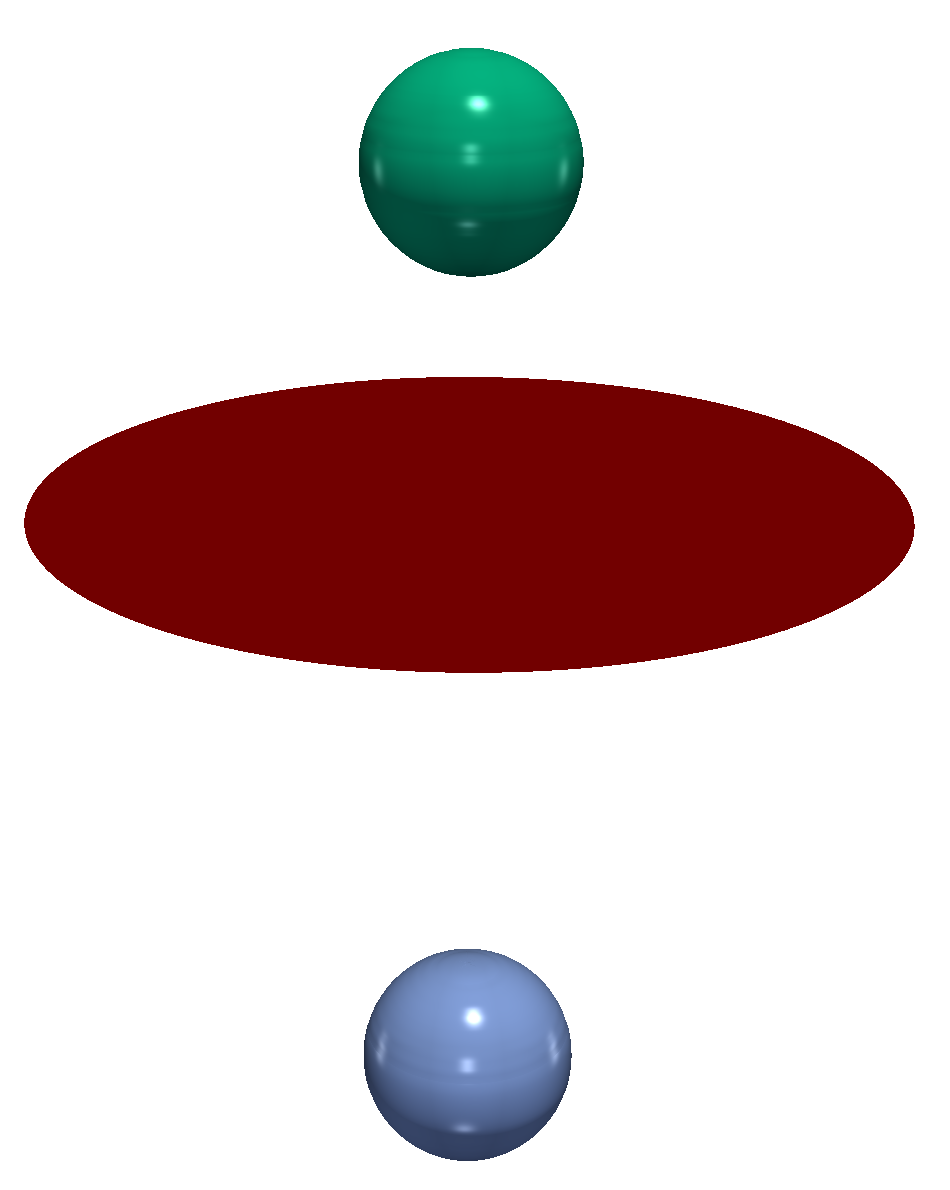}
\caption{$t=0.0$}
\end{subfigure}
\begin{subfigure}{0.245\textwidth}
\centering
\includegraphics[width=\textwidth]{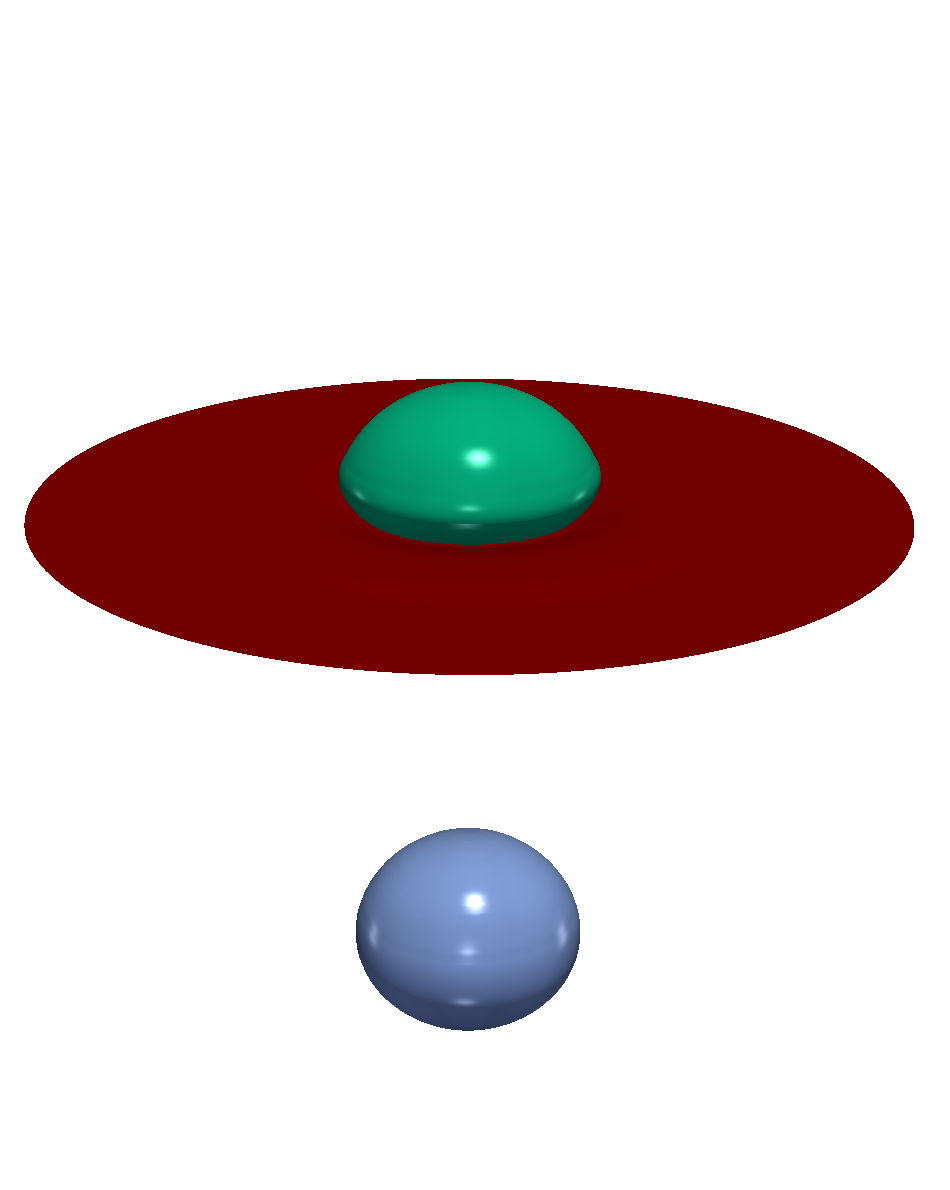}
\caption{$t=0.05$}
\end{subfigure}
\begin{subfigure}{0.245\textwidth}
\centering
\includegraphics[width=\textwidth]{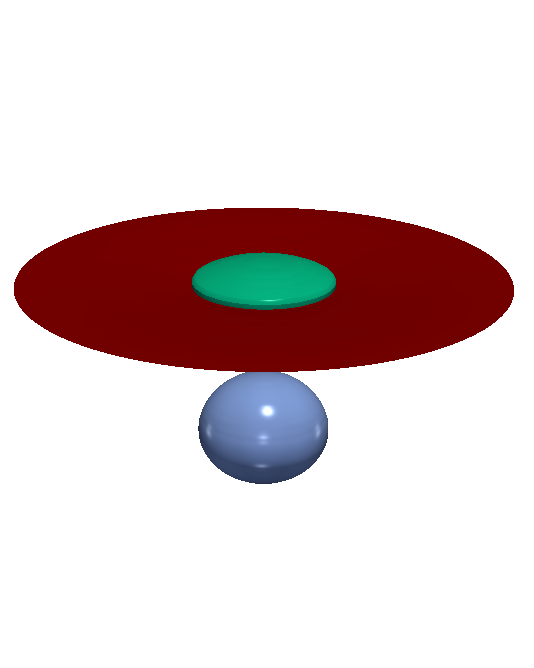}
\caption{$t=0.10$}
\end{subfigure}
\begin{subfigure}{0.245\textwidth}
\centering
\includegraphics[width=\textwidth]{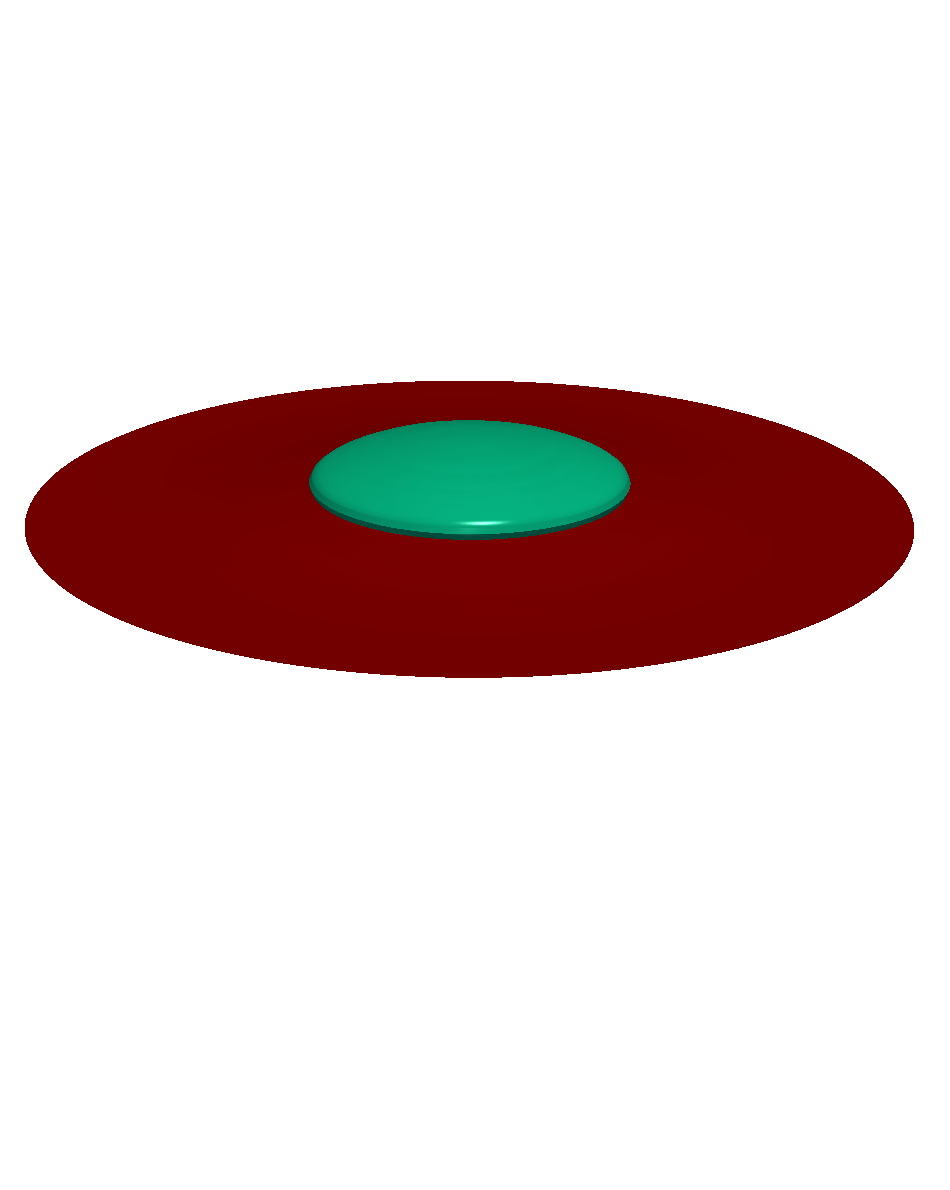}
\caption{$t=0.15$}
\end{subfigure}
\begin{subfigure}{0.245\textwidth}
\centering
\includegraphics[width=\textwidth]{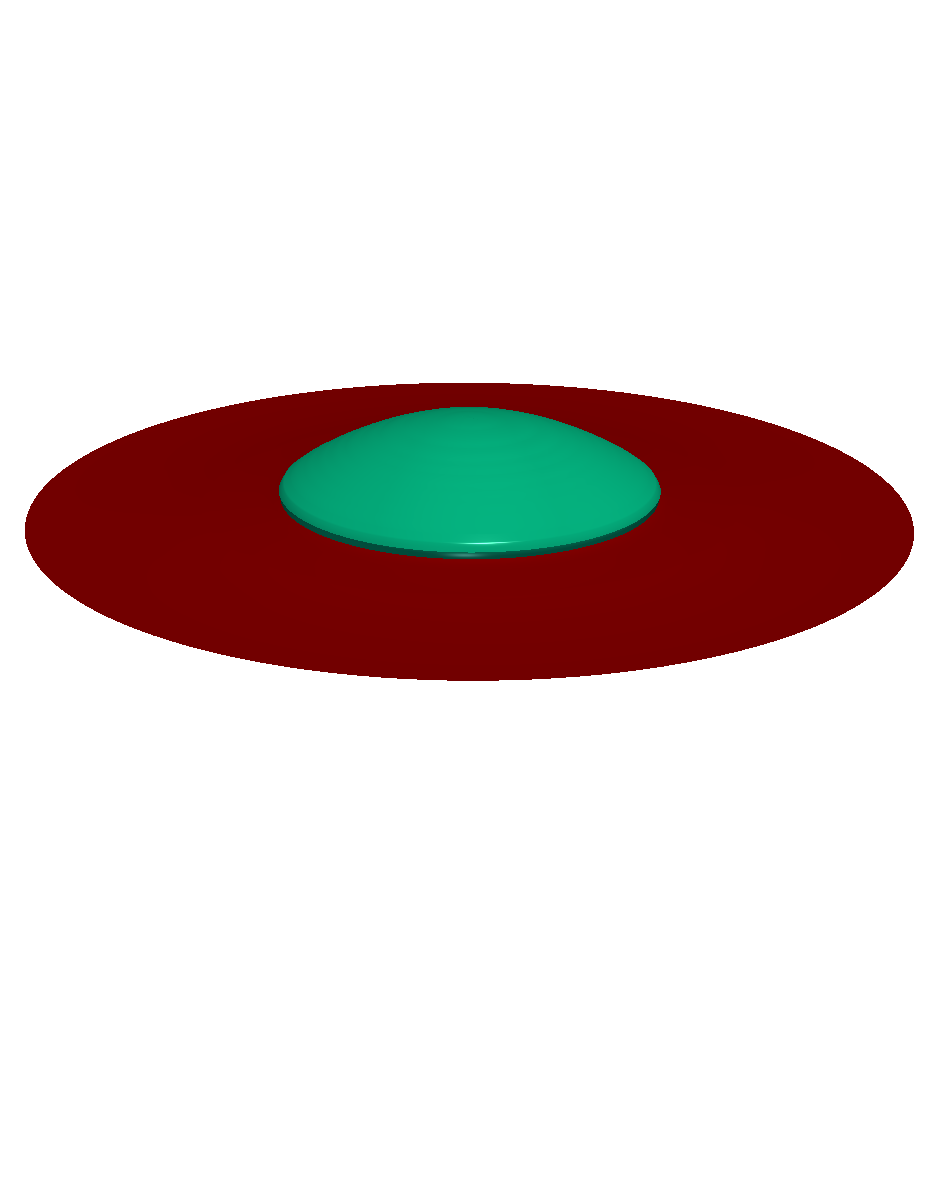}
\caption{$t=0.20$}
\end{subfigure}
\begin{subfigure}{0.245\textwidth}
\centering
\includegraphics[width=\textwidth]{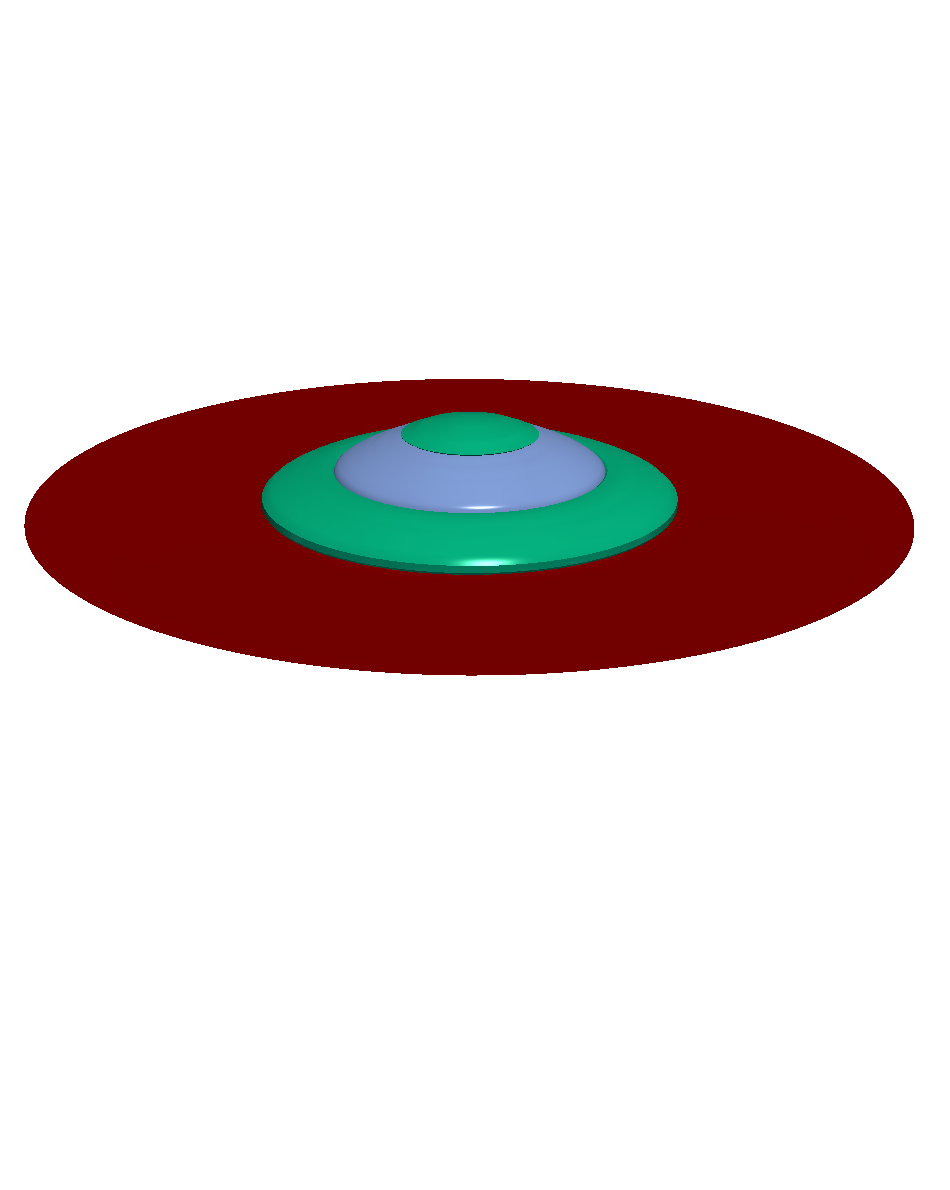}
\caption{$t=0.25$}
\end{subfigure}
\begin{subfigure}{0.245\textwidth}
\centering
\includegraphics[width=\textwidth]{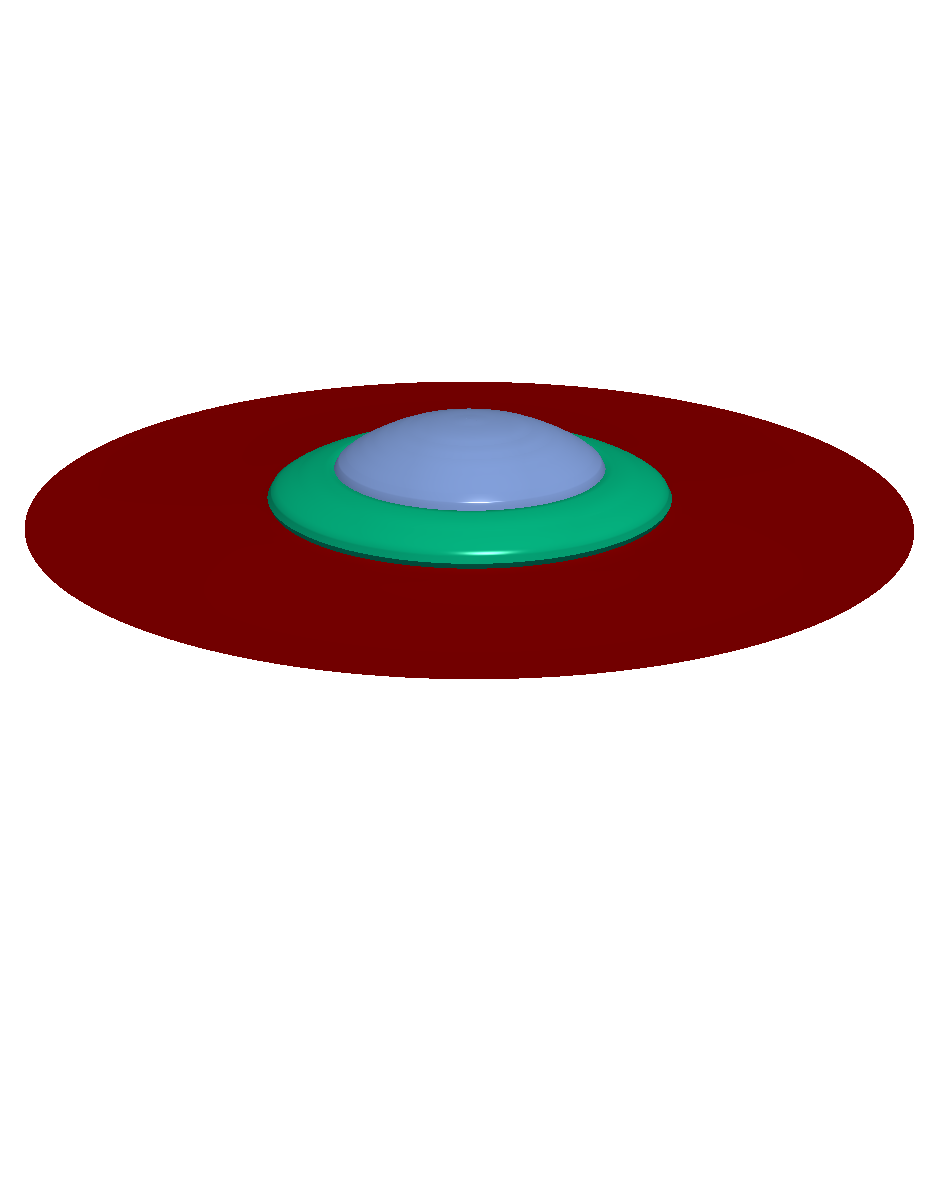}
\caption{$t=0.30$}
\end{subfigure}
\begin{subfigure}{0.245\textwidth}
\centering
\includegraphics[width=\textwidth]{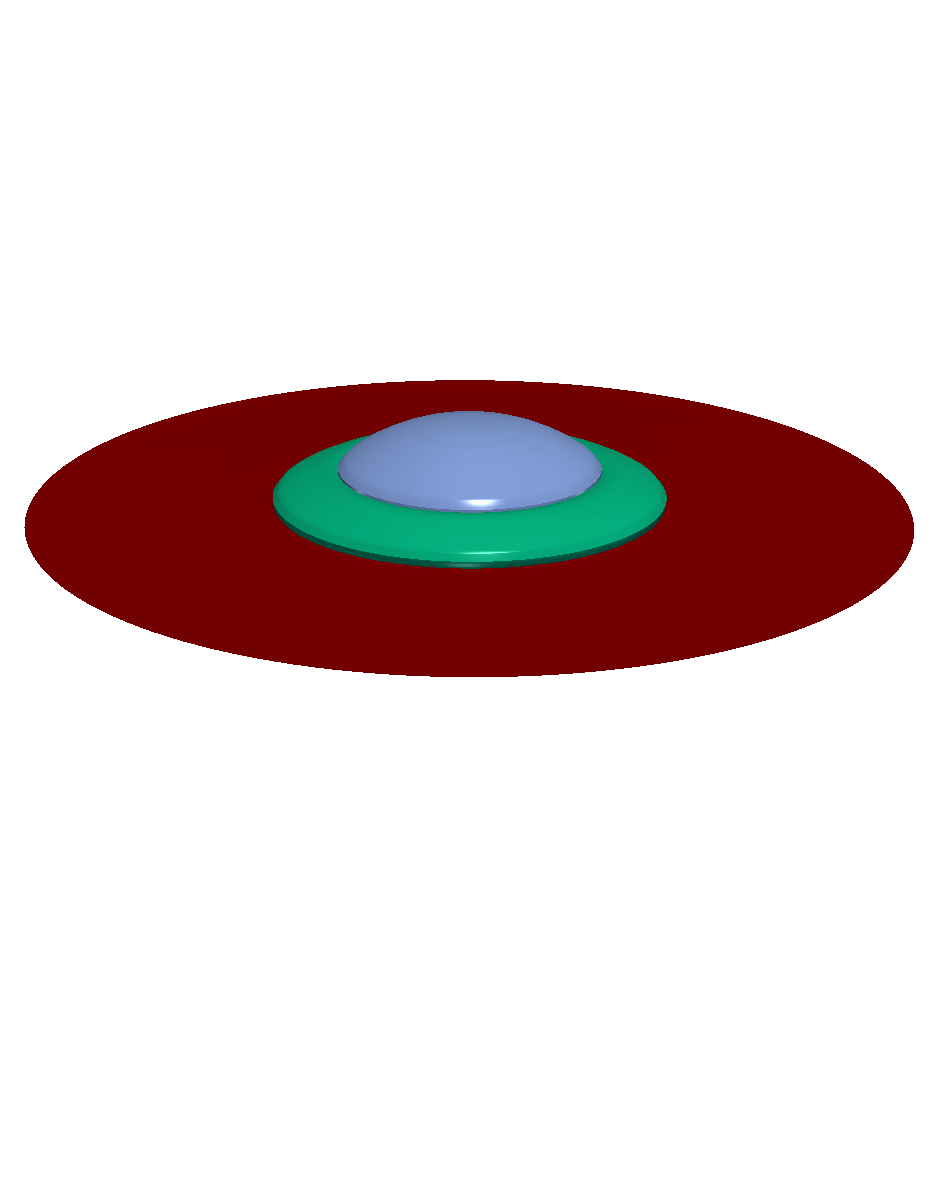}
\caption{$t=0.35$}
\end{subfigure}
\caption{Quaternary droplet-bubble simulation: side view. Time snapshots of the phase fields (iso-contours).}
\label{fig: 4phase 1}
\end{figure}

\begin{figure}
\captionsetup[subfigure]{justification=centering}
\begin{subfigure}{0.245\textwidth}
\centering
\includegraphics[width=\textwidth]{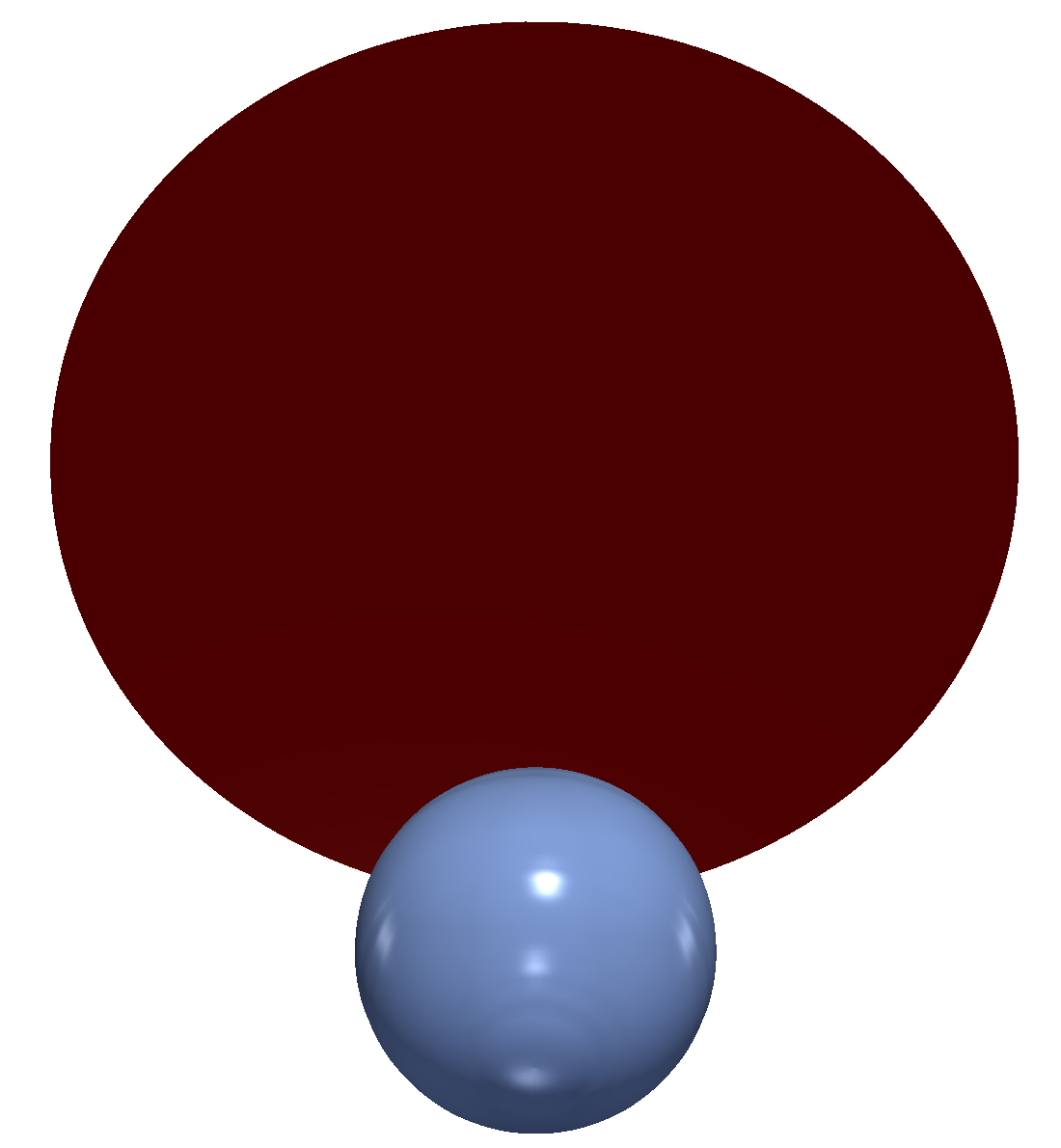}
\caption{$t=0.0$}
\end{subfigure}
\begin{subfigure}{0.245\textwidth}
\centering
\includegraphics[width=\textwidth]{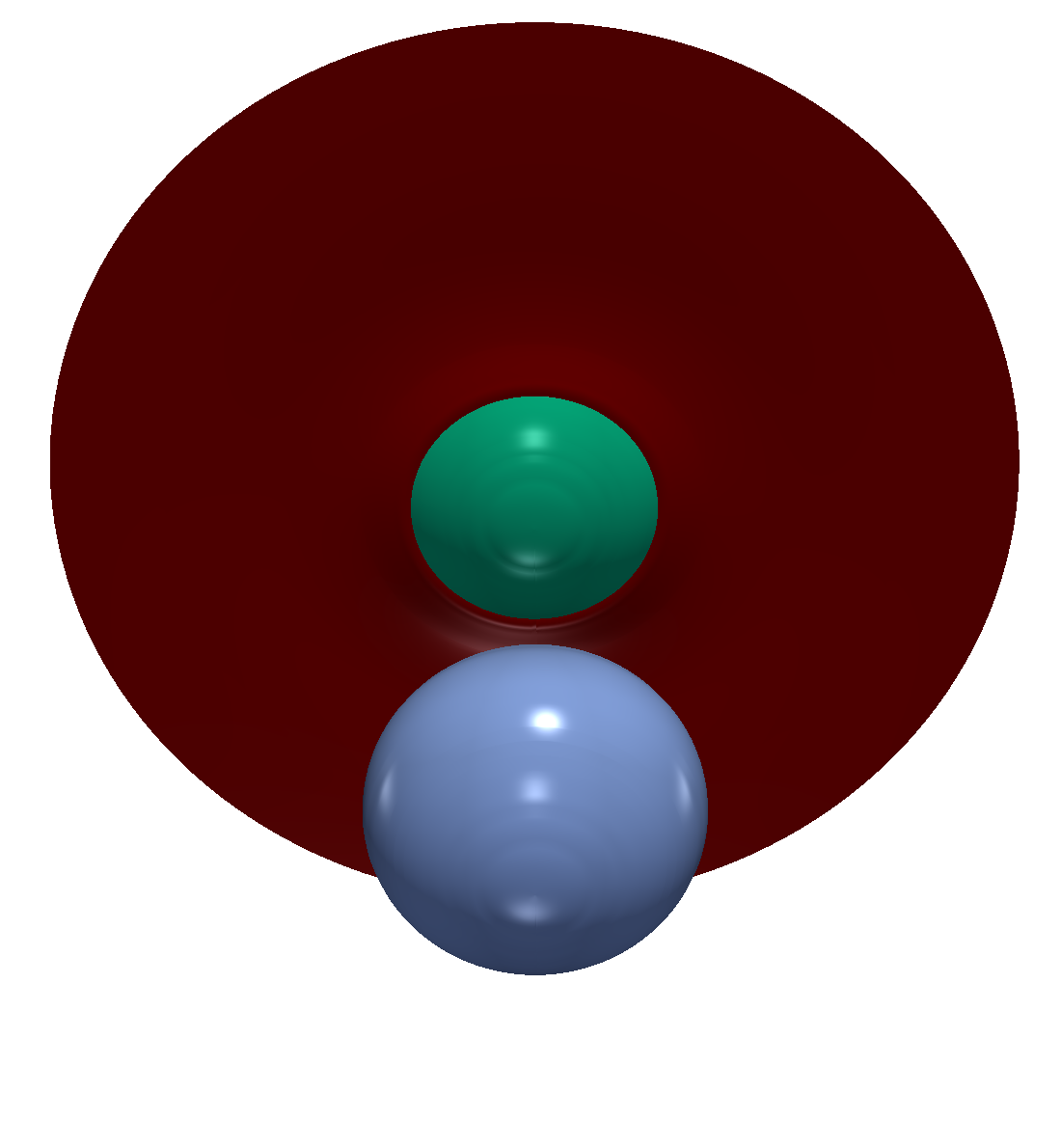}
\caption{$t=0.05$}
\end{subfigure}
\begin{subfigure}{0.245\textwidth}
\centering
\includegraphics[width=\textwidth]{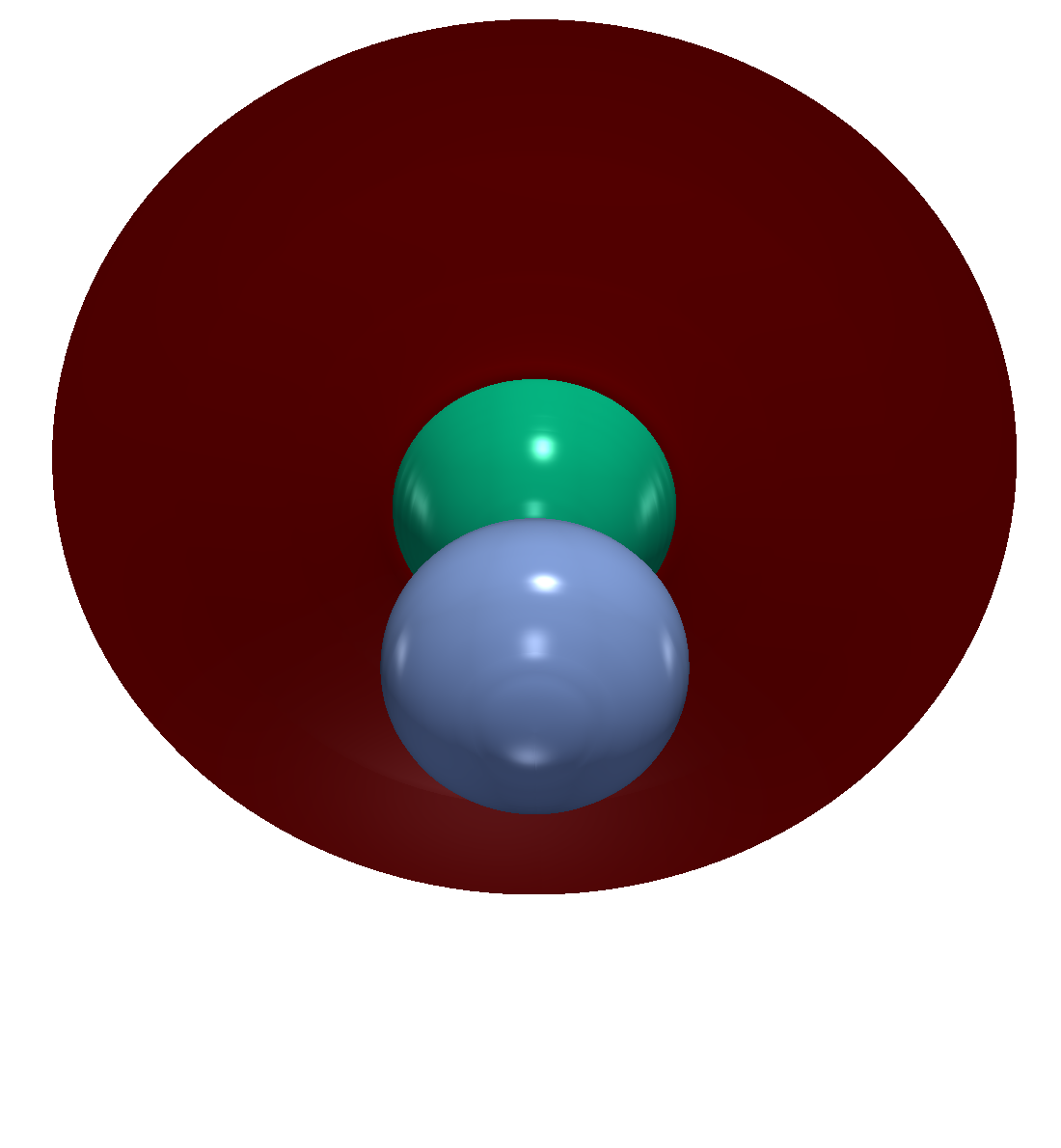}
\caption{$t=0.10$}
\end{subfigure}
\begin{subfigure}{0.245\textwidth}
\centering
\includegraphics[width=\textwidth]{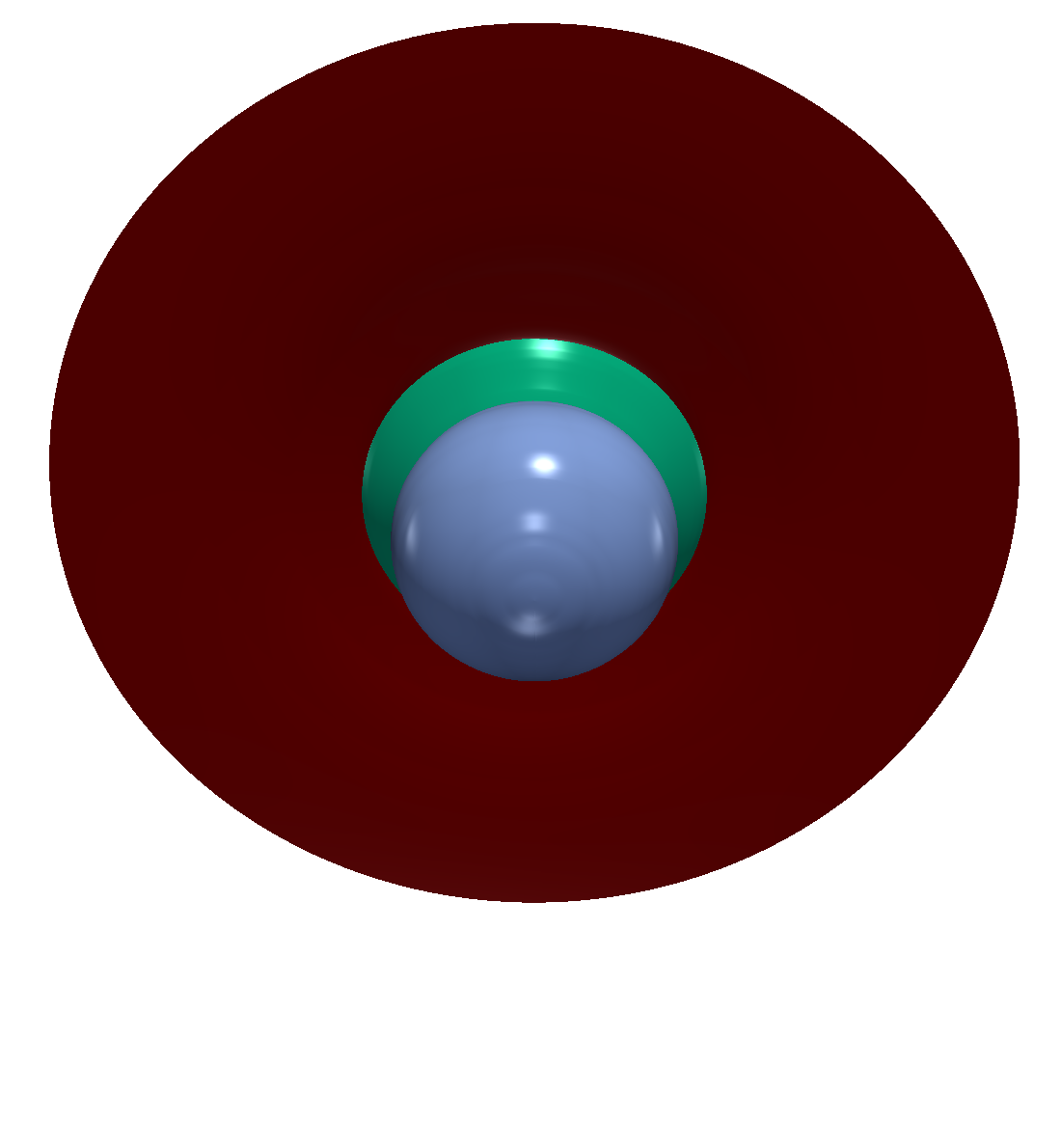}
\caption{$t=0.15$}
\end{subfigure}
\begin{subfigure}{0.245\textwidth}
\centering
\includegraphics[width=\textwidth]{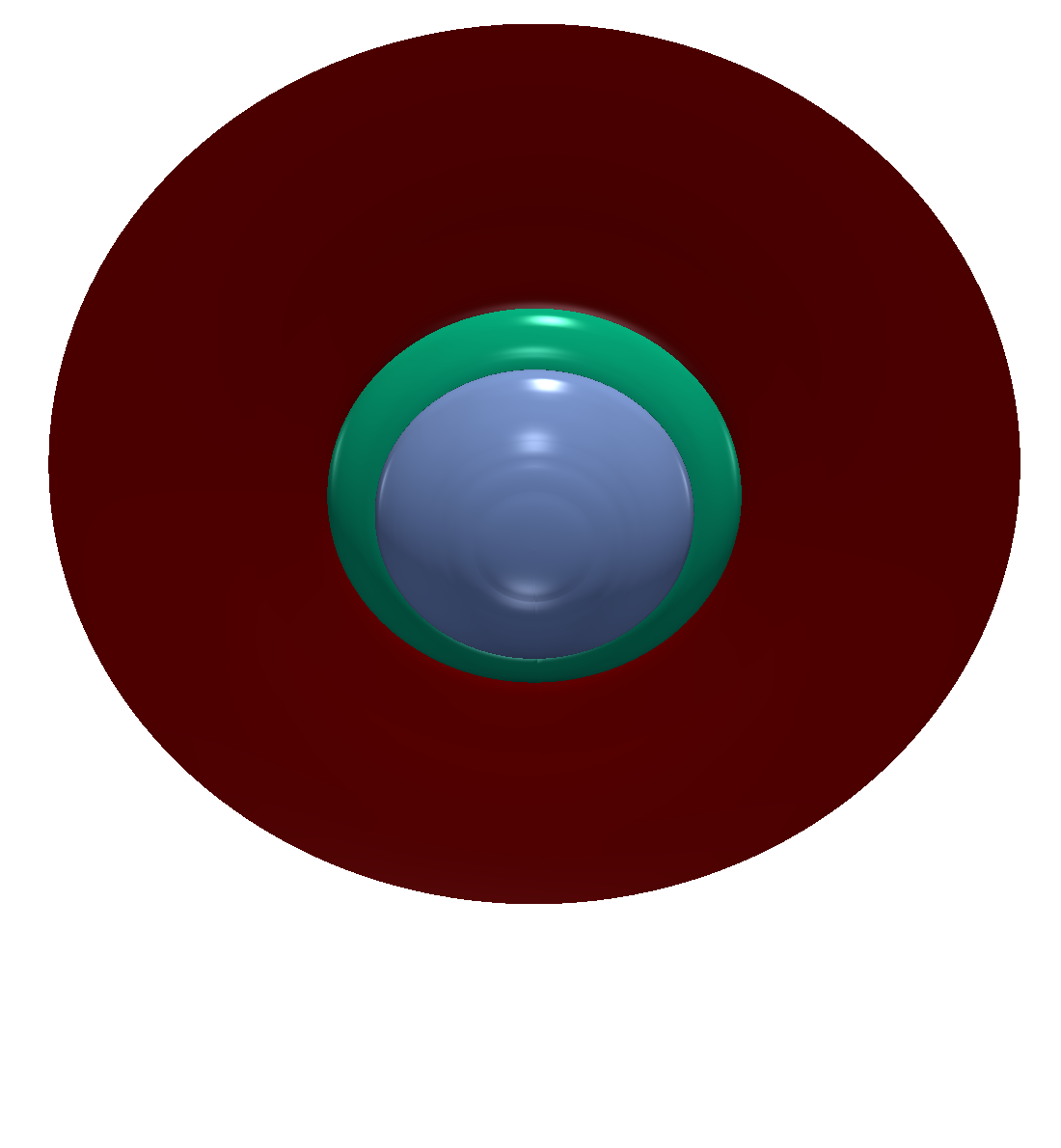}
\caption{$t=0.20$}
\end{subfigure}
\begin{subfigure}{0.245\textwidth}
\centering
\includegraphics[width=\textwidth]{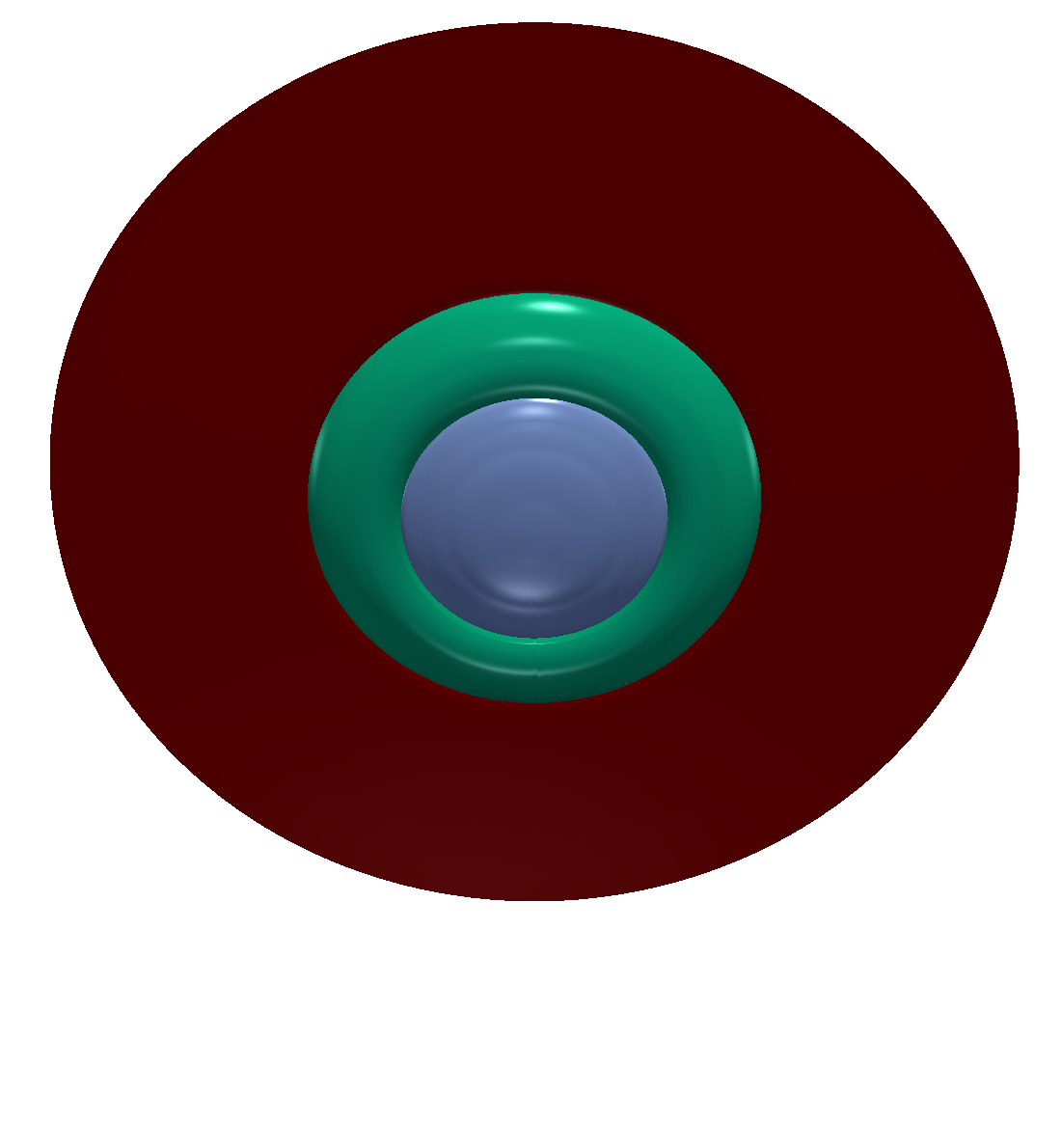}
\caption{$t=0.25$}
\end{subfigure}
\begin{subfigure}{0.245\textwidth}
\centering
\includegraphics[width=\textwidth]{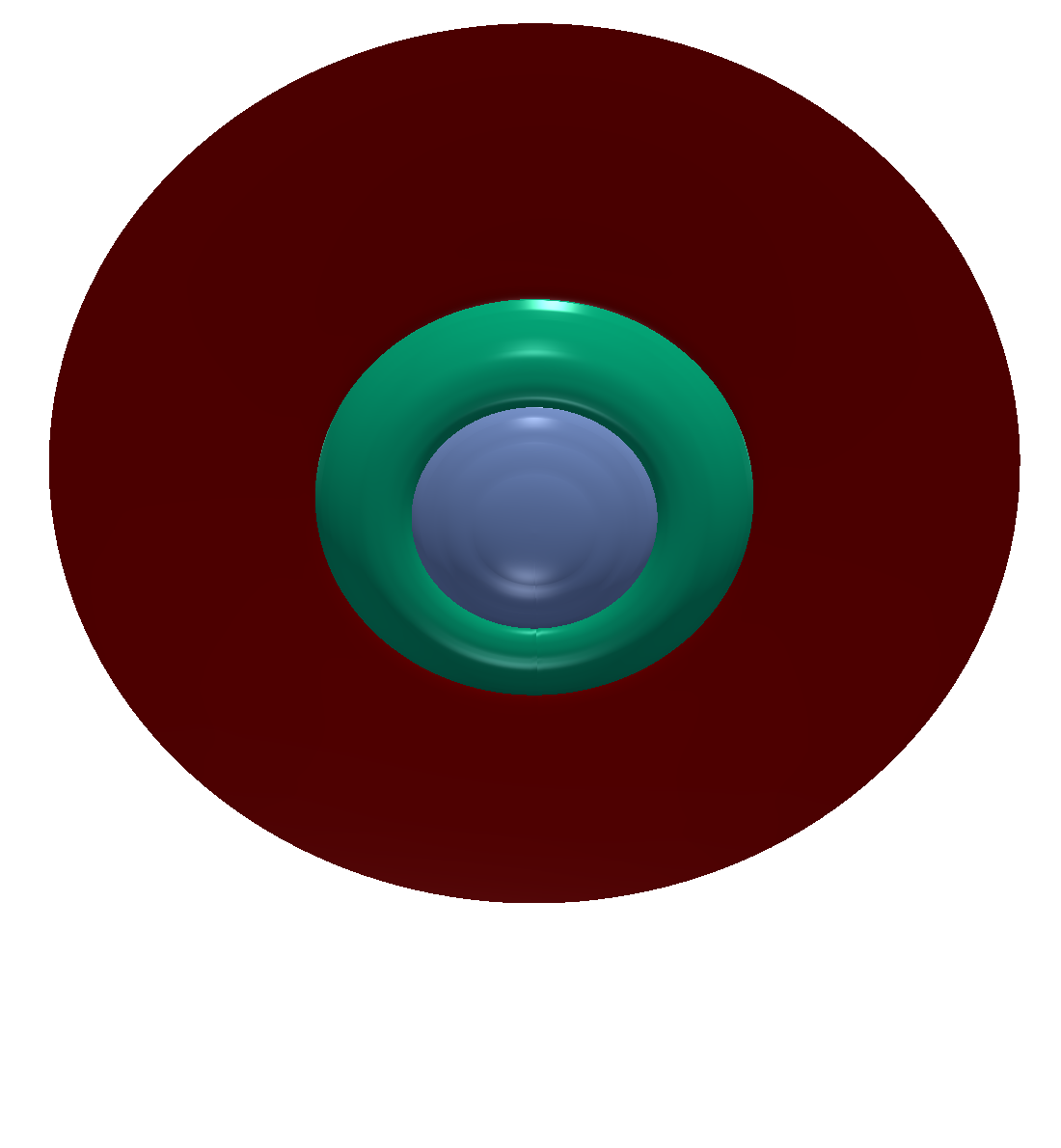}
\caption{$t=0.30$}
\end{subfigure}
\begin{subfigure}{0.245\textwidth}
\centering
\includegraphics[width=\textwidth]{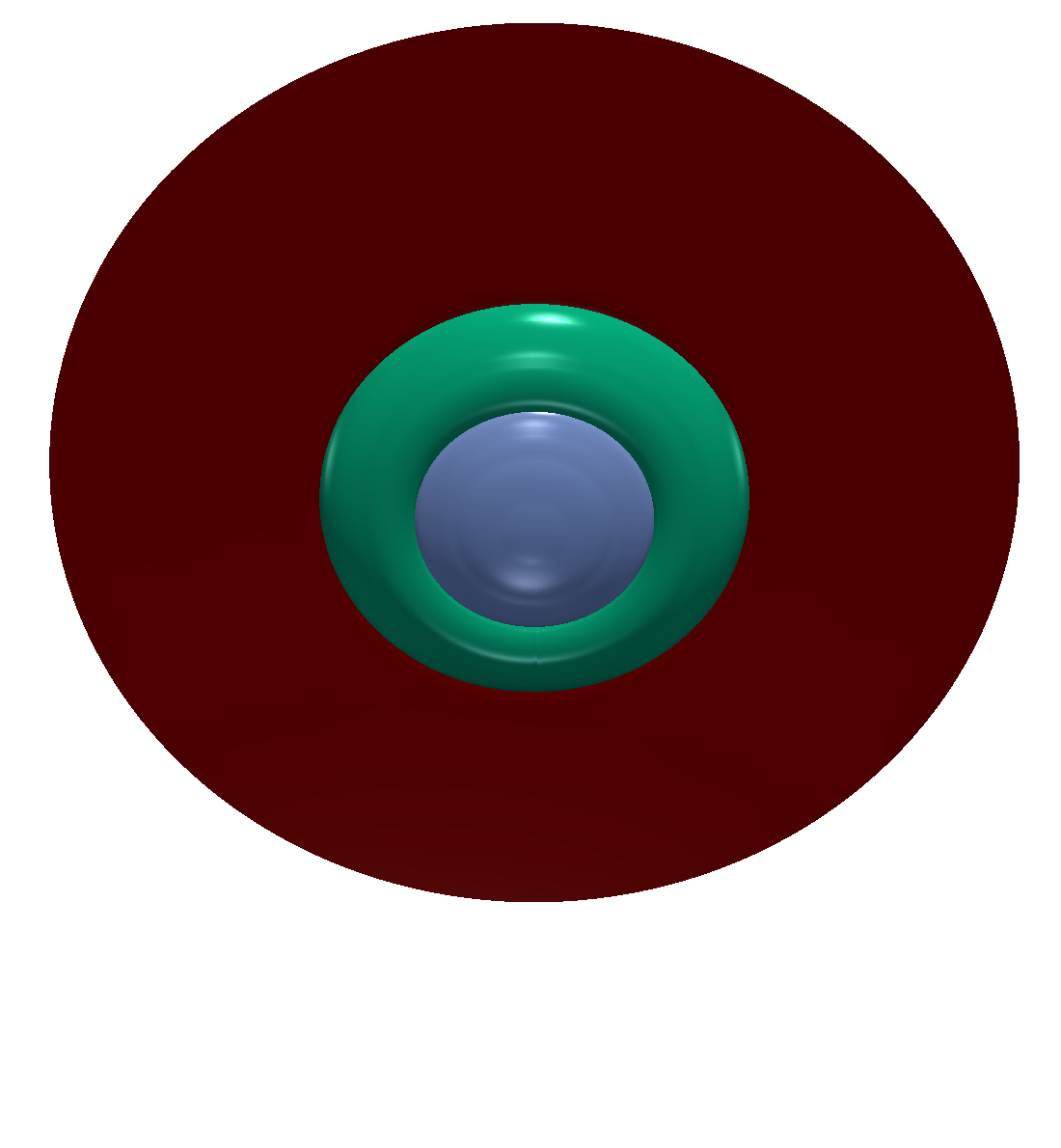}
\caption{$t=0.35$}
\end{subfigure}
\caption{Quaternary droplet-bubble simulation: bottom view. Time snapshots of the phase fields (iso-contours).}
\label{fig: 4phase 2}
\end{figure}

\section{Conclusions}\label{sec: Conclusions}

We have studied constitutive closures for the $N$-phase incompressible Navier--Stokes--Cahn--Hilliard mixture model from the viewpoint of \emph{mixture-awareness}, namely invariance under merging physically identical phases. The central point idea is that reduction should not only be required when a phase is absent, but also when two labels represent the same phase and are merged into a single phase.

Within the model class considered here, this principle has strong consequences. For the free energy, PDE-level reduction consistency under merging identical phases selects an ideal-mixing contribution together with a symmetric pairwise interaction term and a constant quadratic gradient penalty. For the mobility, the same principle yields a pairwise-exchange Onsager structure with bilinear degeneracy in the volume fractions. Thus, rather than proposing one possible closure, the paper identifies a distinguished multiphase closure within the chosen axiomatic setting. We note that the merging of identical phases naturally provides two complementary properties. Namely, the model reduces correctly on faces of the Gibbs simplex when a phase is absent. At the same time, it ensures that an absent phase remains absent.

We also introduced a concrete constitutive parameterization and discussed the resulting energy landscapes for distinct and identical phases. This provides a closed model form suitable for both analysis and computation. We numerically demonstrated the mixture-aware properties, and demonstrated the ability of the framework to perform ternary and quaternary simulations.

Several questions for future research remain open. First, the mixture-aware framework presented here has been formulated for an incompressible flow model with a single momentum equation. A natural next step is to apply it to other mixture-theoretic diffuse-interface formulations, including compressible mixtures \cite{ten2026compressible} and mixture models with $N$-momentum equations \cite{ten2024thermodynamically}. Second, it is well known in the $N$-phase flow literature that quaternary coexistence points are typically unstable. In this respect, the present framework seems well suited for studying higher-order junctions and their stability in multiphase diffuse-interface flows. In particular, it would be interesting to clarify whether nominal quaternary coexistence states can persist within a mixture-aware closure, or whether they necessarily relax toward lower-order interfacial configurations. Closely related to this issue, the sharp‑interface asymptotic analysis of the present mixture‑aware closure, and the identification of the corresponding limiting free‑boundary problem, remain open problems. Finally, since the underlying PDE framework was only recently introduced, a rigorous mathematical analysis of the system is still lacking, including results on the existence of weak solutions, well‑posedness, and long‑time behavior.

\appendix
\section{Gibbs simplex}\label{appendix Gibbs simplex}
To visualize states parameterized by non-negative phases subject to a single linear relation, it is convenient to use the \emph{Gibbs simplex}. Consider a set of $N$ non-negative fractions $\{\phi_{\alpha}\}_{\alpha=1}^{N}$ satisfying
\begin{align}
\phi_{\alpha} &\ge 0, \qquad \alpha=1,\dots,N, \\
\sum_{\alpha=1}^{N}\phi_{\alpha} &= 1.
\end{align}
The admissible set is the $(N-1)$-dimensional simplex. The Gibbs triangle corresponds to the case $N=3$, in which the simplex is represented as an equilateral triangle: each vertex is a pure state ($\phi_{\alpha}=1$ for one phase), each edge corresponds to binary mixtures (one phase identically zero), and the interior represents fully ternary states. Lines of constant $\phi_{\alpha}$ are straight and parallel to the edge opposite the $\alpha$th vertex. An example Gibbs triangle with the associated coordinate axes and simplex boundaries is shown in Figure~\ref{fig:gibbs-triangle}.

For $N=3$, we use barycentric coordinates on an equilateral triangle with vertices associated with $\phi_{1}$, $\phi_{2}$, and $\phi_{3}$. A convenient Cartesian embedding is obtained by placing the vertices at
\begin{align}
\boldsymbol{v}_1 = (0,0), \qquad
\boldsymbol{v}_2 = (1,0), \qquad
\boldsymbol{v}_3 = \left(\tfrac{1}{2},\,\tfrac{\sqrt{3}}{2}\right),
\end{align}
so that any composition maps to
\begin{align}
(x,y) &= \sum_{\alpha=1}^{3}\phi_{\alpha}\,\boldsymbol{v}_{\alpha} = \left(\phi_{2}+\tfrac{1}{2}\phi_{3},\ \tfrac{\sqrt{3}}{2}\phi_{3}\right),
\end{align}
with the constraint $\sum_{\alpha=1}^{3}\phi_{\alpha}=1$ enforced implicitly.

For $N=4$, each vertex of the tetrahedron corresponds to a pure state, each edge to a binary mixture, each face to a ternary subsystem ($\phi_{\alpha}=0$ for one phase), and the interior to fully quaternary states. A convenient realization uses barycentric coordinates with respect to four vertices $\{\boldsymbol{v}_{\alpha}\}_{\alpha=1}^{4}\subset{\R}^3$, so that
\begin{align}
(x,y,z) &= \sum_{\alpha=1}^{4}\phi_{\alpha}\,\boldsymbol{v}_{\alpha}.
\end{align}
One simple choice is the regular tetrahedron with vertices
\begin{align}
\boldsymbol{v}_1 = (0,0,0), \quad
\boldsymbol{v}_2 = (1,0,0), \quad
\boldsymbol{v}_3 = \left(\tfrac{1}{2},\,\tfrac{\sqrt{3}}{2},\,0\right), \quad
\boldsymbol{v}_4 = \left(\tfrac{1}{2},\,\tfrac{1}{2\sqrt{3}},\,\sqrt{\tfrac{2}{3}}\right),
\end{align}
which yields the explicit mapping
\begin{align}
x = \phi_{2}+\tfrac{1}{2}\phi_{3}+\tfrac{1}{2}\phi_{4}, \quad
y = \tfrac{\sqrt{3}}{2}\phi_{3}+\tfrac{1}{2\sqrt{3}}\phi_{4}, \quad
z = \sqrt{\tfrac{2}{3}}\phi_{4}.
\end{align}
This embedding enforces $\sum_{\alpha=1}^{4}\phi_{\alpha}=1$ implicitly and makes the lower-dimensional subsimplices immediately visible: setting $\phi_{4}=0$ recovers the Gibbs triangle on the face spanned by $(\phi_{1},\phi_{2},\phi_{3})$, while analogous triangular faces are obtained by setting any one phase to zero.

\section{Proofs of reduction-consistency properties}\label{section: appendix: Proofs}

\begin{theorem}[Reduction by merging identical phases]\label{thm:merge_reduction appendix}
Assume $N\ge 3$ and consider two phases, without loss of generality phases $1$ and $N$, which are identical in the sense that
\begin{subequations}
\begin{align}
\rho_1&=\rho_N,\\
\chi_{1\mA}&=\chi_{N\mA},\qquad \mA=1,\dots,N,\\
\kappa_{1\mA}&=\kappa_{N\mA},\qquad \mA=1,\dots,N,\\
m_{1\mA}&=m_{N\mA},\qquad \mA=2,\dots,N-1.
\end{align}
\end{subequations}
Define the merged variables by
\begin{subequations}
\begin{align}
\widehat\phi_1&:=\phi_1+\phi_N,\\
\widehat\phi_\mA&:=\phi_\mA,\qquad \mA=2,\dots,N-1.
\end{align}
\end{subequations}
Then the $N$-phase NSCH system reduces to the $(N-1)$-phase NSCH system for
\begin{align}
\widehat{\boldsymbol\phi}=(\widehat\phi_1,\widehat\phi_2,\dots,\widehat\phi_{N-1}),
\end{align}
with the reduced coefficients:
\begin{subequations}
\begin{align}
\hat\rho_1&:=\rho_1,\qquad \hat\rho_\mA:=\rho_\mA,\qquad \mA=2,\dots,N-1,\\
\hat\chi_{11}&:=\chi_{11}=\chi_{1N}=\chi_{NN},\\
\hat\chi_{1\mA}&:=\chi_{1\mA}=\chi_{N\mA},\qquad \mA=2,\dots,N-1,\\
\hat\chi_{\mB\mA}&:=\chi_{\mB\mA},\qquad \mB,\mA=2,\dots,N-1,\\
\hat\kappa_{11}&:=\kappa_{11}=\kappa_{1N}=\kappa_{NN},\\
\hat\kappa_{1\mA}&:=\kappa_{1\mA}=\kappa_{N\mA},\qquad \mA=2,\dots,N-1,\\
\hat\kappa_{\mB\mA}&:=\kappa_{\mB\mA},\qquad \mB,\mA=2,\dots,N-1,\\
\hat m_{1\mA}&:=m_{1\mA}=m_{N\mA},\qquad \mA=2,\dots,N-1,\\
\hat m_{\mB\mA}&:=m_{\mB\mA},\qquad \mB,\mA=2,\dots,N-1.
\end{align}
\end{subequations}
\end{theorem}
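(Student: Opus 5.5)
The plan is to verify directly, equation by equation, that the merged fields $(\vv,\widehat{\boldsymbol\phi},\lambda)$ satisfy the $(N-1)$-phase system \eqref{eq:sys2} with the hatted coefficients. We use the explicit closure \eqref{eq:Psi_final} and \eqref{eq:M_final_entries}. We also use $W_1=W_N$, which holds because $W_\alpha$ depends only on the material constants of phase $\alpha$ and these coincide for identical phases. The first step is to compute the chemical potentials. Write
\begin{align*}
\mu_\alpha = W_\alpha(\log\phi_\alpha+1) - 2W\sum_\beta\chi_{\alpha\beta}\phi_\beta - \sum_\beta\kappa_{\alpha\beta}\Delta\phi_\beta .
\end{align*}
Because $\chi_{1\beta}=\chi_{N\beta}$ and $\kappa_{1\beta}=\kappa_{N\beta}$, every sum over $\beta$ collapses: $\chi_{\alpha 1}\phi_1+\chi_{\alpha N}\phi_N=\hat\chi_{\alpha1}\widehat\phi_1$, and likewise for $\kappa$ with $\Delta\widehat\phi_1$. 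Hence $\mu_\alpha=\hat\mu_\alpha$ for $\alpha=2,\dots,N-1$. Moreover, there is a common remainder $R$, which depends only on $\widehat{\boldsymbol\phi}$, such that
\begin{align*}
\mu_1=W_1\log\phi_1+R,\qquad \mu_N=W_1\log\phi_N+R,\qquad \hat\mu_1=W_1\log\widehat\phi_1+R .
\end{align*}

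The second step is the capillary force. We have $\phi_1\nabla\mu_1+\phi_N\nabla\mu_N=W_1\nabla(\phi_1+\phi_N)+\widehat\phi_1\nabla R=\widehat\phi_1\nabla\hat\mu_1$. This is exactly where the $\phi\log\phi$ structure enters, since $\phi\nabla\log\phi=\nabla\phi$. Together with $\rho=\sum_\alpha\rho_\alpha\phi_\alpha=\hat\rho_1\widehat\phi_1+\sum_{\alpha\ge2}\rho_\alpha\phi_\alpha$ (using $\rho_1=\rho_N$), this shows that the momentum equation \eqref{eq:sys2: mom} is identical in both descriptions with the same $\lambda$; the viscous stress is unchanged because $\nu$ is constant. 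The same calculation, combined with $\rho_1=\rho_N$, gives the key identity for the generalized potentials:
\begin{align*}
\phi_1\nabla g_1+\phi_N\nabla g_N=\widehat\phi_1\nabla\hat g_1,\qquad \nabla g_\alpha=\nabla\hat g_\alpha\ (\alpha\ge2).
\end{align*}

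The third step is the fluxes. Using $M\mathbf 1=\mathbf 0$, we rewrite $\bJ_\alpha=-\sum_{\beta\neq\alpha}m_{\alpha\beta}\phi_\alpha\phi_\beta\nabla(g_\alpha-g_\beta)$. For $\alpha\ge2$ the contributions from $\beta=1$ and $\beta=N$ combine, via $m_{\alpha1}=m_{\alpha N}$, into
\begin{align*}
m_{\alpha1}\phi_\alpha\bigl(\widehat\phi_1\nabla g_\alpha-(\phi_1\nabla g_1+\phi_N\nabla g_N)\bigr)=\hat m_{\alpha1}\phi_\alpha\widehat\phi_1\nabla(\hat g_\alpha-\hat g_1),
\end{align*}
so $\bJ_\alpha=\hat\bJ_\alpha$. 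For $\bJ_1+\bJ_N$, the mutual $1$--$N$ exchange terms cancel by antisymmetry, whatever the value of $m_{1N}$. The remaining terms regroup in the same way as $\sum_{\beta\ge2}\hat m_{1\beta}\phi_\beta\bigl(\widehat\phi_1\nabla\hat g_1-\widehat\phi_1\nabla g_\beta\bigr)$, which gives $\bJ_1+\bJ_N=\hat\bJ_1$.

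It remains to assemble the equations. Adding the mass equations \eqref{eq:sys2: mass} for phases $1$ and $N$, and using $\rho_1=\rho_N$, yields the merged mass equation for $\widehat\phi_1$. The equations for $\alpha\ge2$ are unchanged. The divergence constraint \eqref{eq:sys2: div} is preserved because $\rho_1^{-1}\,{\rm div}(\bJ_1+\bJ_N)=\hat\rho_1^{-1}\,{\rm div}\hat\bJ_1$. I expect the main obstacle to be the bookkeeping of the generalized potentials $g_1,g_N$. These are individually split-dependent through $\log\phi_1$ and $\log\phi_N$, so the argument must be arranged so that they appear only in the weighted combination $\phi_1\nabla g_1+\phi_N\nabla g_N$. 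This requires carefully tracking the gauge of $\lambda$ and the additive constants in $\mu_\alpha$, which do not matter because only gradients enter.
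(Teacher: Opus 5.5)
Your proof is correct and follows essentially the paper's own argument. You collapse the $\chi$- and $\kappa$-sums to get $\mu_\alpha=\hat\mu_\alpha$ for $\alpha\ge 2$, derive $\phi_1\nabla\mu_1+\phi_N\nabla\mu_N=\widehat\phi_1\nabla\hat\mu_1$ and hence $\phi_1\nabla g_1+\phi_N\nabla g_N=\widehat\phi_1\nabla\hat g_1$, and then match the fluxes. The differences are cosmetic: you use the exchange form $\mathbf{J}_\alpha=-\sum_{\beta\neq\alpha}m_{\alpha\beta}\phi_\alpha\phi_\beta\nabla(g_\alpha-g_\beta)$ where the paper sums rows of $M$ directly, and you also explicitly check the density, gravity and divergence-constraint terms, which the paper leaves implicit.
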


\begin{proof}
\subsubsection*{Free energy and chemical potentials}
We define the reduced free energy by
\begin{subequations}
\label{eq:Psi_merged}
\begin{align}
\hat\Psi(\hat{\boldsymbol\phi},\nabla\hat{\boldsymbol\phi})
&=
\hat\Psi_0(\hat{\boldsymbol\phi})
+\frac12\sum_{\alpha,\beta=1}^{N-1}\hat\kappa_{\alpha\beta}\nabla\hat\phi_\alpha\cdot\nabla\hat\phi_\beta,\\
\hat\Psi_0(\hat{\boldsymbol\phi})
&=
W\sum_{\alpha=1}^{N-1}\hat\phi_\alpha\log\hat\phi_\alpha
-
W\sum_{\alpha,\beta=1}^{N-1}\hat\chi_{\alpha\beta}\hat\phi_\alpha\hat\phi_\beta.
\end{align}
\end{subequations}
We denote
\begin{align}
\hat\mu_\alpha
=
\partial_{\hat\phi_\alpha}\hat\Psi_0
-\sum_{\beta=1}^{N-1}\hat\kappa_{\alpha\beta}\Delta\hat\phi_\beta,
\qquad
\hat g_\alpha=\frac{\hat\mu_\alpha+\lambda}{\hat\rho_\alpha}.
\end{align}

For every $\mA=2,\dots,N-1$, the chemical potential of the original $N$-phase system is
\begin{align}
\mu_\mA
=
W(\log\phi_\mA+1)
-2W\sum_{\beta=1}^{N}\chi_{\mA\beta}\phi_\beta
-\sum_{\beta=1}^{N}\kappa_{\mA\beta}\Delta\phi_\beta.
\end{align}
Using symmetry of $\chi$ and $\kappa$ together with the assumptions
\begin{align}
\chi_{\mA1}=\chi_{1\mA}=\chi_{N\mA}=\chi_{\mA N},
\qquad
\kappa_{\mA1}=\kappa_{1\mA}=\kappa_{N\mA}=\kappa_{\mA N},
\end{align}
we obtain
\begin{align}
\mu_\mA
&=
W(\log\hat\phi_\mA+1)
-2W\Big(\chi_{\mA1}(\phi_1+\phi_N)+\sum_{\beta=2}^{N-1}\chi_{\mA\beta}\phi_\beta\Big)\notag\\
&~~-\Big(\kappa_{\mA1}(\Delta\phi_1+\Delta\phi_N)+\sum_{\beta=2}^{N-1}\kappa_{\mA\beta}\Delta\phi_\beta\Big)\notag\\
&=
W(\log\hat\phi_\mA+1)
-2W\sum_{\beta=1}^{N-1}\hat\chi_{\mA\beta}\hat\phi_\beta
-\sum_{\beta=1}^{N-1}\hat\kappa_{\mA\beta}\Delta\hat\phi_\beta
=
\hat\mu_\mA.
\end{align}
Hence
\begin{align}
g_\mA=\hat g_\mA,
\qquad
\nabla g_\mA=\nabla\hat g_\mA,
\qquad \mA=2,\dots,N-1.
\end{align}

\subsubsection*{Effective chemical potential of the merged phase}
Since $\rho_1=\rho_N$, we have
\begin{align}
\phi_1\nabla g_1+\phi_N\nabla g_N
=
\frac{1}{\rho_1}\Big(\phi_1\nabla(\mu_1+\lambda)+\phi_N\nabla(\mu_N+\lambda)\Big).
\end{align}
Using the explicit form of the chemical potentials, we obtain
\begin{align}
\phi_1\nabla\mu_1+\phi_N\nabla\mu_N
&=
W\phi_1\nabla(\log\phi_1+1)+W\phi_N\nabla(\log\phi_N+1)\notag\\
&\quad
-2W\phi_1\sum_{\beta=1}^{N}\chi_{1\beta}\nabla\phi_\beta
-2W\phi_N\sum_{\beta=1}^{N}\chi_{N\beta}\nabla\phi_\beta\notag\\
&\quad
-\phi_1\sum_{\beta=1}^{N}\kappa_{1\beta}\nabla\Delta\phi_\beta
-\phi_N\sum_{\beta=1}^{N}\kappa_{N\beta}\nabla\Delta\phi_\beta.
\end{align}
By the assumptions on the coefficients and the identities
\begin{subequations}
\begin{align}
\phi_1\nabla(\log\phi_1+1)+\phi_N\nabla(\log\phi_N+1)
=
\nabla\phi_1+\nabla\phi_N
=
\nabla\hat\phi_1,\\
\chi_{11}\nabla\phi_1+\chi_{1N}\nabla\phi_N
=
\hat\chi_{11}\nabla(\phi_1+\phi_N)
=
\hat\chi_{11}\nabla\hat\phi_1,\\
\kappa_{11}\nabla\Delta\phi_1+\kappa_{1N}\nabla\Delta\phi_N
=
\hat\kappa_{11}\nabla\Delta(\phi_1+\phi_N)
=
\hat\kappa_{11}\nabla\Delta\hat\phi_1,
\end{align}
\end{subequations}
this becomes
\begin{align}
\phi_1\nabla\mu_1+\phi_N\nabla\mu_N
=
W\nabla\hat\phi_1
-2W\hat\phi_1\sum_{\beta=1}^{N-1}\hat\chi_{1\beta}\nabla\hat\phi_\beta
-\hat\phi_1\sum_{\beta=1}^{N-1}\hat\kappa_{1\beta}\nabla\Delta\hat\phi_\beta.
\end{align}
On the other hand, the reduced chemical potential of the merged phase is
\begin{align}
\hat\mu_1
=
W(\log\hat\phi_1+1)
-2W\sum_{\beta=1}^{N-1}\hat\chi_{1\beta}\hat\phi_\beta
-\sum_{\beta=1}^{N-1}\hat\kappa_{1\beta}\Delta\hat\phi_\beta,
\end{align}
and therefore
\begin{align}
\hat\phi_1\nabla\hat\mu_1
=
W\nabla\hat\phi_1
-2W\hat\phi_1\sum_{\beta=1}^{N-1}\hat\chi_{1\beta}\nabla\hat\phi_\beta
-\hat\phi_1\sum_{\beta=1}^{N-1}\hat\kappa_{1\beta}\nabla\Delta\hat\phi_\beta.
\end{align}
Hence
\begin{align}
\phi_1\nabla\mu_1+\phi_N\nabla\mu_N
=
\hat\phi_1\nabla\hat\mu_1,
\end{align}
and consequently
\begin{align}
\phi_1\nabla g_1+\phi_N\nabla g_N
=
\hat\phi_1\nabla\hat g_1.
\label{eq:merged_g_identity}
\end{align}

\subsubsection*{Diffusive flux of the merged phase}
Adding the phase equations for $\phi_1$ and $\phi_N$ gives
\begin{align}
\partial_t\hat\phi_1+\operatorname{div}(\hat\phi_1\vv)+\rho_1^{-1}\operatorname{div}(\bJ_1+\bJ_N)=0.
\end{align}
We now show that
\begin{align}
\bJ_1+\bJ_N
=
-\sum_{\beta=1}^{N-1}\hat M_{1\beta}\nabla\hat g_\beta,
\end{align}
where the reduced mobility has the same form,
\begin{align}
\hat M_{\alpha\beta}
=
\begin{cases}
-\hat m_{\alpha\beta}\hat\phi_\alpha\hat\phi_\beta,&\alpha\neq\beta,\\[0.4em]
\displaystyle\sum_{\gamma\neq\alpha}\hat m_{\alpha\gamma}\hat\phi_\alpha\hat\phi_\gamma,&\alpha=\beta.
\end{cases}
\end{align}

For $\beta=2,\dots,N-1$, using $m_{1\beta}=m_{N\beta}$ we obtain
\begin{align}
M_{1\beta}+M_{N\beta}
=
-m_{1\beta}\phi_1\phi_\beta-m_{N\beta}\phi_N\phi_\beta
=
-\hat m_{1\beta}\hat\phi_1\hat\phi_\beta
=
\hat M_{1\beta}.
\end{align}
For the diagonal part,
\begin{align}
M_{11}+M_{N1}
&=
\sum_{\gamma\neq 1}m_{1\gamma}\phi_1\phi_\gamma-m_{N1}\phi_N\phi_1=
\sum_{\gamma=2}^{N-1}m_{1\gamma}\phi_1\phi_\gamma,
\end{align}
and similarly
\begin{align}
M_{1N}+M_{NN}
=
\sum_{\gamma=2}^{N-1}m_{1\gamma}\phi_N\phi_\gamma.
\end{align}
Therefore
\begin{align}
(M_{11}+M_{N1})\nabla g_1+(M_{1N}+M_{NN})\nabla g_N =&~
\left(\sum_{\gamma=2}^{N-1}m_{1\gamma}\hat\phi_\gamma\right)
\big(\phi_1\nabla g_1+\phi_N\nabla g_N\big)\notag\\
=&~
\left(\sum_{\gamma=2}^{N-1}\hat m_{1\gamma}\hat\phi_\gamma\right)
\hat\phi_1\nabla\hat g_1
=
\hat M_{11}\nabla\hat g_1,
\end{align}
where we used \eqref{eq:merged_g_identity}. Altogether,
\begin{align}
\bJ_1+\bJ_N
=
-\sum_{\beta=1}^{N-1}\hat M_{1\beta}\nabla\hat g_\beta.
\end{align}
Thus the merged phase $\hat\phi_1$ satisfies exactly the reduced phase equation.

\subsubsection*{Diffusive fluxes of the remaining phases}
Let $\alpha\in\{2,\dots,N-1\}$. Then
\begin{align}
\bJ_\alpha
=
-\sum_{\beta=1}^{N}M_{\alpha\beta}\nabla g_\beta
=
-\sum_{\beta=2}^{N-1}M_{\alpha\beta}\nabla g_\beta
-M_{\alpha1}\nabla g_1
-M_{\alpha N}\nabla g_N.
\end{align}
Using $m_{\alpha1}=m_{\alpha N}$ and \eqref{eq:merged_g_identity}, we find
\begin{align}
M_{\alpha1}\nabla g_1+M_{\alpha N}\nabla g_N
&=
-m_{\alpha1}\phi_\alpha\phi_1\nabla g_1-m_{\alpha N}\phi_\alpha\phi_N\nabla g_N\notag\\
&=
-\hat m_{\alpha1}\hat\phi_\alpha\big(\phi_1\nabla g_1+\phi_N\nabla g_N\big)\notag\\
&=
-\hat m_{\alpha1}\hat\phi_\alpha\hat\phi_1\nabla\hat g_1
=
\hat M_{\alpha1}\nabla\hat g_1.
\end{align}
Since $M_{\alpha\beta}=\hat M_{\alpha\beta}$ and $\nabla g_\beta=\nabla\hat g_\beta$ for $\beta=2,\dots,N-1$, it follows that
\begin{align}
\bJ_\alpha
=
-\sum_{\beta=1}^{N-1}\hat M_{\alpha\beta}\nabla\hat g_\beta,
\qquad
\alpha=2,\dots,N-1.
\end{align}
Hence all remaining phase equations also reduce to the $(N-1)$-phase system.

\subsubsection*{Capillary force}
Finally, the capillary force in the momentum equation satisfies
\begin{align}
\sum_{\beta=1}^{N}\phi_\beta\nabla(\mu_\beta+\lambda)
&=
\phi_1\nabla(\mu_1+\lambda)+\phi_N\nabla(\mu_N+\lambda)
+\sum_{\beta=2}^{N-1}\phi_\beta\nabla(\mu_\beta+\lambda)\notag\\
&=
\hat\phi_1\nabla(\hat\mu_1+\lambda)
+\sum_{\beta=2}^{N-1}\hat\phi_\beta\nabla(\hat\mu_\beta+\lambda)\notag\\
&=
\sum_{\beta=1}^{N-1}\hat\phi_\beta\nabla(\hat\mu_\beta+\lambda).
\end{align}
This is precisely the capillary force of the reduced $(N-1)$-phase system.

The above identities show that the merged variables $\hat{\boldsymbol\phi}$ satisfy exactly the $(N-1)$-phase NSCH system with the reduced constitutive coefficients. This proves the claim.
\end{proof}

\begin{theorem}[Reduction for absent phase]\label{thm:face_reduction appendix}

If a solution of the $N$-phase NSCH system satisfies $\phi_\gamma\equiv0$,
then the restriction of the $N$-phase NSCH system to $\mathcal G_\gamma$ coincides with the $(N-1)$-phase NSCH system obtained by deleting phase $\gamma$ and canonically relabeling the remaining indices.
\end{theorem}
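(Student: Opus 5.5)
Without loss of generality we take $\gamma=N$; the general case follows by the label invariance A8, i.e.\ by relabeling. We then compare the $N$-phase system \eqref{eq:sys2}, evaluated on a solution with $\phi_N\equiv 0$, term by term with the $(N-1)$-phase system. The latter is built from the same constitutive form \eqref{eq:Psi_final} and \eqref{eq:M_final_entries}, with coefficients $\chi_{\alpha\beta},\kappa_{\alpha\beta},m_{\alpha\beta},\rho_\alpha$ restricted to $\alpha,\beta\le N-1$. The saturation constraint restricts to $\sum_{\alpha\le N-1}\phi_\alpha=1$, so the state lies on the face $\mathcal G_N$. The quantities $\rho$, the kinetic and gravitational energy densities, and the viscous stress depend on $\boldsymbol\phi$ only through sums in which the term $\rho_N\phi_N$ drops out. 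Hence the momentum equation reduces provided the capillary force does.

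\textbf{Capillary force.} We write $\sum_{\beta=1}^N\phi_\beta\nabla\mu_\beta=\sum_{\beta\le N-1}\phi_\beta\nabla\mu_\beta+\phi_N\nabla\mu_N$. For $\alpha\le N-1$ the explicit chemical potential is
\[
\mu_\alpha=W(\log\phi_\alpha+1)-2W\sum_\beta\chi_{\alpha\beta}\phi_\beta-\sum_\beta\kappa_{\alpha\beta}\Delta\phi_\beta .
\]
Every term with $\beta=N$ carries a factor $\phi_N$ or $\Delta\phi_N$, and these vanish identically. Hence $\mu_\alpha=\hat\mu_\alpha$, and with $\nabla\lambda=\nabla\hat\lambda$ also $g_\alpha=\hat g_\alpha$. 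The remaining term $\phi_N\nabla\mu_N$ is the delicate one, because $\mu_N$ contains $\log\phi_N$, which is singular at $\phi_N=0$. I would handle it by writing the ideal contribution as $W\phi_N\nabla\log\phi_N=W\nabla\phi_N$, which vanishes, as in the merging proof. The other contributions to $\phi_N\nabla\mu_N$ are bounded and multiplied by $\phi_N=0$. In the same spirit, \eqref{eq:A3p_fcap_ext} follows.

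\textbf{Fluxes.} For $\alpha\le N-1$,
\[
\bJ_\alpha=-\sum_{\beta\le N-1}M_{\alpha\beta}\nabla g_\beta-M_{\alpha N}\nabla g_N .
\]
Here $M_{\alpha N}=-m_{\alpha N}\phi_\alpha\phi_N=0$. For $\beta\le N-1$ the diagonal entry $M_{\alpha\alpha}=\phi_\alpha\sum_{\beta\ne\alpha}m_{\alpha\beta}\phi_\beta$ loses its $\beta=N$ term, so $M_{\alpha\beta}=\hat M_{\alpha\beta}$ and therefore $\bJ_\alpha=\hat\bJ_\alpha$. For the absent phase, $M_{N\beta}=0$ for $\beta\ne N$, and $M_{NN}=\phi_N\sum_\beta m_{N\beta}\phi_\beta=0$, so $\bJ_N=0$. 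This is the degeneracy M5. Together with $\phi_N\equiv0$, the $N$-th mass balance is then trivially satisfied.

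The divergence constraint \eqref{eq:sys2: div} loses its $\rho_N^{-1}\operatorname{div}\bJ_N$ term and becomes the reduced constraint. Collecting these identities shows that $(\vv,\hat{\boldsymbol\phi},\lambda,\hat{\boldsymbol\mu})$ solves the $(N-1)$-phase system.

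\textbf{Main obstacle.} The step I expect to be hardest is the rigorous treatment of the products $\phi_N\nabla\mu_N$ and $M_{NN}\nabla g_N$ when $g_N$ contains $\log\phi_N$ and is formally infinite on the face. I would interpret them in the limiting sense $\phi_N\to0^+$, using $\phi_N\nabla\log\phi_N=\nabla\phi_N$, together with the bilinear degeneracy $\phi_N\phi_\beta$ of the mobility. This cancels the singularity exactly as in the proof of Theorem \ref{thm:merge_reduction appendix}. The alternative is to note that the product terms are understood through their regularized or extended form, consistent with A1.
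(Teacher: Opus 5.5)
Your proposal is correct and follows essentially the same route as the paper's proof: restrict the active-phase chemical potentials to the face, show that the mobility couplings to the absent phase drop out so that $\bJ_\gamma\equiv 0$, and then reduce the capillary force. The regrouping you flag as the main obstacle, namely absorbing the factor $\phi_\gamma$ into $\nabla g_\gamma$ via $\phi_\gamma\nabla\log\phi_\gamma=\nabla\phi_\gamma$ in $M_{\alpha\gamma}\nabla g_\gamma$ and $M_{\gamma\gamma}\nabla g_\gamma$, is exactly how the paper handles those terms; the paper works with general $\gamma$ rather than relabeling to $\gamma=N$, and treats $\phi_\gamma\nabla\mu_\gamma$ in the capillary force more tersely than you do.
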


\begin{proof}
\subsubsection*{Free energies}
We first identify the reduced free energy. Denote by
\begin{align}
\widehat{\boldsymbol\phi}
:=
(\phi_1,\dots,\phi_{\gamma-1},\phi_{\gamma+1},\dots,\phi_N)
\end{align}
the vector of active phases. Define
\begin{subequations}
\label{eq:Psi_reduced}
\begin{align}
\widehat\Psi(\widehat{\boldsymbol\phi},\nabla\widehat{\boldsymbol\phi})
&=
\widehat\Psi_0(\widehat{\boldsymbol\phi})
+\frac12\sum_{\substack{\alpha,\beta=1\\ \alpha,\beta\neq \gamma}}^{N}
\kappa_{\alpha\beta}\nabla\phi_\alpha\cdot\nabla\phi_\beta,\\
\widehat\Psi_0(\widehat{\boldsymbol\phi})
&=
W\sum_{\substack{\alpha=1\\ \alpha\neq \gamma}}^{N}\phi_\alpha\log\phi_\alpha
-
W\sum_{\substack{\alpha,\beta=1\\ \alpha,\beta\neq \gamma}}^{N}
\chi_{\alpha\beta}\phi_\alpha\phi_\beta.
\end{align}
\end{subequations}
Because $\phi_\gamma=0$ and $\nabla\phi_\gamma=0$, the restriction of \eqref{eq:Psi_final} to $\mathcal G_\gamma$ is exactly \eqref{eq:Psi_reduced}.

\subsubsection*{Chemical potentials} For every active index $\alpha\neq \gamma$, the chemical potential of the $N$-phase system is
\begin{align}
\mu_\alpha
= W(\log\phi_\alpha+1)
-2W\sum_{\beta=1}^{N}\chi_{\alpha\beta}\phi_\beta
-\sum_{\beta=1}^{N}\kappa_{\alpha\beta}\Delta\phi_\beta.
\end{align}
Restricting to $\mathcal G_\gamma$ gives
\begin{align}
\mu_\alpha
=
W(\log\phi_\alpha+1)
-2W\sum_{\substack{\beta=1\\ \beta\neq \gamma}}^{N}\chi_{\alpha\beta}\phi_\beta
-\sum_{\substack{\beta=1\\ \beta\neq \gamma}}^{N}\kappa_{\alpha\beta}\Delta\phi_\beta,
\qquad \alpha\neq \gamma.
\end{align}
On the other hand, the reduced chemical potential associated with \eqref{eq:Psi_reduced} is
\begin{align}
\widehat\mu_\alpha
=
W(\log\phi_\alpha+1)
-2W\sum_{\substack{\beta=1\\ \beta\neq \gamma}}^{N}\chi_{\alpha\beta}\phi_\beta
-\sum_{\substack{\beta=1\\ \beta\neq \gamma}}^{N}\kappa_{\alpha\beta}\Delta\phi_\beta,
\qquad \alpha\neq \gamma,
\end{align}
As a consequence, the chemical potentials coincide: $
\mu_\alpha=\hat{\mu}_\alpha,
\alpha\neq \gamma$, and through the identification $\lambda = \hat{\lambda}$ we have $
g_\alpha=\hat{g}_\alpha, \alpha\neq \gamma$, so the effective chemical potentials of all active phases agree with those of the reduced system.

\subsubsection*{Diffusive fluxes of active phases} Next, we analyze the diffusive fluxes. For the active phases we have:
\begin{align}
\bJ_\alpha=-\sum_{\beta=1}^{N}M_{\alpha\beta}\nabla g_\beta
=
-\sum_{\substack{\beta=1\\ \beta\neq \gamma}}^{N}M_{\alpha\beta}\nabla g_\beta
-
M_{\alpha \gamma}\nabla g_\gamma, \qquad \alpha\neq \gamma. 
\end{align}
Invoking $M_{\alpha \gamma}=-m_{\alpha \gamma}\phi_\alpha\phi_\gamma$, we find
\begin{align}
M_{\alpha \gamma}\nabla g_\gamma
&=
-\frac{m_{\alpha \gamma}\phi_\alpha}{\rho_\gamma}
\left(
W\nabla\phi_\gamma
-2W\phi_\gamma\sum_{\beta=1}^{N}\chi_{\gamma\beta}\nabla\phi_\beta
-\phi_\gamma\sum_{\beta=1}^{N}\kappa_{\gamma\beta}\nabla\Delta\phi_\beta
+\phi_\gamma\nabla\lambda
\right).
\end{align}
Since $\phi_\gamma=\nobreak0$ and $\nabla\phi_\gamma=\nobreak0$, we obtain
$M_{\alpha \gamma}\nabla g_\gamma\equiv0$, and thus
\begin{align}
\bJ_\alpha
=
-\sum_{\substack{\beta=1\\ \beta\neq \gamma}}^{N}M_{\alpha\beta}\nabla g_\beta,
\qquad \alpha\neq \gamma.
\end{align}
For $\alpha,\beta\neq \gamma$, the coefficients $M_{\alpha\beta}$ are exactly the mobility coefficients of the reduced system, because the deleted phase contributes nothing when $\phi_\gamma=0$. Since also $\nabla g_\beta=\rho_\beta^{-1}\nabla(\widehat\mu_\beta+\lambda)$ for $\beta\neq \gamma$, the phase equations for all active phases coincide with those of the reduced $(N-1)$-phase model.

\subsubsection*{Diffusive fluxes of the absent phase} We now consider the $\mA$th phase equation. The diffusive flux is given by
\begin{align}
\bJ_\mA=-\sum_{\beta=1}^{N}M_{\mA\beta}\nabla g_\beta,
\end{align}
where for $\beta\neq \mA$ we have $
M_{\mA\beta}=-m_{\gamma\beta}\phi_\mA\phi_\beta=0$ and for the diagonal term, since
$M_{\mA\mA}=\sum_{\delta\neq \mA}m_{\mA\delta}\phi_\gamma\phi_\delta$, we compute 
\begin{align}
M_{\mA\mA}\nabla g_\mA
%&=
%\frac{\phi_\mA}{\rho_\mA}\left(\sum_{\gamma\neq \mA}m_{\mA\gamma}\phi_\gamma\right)
%\left(
%W\frac{\nabla\phi_\mA}{\phi_\mA}
%-2W\sum_{\beta=1}^{N}\chi_{\mA\beta}\nabla\phi_\beta
%-\sum_{\beta=1}^{N}\kappa_{\mA\beta}\nabla\Delta\phi_\beta
%+\nabla\lambda
%\right)\\
&=
\frac1{\rho_\gamma}\left(\sum_{\delta\neq \gamma}m_{\gamma\delta}\phi_\delta\right)\nonumber\\
&~~~\times\left(
W\nabla\phi_\gamma
-2W\phi_\gamma\sum_{\beta=1}^{N}\chi_{\gamma\beta}\nabla\phi_\beta
-\phi_\gamma\sum_{\beta=1}^{N}\kappa_{\gamma\beta}\nabla\Delta\phi_\beta
+\phi_\gamma\nabla\lambda
\right).
\end{align}
Again $\phi_\gamma=\nobreak0$ and $\nabla\phi_\gamma=\nobreak0$, hence
\begin{align}
M_{\gamma\gamma}\nabla g_\gamma\equiv0,
\qquad
\bJ_\gamma\equiv0.
\end{align}
Thus the $\gamma$th phase equation decouples completely.

\subsubsection*{Capillary force} Finally, the capillary force in the momentum equation reduces as
\begin{align}
\sum_{\beta=1}^{N}\phi_\beta\nabla(\mu_\beta+\lambda)
=
\sum_{\substack{\beta=1\\ \beta\neq \gamma}}^{N}\phi_\beta\nabla(\mu_\beta+\lambda)
=
\sum_{\substack{\beta=1\\ \beta\neq \gamma}}^{N}\phi_\beta\nabla(\widehat\mu_\beta+\lambda),
\end{align}
because $\phi_\gamma=0$ and $\mu_\beta=\widehat\mu_\beta$ for all $\beta\neq \gamma$. This is precisely the capillary force of the reduced system.

The above conclusions show that after deleting the inactive phase $\gamma$, the restricted $N$-phase NSCH system coincides with the $(N-1)$-phase NSCH system.
\end{proof}

\begin{corollary}[Absent phase remains absent]\label{cor:absence_invariance appendix}
If a phase is absent, i.e. $
\phi_\gamma(\cdot,0)\equiv 0$,
then it remains absent, i.e. $
\phi_\gamma(\cdot,t)\equiv 0
\qquad\text{for all }t\ge 0$.
\end{corollary}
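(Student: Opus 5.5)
The plan is to argue via uniqueness, with Theorem~\ref{thm:face_reduction appendix} supplying the existence of a candidate solution in which phase $\gamma$ is absent. Concretely, given initial data with $\phi_\gamma(\cdot,0)\equiv 0$, consider the reduced $(N-1)$-phase NSCH system obtained by deleting phase $\gamma$ and relabeling. Take its solution $(\vv,\hat{\boldsymbol\phi},\lambda,\hat{\boldsymbol\mu})$ with the corresponding initial data $\hat{\boldsymbol\phi}(\cdot,0)$, and embed it by setting $\phi_\gamma\equiv 0$ and $\phi_\alpha=\hat\phi_\alpha$ for $\alpha\neq\gamma$. The chemical potential $\mu_\gamma$ is then defined by \eqref{eq:sys2: chem}, or is simply left arbitrary where $\phi_\gamma=0$, since it enters only through products with $\phi_\gamma$ or $\nabla\phi_\gamma$.

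The first step is to verify that the embedded fields solve the full $N$-phase system; this is exactly the content of the proof of Theorem~\ref{thm:face_reduction appendix}. On $\mathcal G_\gamma$, the active chemical potentials coincide with the reduced ones. The active fluxes lose their coupling terms $M_{\alpha\gamma}\nabla g_\gamma$, because these are multiplied by $\phi_\gamma$ and $\nabla\phi_\gamma$. The capillary force reduces correctly. Finally, $\bJ_\gamma\equiv 0$, so the $\gamma$-th mass balance reads $\partial_t\phi_\gamma+\div(\phi_\gamma\vv)=0$, which $\phi_\gamma\equiv0$ satisfies trivially. The saturation constraint and \eqref{eq:sys2: div} also hold, since $\sum_{\alpha\neq\gamma}\hat\phi_\alpha=1$ and $\bJ_\gamma=0$.

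The second step is to conclude that the $N$-phase solution starting from the given data must coincide with this embedded one, and hence $\phi_\gamma(\cdot,t)\equiv0$ for all $t\ge0$.

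That uniqueness claim is the main obstacle, since the paper has no well-posedness theory. The logarithm in $\mu_\gamma$ is singular at $\phi_\gamma=0$, but only the degenerate combination $M_{\gamma\beta}\nabla g_\beta$ enters the equations. I would therefore state the corollary as holding within the class of (sufficiently regular) solutions for which uniqueness holds.

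Alternatively, I would give a direct invariance argument for the $\gamma$-th equation. Substituting $M_{\gamma\beta}=-m_{\gamma\beta}\phi_\gamma\phi_\beta$ and the explicit $\nabla g_\gamma$ shows that
\[
\partial_t\phi_\gamma+\div(\phi_\gamma\vv)-\rho_\gamma^{-1}\div\Big(\tfrac{W}{\rho_\gamma}\big(\textstyle\sum_{\delta\neq\gamma}m_{\gamma\delta}\phi_\delta\big)\nabla\phi_\gamma+\phi_\gamma\,\mathbf{F}\Big)=0
\]
for a field $\mathbf F$ depending on the remaining unknowns. This is a linear degenerate advection–diffusion equation in $\phi_\gamma$. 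Testing it with $\phi_\gamma$, and using no-flux boundary conditions, $\phi_\gamma\ge0$ and a Gronwall estimate on $\|\phi_\gamma\|_{L^2}^2$ (assuming bounded $\div\vv$ and bounded $\mathbf F$), gives $\|\phi_\gamma(t)\|_{L^2}\le e^{Ct}\|\phi_\gamma(0)\|_{L^2}=0$. I would present this energy estimate as the core of the proof, with the regularity assumptions on $\mathbf F$ stated explicitly.
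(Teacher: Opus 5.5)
Your first route is the paper's argument: it combines Theorem~\ref{thm:face_reduction appendix} with ``uniqueness of classical solutions''. You state it more carefully than the paper does. You make explicit that the candidate is the zero-extension of a reduced $(N-1)$-phase solution, and that the uniqueness needed is for the full coupled $N$-phase system. The paper's appeal to uniqueness for $\partial_t\phi_\gamma+\div(\phi_\gamma\vv)=0$ is only non-circular in that reading, because $\bJ_\gamma\equiv 0$ was obtained under the hypothesis $\phi_\gamma\equiv 0$. As you note, both versions rest on a well-posedness result that is not available.

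The energy route is a genuinely different argument. If completed, it is strictly better: it needs only regularity of the one given solution, with no uniqueness for the nonlinear system and no existence for the reduced one. As written, however, it does not close. Testing with $\phi_\gamma$ and using $\vv\cdot\mathbf n=0$ and $\bJ_\gamma\cdot\mathbf n=0$ gives
\[
\tfrac12\tfrac{\mathrm d}{\mathrm dt}\|\phi_\gamma\|_{L^2}^2+\tfrac12\int_\Omega(\div\vv)\,\phi_\gamma^2+\rho_\gamma^{-1}\int_\Omega D\,|\nabla\phi_\gamma|^2+\rho_\gamma^{-1}\int_\Omega\phi_\gamma\,\mathbf F\cdot\nabla\phi_\gamma=0,
\qquad D:=\tfrac{W}{\rho_\gamma}\sum_{\delta\neq\gamma}m_{\gamma\delta}\phi_\delta .
\]
With only $\mathbf F\in L^\infty$, the last term is bounded by $\|\mathbf F\|_\infty\|\phi_\gamma\|\,\|\nabla\phi_\gamma\|$. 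Absorbing $\|\nabla\phi_\gamma\|$ then requires $D\ge D_0>0$. But $D$ is degenerate: it vanishes wherever all phases $\delta$ with $m_{\gamma\delta}>0$ are absent, and the closure allows $m_{\gamma\delta}=0$.

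The fix is to integrate the cross term by parts: $\int_\Omega\phi_\gamma\mathbf F\cdot\nabla\phi_\gamma=-\tfrac12\int_\Omega(\div\mathbf F)\phi_\gamma^2$. The boundary term vanishes because $\bJ_\gamma=-(D\nabla\phi_\gamma+\phi_\gamma\mathbf F)$, so $\bJ_\gamma\cdot\mathbf n=0$ together with $\nabla\phi_\gamma\cdot\mathbf n=0$ gives $\phi_\gamma\mathbf F\cdot\mathbf n=0$. Then drop the diffusion term and apply Gronwall with rate $\|\div\vv\|_\infty+\rho_\gamma^{-1}\|\div\mathbf F\|_\infty$. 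Dropping that term is where positivity enters, namely $\phi_\delta\ge0$ for $\delta\neq\gamma$; the sign of $\phi_\gamma$ itself plays no role.

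So the hypothesis you need is $\div\mathbf F\in L^\infty$, not $\mathbf F\in L^\infty$. For a classical solution this amounts to bounded fourth spatial derivatives of $\boldsymbol\phi$ and second derivatives of $\lambda$. Alternatively, if every $m_{\gamma\delta}>0$, then $D\ge c\,(1-\phi_\gamma)$ is bounded below while $\phi_\gamma\le\tfrac12$, and a continuation-in-time argument works with bounded $\mathbf F$.

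You also need to correct the claim that $\mathbf F$ depends only on the remaining unknowns. It contains $\kappa_{\gamma\gamma}\nabla\Delta\phi_\gamma$. Through $\nabla g_\beta$ it also contains $\kappa_{\beta\gamma}\nabla\Delta\phi_\gamma$ and $\chi_{\beta\gamma}\nabla\phi_\gamma$, as well as $\nabla\lambda$. The equation is therefore linear in $\phi_\gamma$ only after you freeze $\mathbf F$, $D$ and $\vv$ as coefficients evaluated on the given solution. That is legitimate here, since the estimate is applied to that single solution, but it should be stated explicitly.
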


\begin{proof}
Consider the $\gamma$th phase equation,
\begin{align}
\partial_t\phi_\gamma+\operatorname{div}(\phi_\gamma\vv)+\rho_\gamma^{-1}\operatorname{div}\bJ_\gamma=0,
\qquad
\bJ_\gamma=-\sum_{\beta=1}^{N}M_{\gamma\beta}\nabla g_\beta.
\end{align}
By Theorem~\ref{thm:face_reduction appendix}, whenever $\phi_\gamma\equiv 0$ the restriction of the system to the face $\mathcal G_\gamma$ is well defined and the $\gamma$th diffusive flux vanishes identically: $
\bJ_\gamma\equiv 0$. Hence, on $\mathcal G_\gamma$, the $\gamma$th phase equation reduces to
\begin{align}
\partial_t\phi_\gamma+\operatorname{div}(\phi_\gamma\vv)=0.
\end{align}
The identically zero function $\phi_\gamma\equiv 0$ is a solution of this equation and attains the initial datum
$\phi_\gamma(\cdot,0)\equiv 0$. Therefore, by uniqueness of classical solutions, it follows that
\begin{align}
\phi_\gamma(\cdot,t)\equiv 0
\qquad\text{for all }t\ge 0.
\end{align}
%Thus $\mathcal G_\gamma$ is an invariant manifold of the NSCH flow.
\end{proof}

\begin{proposition}[Boyer--Lapuerta free energies are not mixture-aware]\label{appendix prop: Boyer}
For $N=3$, the Boyer--Lapuerta free energy is not mixture-aware. 
\end{proposition}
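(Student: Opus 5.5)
I will argue by counterexample: exhibit a concrete ternary configuration with two physically identical phases for which the capillary-force reduction consistency \eqref{eq:RC_force} fails. The ternary Boyer--Lapuerta energy is
\begin{align*}
\Psi^{\mathrm{BL}}=\frac{12}{\varepsilon}F(\boldsymbol{\phi})+\frac{3\varepsilon}{8}\sum_{\alpha=1}^{3}\Sigma_\alpha|\nabla\phi_\alpha|^2,\qquad
F=\sum_{\alpha}\frac{\Sigma_\alpha}{2}\phi_\alpha^2(1-\phi_\alpha)^2+\Lambda\phi_1^2\phi_2^2\phi_3^2,
\end{align*}
with spreading coefficients $\Sigma_\alpha=\sigma_{\alpha\beta}+\sigma_{\alpha\gamma}-\sigma_{\beta\gamma}$. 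Taking phases $1$ and $3$ identical, I will set $\sigma_{13}=0$ and $\sigma_{12}=\sigma_{23}=:\sigma$, so that $\Sigma_1=\Sigma_3=0$ and $\Sigma_2=2\sigma$. The reduced binary model should then be the Boyer--Lapuerta binary energy in $(\hat\phi_1,\hat\phi_2)=(\phi_1+\phi_3,\phi_2)$ with surface tension $\sigma$.

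The key step is to isolate the bulk part. The gradient term involves only $\phi_2$, so it matches the binary energy and does not obstruct anything. I will restrict to bulk (homogeneous-type) contributions and compute $C_\gamma=\sum_\beta\phi_\beta\partial^2_{\beta\gamma}\Psi_0$ as in \eqref{eq:C_gamma_def}. Following the reasoning that led to \eqref{eq:coeff_equalities}, reduction consistency requires that the coefficients of $\nabla\phi_1$ and $\nabla\phi_3$ in $\phi_1\nabla\mu_1+\phi_3\nabla\mu_3$ both equal the coefficient of $\nabla\hat\phi_1$ in $\hat\phi_1\nabla\hat\mu_1$. In particular, they must be invariant under redistributions $(\phi_1,\phi_3)\mapsto(\phi_1+\delta,\phi_3-\delta)$ at fixed $\phi_2$.

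With $\Sigma_1=\Sigma_3=0$, the only partition-sensitive bulk term is $\Lambda\phi_1^2\phi_2^2\phi_3^2$. I will compute its contribution to the coefficient of $\nabla\phi_1$ at a point with $\phi_2\in(0,1)$ fixed. This coefficient is
\begin{align*}
\Lambda\phi_2^2\bigl(2\phi_1\phi_3^2+4\phi_1\phi_3\phi_3\bigr)+\dots,
\end{align*}
which is a nonconstant polynomial in the split variable $\phi_1$ along the line $\phi_1+\phi_3=1-\phi_2$. The target binary term, by contrast, depends only on $\hat\phi_1$. I therefore expect to conclude that \eqref{eq:RC_force} fails whenever $\Lambda>0$.

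If the paper's convention is $\Lambda=0$, the obstruction instead comes from the degenerate case: with $\Sigma_1=\Sigma_3=0$ the bulk energy is $\Sigma_2\phi_2^2(1-\phi_2)^2$ plus nothing partition-sensitive, and this might actually reduce. In that situation I would switch to a non-degenerate identical pair, for example the more general identical-phase condition $\sigma_{13}=0$ with a third phase having distinct tensions, or identical phases with $\sigma_{13}>0$ as one would need if the label merger must still be a genuine ternary Boyer--Lapuerta energy. I would then show that $\phi_1^2(1-\phi_1)^2+\phi_3^2(1-\phi_3)^2$ yields a coefficient $\phi_1 f''(\phi_1)$ with $f(u)=u^2(1-u)^2$. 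This coefficient is not constant in $u$, which violates the necessary condition \eqref{eq:ufpp_const}.

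The main obstacle is pinning down which parameter choice represents "identical phases" within the Boyer--Lapuerta class, since their admissibility conditions on $\Sigma_\alpha$ may interact with $\sigma_{13}=0$. I would handle this by showing failure for every admissible choice. In each case the quartic one-body terms violate \eqref{eq:ufpp_const}, since $uf''(u)=u(2-12u+12u^2)$ is not constant. The only escape is vanishing prefactors, and in that case the $\Lambda$ term gives the failure.
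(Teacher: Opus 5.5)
Your $\Lambda>0$ counterexample is correct, and it takes a different route from the paper. The paper calls phases $1$ and $2$ identical without constraining $\sigma_{12}$. It restricts to the face $\phi_3=0$ and observes that $\sigma_{12}\phi_1^2\phi_2^2$ and the separate terms $\Sigma_1|\nabla\phi_1|^2$, $\Sigma_2|\nabla\phi_2|^2$ depend on the split. So its argument relies on a nonzero tension between the labels being merged. Your choice $\sigma_{13}=0$, $\sigma_{12}=\sigma_{23}=\sigma$ is the physically natural one, and it shows that everything except $\Lambda\phi_1^2\phi_2^2\phi_3^2$ reduces. For that term, the $\nabla\phi_1$- and $\nabla\phi_3$-coefficients of $\phi_1\nabla\mu_1+\phi_3\nabla\mu_3$ are, up to the factor $12/\varepsilon$, $6\Lambda\phi_1\phi_2^2\phi_3^2$ and $6\Lambda\phi_1^2\phi_2^2\phi_3$. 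On the binary side, $\nabla\phi_1$ and $\nabla\phi_3$ enter only through $\nabla\hat\phi_1=\nabla\phi_1+\nabla\phi_3$. The mismatch $6\Lambda\phi_1\phi_2^2\phi_3(\phi_3-\phi_1)$ is therefore the obstruction. Keep the argument in this pairwise form. In the full sum \eqref{eq:RC_force}, Euler's identity turns the $\Lambda$-discrepancy into $\tfrac{60}{\varepsilon}\Lambda\nabla(\phi_1^2\phi_2^2\phi_3^2)$, a pure gradient that the additive gauge of A2 absorbs. The pairwise identities, for the merged pair and for phase $2$, cannot be restored by any common gauge: that would force $5G-\partial_{\phi_2}G$ to be constant for $G=\phi_1^2\phi_2^2\phi_3^2$, which it is not.

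The genuine gap is your $\Lambda=0$ branch and your closing sentence. With $\Lambda=0$ and your parameters, $F_\Lambda\equiv\sigma\phi_2^2(\phi_1+\phi_3)^2$ identically, and the gradient energy is $\tfrac{3\varepsilon}{4}\sigma|\nabla\phi_2|^2$. Against the binary energy with gradient term $\tfrac{3\varepsilon}{8}\sigma(|\nabla\hat\phi_1|^2+|\nabla\hat\phi_2|^2)$, one checks $\mu_1=\mu_3=\hat\mu_1+c$ and $\mu_2=\hat\mu_2+c$ with $c=-\tfrac{3\varepsilon}{4}\sigma\Delta\phi_2$. So the model reduces exactly, up to the gauge of A2.

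Two of your claims therefore fail:
\begin{itemize}
\item ``the only escape is vanishing prefactors, and in that case the $\Lambda$ term gives the failure'' is false;
\item ``failure for every admissible choice'' cannot be proved.
\end{itemize}
Your substitutes do not repair this. With $\sigma_{12}\neq\sigma_{23}$, phases $1$ and $3$ interact differently with phase $2$. With $\sigma_{13}>0$, there is an interface between them. Neither is an identical pair by your own criterion, and non-reduction for non-identical phases proves nothing; the paper's closure does not reduce there either.

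To fix the proof, make the logic explicit. A single identical-phase parameter set without reduction already refutes mixture-awareness. So if $\Lambda>0$ is admissible, as in the paper's $F_\Lambda$, your first computation is a complete proof and the case analysis should be dropped. If $\Lambda=0$ is imposed, the claim can be rescued in one of two ways:
\begin{itemize}
\item Note that $\Sigma_1=\Sigma_3=0$ makes the gradient energy non-coercive on $\{\sum_\alpha\nabla\phi_\alpha=0\}$, since it does not control $\nabla(\phi_1-\phi_3)$. Then BL cannot represent an identical pair within its admissible range. Be aware that this objection, if adopted, also excludes your $\Lambda>0$ example.
\item Adopt the paper's looser reading, with a nonzero tension between the merged labels, and use its face argument.
\end{itemize}
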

\begin{proof}
For $N=3$, the free energy of Boyer--Lapuerta is
\begin{align}
\Psi^{\mathrm{B}}(\phi,\nabla \phi)
=
\frac{12}{\varepsilon}F_\Lambda(\phi_1,\phi_2,\phi_3)
+
\frac{3\varepsilon}{8}\sum_{\alpha=1}^3 \Sigma_\alpha |\nabla \phi_\alpha|^2,
\end{align}
with
\begin{align}
F_\Lambda
=&~
\sigma_{12}\phi_1^2\phi_2^2+\sigma_{13}\phi_1^2\phi_3^2+\sigma_{23}\phi_2^2\phi_3^2\nn\\
&~+\phi_1\phi_2\phi_3(\Sigma_1\phi_1+\Sigma_2\phi_2+\Sigma_3\phi_3)
+\Lambda \phi_1^2\phi_2^2\phi_3^2,
\end{align}
and
\begin{align}
\Sigma_1=\sigma_{12}+\sigma_{13}-\sigma_{23},\qquad
\Sigma_2=\sigma_{12}+\sigma_{23}-\sigma_{13},\qquad
\Sigma_3=\sigma_{13}+\sigma_{23}-\sigma_{12}.
\end{align}
If phases $1$ and $2$ are physically identical, then capillary-force reduction consistency would require that, after merging,
\begin{align}
\phi_1\nabla\mu_1+\phi_2\nabla\mu_2
=
\hat\phi_1\nabla\hat\mu_1,
\qquad
\hat\phi_1=\phi_1+\phi_2.
\end{align}
This does not hold. Indeed, on the face $\phi_3=0$ the bulk term reduces to
\begin{align}
F_\Lambda(\phi_1,\phi_2,0)=\sigma_{12}\phi_1^2\phi_2^2,
\end{align}
so that already at the bulk level
\begin{align}
\partial_{\phi_1}F_\Lambda(\phi_1,\phi_2,0)=2\sigma_{12}\phi_1\phi_2^2,
\qquad
\partial_{\phi_2}F_\Lambda(\phi_1,\phi_2,0)=2\sigma_{12}\phi_1^2\phi_2.
\end{align}
Hence the capillary-force contribution
\begin{align}
\phi_1\nabla\partial_{\phi_1}F_\Lambda+\phi_2\nabla\partial_{\phi_2}F_\Lambda = 6\sigma_{12}\phi_1^2\phi_2^2\nabla(\phi_1\phi_2)
\end{align}
still depends on the redistribution between $\phi_1$ and $\phi_2$, rather than only on the merged variable $\hat\phi_1=\phi_1+\phi_2$. The separate gradient terms $|\nabla\phi_1|^2$ and $|\nabla\phi_2|^2$ also depend on the redistribution between $\phi_1$ and $\phi_2$.

\end{proof}

\begin{proposition}[Steinbach free energy is not mixture-aware]\label{appendix prop: Steinbach}
The free energy structure
\begin{subequations}
   \begin{align}
\Psi =&~ \Psi_0(\boldsymbol{\phi}) + \Psi_{\nabla}(\boldsymbol{\phi},\nabla \boldsymbol{\phi}),\\
\Psi_{\nabla}(\boldsymbol{\phi},\nabla \boldsymbol{\phi}) = &~ \sum_{\alpha<\beta}
\omega_{\alpha\beta} \bigl|\phi_\alpha\nabla \phi_\beta-\phi_\beta\nabla \phi_\alpha\bigr|^2,
\end{align}
\end{subequations}
for coefficients $\omega_{\alpha\beta}\geq 0$ is not mixture-aware.
\end{proposition}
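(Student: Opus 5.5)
The plan is to exhibit a concrete violation of capillary-force reduction consistency \eqref{eq:RC_force}, mirroring the Boyer--Lapuerta argument (Proposition~\ref{appendix prop: Boyer}). Take phases $1$ and $2$ to be identical. Identical-phases symmetry (A9) then forces $\omega_{13}=\omega_{23}$ and so on; I will work with $N=2$ or restrict to the face $\phi_3=\dots=\phi_N=0$, which is legitimate by A6. The bulk part $\Psi_0$ is irrelevant to the argument. Even if one assumes $\Psi_0$ already has the admissible form \eqref{eq:Psi0_final}, which reduces correctly, it suffices to show that the gradient part $\Psi_\nabla$ produces a capillary force that is not a function of $\hat\phi_1=\phi_1+\phi_2$ alone.

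The decisive term is the self-pair $\omega_{12}|\phi_1\nabla\phi_2-\phi_2\nabla\phi_1|^2$. On the merged level, a single phase $\hat\phi_1$ has no self-pair term, so there is nothing for it to reduce to. Writing $\phi_1=\hat\phi_1 s$ and $\phi_2=\hat\phi_1(1-s)$ with split ratio $s$, one gets $\phi_1\nabla\phi_2-\phi_2\nabla\phi_1=-\hat\phi_1^2\nabla s$.

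I will take $N=2$ and a constant total $\hat\phi_1\equiv 1$. Then the reduced model is a pure phase: its reduced capillary force is zero and $\hat\Psi$ is constant. I will compute the $N$-phase chemical potentials from $\Psi_\nabla=\omega|\phi_1\nabla\phi_2-\phi_2\nabla\phi_1|^2$:
\begin{align}
\mu_1=2\omega\big(-\nabla\phi_2\cdot\mathbf{w}+\operatorname{div}(\phi_2\mathbf{w})\big),\qquad
\mu_2=2\omega\big(\nabla\phi_1\cdot\mathbf{w}-\operatorname{div}(\phi_1\mathbf{w})\big),
\end{align}
where $\mathbf{w}=\phi_1\nabla\phi_2-\phi_2\nabla\phi_1$. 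With $\phi_2=1-\phi_1$, we have $\mathbf{w}=-\nabla\phi_1$. It is cleaner to evaluate the force via the Korteweg form \eqref{eq: korteweg}. The tensor contains $\sum_\alpha\nabla\phi_\alpha\otimes\partial\Psi/\partial\nabla\phi_\alpha$, which here equals $2\omega\nabla\phi_1\otimes\nabla\phi_1$ up to isotropic terms. Its divergence is generically nonzero; for example, for $\phi_1=\tfrac12(1+\tanh x)$ in 1D it is a non-constant function, so it is not a gradient of an admissible pressure-like quantity that can be absorbed.

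The main obstacle is the gauge freedom. Any gradient field can be absorbed into $\nabla\lambda$, since only $\nabla\hat\lambda=\nabla\lambda$ is fixed and $\lambda$ is a Lagrange multiplier. I will handle this in one of two ways. The first is to note that \eqref{eq:RC_force} is imposed without the multiplier, so a direct nonvanishing computation suffices. The second, more robust route is to choose a 2D configuration in which the resulting force has nonzero curl, for instance $\phi_1=\tfrac12(1+\sin x\sin y)$, so it cannot equal a gradient. I will compute $\operatorname{div}(\nabla\phi_1\otimes\nabla\phi_1)-\tfrac12\nabla|\nabla\phi_1|^2=\Delta\phi_1\nabla\phi_1$, whose curl $\nabla\Delta\phi_1\times\nabla\phi_1$ is generically nonzero. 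That settles the claim, and it also shows that the failure stems from the composition-dependent gradient coefficients, which violate A3.
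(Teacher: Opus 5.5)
Your strategy is the paper's. You go to the configuration in which the merged phase is pure ($\hat\phi_1\equiv1$, i.e.\ the face on which all non-merged phases vanish), so the reduced capillary force is zero, and you show that the self-pair term $\omega_{12}|\phi_1\nabla\phi_2-\phi_2\nabla\phi_1|^2$ still produces a force. The paper computes that force directly as $4\omega\nabla|\nabla\phi_1|^2+2\omega\Delta\phi_1\nabla\phi_1$ and only asserts that it cannot be absorbed into the pressure. Your Korteweg splitting, which isolates the non-gradient remainder $2\omega\Delta\phi_1\nabla\phi_1$, combined with a curl test, is the right way to justify that assertion. The bulk part indeed cannot interfere, since on this face it contributes only $C(\phi_1)\nabla\phi_1$, which is itself a gradient.

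The step that fails is your concrete test field. For $\phi_1=\tfrac12(1+\sin x\sin y)$ one has $\Delta\phi_1=-\sin x\sin y=1-2\phi_1$. Hence $\nabla\Delta\phi_1=-2\nabla\phi_1$ is parallel to $\nabla\phi_1$, and $\Delta\phi_1\nabla\phi_1=\nabla(\phi_1-\phi_1^2)$ is an exact gradient. Its curl vanishes, and the whole force equals $\nabla\bigl(4\omega|\nabla\phi_1|^2+2\omega(\phi_1-\phi_1^2)\bigr)$, which \emph{can} be absorbed into $\nabla\lambda$. The same happens whenever $\nabla\Delta\phi_1\parallel\nabla\phi_1$, in particular for every planar or radially symmetric profile.

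Your 1D remark is also wrong as stated. In one dimension every function is a derivative, so ``non-constant, hence not a gradient'' does not follow. The 1D profile only shows non-vanishing: the full force there is $10\omega\phi_1'\phi_1''\not\equiv0$. That serves your first, literal route, but not the gauge-robust one.

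You need a genuinely two-dimensional field whose Laplacian is not constant on the level sets of $\phi_1$. For example, take $\phi_1=\tfrac12+\delta(x+y^3)$ on $(0,1)^2$ with $0<\delta<\tfrac14$. Then $\nabla\phi_1=\delta(1,3y^2)$ and $\nabla\Delta\phi_1=(0,6\delta)$, so the curl of $2\omega\Delta\phi_1\nabla\phi_1$ is $-12\omega\delta^2\neq0$ for $\omega>0$. With that replacement the argument closes and supplies the justification the paper leaves out.

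As a minor point, the algebraic terms in your $\mu_1,\mu_2$ have the wrong sign. The correct expressions are $\mu_1=2\omega\bigl(\mathbf{w}\cdot\nabla\phi_2+\operatorname{div}(\phi_2\mathbf{w})\bigr)$ and $\mu_2=-2\omega\bigl(\mathbf{w}\cdot\nabla\phi_1+\operatorname{div}(\phi_1\mathbf{w})\bigr)$. This is harmless only because you switch to the Korteweg form.
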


\begin{proof}
Consider the ternary case $N=3$. We consider the phases $1$ and $3$ to represent the same phase and set 
\begin{align}
\hat\phi_1:=\phi_1+\phi_3,
\qquad
\hat\phi_2:=\phi_2.
\end{align}
The capillary force should satisfy
\begin{align}
\phi_1\nabla\mu_1^{\nabla}+\phi_2\nabla\mu_2^{\nabla}+\phi_3\nabla\mu_3^{\nabla}
=
\hat\phi_1\nabla\hat\mu_1^{\nabla}+\hat\phi_2\nabla\hat\mu_2^{\nabla},
\label{eq:SG_RC}
\end{align}
where $\mu_\alpha^\nabla$ and $\hat\mu_\alpha^\nabla$ denote the chemical potentials induced by $\Psi_\nabla$ alone. Restricting to the face $\phi_2=0$ we have
\begin{align}
\hat\phi_1=\phi_1+\phi_3=1,
\qquad
\hat\phi_2=\phi_2=0, \qquad \nabla \hat\phi_1=\nabla \hat\phi_2=0.
\end{align}
Hence the reduced gradient chemical potentials are evaluated at a constant state, so their gradients vanish $
\nabla \hat\mu_1^\nabla=\nabla \hat\mu_2^\nabla=0$, i.e. the capillary force vanishes as well:
\begin{align}
\hat\phi_1\nabla\hat\mu_1^\nabla+\hat\phi_2\nabla\hat\mu_2^\nabla=0.
\label{eq:SG_reduced_zero}
\end{align}
On the face $\phi_2=0$ we have $\nabla \phi_2=0$, and therefore on this face we have
\begin{align}
  \Psi_{\nabla} = \omega_{13}\bigl|\phi_1\nabla \phi_3-\phi_3\nabla \phi_1\bigr|^2    
\end{align}
The corresponding chemical potentials are
\begin{subequations}
    \begin{align}
\mu_1^\nabla
=&~
2\omega_{13}
\bigl(\phi_1\nabla \phi_3-\phi_3\nabla \phi_1\bigr)\cdot\nabla\phi_3
+
2\omega_{13}\div\!\Big(\phi_3(\phi_1\nabla \phi_3-\phi_3\nabla \phi_1)\Big),
\\
\mu_3^\nabla
=&~
-2\omega_{13}
\bigl(\phi_1\nabla \phi_3-\phi_3\nabla \phi_1\bigr)\cdot\nabla\phi_1
-
2\omega_{13}\div\!\Big(\phi_1(\phi_1\nabla \phi_3-\phi_3\nabla \phi_1)\Big).
\end{align}
\end{subequations}
On the face $\phi_2=0$, one has $\phi_3=1-\phi_1$ and hence
\begin{align}
\nabla\phi_3=-\nabla\phi_1,
\qquad
\phi_1\nabla \phi_3-\phi_3\nabla \phi_1
=
-(\phi_1+\phi_3)\nabla\phi_1
=
-\nabla\phi_1.
\end{align}
Substituting gives
\begin{align}
\phi_1\nabla\mu_1^\nabla+\phi_3\nabla\mu_3^\nabla
&=
4\omega_{13}\nabla |\nabla\phi_1|^2
+
2\omega_{13}(\Delta\phi_1)\nabla\phi_1,
\end{align}
which does in general not vanish nor can it be absorbed into the pressure.
\end{proof}

\section{Connection to the standard two-phase mobility tensor}\label{sec: connection with binary model}
In the binary case ($N=2$), $M\mathbf{1}=\mathbf{0}$ implies $M_{11}=-M_{12}$ and $M_{22}=-M_{21}$, so that
\begin{align}\label{eq:binary_flux_exchange}
  \bJ_1 = M_{12}\nabla(g_1-g_2),
  \qquad
  \bJ_2 = -\bJ_1.
\end{align}
Introducing the order parameter $\varphi:=\phi_1-\phi_2$ (hence $\phi_1=\tfrac12(1+\varphi)$ and $\phi_2=\tfrac12(1-\varphi)$),
subtracting the phase equations yields a single evolution equation
for $\varphi$ of the form
\begin{align}\label{eq:binary_phi_order_parameter}
  \partial_t\varphi+\div(\varphi\vv)
  +\div\!\Big((\rho_1^{-1}+\rho_2^{-1})\bJ_1\Big)=0.
\end{align}
Substituting \eqref{eq:binary_flux_exchange} gives
\begin{align}\label{eq:binary_phi_mobility_identification}
  \partial_t\varphi+\div(\varphi\vv)
  +\div\!\Big((\rho_1^{-1}+\rho_2^{-1})M\nabla(g_1-g_2)\Big)=0,
\end{align}
where $M=M_{12}=M_{21}$. In order to compare with the
standard two-phase order-parameter formulation \cite{eikelder2023unified,brunk2026simple}, we introduce a rescaled mobility:
\begin{align}\label{eq:binary_mobility_rescaling}
  \breve{M} := -(\rho_1^{-1}+\rho_2^{-1})^2 M,
\end{align}
and rewrite the exchange driving force in terms of $\breve{\mu}+\omega\lambda$. A direct calculation yields
\begin{align}\label{eq:gdiff_breve_mu}
  g_1-g_2
  =(\rho_1^{-1}+\rho_2^{-1})(\breve{\mu}+\omega\lambda),
  \qquad
  \breve{\mu}
  :=(\rho_1^{-1}+\rho_2^{-1})^{-1}\big(\rho_1^{-1}\mu_1-\rho_2^{-1}\mu_2\big),
\end{align}
where $\omega:=(\rho_1^{-1}-\rho_2^{-1})/(\rho_1^{-1}+\rho_2^{-1})$. Substituting \eqref{eq:gdiff_breve_mu} into \eqref{eq:binary_phi_mobility_identification} gives
\begin{align}\label{eq:binary_phi_standard_form}
  \partial_t\varphi+\div(\varphi\vv)
  -\div\!\Big(\breve{M}\nabla(\breve{\mu}+\omega\lambda)\Big)=0,
\end{align}
which matches the form of the phase-field equation used in \cite{eikelder2023unified}. 

For the mobility choice \eqref{eq: mob choice} we find:
\begin{align}
  \breve{M}
  = \epsilon m_0(1-\varphi^2).
\end{align}
The constant $\epsilon m_0$ therefore sets the diffusion time scale in the binary reference case. Furthermore, this form reproduces the standard degenerate factor $(1-\varphi^2)$ in the two-phase
order-parameter formulation.

For the Maxwell--Stefan choice \eqref{eq:MS_offdiag_simplified}, we obtain
\begin{align}\label{eq:MS_binary_M12}
  M=M_{12}  = -\epsilon m_0\rho_1\rho_2\phi_1\phi_2
  = -\frac{\epsilon m_0}{4}\rho_1\rho_2(1-\varphi^2),
\end{align}
and hence
\begin{align}\label{eq:breveM_binary}
  \breve{M}
  = \frac{\epsilon m_0}{4}\frac{(\rho_1+\rho_2)^2}{\rho_1\rho_2}(1-\varphi^2).
\end{align}
Up to the overall constant prefactor, this also reproduces the standard degenerate factor $(1-\varphi^2)$. In the case of identical densities we retrieve $ \breve{M} =  \epsilon m_0 (1-\varphi^2)$.

\section{Numerical parameters}\label{sec appendix parameters}

Here we report the calibrated capillarity matrices, interface widths, and scale factors. For the details of the computations we refer to \cite{surfacetension2026}.

For the computations in \cref{subsec: num red} we adopted for Case 1:
\begin{subequations}
    \begin{align}    \bar{\boldsymbol{\kappa}} =&~10^{2}\begin{pmatrix}
  2.6806101676  & -5.3612203352 & 2.6806101676 \\
  -5.3612203352 & 10.722440670  & -5.3612203352 \\
  2.6806101676 & -5.3612203352 & 2.6806101676
\end{pmatrix}\\ \varepsilon_{12} = &~24.3042511, \quad \varepsilon_{13} = \text{nonexistent}, \quad \varepsilon_{23} = 24.3042511, \\
\varepsilon_0 = &~ 3.2144625 \times 10^{-4},
\end{align}
\end{subequations}
and for Case 2:
\begin{subequations}
    \begin{align}
  \bar{\boldsymbol{\kappa}} =&~\begin{pmatrix}
  1.7135806699 &  -3.4271613399 & 1.7135806699\\
-3.4271613399 & 6.8543226797 & -3.4271613399\\
1.7135806699 & -3.4271613399 & 1.7135806699
\end{pmatrix},\\  \varepsilon_{12} =&~ 1.94308232, \quad \varepsilon_{13} = \text{nonexistent}, \quad \varepsilon_{23} = 1.94308232,\\
\varepsilon_0 =&~  4.02066622 \times 10^{-3}.
\end{align}
\end{subequations}
Next, for the computations in \cref{sec:results:rising_bubble} we have used:
\begin{subequations}
    \begin{align}    \bar{\boldsymbol{\kappa}} =&~10^{-3}\begin{pmatrix}
4.1361833328 & -6.5504823549  & -1.7218843107 \\
-6.5504823549  & 6.5504823549
 & -6.5504823549 \\
-1.7218843107  &  -6.5504823549  &  4.1361833328 
\end{pmatrix}\\ \varepsilon_{12} = &~0.0819004278, \quad \varepsilon_{13} = 0.0581144728, \quad \varepsilon_{23} = 0.0819004278, \\
r_1 : &~ \varepsilon_0 = 1.55942257 \times 10^{-3}, \qquad r_2 : \varepsilon_0 =  1.07546309  \times 10^{-3}. 
\end{align}
\end{subequations}
Finally, for the computations in \cref{subsec: Quaternary rising bubble simulation} we have used:
\begin{subequations}
    \begin{align}   
    \bar{\boldsymbol{\kappa}}= &{\scriptsize~10^{-2}\begin{pmatrix}
1.1005813937 & -5.0644662183\times 10^{-1} & -1.5390251666 & -1.5569099892\times 10^{-1}\\
-5.0644662183\times 10^{-1} & 7.9867010573\times 10^{-1} & -1.1291527134 & 3.8259123747\times 10^{-2}\\
-1.5390251666 & -1.1291527134 & 1.9579824238 & -1.2477869676\\
-1.5569099892\times 10^{-1} & 3.8259123747\times 10^{-2} & -1.2477869676 & 6.8260942136\times 10^{-1}
\end{pmatrix}}\\ \varepsilon_{12} = &~0.103652639, \quad \varepsilon_{13} = 0.149825882, \quad \varepsilon_{14} = 0.0846653410, \nn\\
\varepsilon_{23} = &~0.130170775, \quad \varepsilon_{24} = 0.0700575184, \quad \varepsilon_{34} = 0.129812413, \\
\varepsilon_0 =&~ 3.10458733 \times 10^{-3}.
\end{align}
\end{subequations}

%\noindent {\small \textbf{Funding.} 
\subsection*{Funding}
MtE acknowledges support from the German Research Foundation (Deutsche Forschungsgemeinschaft DFG), project number 566600860. A.B. acknowledges support by the German Research Foundation (DFG) via TRR 146, subproject C3, project number 233630050 and via SPP 2256 within the project ``Variational quantitative phase-field modeling and simulation of powder bed fusion additive manufacturing'' project number 441153493. MtE and AB acknowledges support by the RMU via the joined project "MULTIMIX".

\bibliography{jfm}

\end{document}